\DeclareMathOperator*{\argmax}{arg\,max}
\begin{document}

\newtheorem{thm}{Theorem} 
\newtheorem{lem}{Lemma}
\newtheorem{prop}{Proposition}
\newtheorem{cor}{Corollary}
\newtheorem{defn}{Definition}
\newtheorem{rem}{Remark}
\newtheorem{ex}{Example}

\renewcommand{\qedsymbol}{ \begin{tiny}$\blacksquare$ \end{tiny} }

\renewcommand{\leq}{\leqslant}
\renewcommand{\geq}{\geqslant}

\title {Polar Coding for Secret-Key Generation} 
\author{
    \IEEEauthorblockN{R\'{e}mi A. Chou, Matthieu R. Bloch, Emmanuel Abbe}
\thanks{R. A. Chou and M. R. Bloch are with the School~of~Electrical~and~Computer~Engineering,~Georgia~Institute~of~Technology, Atlanta,~GA~30332 and with GT-CNRS UMI 2958, Metz, France. E. Abbe is with the School of Engineering and Applied Sciences, Princeton University, Princeton, NJ 08544.}
    \thanks{This work was supported in part by the NSF under Award CCF 1320298 and by ANR with grant 13-BS03-0008.}
 \thanks{E-mail: remi.chou@gatech.edu; matthieu.bloch@ece.gatech.edu; eabbe@princeton.edu. Parts of the results were presented at the 2013 IEEE Information Theory Workshop~\cite{Chou13b}.}
 \thanks{Copyright (c) 2015 IEEE. Personal use of this material is permitted.  However, permission to use this material for any other purposes must be obtained from the IEEE by sending a request to pubs-permissions@ieee.org.}
}
\maketitle

\begin{abstract}
Practical implementations of secret-key generation are often based on sequential strategies, which handle reliability and secrecy in two successive steps, called reconciliation and privacy amplification. In this paper, we propose an alternative approach based on polar codes that jointly deals with reliability and secrecy. Specifically, we propose secret-key capacity-achieving polar coding schemes for the following models: (i) the degraded binary memoryless source (DBMS) model with rate-unlimited public communication, (ii)  the DBMS model with one-way rate-limited public communication, (iii) the $1$-to-$m$ broadcast model and (iv)~the Markov tree model with uniform marginals. For models~(i) and (ii) our coding schemes remain valid for non-degraded sources, although they may not achieve the secret-key capacity. For models (i), (ii) and (iii), our schemes rely on pre-shared secret seed of negligible rate; however, we provide special cases of these models for which no seed is required. Finally, we show an application of our results to secrecy and privacy for biometric systems.
We thus provide the first examples of low-complexity secret-key capacity-achieving schemes that are able to handle vector quantization for model (ii), or multiterminal communication for models (iii) and (iv).
\end{abstract}

\section{Introduction}
Unlike classical cryptography, physical-layer security relies on information-theoretic metrics rather than complexity theory and the supposed hardness of solving certain mathematical problems. In particular, information-theoretic secret-key generation protocols~\cite{Maurer93,Ahlswede93} put no limits on the computational power of the adversary. In such protocols, legitimate users and eavesdropper observe the realizations of correlated random variables. The legitimate users, who can publicly communicate, then aim at extracting a common secret-key from their observations. The maximum number of secret-key bits per observed realization of the random variables is called the secret-key capacity~\cite{Maurer93,Ahlswede93}.

Bounds for the secret-key capacity have been derived for a large variety of models~\cite{Maurer93,Ahlswede93,Csiszar00,Csiszar04,Ye05,Csiszar08,Csiszar10,
Watanabe10b,Nitinawarat12,Chou12b,Csiszar13,Chou13g}. 
Unfortunately, most rely on typicality arguments and do not provide direct insight into the design of practical secret-key capacity-achieving schemes. There are, however, a few exceptions. For instance, there exist constructive schemes for some multiterminal scenarios~\cite{Nitinawarat10,Nitinawarat10b} based on explicit algorithms for tree packing. In addition, sequential methods can be constructed that successively handle reliability and secrecy by means of reconciliation and privacy amplification. While sequential methods lead to low-complexity schemes for unlimited public communication~\cite{Maurer00,Bloch11}, their application to rate-limited public communication~\cite{Nitinawarat12,Chou12b} requires vector quantization for which, to the best of our knowledge, no low-complexity schemes are known.

This paper presents low-complexity secret-key capacity-achieving schemes based on polar codes~\cite{Arikan09} for some classes of source models. 
Note that polar codes have already been successfully used for secrecy in the context of symmetric degraded wire-tap channel model~\cite{Hof10,Koyluoglu10,Andersson10,Mahdavifar11,Sasoglu13}, or more recently, arbitrary broadcast channel with confidential messages~\cite{Gulcu14,Chou14c}, and for the Slepian-Wolf coding problem~\cite{Arikan10,Korada10a,Abbe11a,Sasoglu11}, which is particularly relevant to secret-key generation. Note also that in \cite{Abbe11n}, the journal version of \cite{Abbe11a}, a first application of polar coding to a basic secret key generation setting was proposed. Unlike sequential methods, which successively handle reliability and secrecy, our schemes jointly deal with reliability and secrecy (see Definition~\ref{def} for more details).  Both the sequential reliability-secrecy approach, and the direct approach with polar codes have their advantages. On the one hand, sequential methods offer flexibility in design by separating reliability and secrecy and, unlike polar coding schemes, are known to remain optimal for two-way rate-limited communication and continuous non-degraded sources~\cite{Chou12b}. On the other hand, polar coding schemes may be easier to design and operate at lesser complexity in some scenarios. They also appear to be convenient to deal with vector quantization when the public communication is rate-limited. 

Our main contribution is to develop polar coding schemes that achieve the secret-key capacity for the following models. 
\begin{itemize}
\item The degraded binary memoryless source (DBMS) model with rate-unlimited public communication;
\item The DBMS model with one-way rate-limited public communication;
\item The $1$-to-$m$ broadcast model;
\item The Markov tree model with uniform marginals.
\end{itemize}
For the first two models, the proposed polar coding schemes may also be used to generate secret keys for non-degraded sources, although they may not achieve the secret-key capacity. For the first three models, we assume that the legitimate users initialize their communication with a shared secret seed,\footnote{If one assumes an authenticated public channel~\cite{Maurer93,Ahlswede93} a shared small secret seed in the order of the logarithm of the length of the messages is also required for authentication~\cite{Wegman81}.} whose length is negligible compared to the number of source samples used to generate a key. As shown in Sections~\ref{sec_model1}-\ref{sec_model3}, there also exist special cases of the source statistics for which no seed is required.

Note that~\cite{Sutter13}, obtained independently from \cite{Chou13b}, develops an alternative polar coding solution for the BMS model with rate-unlimited public communication. The major difference between their approach and ours is that their construction is sequential, i.e., it  \emph{successively} deals with reliability and secrecy by means of reconciliation and privacy amplification,  whereas our approach \emph{jointly} deals with reliability and secrecy. The construction in~\cite[Th. 7]{Sutter13} has the advantage of not requiring a seed. On the other hand, our protocol only requires one ``polarization layer,'' whose construction is efficient, whereas the sequential approach of ~\cite{Sutter13} requires an inner and an outer layer, the latter having no known efficient code construction as discussed in~\cite[Section III.C]{Sutter13}.

The remainder of the paper is organized as follows. Section~\ref{SecStatemetn} formally introduces some notation and the general multi-terminal secret-key generation problem, which encompasses all the models specialized in subsequent sections. Section~\ref{sec_model1}, describes a secret-key capacity-achieving scheme with polar codes for the DBMS model with unlimited communication rate. Section~\ref{sec_model2} provides a secret-key capacity-achieving scheme with polar codes for the DBMS model with one-way rate-limited public communication. Section~\ref{sec_model3} develops a secret-key capacity-achieving scheme with polar codes for the $1$-to-$m$ broadcast model. Section~\ref{sec_model4} studies a Markov tree model with uniform marginals and provides a secret-key capacity-achieving scheme with polar codes. Finally, Section~\ref{Sec_bio}, shows how to apply the results to the related problem of privacy and secrecy for key generation in some biometric systems. 
\section{Multiterminal secret-key generation}
\label{SecStatemetn} 
We start by introducing some notation used throughout the paper. We define the integer interval $\llbracket a,b \rrbracket$, as the set of integers between $\lfloor a \rfloor$ and $\lceil b \rceil$.  We denote the Bernoulli distribution with parameter $p \in [0,1]$ by $\mathcal{B}(p)$. For $n \in \mathbb{N}$ and $N \triangleq 2^n$, we let $G_N \triangleq  \left[ \begin{smallmatrix}
       1 & 0            \\[0.3em]
       1 & 1 
     \end{smallmatrix} \right]^{\otimes n} $ be the source polarization transform defined in~\cite{Arikan10}. We note the components of a vector, $X^{1:N}$ with superscripts, i.e., $X^{1:N} \triangleq (X^1 , X^2, \ldots, X^{N})$. Finally, we denote the variational distance and the divergence between two distributions by $\mathbb{V}(\cdot, \cdot)$ and $\mathbb{D}(\cdot || \cdot)$, respectively.

We now recall the general model for multiterminal secret-key generation~\cite{Csiszar04}. Let $m \geq 2$ be the number of terminals that wish to generate a common secret-key. Set $\mathcal{M} \triangleq \llbracket 1, m \rrbracket$, and let $\mathcal{Z}$ and $\mathcal{X}_i$, for $i \in \mathcal{M}$ be arbitrary finite alphabets. Define $\mathcal{X}_{\mathcal{M}}$ as the Cartesian product of $\mathcal{X}_1,\mathcal{X}_2, \ldots, \mathcal{X}_m$. Consider a discrete memoryless multiple source $\left(\mathcal{X}_{\mathcal{M}} \mathcal{Z} , p_{X_{\mathcal{M}}Z} \right)$, where $X_{\mathcal{M}} \triangleq (X_1,X_2, \ldots, X_m)$ and the Cartesian product $\mathcal{X}_{\mathcal{M}} \times \mathcal{Z}$ is abbreviated as $\mathcal{X}_{\mathcal{M}} \mathcal{Z}$. For $i \in\mathcal{M}$, Terminal $i$ observes the component $X_i$ of $\left(\mathcal{X}_{\mathcal{M}} \mathcal{Z} , p_{X_{\mathcal{M}}Z} \right)$, whereas an eavesdropper observes the component $Z$. The source is assumed to be outside the control of all parties, but its statistics are known to all parties. Communication is allowed between terminals over an authenticated noiseless public channel with communication rate $R_p \in \mathbb{R}^+ \cup \{ + \infty\}$. A secret-key generation strategy is then formally defined as follows.
\begin{defn} \label{definition_modelg}
Let $R_p \in \mathbb{R}^+ \cup \{ + \infty\}$. Let  $\mathcal{K}$ be a key alphabet of size $2^{NR}$. The protocol defined by the following steps is called a $(2^{NR},N,R_p)$ secret-key generation strategy with public communication, and is denoted by $\mathcal{S}_N$.
\begin{enumerate}
\item Terminal $i$, $i \in \mathcal{M}$, observes $X_i^{1:N}$.
\item The $m$ terminals communicate, possibly interactively, over the public channel. All the public inter-terminal communications are collectively denoted by $\mathbf{F}$ and satisfy $H(\mathbf{F}) \leq N R_p$. 
\item Terminal $i$, $i \in \mathcal{M}$, computes $K_i(X_i^{1:N},\mathbf{F}) \in \mathcal{K}$.
\end{enumerate}
\end{defn}
Let $K$ be a random variable taking values in $\mathcal{K}$. The performance of a secret-key generation strategy $\mathcal{S}_N$ that allows the terminals in $\mathcal{M}$ to agree on the key $K$ is measured in terms of 
\begin{itemize}
\item the average probability of error between the keys $\textbf{P}_e(\mathcal{S}_N) \triangleq \mathbb{P} [\exists i\in \mathcal{M}: K \neq K_i],$
\item the information leakage to the eavesdropper $\textbf{L}(\mathcal{S}_N) \triangleq {I} (K;Z^{1:N} \mathbf{F}),$
\item the uniformity of the key $\textbf{U}(\mathcal{S}_N) \triangleq \log \lceil 2^{NR} \rceil - {H}(K)$.
\end{itemize}
\begin{defn} \label{def}
A secret-key rate $R$ is achievable if there exists a sequence of $(2^{NR},N,R_p)$ secret-key generation strategies $\left\{ \mathcal{S}_N \right\}_{N \geq 1}$ such that
\begin{align*}
 \displaystyle\lim_{N \to \infty } \textbf{\textup{P}}_e(\mathcal{S}_N)  = & 0, \text{ (reliability) } \\
 \displaystyle\lim_{N \to \infty } \textbf{\textup{L}}(\mathcal{S}_N) = & 0, \text{ (strong secrecy)}  \\
 \displaystyle\lim_{N \to \infty } \textbf{\textup{U}}(\mathcal{S}_N)= & 0. \text { (uniformity)}
\end{align*}
 Moreover, the supremum of achievable rates is called the secret-key capacity and is denoted $C_{\text{WSK}}(R_p)$. In the special case where Eve has no access to the component $Z$ of the source, the secret-key capacity is denoted $C_{\text{SK}}(R_p)$. One also says that perfect secrecy is achieved if $\textbf{\textup{L}}(\mathcal{S}_N) =0$.
\end{defn}
In this paper, we develop low-complexity secret-key capacity-achieving schemes based on polar codes for special cases of the general model presented in Definition~\ref{definition_modelg}.
In the following, the blocklength, $N$, used by the legitimate users is a power of $2$. Moreover, we say that the legitimate users share a secret seed, if they share a secret sequence of $d_N\in \mathbb{N}$ uniformly distributed bits, and we define the seed rate as $d_N /N$. To avoid modifying the secret-key capacity with the introduction of a seed, we only consider schemes with vanishing seed rate.
\section{Model 1: Secret Key Generation with Rate-Unlimited Public Communication} \label{sec_model1}
The precise model and known results are described in Section~\ref{Secstatmod1}. Our proposed polar coding scheme is given in Section~\ref{Sec_scheme1} and analyzed in Section~\ref{SecproofTh1}.

\subsection{Secret-key generation model} \label{Secstatmod1}
\begin{figure}[b]
\centering
  \includegraphics[width=8.5cm]{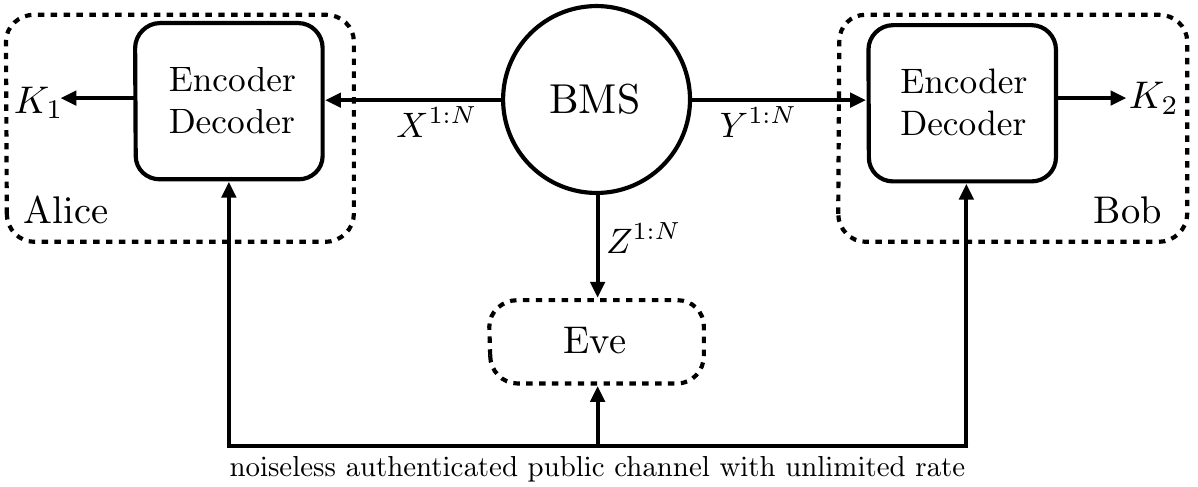}
  \caption{Model 1: Secret-key generation for the BMS model with rate-unlimited public communication}
  \label{figmodelWSK}
  \vspace*{-0.8em}
\end{figure}

As illustrated in Fig.~\ref{figmodelWSK}, Model 1 consists of $m=2$ legitimate terminals. We use $\mathcal{X}$ instead of $\mathcal{X}_1$ and $\mathcal{Y}$ instead of $\mathcal{X}_2$ for convenience. We assume that $\mathcal{X} = \{0,1\}$ and that the public channel has an unlimited communication rate $R_p = + \infty$. We call this setup the BMS model with rate-unlimited public communication. The following results are known for this model.

\begin{thm}[\!\! \cite{Maurer93,Ahlswede93}] \label{thcsiszar}
Consider a BMS $(\mathcal{X}\mathcal{Y}\mathcal{Z},p_{XYZ})$. If $X \to Y \to Z$, then the secret-key capacity $C_{\textup{WSK}} (+ \infty)$ is  
\begin{align*}
 &C_{\textup{WSK}} (+ \infty) = I(X;Y) - I(X;Z).
\end{align*}
Moreover, the secret-key capacity can be achieved by one-way communication.
\end{thm}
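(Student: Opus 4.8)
The plan is to establish Theorem~\ref{thcsiszar} in the standard two-part way: a converse showing no rate above $I(X;Y)-I(X;Z)$ is achievable, and a direct part exhibiting a one-way scheme that achieves it. For the converse, I would start from the definition of an achievable rate (Definition~\ref{def}) applied to an arbitrary $(2^{NR},N,+\infty)$ strategy $\mathcal{S}_N$ with public communication $\mathbf{F}$. Using Fano's inequality together with the reliability condition $\mathbf{P}_e(\mathcal{S}_N)\to 0$, the secrecy condition $\mathbf{L}(\mathcal{S}_N)=I(K;Z^{1:N}\mathbf{F})\to 0$, and the uniformity condition, one gets $NR \le I(K; X^{1:N}\mathbf{F}) - I(K;Z^{1:N}\mathbf{F}) + N\epsilon_N$ for some $\epsilon_N\to 0$. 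The core of the converse is then the single-letterization: bound $I(K;X^{1:N}\mathbf{F}\mid Z^{1:N}) $ and use the memorylessness of the source together with the degradedness $X\to Y\to Z$ to telescope into $\sum_{i=1}^N [I(X_i;Y_i)-I(X_i;Z_i)]$, after which convexity/identicality of the marginals gives $R \le I(X;Y)-I(X;Z)$. I would cite~\cite{Maurer93,Ahlswede93} for this part since it is not the paper's contribution, but sketch it for completeness; the one subtlety worth flagging is that the public channel is rate-unlimited, so $\mathbf{F}$ can be arbitrarily long and one must be careful that the bound does not degrade with $H(\mathbf{F})$ — the key fact making this work is that $\mathbf{F}$ is a (possibly stochastic) function of the observations, so $I(K;X^{1:N}\mathbf{F})-I(K;Z^{1:N}\mathbf{F})$ still collapses correctly using the Markov structure.

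For the direct part with one-way communication, I would use the classical Slepian--Wolf reconciliation plus privacy amplification argument. Terminal $1$ (observing $X^{1:N}$) and Terminal $2$ (observing $Y^{1:N}$) first perform source coding with side information: Terminal $1$ sends a hash/syndrome $\mathbf{F}$ of rate slightly above $H(X\mid Y)$ over the public channel, allowing Terminal $2$ to reconstruct $X^{1:N}$ reliably (this is exactly the Slepian--Wolf regime, and $\mathbf{P}_e\to 0$ follows from standard random binning). Then both terminals apply a universal hash function (or a random linear map) to $X^{1:N}$, extracting $K$ of length $N(H(X\mid Z)-H(X\mid Y)) = N(I(X;Y)-I(X;Z))$ bits, using the leftover hash lemma / privacy amplification theorem to guarantee that $K$ is nearly uniform and nearly independent of $(Z^{1:N},\mathbf{F})$ — here the degradedness $X\to Y\to Z$ is what ensures $H(X\mid Z)\ge H(X\mid Y)$ so that the key rate is nonnegative, and that conditioning on $\mathbf{F}$ (which has rate $H(X\mid Y)$) still leaves enough conditional min-entropy given $Z^{1:N}$. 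This yields reliability, strong secrecy (with an appropriate choice of hashing, e.g.\ two-universal families, one gets the leakage $\mathbf{L}(\mathcal{S}_N)\to 0$), and uniformity simultaneously.

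The main obstacle — or rather the part requiring the most care — is the privacy amplification step in the strong-secrecy sense: one needs the leakage $I(K;Z^{1:N}\mathbf{F})\to 0$ rather than merely per-letter or weak secrecy, which requires either a smooth-min-entropy argument or an explicit variance/second-moment computation over the hash family, combined with a careful accounting of how much the public message $\mathbf{F}$ reveals. Matching the reconciliation rate $H(X\mid Y)$ to the entropy "used up" by $\mathbf{F}$ in the privacy amplification bound is the delicate bookkeeping. Everything else — Fano, Markov-chain manipulations, the Slepian--Wolf achievability — is routine. Since this theorem is attributed to~\cite{Maurer93,Ahlswede93}, I would keep the exposition brief and refer the reader there for the full details, noting only that the remainder of the paper is devoted to replacing this existence argument with an explicit, low-complexity polar coding construction.
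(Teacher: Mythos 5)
The paper does not prove this theorem at all: it is stated as a known result and attributed entirely to \cite{Maurer93,Ahlswede93}, so there is no internal proof to compare against. Your sketch is the standard argument from those references (Fano-based converse with single-letterization under the Markov chain, plus one-way Slepian--Wolf reconciliation followed by privacy amplification via two-universal hashing), and it is essentially correct; it is also consistent with how the paper itself describes the ``sequential'' reconciliation-plus-privacy-amplification route that its polar coding construction is meant to replace. One small imprecision: in the converse, the Fano step should bound $NR \leq I(K;Y^{1:N}\mathbf{F}) + N\epsilon_N$ using \emph{Bob's} observations $Y^{1:N}$ (since he must recover $K$ from $(Y^{1:N},\mathbf{F})$), after which subtracting the leakage term and invoking degradedness yields $I(X;Y|Z) = I(X;Y)-I(X;Z)$; writing $I(K;X^{1:N}\mathbf{F})$ there does not directly give the needed bound. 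Since the theorem is not a contribution of this paper, deferring the details to the cited works, as you propose, is the appropriate treatment.
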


When the eavesdropper has no access to the source component $Z$, one obtains the following expression for the secret-key capacity.

\begin{cor}[\!\! \cite{Maurer93,Ahlswede93}] \label{corcsiszar}
Consider a BMS $(\mathcal{X}\mathcal{Y},p_{XY})$. The secret-key capacity $C_{\textup{SK}}(+ \infty)$ is  
\begin{align*}
 &C_{\textup{SK}} (+\infty) = I(X;Y).
\end{align*}
Moreover, the secret-key capacity can be achieved using only one-way communication.
\end{cor}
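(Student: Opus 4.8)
The plan is to obtain Corollary~\ref{corcsiszar} as an immediate specialization of Theorem~\ref{thcsiszar} in which the eavesdropper's observation is degenerate. By Definition~\ref{def}, the quantity $C_{\textup{SK}}(+\infty)$ is nothing but $C_{\textup{WSK}}(+\infty)$ for the particular situation where Eve sees nothing useful; formally this is the source $(\mathcal{X}\mathcal{Y}\mathcal{Z},p_{XY}p_Z)$ with $|\mathcal{Z}|=1$, i.e. $Z$ a constant. So the first step is simply to phrase ``Eve has no access to $Z$'' as this degenerate choice of $Z$.

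Next I would verify that the hypothesis of Theorem~\ref{thcsiszar} is satisfied in this degenerate case. Since $Z$ is constant it is independent of $(X,Y)$, hence the Markov chain $X\to Y\to Z$ holds trivially for \emph{every} joint law $p_{XY}$ (no degradation assumption on the pair $(X,Y)$ is needed). Theorem~\ref{thcsiszar} then gives $C_{\textup{WSK}}(+\infty)=I(X;Y)-I(X;Z)$, and $I(X;Z)=0$ because $Z$ is constant, so $C_{\textup{SK}}(+\infty)=I(X;Y)$. The ``moreover'' clause transfers in the same way: a one-way capacity-achieving scheme for this degenerate source is, by construction, a one-way scheme for the setting of Corollary~\ref{corcsiszar}, so one-way communication suffices.

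For completeness I would also indicate how the statement can be proved from scratch, without invoking Theorem~\ref{thcsiszar}. For achievability, Terminal $X$ sends over the public channel a Slepian--Wolf description of $X^{1:N}$ of rate slightly above $H(X|Y)$, so that Terminal $Y$ reconstructs $X^{1:N}$ reliably; applying privacy amplification (a two-universal hash) to $X^{1:N}$ and extracting about $N\big(H(X)-H(X|Y)\big)=NI(X;Y)$ bits then yields a key that is nearly uniform and nearly independent of the public transcript $\mathbf{F}$, which gives reliability, strong secrecy, and uniformity. For the converse, one uses the standard argument that, when the eavesdropper has no observation, public discussion cannot produce more than $I(X^{1:N};Y^{1:N})=NI(X;Y)$ secret-key bits, after accounting for $\textbf{\textup{L}}(\mathcal{S}_N)\to 0$ and $\textbf{\textup{U}}(\mathcal{S}_N)\to 0$ through Fano's inequality and continuity of entropy.

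I do not expect any real obstacle here: the reduction in the first two paragraphs is a one-line specialization of Theorem~\ref{thcsiszar}, and that is the proof I would actually write. The only point requiring any care arises if one insists on the self-contained route, namely the bookkeeping in the converse that converts the asymptotic secrecy and uniformity constraints into the clean bound $R\leq I(X;Y)$; this is routine but is the one place where the standard entropy estimates must be invoked carefully.
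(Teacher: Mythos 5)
Your reduction is correct: the paper states this corollary as a known result from \cite{Maurer93,Ahlswede93} and gives no proof of its own, but the intended derivation is exactly the specialization you describe --- take $Z$ constant, so that $X \to Y \to Z$ holds trivially for any $p_{XY}$, $I(X;Z)=0$, and Theorem~\ref{thcsiszar} yields $C_{\textup{SK}}(+\infty)=I(X;Y)$ together with the one-way achievability claim. Your supplementary from-scratch sketch (Slepian--Wolf reconciliation plus privacy amplification for achievability, and the bound $I(X^{1:N};Y^{1:N})=NI(X;Y)$ with Fano for the converse) is also the standard argument and contains no gaps.
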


Such a model is motivated by the sources of randomness that can be generated from wireless communication channel gains~\cite{Ye2010,Pierrot13}. In such settings, the wireless channel gains $c_{A \rightarrow B}$ characterizing the channel from Alice to Bob, $c_{B \rightarrow A}$ characterizing the channel from Bob to Alice, and  the pair $(c_{A \rightarrow E}$, $c_{B \rightarrow E})$, characterizing the channels to Eve, may be used as the variables $X$, $Y$, and $Z$, respectively, of Model 1.

\subsection{Polar coding scheme} \label{Sec_scheme1}

In the following, we assume that $I(X;Y) - I(X;Z) >0$  but we do not assume that $X \to Y \to Z$ forms a Markov chain; we discuss at the end of the section how the coding scheme simplifies when $X \to Y \to Z$ holds.

Let $n \in \mathbb{N}$ and $N \triangleq 2^n$. Set $U^{1:N} \triangleq X^{1:N} G_N$. For any set  $\mathcal{A} \triangleq \{ i_j \}_{j=1}^{|\mathcal{A}|}$ of indices in $\llbracket 1,N \rrbracket$, we define $U^{1:N}[\mathcal{A}] \triangleq \left( U^{i_1}, U^{i_2}, \ldots,U^{i_{|\mathcal{A}|}}\right)$. 
For $\delta_N \triangleq 2^{-N^{\beta}}$, where $\beta \in ]0,1/2[$, define the following sets
\begin{align*}
\mathcal{V}_{X|Z}  \triangleq &  \left\{ i\in \llbracket 1,N\rrbracket : {H} \left(U^i|U^{1:i-1} Z^{1:N} \right) \geq 1- \delta_N \right\} , \\
\mathcal{H}_{X|Y} \triangleq & \left\{ i\in \llbracket 1,N\rrbracket : {H}\left(U^i|U^{1:i-1} Y^{1:N}\right) \geq \delta_N \right\}.
\end{align*}
The exact encoding and decoding algorithms are given in Algorithm~\ref{alg:encoding_1} and Algorithm~\ref{alg:decoding_1}, respectively, and we provide here a high-level discussion of their operation. The set $\mathcal{H}_{X|Y}$ is the set of indices containing ``high-entropy bits'' such that $U^{1:N}[\mathcal{H}_{X|Y}]$ allows Bob to near losslessly reconstruct $U^{1:N}$ from $Y^{1:N}$~\cite{Arikan10}. In our coding scheme, Alice therefore publicly transmits $U^{1:N}[\mathcal{H}_{X|Y}]$ to allow Bob to reconstruct $U^{1:N}$. By construction, the set $\mathcal{V}_{X|Z}$ is the set of indices containing ``very-high entropy bits'' such that $U^{1:N}[\mathcal{V}_{X|Z}]$ is almost uniform and independent of the eavesdropper's observations $Z^{1:N}$. Consequently, the secret-key should be chosen as a subvector of $U^{1:N}[\mathcal{V}_{X|Z}]$; specifically, since $U^{1:N}[\mathcal{H}_{X|Y}]$ is publicly transmitted, it is natural to use $U^{1:N}[\mathcal{V}_{X|Z} \backslash \mathcal{H}_{X|Y} ]$ as the secret key. Unfortunately, $\mathcal{H}_{X|Y} \not\subset \mathcal{V}_{X|Z}$ in general, so that the public communication of $U^{1:N}[\mathcal{H}_{X|Y}]$ leaks some information about $U^{1:N}[\mathcal{V}_{X|Z} \backslash \mathcal{H}_{X|Y} ]$. To circumvent this issue, our protocol uses a secret seed to protect the transmission of the bits in positions $\mathcal{H}_{X|Y} \backslash  \mathcal{V}_{X|Z}$ with a one-time-pad. In addition, our scheme operates over $k$ blocks of size $N$ to handle non-degraded sources and to make the seed rate negligible. In every Block $i \in \llbracket 1,k\rrbracket$ Alice generates a secret key $K_i$ together with a seed $\widetilde{K}_i$ used in the next block. Overall, Alice obtains a vector of secret keys $K_{1:k} \triangleq [ K_{1}, K_2, \ldots, K_k ]$ while Bob obtains a vector of estimates $\widehat{K}_{1:k} \triangleq [ \widehat{K}_{1}, \widehat{K}_2, \ldots, \widehat{K}_k ]$.

\begin{algorithm}
  \caption{Alice's encoding algorithm for Model 1}
  \label{alg:encoding_1}
  \begin{algorithmic}[1]    
    \REQUIRE $\widetilde{K}_0$, a secret key of size $|\mathcal{H}_{X|Y} \backslash  \mathcal{V}_{X|Z} |$ shared by Alice and Bob beforehand; for every Block $i \in \llbracket 1,k\rrbracket$, the observations $X_i^{1:N}$ from the source; $\mathcal{A}_{XYZ}$ a fixed subset of $\mathcal{V}_{X|Z} \backslash \mathcal{H}_{X|Y}$ with size $|\mathcal{H}_{X|Y} \backslash  \mathcal{V}_{X|Z}|$.
    \FOR{Block $i=1$ to $k$}
    \STATE $U_i^{1:N} \leftarrow X_i^{1:N} G_N$
    \STATE $\widetilde{K}_i \leftarrow U_i^{1:N} [\mathcal{A}_{XYZ}]$\COMMENT{Fraction of the key used as a seed for the next block}
    \STATE $K_i \leftarrow U_i^{1:N} [(\mathcal{V}_{X|Z} \backslash \mathcal{H}_{X|Y}) \backslash \mathcal{A}_{XYZ}]$
    \STATE $F_i \leftarrow U_i^{1:N} [\mathcal{V}_{X|Z} \cap \mathcal{H}_{X|Y}]$
    \STATE $F_i' \leftarrow U_i^{1:N} [\mathcal{H}_{X|Y} \backslash  \mathcal{V}_{X|Z} ]$
    \STATE Transmit $M_i\leftarrow [F_i,F'_i\oplus \widetilde{K}_{i-1}]$ publicly to Bob
    \ENDFOR
    \RETURN $K_{1:k} \leftarrow [ K_{1}, K_2, \ldots, K_k ]$
  \end{algorithmic}
\end{algorithm}

\begin{algorithm}
  \caption{Bob's decoding algorithm for Model 1}
  \label{alg:decoding_1}
  \begin{algorithmic}[1]    
    \REQUIRE The secret key $\widetilde{K}_0$ and the set $\mathcal{A}_{XYZ}$ defined in Algorithm \ref{alg:encoding_1}; for every Block $i \in \llbracket 1,k\rrbracket$, the observations $Y_i^{1:N}$ from the source and the message $M_i$ transmitted by Alice.
    \FOR{Block $i=1$ to $k$}
    \STATE Form $U_i^{1:N}[\mathcal{H}_{X|Y}]$ from $M_i$ and $\widetilde{K}_{i-1}$
    \STATE Create an estimate $\widehat{U}_i^{1:N}$ of ${U}_i^{1:N}$ with the successive cancellation decoder of \cite{Arikan10}
    \STATE $\widehat{K}_i\leftarrow \widehat{U}_i^{1:N} [(\mathcal{V}_{X|Z} \backslash \mathcal{H}_{X|Y}) \backslash \mathcal{A}_{XYZ}]$
    \STATE $\widetilde{K}_i\leftarrow \widehat{U}_i^{1:N} [\mathcal{A}_{XYZ}]$
    \ENDFOR
    \RETURN $\widehat{K}_{1:k} \leftarrow [ \widehat{K}_{1}, \widehat{K}_2, \ldots, \widehat{K}_k ]$
  \end{algorithmic}
\end{algorithm}

\begin{rem} \label{rm:efficient_use_key}
For convenience, Algorithm~\ref{alg:encoding_1} does not distinguish the last block from the others; however, there is no need to create a seed in Block $k$, so that one may actually use $U_k^{1:N} [\mathcal{V}_{X|Z} \backslash \mathcal{H}_{X|Y}]$ as the key $K_k$ and slightly increase the key rate. For a large number of blocks $k$, this distinction has negligible impact on the achievable rates.
\end{rem}

\begin{rem} \label{remexpln}
The need for a seed is not an artifact of our proof, but a fundamental requirement of our single polarization approach to generate secret keys and public messages. In fact, a memoryless source cannot be near losslessly compressed at a rate close to the entropy and simultaneously ensure that the encoded messages are nearly uniformly distributed in variational distance~\cite[Section V]{Hayashi08}. In the context of secret-key generation with polar codes, this translates into the condition $\mathcal{H}_{X|Y} \not\subset \mathcal{V}_{X|Z}$ and in the impossibility of simultaneously ensuring strong secrecy and reliability. Our solution follows ideas from~\cite{Dodis05,Chou13}, showing that the impossibility may be circumvented if the encoder and the decoder share a small seed beforehand; without seed, only weak secrecy would be ensured.
\end{rem}

As shown in Section~\ref{SecproofTh1}, a careful analysis of the algorithms leads to the following result.

\begin{thm} \label{Th1}
Consider a BMS $(\mathcal{X}\mathcal{Y}\mathcal{Z},p_{XYZ})$. Assume that Alice and Bob share a secret seed. The secret-key rate $I(X;Y) - I(X;Z)$ is achieved by the polar coding scheme of Algorithm~\ref{alg:encoding_1} and Algorithm~\ref{alg:decoding_1}, which involves a chaining of $k$ blocks of size $N$, and whose computational complexity is $O(kN \log N)$. Moreover, the seed rate can be chosen in $o\left( 2^{-N^{\alpha} }\right)$, $\alpha < 1/2$.
\end{thm}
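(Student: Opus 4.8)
The plan is to prove the four properties (complexity, reliability, strong secrecy, uniformity) and the seed rate bound by analyzing the $k$-block chaining construction, treating each block's behavior through standard polar source-coding estimates and then carefully accounting for the accumulation of errors and leakage across blocks.

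\textbf{Step 1: Complexity and rate bookkeeping.} First I would verify the computational complexity: each block applies one source polarization transform $G_N$ (cost $O(N\log N)$) plus one successive cancellation decoder (also $O(N\log N)$), and there are $k$ blocks, giving $O(kN\log N)$. Then I would compute the key rate. The key extracted per block is $U_i^{1:N}[(\mathcal{V}_{X|Z}\setminus\mathcal{H}_{X|Y})\setminus\mathcal{A}_{XYZ}]$, so over $k$ blocks the total key length (accounting for Remark~\ref{rm:efficient_use_key} about the last block) is $k|\mathcal{V}_{X|Z}\setminus\mathcal{H}_{X|Y}| - (k-1)|\mathcal{A}_{XYZ}|$ bits from $kN$ source symbols. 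By the polarization theorems of \cite{Arikan10}, $\frac{1}{N}|\mathcal{V}_{X|Z}| \to H(X|Z)$ and $\frac{1}{N}|\mathcal{H}_{X|Y}| \to H(X|Y)$; one must also show $\frac{1}{N}|\mathcal{V}_{X|Z}\cap\mathcal{H}_{X|Y}| \to H(X|Y) - I(X;Z) $ --- equivalently that the ``bad'' overlap $\frac1N|\mathcal{H}_{X|Y}\setminus\mathcal{V}_{X|Z}|\to 0$, which I would prove by noting $\mathcal{H}_{X|Y}^c \subseteq \mathcal{V}_{X|Z}^c$ fails but $|\mathcal{H}_{X|Y}\setminus\mathcal{V}_{X|Z}| = o(N)$ follows because both sets are (asymptotically) determined by the same polarized-to-$0$-or-$1$ fractions and $Z$ is a degraded-or-not additional observation; more carefully one uses $|\mathcal{V}_{X|Z}| - |\mathcal{V}_{X|Z}\cap\mathcal{H}_{X|Y}| = |\mathcal{V}_{X|Z}\setminus\mathcal{H}_{X|Y}| \geq |\mathcal{V}_{X|Z}| - |\mathcal{H}_{X|Y}|$ and a matching upper bound. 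Since $|\mathcal{A}_{XYZ}| = |\mathcal{H}_{X|Y}\setminus\mathcal{V}_{X|Z}| = o(N)$, the seed cost per block is negligible, and as $k\to\infty$ the rate $\frac{1}{kN}(\text{key length})$ tends to $\frac1N(|\mathcal{V}_{X|Z}| - |\mathcal{H}_{X|Y}|) \to H(X|Z) - H(X|Y) = I(X;Y) - I(X;Z)$.

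\textbf{Step 2: Reliability.} For a single block, the successive cancellation decoder of \cite{Arikan10} reconstructs $U_i^{1:N}$ from $Y_i^{1:N}$ together with the bits $U_i^{1:N}[\mathcal{H}_{X|Y}]$ with block error probability $O(N\delta_N) = O(N 2^{-N^\beta})$. Bob recovers $U_i^{1:N}[\mathcal{H}_{X|Y}]$ exactly provided he knows $\widetilde{K}_{i-1}$, i.e., provided he decoded block $i-1$ correctly (so that $\widetilde{K}_{i-1} = U_{i-1}^{1:N}[\mathcal{A}_{XYZ}]$ was recovered). A union bound over the $k$ blocks gives $\mathbf{P}_e(\mathcal{S}_N) \leq k\cdot O(N2^{-N^\beta}) \to 0$ for fixed $k$ (and in fact one can let $k$ grow subexponentially in $N$).

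\textbf{Step 3: Strong secrecy and uniformity --- the main obstacle.} This is the crux. For a single block I would show $\mathbb{V}(p_{U_i^{1:N}[\mathcal{V}_{X|Z}] Z_i^{1:N}}, q_{U}\cdot p_{Z_i^{1:N}})$ is exponentially small, where $q_U$ is uniform: this follows from the definition of $\mathcal{V}_{X|Z}$ (each such bit has conditional entropy $\geq 1-\delta_N$ given the past and $Z^{1:N}$) via the standard argument bounding $\mathbb{D}(p_{U^{1:N}[\mathcal{V}]|Z^{1:N}} \| q_U)$ by $\sum_{i\in\mathcal{V}}(1 - H(U^i|U^{1:i-1}Z^{1:N})) \leq |\mathcal{V}|\delta_N$ and Pinsker. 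This simultaneously gives uniformity of the key within a block and near-independence from $Z_i^{1:N}$. The difficulty is the \emph{chaining}: the adversary sees $(Z_{1:k}^{1:N}, M_{1:k})$ where $M_i = [U_i^{1:N}[\mathcal{V}_{X|Z}\cap\mathcal{H}_{X|Y}],\, U_i^{1:N}[\mathcal{H}_{X|Y}\setminus\mathcal{V}_{X|Z}]\oplus\widetilde{K}_{i-1}]$, and the secret key is spread across all blocks. I would argue in two parts: (a) the one-time-padded component $U_i^{1:N}[\mathcal{H}_{X|Y}\setminus\mathcal{V}_{X|Z}]\oplus\widetilde{K}_{i-1}$ leaks nothing about block-$i$ quantities \emph{as long as} $\widetilde{K}_{i-1}$ is uniform and independent of everything Eve has seen so far --- which is exactly the induction hypothesis, since $\widetilde{K}_{i-1} = U_{i-1}^{1:N}[\mathcal{A}_{XYZ}] \subseteq U_{i-1}^{1:N}[\mathcal{V}_{X|Z}\setminus\mathcal{H}_{X|Y}]$ is part of block $i-1$'s secret key and hence near-uniform and near-independent of $(Z_{1:i-1}^{1:N}, M_{1:i-1})$; (b) the un-padded component $F_i = U_i^{1:N}[\mathcal{V}_{X|Z}\cap\mathcal{H}_{X|Y}]$ is part of $U_i^{1:N}[\mathcal{V}_{X|Z}]$, which is jointly near-uniform and near-independent of $Z_i^{1:N}$ by the single-block estimate, so revealing $F_i$ leaves $K_i = U_i^{1:N}[(\mathcal{V}_{X|Z}\setminus\mathcal{H}_{X|Y})\setminus\mathcal{A}_{XYZ}]$ (a disjoint sub-block of $U_i^{1:N}[\mathcal{V}_{X|Z}]$) near-uniform and near-independent of $(F_i, Z_i^{1:N})$. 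Combining via a telescoping/hybrid argument over blocks, using that variational distance accumulates additively and that the source is memoryless across blocks (so the $Z_j^{1:N}$ for $j\neq i$ are independent of block-$i$ variables given nothing), one gets $\mathbf{L}(\mathcal{S}_N) = I(K_{1:k}; Z_{1:k}^{1:N} M_{1:k}) \leq k\cdot O(\sqrt{|\mathcal{V}_{X|Z}|\delta_N}\cdot N) = k\cdot o(2^{-N^{\alpha}})$ for any $\alpha < \beta < 1/2$, and similarly $\mathbf{U}(\mathcal{S}_N)\to 0$. The main technical care is ensuring the error terms from reliability (the event that Bob mis-decodes, which would make $\widetilde{K}_i$ on Bob's side wrong) do not corrupt the secrecy analysis on Alice's side --- but since secrecy depends only on Alice's $U_i^{1:N}$ and the true source, not on Bob's estimates, this decouples cleanly.

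\textbf{Step 4: Seed rate.} The seed $\widetilde{K}_0$ has size $|\mathcal{H}_{X|Y}\setminus\mathcal{V}_{X|Z}|$ bits, used once over $kN$ source symbols. Since $|\mathcal{H}_{X|Y}\setminus\mathcal{V}_{X|Z}| = o(N)$ --- and in fact, by the polarization rate-of-convergence results, can be bounded as $o(N^{1-\alpha})$ or better, so that in absolute terms it is $o(2^{-N^\alpha}\cdot N)$ when normalized appropriately --- the seed rate $\frac{|\mathcal{H}_{X|Y}\setminus\mathcal{V}_{X|Z}|}{kN}$ is $o(2^{-N^\alpha})$ for $\alpha < 1/2$ as claimed; I would extract the precise exponent from the characterization of $|\mathcal{H}_{X|Y}\setminus\mathcal{V}_{X|Z}|$ established in Step 1.

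\textbf{Anticipated main difficulty.} The genuinely delicate part is the strong-secrecy chaining in Step 3: one must set up the induction so that ``$\widetilde{K}_{i-1}$ is near-uniform and near-independent of the adversary's entire past view'' is both the hypothesis and (after processing block $i$) re-established for $\widetilde{K}_i$, while simultaneously tracking that the secret keys $K_1,\dots,K_{i-1}$ already released remain protected. Getting the conditioning right --- and confirming that the variational-distance slack grows only linearly in $k$ (not exponentially) --- is where the proof must be most careful.
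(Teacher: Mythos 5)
Your overall architecture — per-block near-uniformity and near-independence of $U^{1:N}[\mathcal{V}_{X|Z}]$ from $Z^{1:N}$, the one-time pad on the positions $\mathcal{H}_{X|Y}\setminus\mathcal{V}_{X|Z}$, and a telescoping induction across blocks with slack accumulating only linearly in $k$ — is the same as the paper's (Lemmas~\ref{lem_U1}, \ref{lem_sec_block}, \ref{lem_KKtilde}, \ref{lem_secrec}), and your reliability and key-rate bookkeeping match as well. However, there is a genuine gap in your treatment of the seed. You assert in Steps~1 and~4 that $|\mathcal{H}_{X|Y}\setminus\mathcal{V}_{X|Z}|=o(N)$. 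This is false for the general (non-degraded) sources that Theorem~\ref{Th1} covers: it holds only when $X\to Y\to Z$, which is precisely the content of Corollary~\ref{Cor_1} (proved via $\mathcal{H}_{X|Y}\subset\mathcal{H}_{X|Z}$, which needs degradedness). For non-degraded sources this set can be a positive fraction of $N$; indeed Remark~\ref{remexpln} explains that $\mathcal{H}_{X|Y}\not\subset\mathcal{V}_{X|Z}$ is the fundamental obstruction that forces the seed and the block chaining in the first place. So the seed has length $\Theta(N)$ in general and your Step~4 collapses. Moreover, even granting $|\mathcal{H}_{X|Y}\setminus\mathcal{V}_{X|Z}|=o(N)$, dividing by $kN$ with $k$ fixed gives $o(1)$, not $o(2^{-N^{\alpha}})$; the inference to an exponentially small seed rate is a non sequitur.

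The correct mechanism, which is missing from your proposal, is amortization: the seed rate is bounded by $|\mathcal{H}_{X|Y}\setminus\mathcal{V}_{X|Z}|/(kN)\leq H(X|Y)/k$, and one chooses $k=2^{N^{\alpha}}$ with $\alpha<\beta$ to make this $O(2^{-N^{\alpha}})$. This forces you to re-examine all three performance bounds with $k$ growing \emph{exponentially} in $N$ — they survive because the error probability \eqref{eq_errorPr}, the non-uniformity \eqref{eq_uniformity}, and the leakage \eqref{eq_leakage} are each at most a polynomial in $k$ times a quantity of order $2^{-N^{\beta}}$ or $2^{-N^{\beta}/2}\mathrm{poly}(N)$ (cf.\ \eqref{delta_Stardef}), so $\alpha<\beta$ suffices. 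Your parenthetical that ``one can let $k$ grow subexponentially'' is never connected to the seed rate and understates what is needed. A secondary, more minor point: $\lim_N|\mathcal{V}_{X|Z}|/N=H(X|Z)$ is not a direct citation of \cite{Arikan10}, which handles the $\mathcal{H}$-type (threshold $\delta_N$) sets; the very-high-entropy set requires the separate argument of Lemma~\ref{lemnewcard}, built on the Bhattacharyya lower bound of Lemma~\ref{lem_counterpart}.
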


\begin{proof}[Proof]
See Section~\ref{SecproofTh1}.
\end{proof}

\begin{cor} \label{Cor_1}
When $X \to Y \to Z$, the secret-key capacity of Theorem~\ref{thcsiszar} is achieved by the polar coding scheme of Algorithm~\ref{alg:encoding_1} and Algorithm~\ref{alg:decoding_1}. Moreover, one does not need to encode over several blocks, i.e., one can choose $k=1$, and the seed rate is $o(N)$. However, encoding over several blocks for this case allows one to reduce the seed rate from $o(N)$ to $o(2^{-N^{\alpha}})$, $\alpha < 1/2$.
\end{cor}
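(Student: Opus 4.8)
The plan is to obtain Corollary~\ref{Cor_1} as a specialization of Theorem~\ref{Th1}, the only new ingredient being a counting estimate valid under the degradation hypothesis. First, observe that the rate $I(X;Y)-I(X;Z)$ delivered by Theorem~\ref{Th1} is, when $X\to Y\to Z$, exactly the secret-key capacity $C_{\textup{WSK}}(+\infty)$ of Theorem~\ref{thcsiszar}; hence capacity is achieved as soon as the scheme of Algorithms~\ref{alg:encoding_1}--\ref{alg:decoding_1} is run, and the claim that chaining drives the seed rate down to $o(2^{-N^{\alpha}})$ is nothing but Theorem~\ref{Th1} itself, whose proof never invokes the Markov chain. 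It therefore remains only to prove the degraded-specific assertion that a \emph{single} block ($k=1$) already suffices, with a seed of sublinear size.

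The heart of the matter is the estimate $|\mathcal{H}_{X|Y}\backslash\mathcal{V}_{X|Z}|=o(N)$, which controls both the required seed size (equal to $|\mathcal{H}_{X|Y}\backslash\mathcal{V}_{X|Z}|$) and the key rate. To prove it I would introduce the auxiliary set $\mathcal{V}_{X|Y}\triangleq\{i\in\llbracket 1,N\rrbracket: H(U^i|U^{1:i-1}Y^{1:N})\geq 1-\delta_N\}$ and combine two facts. (a) \emph{Degradation nesting:} $\mathcal{V}_{X|Y}\subseteq\mathcal{V}_{X|Z}$. Since $G_N$ is invertible, $U^{1:N}$ is a bijective function of $X^{1:N}$, so the product structure of the source together with $X\to Y\to Z$ gives $U^{1:N}\to Y^{1:N}\to Z^{1:N}$, i.e. $Z^{1:N}$ is conditionally independent of $U^{1:N}$ given $Y^{1:N}$. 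Consequently, for each $i$, $U^i\to(U^{1:i-1},Y^{1:N})\to(U^{1:i-1},Z^{1:N})$ is a Markov chain, and the data-processing inequality for conditional entropy yields $H(U^i|U^{1:i-1}Z^{1:N})\geq H(U^i|U^{1:i-1}Y^{1:N})$; thus $i\in\mathcal{V}_{X|Y}$ implies $i\in\mathcal{V}_{X|Z}$. (b) \emph{Polarization:} by the source polarization theorem of~\cite{Arikan10}, the number of ``unpolarized'' indices obeys $|\mathcal{H}_{X|Y}\backslash\mathcal{V}_{X|Y}|=|\{i:\delta_N\leq H(U^i|U^{1:i-1}Y^{1:N})<1-\delta_N\}|=o(N)$. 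Since (a) gives $\mathcal{H}_{X|Y}\backslash\mathcal{V}_{X|Z}\subseteq\mathcal{H}_{X|Y}\backslash\mathcal{V}_{X|Y}$, combining with (b) delivers $|\mathcal{H}_{X|Y}\backslash\mathcal{V}_{X|Z}|=o(N)$.

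With this estimate in hand I would conclude as follows. Taking $k=1$ and invoking Remark~\ref{rm:efficient_use_key} so that no internal seed is spawned, the only externally shared secret is $\widetilde{K}_0$, of size $|\mathcal{H}_{X|Y}\backslash\mathcal{V}_{X|Z}|=o(N)$; hence the seed is sublinear and the seed rate vanishes. Reliability, strong secrecy, and uniformity are per-block properties in the analysis underlying Theorem~\ref{Th1} --- chaining only couples seeds across blocks --- so the $k=1$ instance inherits them unchanged. Finally, the key rate is $|\mathcal{V}_{X|Z}\backslash\mathcal{H}_{X|Y}|/N$; writing $|\mathcal{V}_{X|Z}\backslash\mathcal{H}_{X|Y}|=|\mathcal{V}_{X|Z}|-|\mathcal{V}_{X|Z}\cap\mathcal{H}_{X|Y}|=|\mathcal{V}_{X|Z}|-|\mathcal{H}_{X|Y}|+o(N)$ (the last equality using the estimate above) and recalling $|\mathcal{V}_{X|Z}|/N\to H(X|Z)$ and $|\mathcal{H}_{X|Y}|/N\to H(X|Y)$, the rate converges to $H(X|Z)-H(X|Y)=I(X;Y)-I(X;Z)$, as required.

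The step I expect to be the main obstacle is the degradation-nesting claim (a): one must set up the per-index Markov chain $U^i\to(U^{1:i-1},Y^{1:N})\to(U^{1:i-1},Z^{1:N})$ rigorously from $X\to Y\to Z$ and the product structure of the source, and then pair the resulting entropy monotonicity with the polarization count (b) of~\cite{Arikan10} bounding the unpolarized indices. Everything else --- the identification with $C_{\textup{WSK}}(+\infty)$ via Theorem~\ref{thcsiszar}, the transfer of reliability, secrecy, and uniformity from Theorem~\ref{Th1}, and the $o(2^{-N^{\alpha}})$ chained seed rate --- is routine bookkeeping.
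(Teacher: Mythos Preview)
Your argument is correct and follows the same two-step skeleton as the paper's proof: use the Markov chain $X\to Y\to Z$ to obtain a set inclusion, then invoke polarization (Lemma~\ref{lemnewcard} together with~\cite{Arikan10}) to show the residual ``unpolarized'' indices have vanishing density. The only difference is which auxiliary set you introduce: you work with $\mathcal{V}_{X|Y}$ and the inclusion $\mathcal{V}_{X|Y}\subseteq\mathcal{V}_{X|Z}$, bounding $|\mathcal{H}_{X|Y}\backslash\mathcal{V}_{X|Z}|$ by the $Y$-side unpolarized count $|\mathcal{H}_{X|Y}\backslash\mathcal{V}_{X|Y}|$; the paper instead introduces $\mathcal{H}_{X|Z}$ and the inclusion $\mathcal{H}_{X|Y}\subseteq\mathcal{H}_{X|Z}$, bounding by the $Z$-side unpolarized count $|\mathcal{H}_{X|Z}|-|\mathcal{V}_{X|Z}|$. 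These are mirror images of the same idea, and either choice yields $|\mathcal{H}_{X|Y}\backslash\mathcal{V}_{X|Z}|=o(N)$ with no difference in difficulty or generality.
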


\begin{proof}
See Appendix~\ref{App_cor1}.
\end{proof}

Note that, in the special case of a symmetric degraded BMS,\footnote{That is, when $X$, $Y$, and $Z$ are connected by symmetric channels.} Corollary~\ref{Cor_1} may be indirectly obtained from wiretap codes and~\cite{Sasoglu13}, following the approach of~\cite{Maurer93},~\cite[Section 4.2.1]{Bloch11}. However, this indirect proof might not translate into practical implementations because it requires much more public channel communication.

Although the seed rate in Theorem \ref{Th1} or Corollary~\ref{Cor_1} may be made arbitrarily small, it is valuable to identify examples for which no seed is required. We provide two such examples in Proposition~\ref{ex1}, which corresponds to the privacy amplification setting of~\cite{Bennett95}, and in Proposition~\ref{ex3}, which corresponds to a case when the source has uniform marginals and the eavesdropper has no access to correlated observations of the source.
\begin{prop} \label{ex1}
Consider a BMS $ (\mathcal{X}\mathcal{Y}\mathcal{Z},p_{XYZ})$. Assume that Alice and Bob have the same observations, i.e., $X=Y$; then the secret-key capacity $C_{\text{WSK}} = H(X|Z)$ is achievable with a polar coding scheme, whose computational complexity is $O(N \log N)$.
\end{prop}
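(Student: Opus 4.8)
The plan is to recognize that the hypothesis $X=Y$ makes the general construction of Section~\ref{Sec_scheme1} collapse to a pure privacy-amplification step, so the statement should follow as a short specialization rather than as a new argument. First I would observe that $G_N$ is invertible over $\mathbb{F}_2$ (indeed $G_N G_N$ is the identity), so $U^{1:N}\triangleq X^{1:N}G_N$ is a bijective function of $X^{1:N}$; since $Y^{1:N}=X^{1:N}$, this gives $H(U^i\mid U^{1:i-1}Y^{1:N})=0<\delta_N$ for every $i\in\llbracket 1,N\rrbracket$ and every $N$, hence $\mathcal{H}_{X|Y}=\emptyset$. Therefore $\mathcal{H}_{X|Y}\backslash\mathcal{V}_{X|Z}=\emptyset$ (no seed is needed) and $\mathcal{V}_{X|Z}\cap\mathcal{H}_{X|Y}=\emptyset$ (no public message is sent), and the scheme reduces to: Alice forms $U^{1:N}=X^{1:N}G_N$ and outputs $K\triangleq U^{1:N}[\mathcal{V}_{X|Z}]$, Bob forms $U^{1:N}=Y^{1:N}G_N$ and outputs $\widehat K\triangleq U^{1:N}[\mathcal{V}_{X|Z}]$. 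A single block ($k=1$) suffices, and the complexity is that of one source polarization transform, i.e., $O(N\log N)$.

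I would then verify the three requirements of Definition~\ref{def} via the usual polar source coding estimates. Reliability is immediate: $Y^{1:N}=X^{1:N}$ implies Bob recovers $U^{1:N}$ exactly, so $\textbf{P}_e(\mathcal{S}_N)=0$. For the rate, source polarization gives $\lim_{N\to\infty}|\mathcal{V}_{X|Z}|/N=H(X|Z)$, which equals $C_{\textup{WSK}}$ by Theorem~\ref{thcsiszar}, since $X=Y$ trivially makes $X\to Y\to Z$ a Markov chain and $I(X;Y)-I(X;Z)=H(X|Z)$. For uniformity, using the chain rule and $H(U^i\mid U^{1:i-1})\geq H(U^i\mid U^{1:i-1}Z^{1:N})\geq 1-\delta_N$ on $\mathcal{V}_{X|Z}$,
\[
\textbf{U}(\mathcal{S}_N)=|\mathcal{V}_{X|Z}|-H\!\left(U^{1:N}[\mathcal{V}_{X|Z}]\right)\leq\sum_{i\in\mathcal{V}_{X|Z}}\left(1-H(U^i\mid U^{1:i-1})\right)\leq N\delta_N\to 0 .
\]
For secrecy, since no public communication takes place, $\textbf{L}(\mathcal{S}_N)=I(K;Z^{1:N})=H(U^{1:N}[\mathcal{V}_{X|Z}])-H(U^{1:N}[\mathcal{V}_{X|Z}]\mid Z^{1:N})$, and a similar chain-rule bound gives $H(U^{1:N}[\mathcal{V}_{X|Z}]\mid Z^{1:N})\geq|\mathcal{V}_{X|Z}|(1-\delta_N)$ while $H(U^{1:N}[\mathcal{V}_{X|Z}])\leq|\mathcal{V}_{X|Z}|$, so $\textbf{L}(\mathcal{S}_N)\leq N\delta_N=N2^{-N^{\beta}}\to 0$.

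Since this is essentially a specialization of the machinery already developed for Theorem~\ref{Th1}, I do not expect a genuine obstacle; the only points requiring care are checking that the sets $\mathcal{V}_{X|Z}$ and $\mathcal{H}_{X|Y}$ from Section~\ref{Sec_scheme1} behave as claimed when $X=Y$ — in particular that $\mathcal{H}_{X|Y}$ is genuinely empty, which is exactly what eliminates both the seed and the public message — and confirming that the standard source-polarization estimates used in the uniformity and leakage bounds apply verbatim in the single-block setting. I would therefore present the result as a brief corollary-style reduction to a one-block instance of the analysis of Section~\ref{SecproofTh1}.
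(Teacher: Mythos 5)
Your proposal is correct and follows essentially the same route as the paper's proof in Appendix~\ref{App_Ex1}: take $K=U^{1:N}[\mathcal{V}_{X|Z}]$ in a single block, note that reliability is trivial since $X=Y$, get the rate from Lemma~\ref{lemnewcard}, and bound uniformity and leakage by the chain rule together with the defining inequality $H(U^i\mid U^{1:i-1}Z^{1:N})\geq 1-\delta_N$ on $\mathcal{V}_{X|Z}$. Your added observation that $\mathcal{H}_{X|Y}=\emptyset$ (so no public message or seed is needed) is a correct and slightly more explicit motivation for the reduction, but it does not change the substance of the argument.
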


\begin{proof}
See Appendix \ref{App_Ex1}.
\end{proof}

\begin{prop} \label{ex3}
Consider a BMS $ (\mathcal{X}\mathcal{Y},p_{XY})$ with $X \sim \mathcal{B}(1/2)$. The secret-key capacity $C_{\text{SK}}(+\infty)$ given in Corollary~\ref{corcsiszar} is achievable with perfect secrecy with a polar coding scheme, whose computational complexity is $O(N \log N)$.
\end{prop}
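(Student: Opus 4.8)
The plan is to specialize the scheme of Section~\ref{Sec_scheme1} to the case where the eavesdropper observes nothing and $X$ has a uniform marginal, and to show that under these two restrictions all the obstructions that forced a seed and the chaining construction disappear. First I would set $U^{1:N}\triangleq X^{1:N}G_N$ and note that, since $G_N$ is a bijection of $\mathbb{F}_2^N$ and $X^{1:N}$ is i.i.d.\ $\mathcal{B}(1/2)$, the vector $U^{1:N}$ is itself i.i.d.\ $\mathcal{B}(1/2)$; in particular $H(U^i\mid U^{1:i-1})=1$ for every $i\in\llbracket 1,N\rrbracket$. When Eve has no side information, the set $\mathcal{V}_{X|Z}$ of Section~\ref{Sec_scheme1} is replaced by $\mathcal{V}_X\triangleq\{i: H(U^i\mid U^{1:i-1})\geq 1-\delta_N\}$, which by the previous observation equals the whole of $\llbracket 1,N\rrbracket$. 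Consequently $\mathcal{H}_{X|Y}\subset\mathcal{V}_X$ trivially, the problematic set $\mathcal{H}_{X|Y}\backslash\mathcal{V}_X$ of Remark~\ref{remexpln} is empty, so $\mathcal{A}_{XYZ}=\emptyset$, $F'=\emptyset$, no seed is needed, and the scheme reduces to a single block.

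Concretely, Alice forms $U^{1:N}=X^{1:N}G_N$, publicly transmits $F\triangleq U^{1:N}[\mathcal{H}_{X|Y}]$, and outputs the key $K\triangleq U^{1:N}[\mathcal{L}_{X|Y}]$ with $\mathcal{L}_{X|Y}\triangleq\llbracket 1,N\rrbracket\backslash\mathcal{H}_{X|Y}$; Bob runs the successive cancellation decoder of~\cite{Arikan10} on $(Y^{1:N},F)$ to obtain $\widehat{U}^{1:N}$ and sets $\widehat{K}\triangleq\widehat{U}^{1:N}[\mathcal{L}_{X|Y}]$. Reliability follows verbatim from the source-polarization analysis of~\cite{Arikan10}: by definition of $\mathcal{H}_{X|Y}$ the block error probability is $O(N\delta_N)\to 0$. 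For the rate, the polarization theorem gives $\tfrac1N|\mathcal{H}_{X|Y}|\to H(X|Y)$, hence $\tfrac1N|\mathcal{L}_{X|Y}|\to 1-H(X|Y)=H(X)-H(X|Y)=I(X;Y)$, where I used $H(X)=1$ because $X\sim\mathcal{B}(1/2)$; this matches $C_{\text{SK}}(+\infty)$ of Corollary~\ref{corcsiszar}.

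It remains to establish perfect secrecy and exact uniformity, and here the uniformity of $U^{1:N}$ does all the work. Since $\mathcal{L}_{X|Y}$ and $\mathcal{H}_{X|Y}$ partition $\llbracket 1,N\rrbracket$ and the coordinates of $U^{1:N}$ are mutually independent, the subvectors $K=U^{1:N}[\mathcal{L}_{X|Y}]$ and $F=U^{1:N}[\mathcal{H}_{X|Y}]$ are independent, so $\mathbf{L}(\mathcal{S}_N)=I(K;F)=0$; likewise $K$ is exactly uniform on $\{0,1\}^{|\mathcal{L}_{X|Y}|}$, whence $\mathbf{U}(\mathcal{S}_N)=\log|\mathcal{K}|-H(K)=0$. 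Finally the scheme uses one block and one application of $G_N$ at each of Alice and Bob, giving complexity $O(N\log N)$. I do not anticipate a genuine obstacle: the point of the proposition is precisely that the two non-trivial difficulties of the general construction, namely the leakage through $U^{1:N}[\mathcal{H}_{X|Y}]$ and the non-uniformity of the raw key, vanish identically once $\mathcal{V}_X=\llbracket 1,N\rrbracket$; the only mild care needed is to observe that the identities $I(K;F)=0$ and $H(K)=|\mathcal{L}_{X|Y}|$ hold exactly, not merely asymptotically, because $G_N$ is a bijection.
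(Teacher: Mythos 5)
Your proposal is correct and follows essentially the same route as the paper's proof in Appendix~\ref{App_Ex2}: the paper takes the key as $U^{1:N}[\mathcal{H}_X\backslash\mathcal{H}_{X|Y}]$, which for $X\sim\mathcal{B}(1/2)$ coincides with your $U^{1:N}[\llbracket 1,N\rrbracket\backslash\mathcal{H}_{X|Y}]$ since $\mathcal{H}_X=\llbracket 1,N\rrbracket$, and it derives perfect secrecy and exact uniformity from the same observation that disjoint subvectors of the i.i.d.\ uniform vector $U^{1:N}$ are independent. The only cosmetic difference is that you phrase the disappearance of the seed via $\mathcal{V}_X=\llbracket 1,N\rrbracket$ while the paper works directly with $\mathcal{H}_X$; the substance is identical.
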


\begin{proof}
See Appendix \ref{App_Ex2}.
\end{proof}
Note that the model studied in Proposition~\ref{ex3} includes \cite[Model 1]{Ye12} as a special case, and does not require the construction of a standard array, whose size grows exponentially with the blocklength.

\subsection{Analysis of polar coding scheme: proof of Theorem~\ref{Th1}} \label{SecproofTh1}
A functional dependence graph of the block encoding scheme of Section~\ref{Sec_scheme1} is depicted in Figure~\ref{figFGD} to help the reader identify the dependencies among the variables introduced by the block-coding scheme.
\begin{figure}
\centering
  \includegraphics[width=8.5cm]{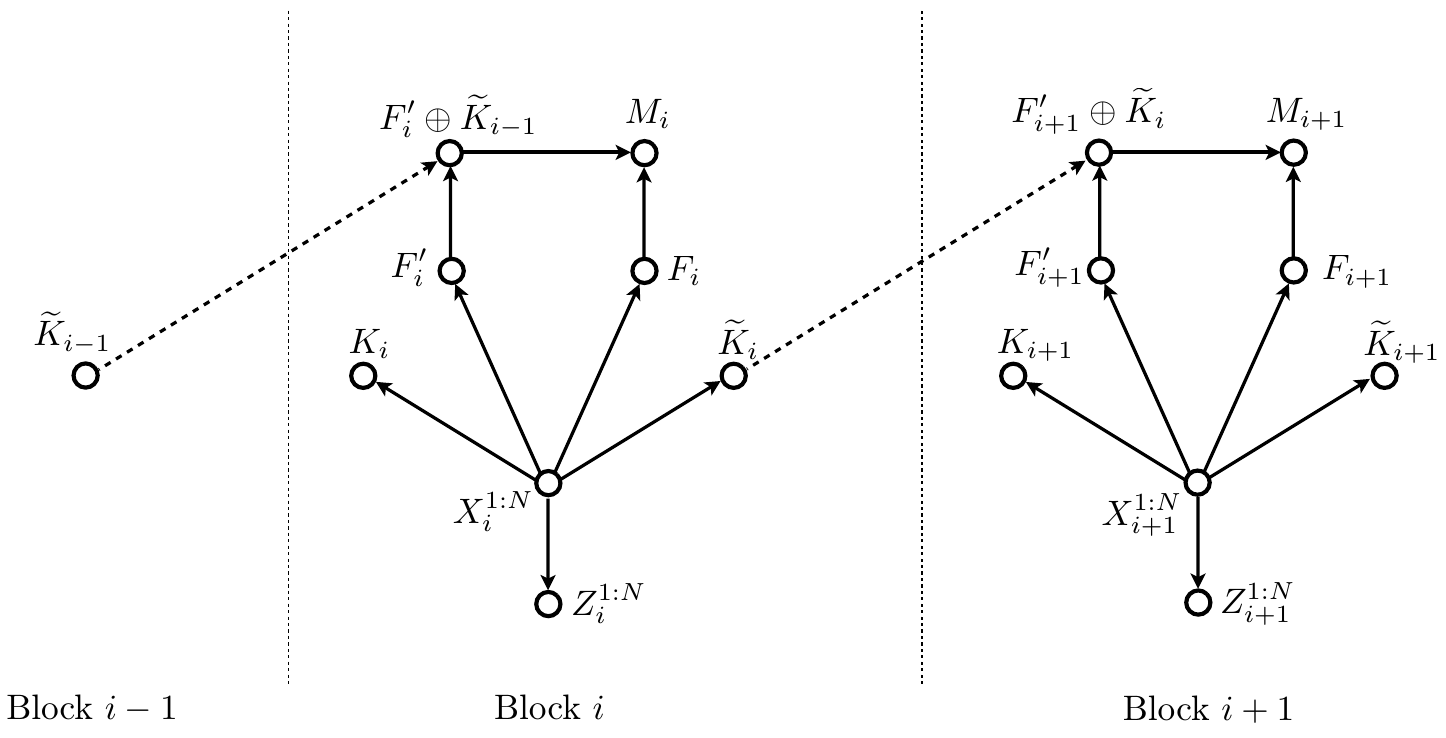}
  \caption{Functional dependence graph of the proposed block encoding scheme described in Algorithm \ref{alg:encoding_1}}
  \label{figFGD}
\end{figure}
\subsubsection{Preliminary result}
We first state a lemma that will be useful for the scheme analysis.
\begin{lem} \label{lemnewcard}
The set $ \mathcal{V}_{X|Z}$ is such that $$\lim_{N \rightarrow + \infty} |\mathcal{V}_{X|Z}| / N = {H}(X|Z).$$
\end{lem}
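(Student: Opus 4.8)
The plan is to relate the cardinality of $\mathcal{V}_{X|Z}$ to a conditional entropy by invoking the polarization phenomenon for the source polarization transform $G_N$ applied to $X^{1:N}$ with side information $Z^{1:N}$, exactly as in Arıkan's source polarization theorem~\cite{Arikan10}. The key fact is that the conditional entropies $H(U^i \mid U^{1:i-1} Z^{1:N})$ polarize: as $N \to \infty$, the fraction of indices $i$ for which this quantity lies in $(\delta_N, 1-\delta_N)$ tends to $0$, while the fraction for which it exceeds $1-\delta_N$ tends to $H(X \mid Z)$, since the chain rule gives $\sum_{i=1}^N H(U^i \mid U^{1:i-1} Z^{1:N}) = H(U^{1:N} \mid Z^{1:N}) = H(X^{1:N}\mid Z^{1:N}) = N H(X\mid Z)$, the last equality because the source is memoryless and $G_N$ is a bijection.

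First I would recall that $\mathcal{V}_{X|Z}$ is precisely the set of ``very noisy'' indices, i.e., the high-conditional-entropy set at threshold $1-\delta_N$. By the source polarization theorem (see \cite[Theorem~1]{Arikan10}, or the conditional version in \cite{Arikan10}), for $\beta \in\, ]0,1/2[$ and $\delta_N = 2^{-N^\beta}$ one has
\begin{align*}
\lim_{N\to\infty}\frac{1}{N}\left|\left\{i : H(U^i\mid U^{1:i-1}Z^{1:N}) \in (\delta_N, 1-\delta_N)\right\}\right| = 0 .
\end{align*}
Let $\mathcal{L}$ denote the complementary low-entropy set $\{i : H(U^i\mid U^{1:i-1}Z^{1:N}) \leq \delta_N\}$, so that $\llbracket 1,N\rrbracket$ is partitioned (up to the vanishing ``unpolarized'' middle set) into $\mathcal{V}_{X|Z}$ and $\mathcal{L}$. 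Then I would sandwich the normalized chain-rule sum: dropping all terms except those indexed by $\mathcal{V}_{X|Z}$ gives $N H(X\mid Z) \geq (1-\delta_N)|\mathcal{V}_{X|Z}|$, hence $\limsup_N |\mathcal{V}_{X|Z}|/N \leq H(X\mid Z)$; conversely, bounding the terms indexed by $\mathcal{L}$ by $\delta_N$ and those in the middle set by $1$ yields $N H(X\mid Z) \leq |\mathcal{V}_{X|Z}| + \delta_N|\mathcal{L}| + o(N) \leq |\mathcal{V}_{X|Z}| + \delta_N N + o(N)$, so $\liminf_N |\mathcal{V}_{X|Z}|/N \geq H(X\mid Z)$. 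Combining the two bounds gives the claim.

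I do not expect a serious obstacle here: the statement is essentially a restatement of (the conditional form of) Arıkan's source polarization theorem, and the only work is the elementary chain-rule sandwich above together with the observation that $\delta_N \to 0$. The one point requiring a little care is justifying that the conditional source polarization theorem applies verbatim with side information $Z^{1:N}$ — this is standard (it is the ``Slepian--Wolf'' polarization of~\cite{Arikan10}), since one simply polarizes the source $X$ while treating $Z^{1:N}$ as a genie-given observation, and the memorylessness of $(X,Z)$ ensures the i.i.d.\ structure needed for the martingale convergence argument underlying polarization.
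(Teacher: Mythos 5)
There is a genuine gap, and it sits exactly at the step you flag as routine. You invoke ``the source polarization theorem'' to claim that the fraction of indices with $H(U^i\mid U^{1:i-1}Z^{1:N})\in(\delta_N,1-\delta_N)$ vanishes, \emph{with} $\delta_N=2^{-N^{\beta}}$. Arikan's Theorem~1 only gives this for a \emph{fixed} threshold $\delta$; letting the threshold decay with $N$ turns the statement into a rate-of-polarization claim, and no diagonal argument recovers it without uniformity information. The known rate results for source polarization (Arikan's Theorem~2, via Bhattacharyya parameters) control the \emph{low} end, i.e.\ the indices where $Z(U^i\mid U^{1:i-1}Z^{1:N})<2^{-N^{\beta}}$; they say nothing about how fast the high-entropy indices approach $1$. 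Your limsup direction is fine (the chain-rule bound $NH(X|Z)\geq(1-\delta_N)|\mathcal{V}_{X|Z}|$ needs no polarization at all), but your liminf direction rests entirely on the unproven middle-set claim at the $1-\delta_N$ end: with only a fixed-$\delta$ statement you would get $\liminf|\{i:H\geq 1-\delta\}|/N\geq H(X|Z)-\delta$, and since $\mathcal{V}_{X|Z}\subset\{i:H\geq 1-\delta\}$ for large $N$, the inclusion points the wrong way to conclude anything about $|\mathcal{V}_{X|Z}|$.

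This is precisely why the paper states that the lemma ``requires a slightly different proof.'' Its argument establishes the missing high-end rate: it proves a new lower bound on the Bhattacharyya parameter under the ``plus'' transform, $Z(X_1\oplus X_2\mid Y^2)\geq\sqrt{2Z(X|Y)^2-Z(X|Y)^4}$ (Lemma~\ref{lem_counterpart}), uses it as in \cite[Theorem~19]{Korada10} to show that $\mathcal{F}_{X|Z}\triangleq\{i:Z(U^i\mid U^{1:i-1}Z^{1:N})\geq 1-2^{-N^{\alpha}}\}$ has normalized cardinality tending to $H(X|Z)$, and then passes from $Z$ to $H$ via $Z^2\leq H$ to get $\mathcal{F}_{X|Z}\subset\mathcal{V}_{X|Z}$, whence $\liminf|\mathcal{V}_{X|Z}|/N\geq H(X|Z)$; the reverse inequality follows from $\mathcal{V}_{X|Z}\subset\mathcal{H}_{X|Z}$. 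To repair your proof you would need to supply an ingredient of this kind; the side-information aspect, which is the only point you single out for care, is indeed standard and not where the difficulty lies.
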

\begin{proof}
See Appendix~\ref{App_bat}.
\end{proof}
Note that $\lim_{N \rightarrow + \infty} |\mathcal{H}_{X|Y}| / N = {H}(X|Y)$ follows from~\cite{Arikan10}, but Lemma~\ref{lemnewcard} requires a slightly different proof based on Lemma~\ref{lem_counterpart} in the appendix.

\subsubsection{Existence of $\mathcal{A}_{XYZ}$} \label{existence_AXYZ}
Observe that $$|\mathcal{V}_{X|Z} \backslash \mathcal{H}_{X|Y}| - |\mathcal{H}_{X|Y} \backslash  \mathcal{V}_{X|Z}| = | \mathcal{V}_{X|Z}| -|\mathcal{H}_{X|Y}|.$$ Hence, by Lemma~\ref{lemnewcard} and \cite{Arikan10}, we have
\begin{multline*}
\lim_{N \to \infty} (|\mathcal{V}_{X|Z} \backslash \mathcal{H}_{X|Y}| - |\mathcal{H}_{X|Y} \backslash  \mathcal{V}_{X|Z}|) /N\\= H(X|Z) - H(X|Y).
\end{multline*}
Since $I(X;Y) - I(X;Z) >0$ by assumption, we conclude that $|\mathcal{V}_{X|Z} \backslash \mathcal{H}_{X|Y}| - |\mathcal{H}_{X|Y} \backslash  \mathcal{V}_{X|Z}| >0$ for $N$ large enough and $\mathcal{A}_{XYZ}$ exists.

\subsubsection{Asymptotic key rate}
For $i \in \llbracket 1, k \rrbracket$, we note $|K_i|$ the length of the vector $K_i$. The length of the overall key generated is 
\begin{align*}
|K_{1:k}|
& = \sum_{i=1}^{k} |K_i| \\
& = k |(\mathcal{V}_{X|Z} \backslash \mathcal{H}_{X|Y})\backslash \mathcal{A}_{XYZ}|\\
& = k (|\mathcal{V}_{X|Z} \backslash \mathcal{H}_{X|Y}| - |\mathcal{H}_{X|Y} \backslash  \mathcal{V}_{X|Z}|) \\
& = k (|\mathcal{V}_{X|Z}| - |\mathcal{H}_{X|Y} |) .
\end{align*}
Hence, by Lemma \ref{lemnewcard} and \cite{Arikan10}, the asymptotic key rate is
$$
\lim_{N \to \infty} \frac{|K_{1:k}|}{kN} \geq I(X;Y) - I(X;Z).
$$

\subsubsection{Reliability} \label{Sec_errProb1}
Let $i \in \llbracket 2,k\rrbracket$. Note that $F_i'$ is correctly received only when Bob possesses a correct estimate of the seed $\widetilde{K}_{i-1}$, i.e., when $U_{i-1}^{1:N}$ is correctly reconstructed. We note $\widehat{F}_i'$ the estimate of $F_i'$ formed by Bob from $(\widehat{U}_{i-1}^{1:N}, M_i)$ and define the event $\mathcal{E}_{F'_i} \triangleq \{ F_i' \neq  \widehat{F}_i'\}$. Then, 
\begin{align*} 
&\mathbb{P} [K_i \neq \widehat{K}_i] \\
& \leq  \mathbb{P} [U_i^{1:N} \neq \widehat{U}_i^{1:N}]\\
& = \mathbb{P} [U_i^{1:N} \neq \widehat{U}_i^{1:N}| \mathcal{E}_{F'_i}^c] \mathbb{P} [\mathcal{E}_{F'_i}^c] + \mathbb{P} [U_i^{1:N} \neq \widehat{U}_i^{1:N}| \mathcal{E}_{F'_i}] \mathbb{P} [\mathcal{E}_{F'_i}]\\
& \leq \mathbb{P} [U_i^{1:N} \neq \widehat{U}_i^{1:N} | \mathcal{E}_{F'_i}^c]  +  \mathbb{P} [\mathcal{E}_{F'_i}]\\
& \leq \mathbb{P} [U_i^{1:N} \neq \widehat{U}_i^{1:N} | \mathcal{E}_{F'_i}^c]  +  \mathbb{P} [U_{i-1}^{1:N} \neq \widehat{U}_{i-1}^{1:N}] \displaybreak[0]\\
& \stackrel{(a)}{\leq} N \delta_N  +  \mathbb{P} [U_{i-1}^{1:N} \neq \widehat{U}_{i-1}^{1:N}] \displaybreak[0]\\
& \stackrel{(b)}{\leq} (i-1) N \delta_N  +  \mathbb{P} [U_{1}^{1:N} \neq \widehat{U}_{1}^{1:N}]\displaybreak[0]\\
& \stackrel{(c)}{\leq} i N \delta_N  ,
\end{align*}
where $(a)$ follows because Bob can reconstruct $U^{1:N}_i$ from $(F_i,F_i')= U_i^{1:N}[\mathcal{H}_{X|Y}]$ and $Y_i^{1:N}$ with error probability  less than $N\delta_N$~\cite{Arikan10}, $(b)$ holds by induction, $(c)$ holds by~\cite{Arikan10} and because $\widetilde{K}_0$ is known to Bob. Using the union bound,
\begin{align} 
\mathbf{P}_e(\mathcal{S}_N) \nonumber
&= \mathbb{P} [K_{1:k} \neq \widehat{K}_{1:k}] \nonumber \\
& \leq \sum_{i=1}^k \mathbb{P} [K_i \neq \widehat{K}_i] \displaybreak[0] \nonumber\\\nonumber
& \leq \sum_{i=1}^{k} i N\delta_N  \displaybreak[0] \\
& = \frac{k(k+1)}{2} N\delta_N. \label{eq_errorPr}
\end{align}
\subsubsection{Key uniformity} \label{Sec_unif_model1}

We first prove the uniformity of the key in each block $i$ using the following lemma.

\begin{lem} \label{lem_U1}
In every block $i \in \llbracket 1,k \rrbracket$,  the vector $[K_i, \widetilde{K}_i]$ is nearly uniform, in the sense that
\begin{align*}
|K_i| + |\widetilde{K}_i| - H(K_i \widetilde{K}_i) \leq N \delta_N.
\end{align*}
In particular, $|\widetilde{K}_i| - H( \widetilde{K}_i) \leq N \delta_N$ and $|K_i| - H( K_i) \leq N \delta_N$.
\end{lem}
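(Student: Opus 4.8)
The plan is to relate the quantity $|K_i| + |\widetilde K_i| - H(K_i\widetilde K_i)$ to a divergence between the true distribution of the relevant subvector of $U_i^{1:N}$ and the uniform distribution, and then bound that divergence using the defining property of the set $\mathcal{V}_{X|Z}$. First I would note that $[K_i,\widetilde K_i]$ is, by construction in Algorithm~\ref{alg:encoding_1}, exactly the subvector $U_i^{1:N}[\mathcal{V}_{X|Z}\setminus\mathcal{H}_{X|Y}]$ (the key $K_i$ corresponds to the indices $(\mathcal{V}_{X|Z}\setminus\mathcal{H}_{X|Y})\setminus\mathcal{A}_{XYZ}$ and the seed $\widetilde K_i$ to $\mathcal{A}_{XYZ}\subset\mathcal{V}_{X|Z}\setminus\mathcal{H}_{X|Y}$), so it suffices to show that $U_i^{1:N}[\mathcal{V}_{X|Z}\setminus\mathcal{H}_{X|Y}]$ is close to uniform in the stated entropy sense. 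Since the source is memoryless and each block sees an i.i.d.\ copy, I can drop the block index and work with $U^{1:N}[\mathcal{V}_{X|Z}\setminus\mathcal{H}_{X|Y}]$ for a single block.

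The key step is a chain-rule/telescoping argument. Write $\mathcal{S}\triangleq\mathcal{V}_{X|Z}\setminus\mathcal{H}_{X|Y}=\{j_1<j_2<\cdots<j_{|\mathcal S|}\}$. Then
\begin{align*}
|\mathcal{S}| - H\!\left(U^{1:N}[\mathcal S]\right) = \sum_{t=1}^{|\mathcal S|}\Bigl(1 - H\bigl(U^{j_t}\,\big|\,U^{1:N}[\{j_1,\dots,j_{t-1}\}]\bigr)\Bigr).
\end{align*}
Each summand is nonnegative since conditional entropy of a binary variable is at most $1$. I would then upper bound each $1 - H(U^{j_t}\mid U^{1:N}[\{j_1,\dots,j_{t-1}\}])$ by $1 - H(U^{j_t}\mid U^{1:j_t-1})$, using the fact that conditioning on a superset of the past $U^{1:j_t-1}$ (here the smaller set $\{j_1,\dots,j_{t-1}\}\subseteq\llbracket1,j_t-1\rrbracket$) does not decrease entropy, i.e.\ $H(U^{j_t}\mid U^{1:j_t-1})\le H(U^{j_t}\mid U^{1:N}[\{j_1,\dots,j_{t-1}\}])$. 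Now since $j_t\in\mathcal{V}_{X|Z}$, the definition gives $H(U^{j_t}\mid U^{1:j_t-1}Z^{1:N})\ge 1-\delta_N$, and since conditioning reduces entropy, $H(U^{j_t}\mid U^{1:j_t-1})\ge H(U^{j_t}\mid U^{1:j_t-1}Z^{1:N})\ge 1-\delta_N$, hence $1-H(U^{j_t}\mid U^{1:j_t-1})\le\delta_N$. Summing over $t$ and using $|\mathcal S|\le N$ yields $|\mathcal S| - H(U^{1:N}[\mathcal S])\le |\mathcal S|\,\delta_N\le N\delta_N$, which is precisely the claimed inequality for $[K_i,\widetilde K_i]$.

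For the ``in particular'' statements, I would apply the same chain-rule identity to the sub-subvectors: writing $H(K_i\widetilde K_i)=H(K_i)+H(\widetilde K_i\mid K_i)$ and $|K_i|+|\widetilde K_i|-H(K_i\widetilde K_i)\ge 0$ term by term, one gets $|\widetilde K_i|-H(\widetilde K_i)\le |\widetilde K_i|-H(\widetilde K_i\mid K_i)\le|K_i|+|\widetilde K_i|-H(K_i\widetilde K_i)\le N\delta_N$, and symmetrically $|K_i|-H(K_i)\le N\delta_N$; alternatively just re-run the telescoping argument restricted to the index set $\mathcal{A}_{XYZ}$ (resp.\ $(\mathcal{V}_{X|Z}\setminus\mathcal{H}_{X|Y})\setminus\mathcal{A}_{XYZ}$), each of which is still a subset of $\mathcal{V}_{X|Z}$, so the per-index bound $\delta_N$ still applies. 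I do not anticipate a genuine obstacle here; the only point requiring a little care is making sure the monotonicity-of-conditioning step is applied in the correct direction — we need to \emph{remove} conditioning variables (from $U^{1:N}[\{j_1,\dots,j_{t-1}\}]$ down to nothing, or up to $U^{1:j_t-1}$) in a way that only \emph{increases} the conditional entropy and hence \emph{decreases} the gap from $1$ — and keeping the block-independence bookkeeping clean, which follows because the source is i.i.d.\ across blocks and the $U_i^{1:N}$ are generated by the same linear transform in each block.
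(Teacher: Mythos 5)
Your proposal is correct and follows essentially the same route as the paper's proof: both identify $[K_i,\widetilde K_i]$ with $U_i^{1:N}[\mathcal{V}_{X|Z}\setminus\mathcal{H}_{X|Y}]$, lower-bound its entropy via the chain rule together with ``conditioning reduces entropy'' to get the per-index bound $H(U_i^j\mid U_i^{1:j-1})\geq H(U_i^j\mid U_i^{1:j-1}Z^{1:N})\geq 1-\delta_N$, and extract the individual bounds on $K_i$ and $\widetilde K_i$ by dropping a nonnegative term. Your telescoping presentation and the paper's summed lower bound are the same inequality written two ways, so no substantive difference remains.
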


\begin{proof}
For $i \in \llbracket 1,k \rrbracket$, we have
\begin{align*}
&|K_i| + |\widetilde{K}_i| - H(K_i \widetilde{K}_i) \\
& = |K_i| + |\widetilde{K}_i|-  H( U_i^{1:N} [\mathcal{V}_{X|Z} \backslash \mathcal{H}_{X|Y}]) \\
& \stackrel{(a)}{\leq} |K_i| + |\widetilde{K}_i|- \sum_{j \in \mathcal{V}_{X|Z} \backslash \mathcal{H}_{X|Y}} H(U_i^j | U_i^{1:j-1}) \displaybreak[0]\\
& \stackrel{(b)}{\leq} |K_i| + |\widetilde{K}_i|- \sum_{j \in \mathcal{V}_{X|Z} \backslash \mathcal{H}_{X|Y}} (1- \delta_N) \displaybreak[0]\\
& = (|K_i| + |\widetilde{K}_i|) \delta_N \\
& \leq N\delta_N,
\end{align*}
where $(a)$ holds because conditioning reduces entropy, $(b)$ holds by definition of $\mathcal{V}_{X|Z}$ and because conditioning reduces entropy. Finally, note that since $|K_i| - H(K_i| \widetilde{K}_i) >0$, we have
\begin{align*}
|\widetilde{K}_i| - H( \widetilde{K}_i) 
& \leq |\widetilde{K}_i| - H( \widetilde{K}_i)  + |K_i| - H(K_i| \widetilde{K}_i)\\
& = |K_i| + |\widetilde{K}_i| - H(K_i \widetilde{K}_i).
\end{align*}
\end{proof}
It remains to show that the overall key $K_{1:k}$ is uniform, as well. Specifically, we have 
\begin{align*}
H(K_{1:k}) 
&= \sum_{i=1}^k H(K_{i} | K_{1:i-1}) \\
& \stackrel{(a)}{=} \sum_{i=1}^k H(K_{i} )  \displaybreak[0]\\
& \stackrel{(b)}{\geq} \sum_{i=1}^k ( |K_{i}| -N \delta_N)  \\
& =|K_{1:k}| - kN \delta_N,
\end{align*}
where $(a)$ holds because $X_i^{1:N}$ is independent of of $X_{1:i-1}^{1:N}$ for any $i \in \llbracket 1 ,k \rrbracket$, and $(b)$ holds by Lemma~\ref{lem_U1}. 
Hence,
\begin{align} \label{eq_uniformity}
\textbf{\textup{U}}(\mathcal{S}_N)= |K_{1:k}|  - H(K_{1:k}) \leq k N \delta_N. 
\end{align}

\subsubsection{Strong secrecy} \label{sec_secrecyz}

We first show that secrecy holds for each block using the following lemma .

\begin{lem} \label{lem_sec_block}
For each Block $i \in \llbracket 1, k \rrbracket$, $[K_i, \widetilde{K}_i]$ is a secret key. Specifically, 
\begin{align*}
I\left(K_{i}\widetilde{K}_{i}; M_{i} Z_{i}^{1:N}\right) \leq 2N \delta_N.
\end{align*}
\end{lem}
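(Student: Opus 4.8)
The plan is to split the public message $M_i = [F_i, F_i' \oplus \widetilde{K}_{i-1}]$ into its two parts and bound their contributions to the leakage separately: the part $F_i$ via the standard polarization secrecy estimate attached to the set $\mathcal{V}_{X|Z}$, and the part $F_i' \oplus \widetilde{K}_{i-1}$ via a one-time-pad argument exploiting the independence of the source observations across blocks. Writing $W_i \triangleq F_i' \oplus \widetilde{K}_{i-1}$, the chain rule gives
\begin{align*}
I(K_i \widetilde{K}_i; M_i Z_i^{1:N}) = I(K_i \widetilde{K}_i; F_i Z_i^{1:N}) + I(K_i \widetilde{K}_i; W_i \mid F_i Z_i^{1:N}),
\end{align*}
and I would show that each term on the right is at most $N\delta_N$.

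For the first term I would use the fact that $[K_i, \widetilde{K}_i]$ and $F_i$ together form exactly $U_i^{1:N}[\mathcal{V}_{X|Z}]$, together with the estimate $H(U_i^{1:N}[\mathcal{V}_{X|Z}] \mid Z_i^{1:N}) \geq |\mathcal{V}_{X|Z}|(1 - \delta_N)$. This estimate follows by expanding the conditional entropy by the chain rule over the indices of $\mathcal{V}_{X|Z}$ in increasing order and noting that each term is lower bounded, since conditioning reduces entropy, by $H(U_i^{j} \mid U_i^{1:j-1} Z_i^{1:N}) \geq 1 - \delta_N$, which holds by the definition of $\mathcal{V}_{X|Z}$. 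Combining it with $H(F_i \mid Z_i^{1:N}) \leq |F_i|$ and the identity $|\mathcal{V}_{X|Z}| - |F_i| = |K_i| + |\widetilde{K}_i|$ yields $H(K_i \widetilde{K}_i \mid F_i Z_i^{1:N}) \geq |K_i| + |\widetilde{K}_i| - |\mathcal{V}_{X|Z}|\delta_N \geq |K_i| + |\widetilde{K}_i| - N\delta_N$; since also $H(K_i \widetilde{K}_i) \leq |K_i| + |\widetilde{K}_i|$, we get $I(K_i \widetilde{K}_i; F_i Z_i^{1:N}) \leq N\delta_N$.

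For the second term the decisive point is that $\widetilde{K}_{i-1}$ is a deterministic function of $X_{i-1}^{1:N}$ (of the pre-shared uniform seed $\widetilde{K}_0$ when $i = 1$), hence independent of $(X_i^{1:N}, Z_i^{1:N})$ and in particular of the tuple $(K_i, \widetilde{K}_i, F_i, F_i', Z_i^{1:N})$, and that $|\widetilde{K}_{i-1}| = |\mathcal{A}_{XYZ}| = |\mathcal{H}_{X|Y} \setminus \mathcal{V}_{X|Z}| = |F_i'|$. I would combine the trivial upper bound $H(W_i \mid F_i Z_i^{1:N}) \leq |F_i'|$ with the lower bound $H(W_i \mid K_i \widetilde{K}_i F_i Z_i^{1:N}) \geq H(W_i \mid K_i \widetilde{K}_i F_i F_i' Z_i^{1:N}) = H(\widetilde{K}_{i-1} \mid K_i \widetilde{K}_i F_i F_i' Z_i^{1:N}) = H(\widetilde{K}_{i-1})$, where the steps use respectively that extra conditioning reduces entropy, that $W_i$ and $\widetilde{K}_{i-1}$ determine one another given $F_i'$, and the independence noted above. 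By Lemma~\ref{lem_U1}, $H(\widetilde{K}_{i-1}) \geq |\widetilde{K}_{i-1}| - N\delta_N = |F_i'| - N\delta_N$ (with equality and no slack when $i = 1$), so $I(K_i \widetilde{K}_i; W_i \mid F_i Z_i^{1:N}) \leq N\delta_N$, and adding the two bounds gives the claim.

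I expect the genuine content to be these two estimates; the rest is entropy bookkeeping. The point requiring the most care is to justify that $\widetilde{K}_{i-1}$ acts as a legitimate one-time pad within block $i$ — this rests entirely on the independence of the source realizations used in distinct blocks, which is why the functional dependence graph of Figure~\ref{figFGD} is convenient — and to carry the $N\delta_N$ slack from Lemma~\ref{lem_U1} through the one-time-pad step, since the pad $\widetilde{K}_{i-1}$ is only approximately uniform rather than perfectly so.
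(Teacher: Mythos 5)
Your proposal is correct and follows essentially the same route as the paper's proof: the same chain-rule split of $M_i$ into $F_i$ and $F_i'\oplus\widetilde{K}_{i-1}$, the same bound of the first term via the defining property of $\mathcal{V}_{X|Z}$ applied to $U_i^{1:N}[\mathcal{V}_{X|Z}]=[K_i,\widetilde{K}_i,F_i]$, and the same one-time-pad bound of the second term via cross-block independence of $\widetilde{K}_{i-1}$ and the near-uniformity estimate $|\widetilde{K}_{i-1}|-H(\widetilde{K}_{i-1})\leq N\delta_N$ from Lemma~\ref{lem_U1}. The only cosmetic difference is that you bound $H(F_i'\oplus\widetilde{K}_{i-1}\mid F_iZ_i^{1:N})\leq |F_i'|$ directly where the paper first passes to the unconditional mutual information; both yield the same $2N\delta_N$.
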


\begin{proof}
We have 
\begin{align}
& I(K_{i}\widetilde{K}_{i}; F_{i} Z_{i}^{1:N}) \nonumber \\ \nonumber
& = H( K_i \widetilde{K}_{i}) - H(K_{i}\widetilde{K}_{i} | F_{i} Z_{i}^{1:N}) \\ \nonumber
& \leq |K_i| + |\widetilde{K}_{i}| - H(K_{i}\widetilde{K}_{i} F_{i} |Z_{i}^{1:N} ) + H(F_{i} | Z_{i}^{1:N}) \\ \nonumber
& \leq |K_i| + |\widetilde{K}_{i}| + |F_i| - H(K_{i}\widetilde{K}_{i} F_{i} |Z_{i}^{1:N} ) \\ \nonumber
& \stackrel{(a)}{=} |\mathcal{V}_{X|Z} \backslash \mathcal{H}_{X|Y}| + |\mathcal{V}_{X|Z} \cap \mathcal{H}_{X|Y}| \\\nonumber
& \phantom{--}- H(U_i^{1:N}[(\mathcal{V}_{X|Z} \backslash \mathcal{H}_{X|Y}) \cup (\mathcal{V}_{X|Z} \cap \mathcal{H}_{X|Y})] |Z_i^{1:N}) \\\nonumber
& = |\mathcal{V}_{X|Z}| - H(U_i^{1:N}[\mathcal{V}_{X|Z}] |Z_i^{1:N}) \displaybreak[0] \\\nonumber
& \stackrel{(b)}{\leq} |\mathcal{V}_{X|Z}| - \sum_{j \in \mathcal{V}_{X|Z}} H(U_i^{j}|U_i^{1:j-1} Z_i^{1:N}) \displaybreak[0] \\ \nonumber \displaybreak[0]
& \stackrel{(c)}{\leq} |\mathcal{V}_{X|Z}| - \sum_{j \in \mathcal{V}_{X|Z}} (1-\delta_N) \\  \nonumber
& = |\mathcal{V}_{X|Z}| \delta_N \displaybreak[0] \\
& \leq N \delta_N, \label{eqsec1} 
\end{align}
where $(a)$ holds by definition of $K_i$, $\widetilde{K}_i$, and $F_i$, $(b)$ holds because conditioning reduces entropy, $(c)$ holds by definition of $\mathcal{V}_{X|Z}$. Therefore, we obtain
\begin{align*}
& I(K_{i}\widetilde{K}_{i}; M_{i} Z_{i}^{1:N}) \\
& \stackrel{(d)}{=} I(K_{i}\widetilde{K}_{i}; F_{i} (F_{i}' \oplus \widetilde{K}_{i-1}) Z_{i}^{1:N}) \\
& = I(K_{i}\widetilde{K}_{i}; F_{i}  Z_{i}^{1:N}) + I(K_{i}\widetilde{K}_{i}; F_{i}' \oplus \widetilde{K}_{i-1}| F_{i}  Z_{i}^{1:N}) \displaybreak[0]\\
& \stackrel{(e)}{\leq} N \delta_N + I(K_{i}\widetilde{K}_{i} F_{i}  Z_{i}^{1:N} F_{i}'; F_{i}' \oplus \widetilde{K}_{i-1} )\displaybreak[0] \\
& = N \delta_N + H(F_{i}' \oplus \widetilde{K}_{i-1}) - H(F_{i}' \oplus \widetilde{K}_{i-1}|K_{i}\widetilde{K}_{i} F_{i}  Z_{i}^{1:N} F_{i}') \displaybreak[0] \\
& = N \delta_N + H(F_{i}' \oplus \widetilde{K}_{i-1}) - H( \widetilde{K}_{i-1}|K_{i}\widetilde{K}_{i} F_{i}  Z_{i}^{1:N} F_{i}') \\
& = N \delta_N + H(F_{i}' \oplus \widetilde{K}_{i-1}) - H( \widetilde{K}_{i-1}) \displaybreak[0]\\
& \leq N \delta_N + |\widetilde{K}_{i-1}| - H( \widetilde{K}_{i-1}) \displaybreak[0]\\
&  \stackrel{(f)}{\leq} 2 N \delta_N,
\end{align*}
where $(d)$ holds by definition of $M_i$, $(e)$ holds by~(\ref{eqsec1}) and positivity of mutual information, $(f)$ holds by Lemma~\ref{lem_U1}.
\end{proof}

We now state two lemmas that will be used to show that secrecy holds for the global scheme.
\begin{lem} \label{lem_KKtilde}
For $i \in \llbracket 1,k \rrbracket$, we have for $N$ large enough
\begin{align*}
I(K_i;\widetilde{K}_i) 
& \leq \delta_{N}^*,
\end{align*}
where 
\begin{equation} \label{delta_Stardef}
 \delta_{N}^* \triangleq 3 \sqrt{2N \delta_N \log 2} \left( N- \log_2 \left( 3 \sqrt{2N \delta_N \log 2}  \right)\right).
 \end{equation} 
\end{lem}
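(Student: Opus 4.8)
The plan is to bound the mutual information $I(K_i;\widetilde{K}_i)$ by relating it to the uniformity deficit of the concatenated vector $[K_i,\widetilde{K}_i] = U_i^{1:N}[\mathcal{V}_{X|Z}\setminus\mathcal{H}_{X|Y}]$, which Lemma~\ref{lem_U1} already controls. First I would write
\begin{align*}
I(K_i;\widetilde{K}_i) = H(K_i) + H(\widetilde{K}_i) - H(K_i\widetilde{K}_i) \leq \bigl(|K_i| + |\widetilde{K}_i| - H(K_i\widetilde{K}_i)\bigr) \leq N\delta_N,
\end{align*}
using $H(K_i)\leq |K_i|$, $H(\widetilde{K}_i)\leq|\widetilde{K}_i|$ and Lemma~\ref{lem_U1}. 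This gives the crude bound $I(K_i;\widetilde{K}_i)\leq N\delta_N$. The subtlety is that the claimed bound $\delta_N^*$ in \eqref{delta_Stardef} is of a different analytic form — it involves $\sqrt{2N\log 2}\cdot N 2^{-N^\beta/2}\log(\cdots)$ rather than $N2^{-N^\beta}$ — so the authors evidently want a bound that survives a subsequent telescoping/chaining argument across the $k$ blocks, and the natural route is \emph{not} through entropy directly but through variational distance and Pinsker-type inequalities.

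So the real plan is: from $|K_i|+|\widetilde K_i| - H(K_i\widetilde K_i)\leq N\delta_N$, deduce via the standard relation between the uniformity deficit and variational distance (e.g. the bound $\mathbb{V}(p_{K_i\widetilde K_i}, \text{uniform}) \leq \sqrt{\tfrac{1}{2}\ln 2 \cdot (|K_i|+|\widetilde K_i| - H(K_i\widetilde K_i))}$, which follows from Pinsker applied to the divergence from the uniform distribution) that $p_{K_i\widetilde K_i}$ is $\sqrt{(N\delta_N\ln2)/2}$-close in variational distance to uniform, hence in particular close to $p_{K_i}\otimes p_{\widetilde K_i}$. Then I would invoke the continuity of conditional entropy / mutual information in variational distance (the bound $|I(A;B) - I(A';B')| \leq \mu\log\frac{|\mathcal A||\mathcal B|}{\mu}$ type estimate, which is the source of the $\log$ factor and the mixed polynomial-exponential shape of $\delta_N^*$), applied with $\mu$ on the order of $\sqrt{N\delta_N}=\sqrt{N}\,2^{-N^\beta/2}$ and alphabet size $2^{|K_i|+|\widetilde K_i|}\leq 2^N$. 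Tracking the constants — the $\sqrt{2N\log 2}$ comes from Pinsker with natural vs.\ binary logs and the factor $2^N$ enters the $\log$ — yields exactly an expression of the form $-c\sqrt{N}\,2^{-N^\beta/2}\cdot N \cdot \log_2(c' 2^{-N^\beta/2})$, matching \eqref{delta_Stardef} up to the displayed constant $3$.

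The key steps in order: (1) invoke Lemma~\ref{lem_U1} to get $|K_i|+|\widetilde K_i|-H(K_i\widetilde K_i)\leq N\delta_N$ with $\delta_N = 2^{-N^\beta}$; (2) convert this entropy deficit to a variational-distance bound between $p_{K_i\widetilde K_i}$ and the uniform distribution on $\{0,1\}^{|K_i|+|\widetilde K_i|}$ via $\mathbb{D}(p\|u) = |K_i|+|\widetilde K_i| - H(p)$ and Pinsker, obtaining closeness of order $\sqrt{2N\log2}\cdot 2^{-N^\beta/2}$; (3) note this also bounds $\mathbb{V}(p_{K_i\widetilde K_i}, p_{K_i}p_{\widetilde K_i})$ by the triangle inequality (a product of marginals of a near-uniform distribution is near-uniform); (4) apply the uniform-continuity bound for mutual information (Csisz\'ar–K\"orner / Alajaji-type lemma: $|I(K_i;\widetilde K_i)| \leq \eta \log\frac{2^{|K_i|+|\widetilde K_i|}}{\eta}$ for $\eta$ the variational distance, valid once $\eta$ is small), and (5) upper bound $|K_i|+|\widetilde K_i|\leq N$ and simplify to reach $\delta_N^*$.

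The main obstacle I anticipate is \textbf{getting the constants and the form of the continuity bound exactly right} so that the resulting expression coincides with the specific $\delta_N^*$ in \eqref{delta_Stardef}: one must use the correct version of the entropy-continuity lemma (the one giving $\eta\log(|\mathcal{X}|/\eta)$ rather than a weaker $\eta\log|\mathcal{X}| + h(\eta)$ bound, or carefully combine both), apply it to an alphabet of size $2^{|K_i|}\cdot 2^{|\widetilde K_i|} \leq 2^N$ so the $\log$ of the alphabet size contributes the extra factor $N$, and track whether $\log$ is base $2$ or natural at each conversion (Pinsker introduces $\ln 2$, hence the $\sqrt{2N\log 2}$ with $\log$ base $2$). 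A secondary, purely bookkeeping point is checking that $\eta = \sqrt{2N\log 2}\cdot 2^{-N^\beta/2}$ is indeed $\leq 1/2$ (or whatever threshold the continuity lemma requires) for $N$ large enough — this is why the statement says ``for $N$ large enough'' — and that the argument of $\log_2$ in $\delta_N^*$ stays negative so that $\delta_N^*$ is positive, which is automatic once $\eta<1$.
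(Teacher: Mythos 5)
Your proposal matches the paper's proof essentially step for step: Lemma~\ref{lem_U1} plus Pinsker gives $\mathbb{V}(p_{K_i\widetilde K_i},q_{\mathcal{U}_{K,\widetilde K}})\leq \sqrt{2N\log 2}\,2^{-N^{\beta}/2}$, a double triangle inequality converts this into $\mathbb{V}(p_{K_i\widetilde K_i},p_{K_i}p_{\widetilde K_i})\leq 3\sqrt{2N\log 2}\,2^{-N^{\beta}/2}$, and the $\eta\log_2(|\mathcal{X}|/\eta)$ continuity bound for mutual information with $\log_2|\mathcal{X}|\leq N$ yields exactly $\delta_N^*$. (Incidentally, the ``crude'' bound $I(K_i;\widetilde K_i)\leq |K_i|+|\widetilde K_i|-H(K_i\widetilde K_i)\leq N\delta_N$ that you set aside is also valid and is in fact stronger than $\delta_N^*$, so it would prove the lemma as stated even more directly.)
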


\begin{proof}
See Appendix~\ref{App_lemKKtilde}
\end{proof}
\begin{lem} \label{lem_secrec}
For $i\in \llbracket 2,k \rrbracket$, define
\begin{align*}
\widetilde{L}_e^{1:i} &\triangleq I \left(K_{1:i} \widetilde{K}_i; M_{1:i} Z^{1:N}_{1:i} \right).
\end{align*}
We have 
\begin{align*}
\widetilde{L}_e^{1:i} - \widetilde{L}_e^{1:i-1} \leq I\left(K_{i}\widetilde{K}_{i}; M_{i} Z_{i}^{1:N}\right) + I\left(K_{i-1} ; \widetilde{K}_{i-1} \right).
\end{align*}
\end{lem}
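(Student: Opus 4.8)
The plan is to prove the inequality by a telescoping argument exploiting two structural features of the block-coding scheme. First, since the source is memoryless, the block-$j$ tuple $(U_j^{1:N}, Z_j^{1:N})$ — and hence $K_j$, $\widetilde K_j$, $F_j$, $F_j'$, $Z_j^{1:N}$ — is statistically independent of all the variables attached to the remaining blocks. Second, block $j$ is coupled to block $j-1$ only through $M_j = [F_j, F_j' \oplus \widetilde K_{j-1}]$, i.e.\ $\widetilde K_{j-1}$ enters block $j$ solely as the one-time pad masking $F_j'$.

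I would first peel off block $i$. By the chain rule,
\[
\widetilde L_e^{1:i} = I(K_{1:i}\widetilde K_i; M_{1:i-1} Z_{1:i-1}^{1:N}) + I(K_{1:i}\widetilde K_i; M_i Z_i^{1:N} \mid M_{1:i-1} Z_{1:i-1}^{1:N}),
\]
and since $(K_i,\widetilde K_i)$ is independent of $(K_{1:i-1}, M_{1:i-1}, Z_{1:i-1}^{1:N})$, the first summand reduces to $I(K_{1:i-1}; M_{1:i-1} Z_{1:i-1}^{1:N})$. Using also $\widetilde L_e^{1:i-1} = I(K_{1:i-1}; M_{1:i-1} Z_{1:i-1}^{1:N}) + I(\widetilde K_{i-1}; M_{1:i-1} Z_{1:i-1}^{1:N} \mid K_{1:i-1})$, the difference becomes $\widetilde L_e^{1:i} - \widetilde L_e^{1:i-1} = B - D$, where $B \triangleq I(K_{1:i}\widetilde K_i; M_i Z_i^{1:N} \mid M_{1:i-1} Z_{1:i-1}^{1:N})$ and $D \triangleq I(\widetilde K_{i-1}; M_{1:i-1} Z_{1:i-1}^{1:N} \mid K_{1:i-1})$. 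I would then split $M_i Z_i^{1:N} = (F_i, Z_i^{1:N}, F_i' \oplus \widetilde K_{i-1})$ inside $B$: the sub-tuple $(F_i, Z_i^{1:N})$ is independent of everything in blocks $1,\dots,i-1$, so its contribution is $I(K_i\widetilde K_i; F_i Z_i^{1:N}) \leq I(K_i\widetilde K_i; M_i Z_i^{1:N})$, which is the first term in the claim, and the remainder is the ``pad term'' $B_2 \triangleq I(K_{1:i}\widetilde K_i; F_i' \oplus \widetilde K_{i-1} \mid F_i Z_i^{1:N} M_{1:i-1} Z_{1:i-1}^{1:N})$.

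It then remains to show $B_2 - D \leq I(K_{i-1};\widetilde K_{i-1})$, modulo a uniformity defect. Using that $F_i'$ is independent both of $\widetilde K_{i-1}$ and of the blocks $1,\dots,i-1$, a short entropy computation bounds $B_2$ by $|\widetilde K_{i-1}| - H(\widetilde K_{i-1} \mid K_{1:i-1} M_{1:i-1} Z_{1:i-1}^{1:N})$; subtracting $D$ telescopes the conditioning to $|\widetilde K_{i-1}| - H(\widetilde K_{i-1} \mid K_{1:i-1})$, and discarding $K_{1:i-2}$ by block independence gives $[|\widetilde K_{i-1}| - H(\widetilde K_{i-1})] + I(K_{i-1};\widetilde K_{i-1})$, the uniformity defect $|\widetilde K_{i-1}| - H(\widetilde K_{i-1})$ being controlled by Lemma~\ref{lem_U1}. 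The main obstacle is precisely this pad term: one must keep the one-time pad ``on'' — rewriting $F_i' \oplus \widetilde K_{i-1}$ as $F_i'$ on the eavesdropper's side would expose $F_i'$ and leak $I(K_i\widetilde K_i; F_i F_i' Z_i^{1:N})$, which is in general far larger than $I(K_i\widetilde K_i; M_i Z_i^{1:N})$ — and account for the pad's imperfection only through $I(K_{i-1};\widetilde K_{i-1})$, which Lemma~\ref{lem_KKtilde} and the near-uniformity of $\widetilde K_{i-1}$ from Lemma~\ref{lem_U1} render negligible.
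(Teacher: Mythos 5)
Your argument is correct in substance but takes a genuinely different route from the paper, and as executed it proves a slightly weaker inequality than the one stated. The paper peels off block $i$ first: by the chain rule $\widetilde{L}_e^{1:i}=\alpha_i+\beta_i+\gamma_i$ with $\alpha_i= I(K_i\widetilde{K}_i;M_iZ_i^{1:N})$, $\beta_i=I(K_{1:i-1};M_iZ_i^{1:N}\mid K_i\widetilde{K}_i)$ and $\gamma_i=I(K_{1:i}\widetilde{K}_i;M_{1:i-1}Z_{1:i-1}^{1:N}\mid M_iZ_i^{1:N})$; it then bounds $\gamma_i\le\widetilde{L}_e^{1:i-1}$ via the Markov chain $K_i\widetilde{K}_iZ_i^{1:N}M_i\to K_{1:i-1}\widetilde{K}_{i-1}\to M_{1:i-1}Z_{1:i-1}^{1:N}$ and $\beta_i\le I(K_{i-1};\widetilde{K}_{i-1})$. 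Because the full $I(K_i\widetilde{K}_i;M_iZ_i^{1:N})$ appears directly as $\alpha_i$, the one-time-pad imperfection is entirely absorbed into that term and the cross term costs only $I(K_{i-1};\widetilde{K}_{i-1})$. You instead peel off blocks $1$ through $i-1$ first; this is also legitimate, and your independence and Markov-chain claims all check out (in particular $\widetilde{L}_e^{1:i}-\widetilde{L}_e^{1:i-1}=B-D$ with your $B$ and $D$). But your block-$i$ contribution then splits as $I(K_i\widetilde{K}_i;F_iZ_i^{1:N})+B_2$, and bounding $B_2-D$ forces you to pay the pad's uniformity defect explicitly. Since you upper-bound $I(K_i\widetilde{K}_i;F_iZ_i^{1:N})$ by $I(K_i\widetilde{K}_i;M_iZ_i^{1:N})$ instead of retaining the exact difference $I(K_i\widetilde{K}_i;F_i'\oplus\widetilde{K}_{i-1}\mid F_iZ_i^{1:N})$ to offset that defect, what you actually establish is
\[
\widetilde{L}_e^{1:i}-\widetilde{L}_e^{1:i-1}\le I\left(K_i\widetilde{K}_i;M_iZ_i^{1:N}\right)+I\left(K_{i-1};\widetilde{K}_{i-1}\right)+\bigl(|\widetilde{K}_{i-1}|-H(\widetilde{K}_{i-1})\bigr),
\]
which is the stated bound plus a nonnegative slack. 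That slack is at most $N\delta_N$ by Lemma~\ref{lem_U1}, so your version is entirely adequate for the leakage bound~(\ref{eq_leakage}) and the theorem; it is just not literally the lemma. To recover the exact statement you would either adopt the paper's order of the chain rule, or sharpen your last step to show $B_2-D\le I(K_i\widetilde{K}_i;F_i'\oplus\widetilde{K}_{i-1}\mid F_iZ_i^{1:N})+I(K_{i-1};\widetilde{K}_{i-1})$.
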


\begin{proof}
See Appendix~\ref{App_lemsecrec}.
\end{proof}

We thus obtain
\begin{align}
&\textbf{L}(\mathcal{S}_N) \nonumber\\
& = {I} (K_{1:k}; M_{1:k} Z^{1:N}_{1:k}) \nonumber \\  \nonumber
& \leq \widetilde{L}_e^{1:k} \\ \nonumber
& = \sum_{i=2}^k (\widetilde{L}_e^{1:i} - \widetilde{L}_e^{1:i-1}) + \widetilde{L}_e^{1} \\ \nonumber
& \stackrel{(a)}{\leq} \sum_{i=2}^k  \left( I\left(K_{i}\widetilde{K}_{i}; M_{i} Z_{i}^{1:N}\right) + I\left(K_{i-1} ; \widetilde{K}_{i-1} \right) \right) + \widetilde{L}_e^{1} \\ \nonumber
& \leq \sum_{i=1}^k  I\left(K_{i}\widetilde{K}_{i}; M_{i} Z_{i}^{1:N} \right) + \sum_{i=2}^k  I\left(K_{i-1} ; \widetilde{K}_{i-1} \right) \\
& \stackrel{(b)}{\leq} 2kN\delta_N + (k-1) \delta_{N}^*, \label{eq_leakage}
\end{align}
where $(a)$ follows by Lemma~\ref{lem_secrec}, $(b)$ follows by Lemma~\ref{lem_KKtilde} and Lemma~\ref{lem_sec_block}.
\subsubsection{Seed rate}
The seed rate required to initialize the coding scheme is negligible since
$$ \lim_{ k \to \infty}\lim_{ N \to \infty}\frac{|\mathcal{H}_{X|Y} \backslash  \mathcal{V}_{X|Z} |} {kN} \leq \lim_{ k \to \infty}\frac{H(X|Y)}{k}=0.$$
Note that the seed rate may be chosen to decrease exponentially fast to zero with $N$ since we may choose $k = 2^{N^{\alpha}}$, $\alpha < \beta$ and still have 
$\lim_{N \to \infty}\mathbf{P}_e(\mathcal{S}_N) = 0$ by~(\ref{eq_errorPr}), $\lim_{N \to \infty}\mathbf{U}_e(\mathcal{S}_N) = 0$ by~(\ref{eq_uniformity}), and $\lim_{N \to \infty}\mathbf{L}_e(\mathcal{S}_N) = 0$ by~(\ref{eq_leakage}) and (\ref{delta_Stardef}).

\section{Model 2: Secret Key Generation with Rate-Limited Public Communication}  \label{sec_model2}

We now move to the second key generation model, which differs from Model 1 by restricting the public communication to be rate-limited and one way from Alice to Bob. The organization follows that of Section~\ref{sec_model1}. 

\subsection{Secret-key generation model} \label{Secstatmod2}
\begin{figure}
\centering
  \includegraphics[width=8.5cm]{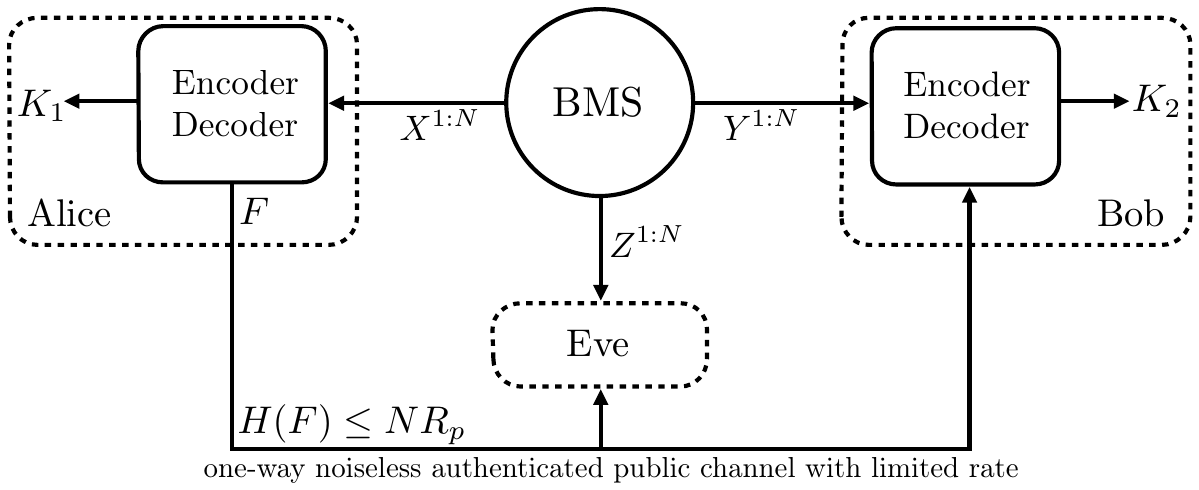}
  \caption{Model 2: Secret-key generation for the BMS model with one-way rate-limited public communication}
  \label{figmodel2}
\end{figure}

As illustrated in Fig.~\ref{figmodel2}, we set again $m=2$ and we use $\mathcal{X}$ instead of $\mathcal{X}_1$, $\mathcal{Y}$ instead of $\mathcal{X}_2$ for convenience. We assume that $\mathcal{X} = \{ 0,1\}$ and that Alice and Bob are constrained to only use one-way communication over an authenticated noiseless public channel with limited rate $R_p \in \mathbb{R} $. We call this setup the BMS model with rate-limited public communication. The following results are known for the model.

\begin{thm}[\!\! {\cite[Th. 2.6]{Csiszar00}}] \label{Th_model2}
Let $(\mathcal{X}\mathcal{Y}\mathcal{Z},p_{XYZ})$ be a BMS and $R_p \in \mathbb{R}_+$ be the public communication rate. If $X \to Y \to Z$, then the one-way rate-limited secret-key capacity is\footnote{See also \cite[Prop. 5.2, Rem. 5.2]{Chou12b} and \cite[Cor. 6]{Watanabe11} for the exact derivation.}
\begin{align*}
& C_{\textup{WSK}}(R_p) = \displaystyle\max_{U} \left( {I}(Y;U) - {I}(Z;U)\right) 
\end{align*}
\vspace*{-1.11em}
\text{ subject to }
\vspace*{-1em}
\begin{align*} 
& R_p = {I}(U;X) - I(U;Y), \\ \nonumber
&  U \to X \to Y\to Z, \\ \nonumber
&|\mathcal{U}| \leq |\mathcal{X}| .
\end{align*}
\end{thm}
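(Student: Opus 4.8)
This is the classical one-way rate-limited secret-key capacity of Csisz\'ar and Narayan \cite{Csiszar00}, so my plan is to reproduce (a strong-secrecy version of) their achievability and converse, and then invoke the support lemma for the cardinality bound. \textbf{Achievability.} Fix an auxiliary channel $p_{U|X}$ with $U\to X\to Y\to Z$ and set $R_p' = I(U;X)-I(U;Y)+\epsilon$. First I would build a Wyner--Ziv-style code for \emph{reconciliation}: generate $\approx 2^{N(I(U;X)+\epsilon/2)}$ i.i.d.\ $p_U^{\otimes N}$ codewords $u^N(\ell,b)$, sorted into $2^{NR_p'}$ bins indexed by $b$. Alice, having observed $X^{1:N}$, uses the covering lemma to pick a codeword $u^N(\ell,b)$ jointly typical with $X^{1:N}$ and transmits only the bin index $F=b$ over the public channel, using rate $R_p'$; since the per-bin codeword rate is below $I(U;Y)$ and $X\to Y$, Bob recovers the \emph{same} codeword $u^N$ from $(Y^{1:N},F)$ by joint-typicality decoding. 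The two parties now share $u^N$ while the eavesdropper holds $(F,Z^{1:N})$; degradedness gives $I(U;Z)\le I(U;Y)$, so the conditional entropy of $u^N$ given $(F,Z^{1:N})$ is $\approx N(I(U;Y)-I(U;Z))$. A second step --- \emph{privacy amplification} by a two-universal hash function (leftover hash lemma), or equivalently a channel-resolvability argument in the spirit of \cite{Chou12b,Watanabe11} --- then extracts a key of rate $I(U;Y)-I(U;Z)-\epsilon$ that is nearly uniform and $\epsilon$-close in variational distance to being independent of $(F,Z^{1:N})$, i.e., achieves strong secrecy. Letting $\epsilon\to 0$, and noting that sending extra public bits never hurts so the rate constraint may be taken with equality, yields achievability of the claimed region.

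\textbf{Converse.} Consider any sequence of $(2^{NR},N,R_p)$ strategies with $\mathbf P_e(\mathcal S_N)\to0$, $\mathbf L(\mathcal S_N)\to0$, $\mathbf U(\mathcal S_N)\to0$. Since $F$ and $K$ are deterministic functions of $X^{1:N}$ and Bob recovers $K$ from $(Y^{1:N},F)$, Fano's inequality gives $H(K\mid Y^{1:N}F)=o(N)$; combining this with the secrecy bound $I(K;Z^{1:N}F)=o(N)$ and uniformity $H(K)\ge NR-o(N)$ yields $NR \le I(K;Y^{1:N}\mid F)-I(K;Z^{1:N}\mid F)+o(N)$. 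Next I would single-letterize this expression together with the bound $NR_p\ge H(F)=I(X^{1:N};F)$, using the Csisz\'ar sum identity with a uniform time index $J$ on $\llbracket 1,N\rrbracket$ and the auxiliary $U_J := (K,F,Y^{1:J-1},Z^{J+1:N})$: the i.i.d.\ structure of the source (so that $X_J$ is independent of $X^{\setminus J}$) together with the degradedness $X\to Y\to Z$ force the single-letter Markov chain $U_J\to X_J\to Y_J\to Z_J$, and the manipulation collapses to $R\le I(U_J;Y_J\mid J)-I(U_J;Z_J\mid J)$ and $R_p\ge I(U_J;X_J\mid J)-I(U_J;Y_J\mid J)$. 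Absorbing $J$ into the auxiliary using concavity of $R_p\mapsto C_{\textup{WSK}}(R_p)$ (a standard time-sharing argument) removes the time-sharing variable, and the support lemma (Carath\'eodory / Fenchel--Eggleston), applied to $p_X$ and the two relevant functionals of $p_{U|X}$, gives $|\mathcal U|\le|\mathcal X|$, exactly as in \cite[Th.~2.6]{Csiszar00}.

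\textbf{Main obstacle.} The achievability is a by now standard splicing of Wyner--Ziv source coding with side information and privacy amplification; the delicate part is the converse, namely producing a \emph{single} auxiliary $U_J$ that simultaneously yields the bound on the key rate and the bound on the public-communication rate while still satisfying the full chain $U\to X\to Y\to Z$, and then the convexification needed to reach a clean single-letter formula. The degradedness hypothesis enters precisely there: it is what makes the Csisz\'ar-sum manipulation reduce to a difference of single-letter mutual informations with the correct Markov structure; without it one only obtains an outer bound in terms of conditional mutual informations that need not coincide with $C_{\textup{WSK}}(R_p)$.
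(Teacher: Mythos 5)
The paper does not prove this statement: Theorem~\ref{Th_model2} is imported verbatim from \cite[Th.~2.6]{Csiszar00}, with the footnote pointing to \cite{Chou12b} and \cite{Watanabe11} for the exact form of the constraint set (the equality $R_p = I(U;X)-I(U;Y)$ and the bound $|\mathcal{U}|\leq|\mathcal{X}|$). So there is no in-paper argument to compare yours against; what can be judged is whether your outline matches the derivation in those references, and it essentially does: Wyner--Ziv binning for reconciliation followed by privacy amplification via two-universal hashing is exactly the sequential construction of \cite{Chou12b,Watanabe11}, and the converse via Fano, the secrecy and uniformity constraints, the Csisz\'ar sum identity with an auxiliary of the form $(K,F,Y^{1:J-1},Z^{J+1:N})$, and the support lemma is the standard route. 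Two points deserve more care than your sketch gives them. First, the general (non-degraded) one-way theorem requires \emph{two} nested auxiliaries $V\to U\to X\to YZ$, with key rate $I(V;Y|U)-I(V;Z|U)$; the collapse to a single auxiliary is where degradedness is used, and your converse should exhibit that reduction explicitly rather than asserting that the Csisz\'ar-sum manipulation ``collapses.'' Second, your justification for replacing the converse inequality $R_p\geq I(U;X)-I(U;Y)$ by an equality (``sending extra public bits never hurts'') is backwards as stated: the correct argument is that $C_{\textup{WSK}}(R_p)$ is nondecreasing and concave in $R_p$, so the maximum over the region $\{R_p\geq I(U;X)-I(U;Y)\}$ is attained on the boundary, which is precisely the content of \cite[Rem.~5.2]{Chou12b}. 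Neither issue invalidates the plan, but both are the places where the ``exact derivation'' the footnote refers to actually lives.
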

Closed-form expressions of the secret-key capacity are only known for specific sources. See the following example.

\begin{ex} \label{examplecapl}
Assume $\mathcal{X} = \mathcal{Y} = \mathcal{Z} = \{ 0,1\}$ and $X \sim \mathcal{B}(1/2)$. Set $Y \triangleq  X \oplus B_1$ and $Z \triangleq Y \oplus B_2$, with $B_1\sim \mathcal{B}(p)$, $B_2\sim \mathcal{B}(q)$, where $\oplus$ denotes the modulo-$2$ addition. Then, by \cite[Prop. 5.3]{Chou12b}, the secret-key capacity is 
\begin{multline*} 
C_{\text{WSK}} (R_p) \\ \triangleq  \begin{cases}
   {H}_b(p \star \beta_0 \star q) - {H}_b(p \star \beta_0), 
 &\text{if } R_p \leq {H}(X|Y), \\
    {H}_b(p \star q)  - {H}_b(p), &\text{if } R_p \geq {H}(X|Y), 
   \end{cases}
\end{multline*}
where $\beta_0$ must satisfy\footnote{ Note that (\ref{beta0}) has two symmetric solutions.}  
\begin{equation} \label{beta0}
{H}_b(p \star \beta_0)-{H}_b(\beta_0)=R_p,
\end{equation}
${H}_b(\cdot)$ is the binary entropy function, and the associative and commutative operation $\star$ is defined as $p \star \beta_0 = (1-\beta_0)p + \beta_0(1-p)$. 
\end{ex}

 When the eavesdropper has no access to the source component $Z$, one obtains the following expression for the secret-key capacity.

\begin{cor} \label{cor_ratelim}
Let $(\mathcal{X}\mathcal{Y}\mathcal{Z},p_{XYZ})$ be a BMS and $R_p \in \mathbb{R}_+$ be the public communication rate. The one-way rate-limited secret-key capacity is
\begin{align*}
 C_{\textup{SK}}(R_p) = \displaystyle\max_{U} {I}(Y;U) 
\end{align*}
\vspace*{-1.11em}
\text{ subject to }
\vspace*{-1em}
\begin{align*} 
& R_p = {I}(U;X) - I(U;Y), \\ \nonumber
&  U \to X \to Y, \\ \nonumber
&|\mathcal{U}| \leq |\mathcal{X}|.
\end{align*}
\end{cor}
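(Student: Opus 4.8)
The plan is to obtain Corollary~\ref{cor_ratelim} as the specialization of Theorem~\ref{Th_model2} to the degenerate source in which the eavesdropper observes nothing. Concretely, I would formalize ``Eve has no access to the component $Z$'' by taking $\mathcal{Z}$ to be a singleton, so that $Z$ is a constant random variable independent of $(X,Y)$; with this interpretation, $C_{\textup{SK}}(R_p)$ is by definition equal to $C_{\textup{WSK}}(R_p)$ evaluated for this particular source, which is exactly the usage fixed in Definition~\ref{def}.

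First I would check that Theorem~\ref{Th_model2} is applicable: since $Z$ is constant, we have $p_{XYZ}=p_{XY}\,p_Z$ with $p_Z$ a point mass, so the degradedness hypothesis $X \to Y \to Z$ holds trivially, and all hypotheses of Theorem~\ref{Th_model2} are met. Next I would simplify the optimization problem appearing in Theorem~\ref{Th_model2} under this choice of source. For any auxiliary variable $U$, the long Markov chain $U \to X \to Y \to Z$ collapses to $U \to X \to Y$ because $Z$ is independent of $(U,X,Y)$, and the objective satisfies $I(Y;U)-I(Z;U)=I(Y;U)$ since $I(Z;U)=0$. The rate constraint $R_p=I(U;X)-I(U;Y)$ and the cardinality bound $|\mathcal{U}|\leq|\mathcal{X}|$ are untouched. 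Hence
\begin{align*}
C_{\textup{WSK}}(R_p)=\max_{U}\; I(Y;U)\quad\text{subject to}\quad R_p=I(U;X)-I(U;Y),\ U\to X\to Y,\ |\mathcal{U}|\leq|\mathcal{X}|,
\end{align*}
which is precisely the claimed formula for $C_{\textup{SK}}(R_p)$.

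There is essentially no hard step here --- the statement is a direct corollary of Theorem~\ref{Th_model2} --- but the one point that deserves a sentence of care is the legitimacy of the reduction: one must observe that removing the eavesdropper is literally the same as making $Z$ a constant, then verify (as above) that this constant $Z$ both satisfies the degradedness hypothesis and contributes zero to both the objective and the constraints, so that no information is lost by passing through the more general theorem. If a self-contained argument were preferred, I would instead invoke the general one-way rate-limited converse of~\cite{Csiszar00} (which does not even require degradedness) for the upper bound, and for achievability use a Wyner--Ziv-style random binning combined with privacy amplification, or equivalently the low-complexity polar coding scheme developed later in this section; the citation-based reduction above, however, is the most economical route.
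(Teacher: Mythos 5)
Your proposal is correct and matches the paper's treatment: the paper states Corollary~\ref{cor_ratelim} without proof, as an immediate specialization of Theorem~\ref{Th_model2} to the case where the eavesdropper's observation is trivial, which is exactly the reduction you carry out (constant $Z$ makes $X\to Y\to Z$ hold vacuously, kills the $I(Z;U)$ term, and collapses the long Markov chain to $U\to X\to Y$). Your extra sentence justifying the legitimacy of modeling ``no access to $Z$'' as a constant $Z$ is a reasonable bit of added care, but nothing beyond the paper's implicit argument is needed.
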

 
The practical justification for Model 2 is similar to that for Model 1; however, Model 2 allows us to account for rate-limited communication constraints, which is relevant in applications with stringent bandwidth constraints, such as wireless sensor networks. We will also see in Section~\ref{Sec_bio} that such constraint may account for privacy-leakage constraints in biometric systems.
 
The main challenge in designing a coding scheme for Model~2 is to address the problem of vector quantization with side information at the receiver. Previous polar coding results on lossy source coding with lossless reconstruction of the vector quantized version of the source are reported in~\cite{Korada10,Honda13}; our contribution is to extend these results when side information is available at the receiver, and to show how to apply such technique to secret-key generation with rate-limited public communication. 

\subsection{Polar coding scheme} \label{Sec_scheme2}

Let $n \in \mathbb{N}$ and $N \triangleq 2^n$. Fix a joint probability distribution $p_{XU}$ such that $I(Y;U)-I(Z;U)>0$, but we do not assume $X \to Y \to Z$. Denote $V^{1:N} \triangleq U^{1:N} G_N$, the polar transform of a vector $U^{1:N}$ with i.i.d components according to the marginal distribution $p_U$. For $\delta_N \triangleq 2^{-N^{\beta}}$, where $\beta \in ]0,1/2[$, define the following sets.
\begin{align*}
\mathcal{H}_{U}  \triangleq &  \left\{ i\in \llbracket 1,N\rrbracket : {H} \left(V^i|V^{1:i-1} \right) \geq  \delta_N \right\} , \\
\mathcal{V}_{U|Z}  \triangleq &  \left\{ i\in \llbracket 1,N\rrbracket : {H} \left(V^i|V^{1:i-1} Z^{1:N} \right) \geq 1- \delta_N \right\} , \\
\mathcal{V}_{U|Y}  \triangleq &  \left\{ i\in \llbracket 1,N\rrbracket : {H} \left(V^i|V^{1:i-1} Y^{1:N} \right) \geq 1- \delta_N \right\} , \\
\mathcal{H}_{U|Y} \triangleq & \left\{ i\in \llbracket 1,N\rrbracket : {H}\left(V^i|V^{1:i-1} Y^{1:N}\right) \geq \delta_N \right\}, \\
\mathcal{H}_{U|X} \triangleq & \left\{ i\in \llbracket 1,N\rrbracket : {H}\left(V^i|V^{1:i-1} X^{1:N}\right) \geq \delta_N \right\}.
\end{align*}

The encoding and decoding algorithms are given in Algorithm~\ref{alg:encoding_2} and Algorithm~\ref{alg:decoding_2}. The high-level principles are similar to that of Algorithm~\ref{alg:encoding_1} and Algorithm~\ref{alg:decoding_1}, and we only highlight here the differences. Instead of directly operating on the source symbols, Alice first constructs a vector quantized version $\widetilde{V}^{1:N}$ of $X^{1:N}$, whose distribution is close to that of $V^{1:N}$. This statement is made more precise in Lemma~\ref{lemDivprob}, but a crucial part of the proof is to introduce a stochastic encoder, as in successive cancellation encoding for lossy source coding~\cite{Korada10,Honda13}. The randomness $R_1$ used in the encoder is publicly transmitted to Bob and reused over several blocks so that its rate vanishes to zero as the number of blocks increases; however, reusing $R_1$ creates additional dependencies between the variables of the different blocks, which must be carefully taken into account in the secrecy analysis. The choice of public messages and keys is then similar to those in Section~\ref{Sec_scheme1}, using $\widetilde{V}^{1:N}$ instead of $X^{1:N}$.

\begin{algorithm}[]
  \caption{Alice's encoding algorithm for Model 2}
  \label{alg:encoding_2}
  \begin{algorithmic} [1]   
    \REQUIRE $\widetilde{K}_0$, a secret key of size $|(\mathcal{H}_{U|Y} \backslash {\mathcal{V}}_{U|X}) \backslash  \mathcal{V}_{U|Z}|$ shared by Alice and Bob beforehand; for every Block $i \in \llbracket 1,k\rrbracket$, the observations $X_i^{1:N}$ from the source; $\mathcal{A}_{UYZ}$ a subset of $\mathcal{V}_{U|Z} \backslash \mathcal{H}_{U|Y}$ with size $|(\mathcal{H}_{U|Y} \backslash {\mathcal{V}}_{U|X}) \backslash  \mathcal{V}_{U|Z}|$; a vector $R_1$ of $|\mathcal{V}_{U|X}|$ uniformly distributed bits.
    \STATE Transmit $R_1$ publicly to Bob
    \FOR{Block $i=1$ to $k$}
    \STATE $R_i \leftarrow R_1$
    \STATE $\widetilde{V}_i^{1:N}[\mathcal{V}_{U|X}]\leftarrow R_i$
    \STATE Given $X_i^{1:N}$, successively draw the remaining bits of $\widetilde{V}_i^{1:N}$ according to $\widetilde{p}_{V_i^{1:N}X_i^{1:N}} \triangleq \prod_{j=1}^N \widetilde{p}_{V_i^{j}|V_i^{j-1} X^{1:N}} p_{X^{1:N}}$ with 
    \begin{align}\label{eq_VQ_def}
&\widetilde{p}_{V_i^j|V_i^{1:j-1}X^{1:N}} (v^j|\widetilde{V}_i^{1:j-1}X_i^{1:N}) \nonumber \\  &\triangleq \! \begin{cases}
 {p}_{V^j|V^{1:j-1}X^{1:N}} (v^j|\widetilde{V}_i^{1:j-1}X_i^{1:N}) & \!\!\!  \text{if } j \! \in \mathcal{H}_U \backslash {\mathcal{V}}_{U|X}\\
 {p}_{V^j|V^{1:j-1}} (v^j|\widetilde{V}_i^{1:j-1}) & \!\!\! \text{if } j \!\in \mathcal{H}_U^c
 \end{cases}
    \end{align}
    \STATE $\widetilde{K}_i \leftarrow \widetilde{V}_i^{1:N} [\mathcal{A}_{UYZ}]$
    \STATE $K_i \leftarrow \widetilde{V}_i^{1:N} [(\mathcal{V}_{U|Z} \backslash \mathcal{H}_{U|Y}) \backslash \mathcal{A}_{UYZ}]$
    \STATE $F_i \leftarrow \widetilde{V}_i^{1:N} [(\mathcal{H}_{U|Y} \backslash {\mathcal{V}}_{U|X}) \cap \mathcal{V}_{U|Z}]$
    \STATE $F_i' \triangleq \widetilde{V}_i^{1:N} [(\mathcal{H}_{U|Y} \backslash {\mathcal{V}}_{U|X}) \backslash\mathcal{V}_{U|Z} ]$
    \STATE Transmit $M_i \leftarrow [F_i, F_i' \oplus \widetilde{K}_{i-1}]$ publicly to Bob.
    \ENDFOR
    \RETURN ${K}_{1:k} \leftarrow [ {K}_{1}, {K}_2, \ldots, {K}_k ]$
  \end{algorithmic}
\end{algorithm}

\begin{algorithm}[]
  \caption{Bob's decoding algorithm for Model 2}
  \label{alg:decoding_2}
  \begin{algorithmic}  [1]  
    \REQUIRE The secret key $\widetilde{K}_0$ and the set $\mathcal{A}_{UYZ}$ defined in Algorithm \ref{alg:encoding_2}; for every Block $i \in \llbracket 1,k\rrbracket$, the observations $Y_i^{1:N}$ from the source, the message $M_i$. transmitted by Alice; the vector $R_1$ transmitted by Alice.
    \FOR{Block $i=1$ to $k$}
    \STATE Form $\widetilde{V}_i^{1:N}[\mathcal{H}_{U|Y}]$ from $(M_i,\widetilde{K}_{i-1},R_i)$
    \STATE Create an estimate $\widehat{V}_i^{1:N}$ of ${V}_i^{1:N}$ with the successive cancellation decoder of \cite{Arikan10}
    \STATE $\widehat{K}_i\leftarrow \widehat{V}_i^{1:N} [(\mathcal{V}_{U|Z} \backslash \mathcal{H}_{U|Y}) \backslash \mathcal{A}_{UYZ}]$
    \STATE $\widetilde{K}_i\leftarrow \widehat{V}_i^{1:N} [\mathcal{A}_{UYZ}]$
    \ENDFOR
    \RETURN $\widehat{K}_{1:k} \leftarrow [ \widehat{K}_{1}, \widehat{K}_2, \ldots, \widehat{K}_k ]$
  \end{algorithmic}
\end{algorithm}

\begin{rem} \label{rm:efficient_use_key2}
One may actually use $U_k^{1:N} [\mathcal{V}_{U|Z} \backslash \mathcal{H}_{U|Y}]$ as the key $K_k$ and slightly increase the key rate in Algorithm~\ref{alg:encoding_2}. However, one does not distinguish the last block from the others for convenience -- see Remark \ref{rm:efficient_use_key}.
\end{rem}
As shown in Section~\ref{SecproofTh2}, the analysis of Algorithm~\ref{alg:encoding_2} and Algotithm~\ref{alg:decoding_2} leads to the following result.
\begin{thm} \label{Th2}
Consider a BMS $(\mathcal{X}\mathcal{Y}\mathcal{Z},p_{XYZ})$. Assume that Alice and Bob share a secret seed and let $R_p \in \mathbb{R}^+$ be the public communication rate. The secret-key rate defined by
\begin{align*}
&\displaystyle\max_{U} \left( {I}(Y;U) - {I}(Z;U)\right) \\
\text{ subject to } \hspace*{1em}
& R_p = {I}(U;X) - I(U;Y), \\ \nonumber
&  U \to X \to Y, \\ \nonumber
&|\mathcal{U}| \leq |\mathcal{X}| .
\end{align*}
 is achieved by the polar coding scheme of Algorithm~\ref{alg:encoding_2} and Algorithm~\ref{alg:decoding_2}, which involves a chaining of $k$ blocks of size $N$, and whose computational complexity is $O(kN \log N)$. Moreover, the seed rate can be chosen in $o\left( 2^{-N^{\alpha} }\right)$, $\alpha < 1/2$.
\end{thm}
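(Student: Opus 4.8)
The plan is to mirror the six-part structure of the proof of Theorem~\ref{Th1} (existence of $\mathcal{A}_{UYZ}$, asymptotic key rate, reliability, key uniformity, strong secrecy, seed rate), and to isolate the genuinely new ingredient --- the vector quantization step --- into a single lemma, after which the analysis proceeds almost verbatim as in Section~\ref{SecproofTh1}. First I would state and prove the key approximation lemma: letting $\widetilde{p}_{V^{1:N}X^{1:N}}$ denote the joint distribution induced by Alice's stochastic encoder in~\eqref{eq_VQ_def} (with $R_1$ uniform) and $p_{V^{1:N}X^{1:N}}$ the target distribution $p_{X^{1:N}}\prod_j p_{V^j|V^{1:j-1}X^{1:N}}$, one has $\mathbb{V}(\widetilde{p}_{V^{1:N}X^{1:N}}, p_{V^{1:N}X^{1:N}}) \leq \sqrt{N\delta_N}$ (up to constants), hence also $\mathbb{D}(\widetilde{p} \| p)$ small. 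This is the analogue of Lemma~\ref{lemDivprob} and follows the now-standard argument for successive-cancellation lossy source coding of~\cite{Korada10,Honda13}: the encoder matches the true conditional on $\mathcal{H}_U\backslash\mathcal{V}_{U|X}$, uses the $V$-only conditional on $\mathcal{H}_U^c$, and freezes $\mathcal{V}_{U|X}$ to the uniform $R_1$; the total variation cost is controlled by $\sum_{j\in\mathcal{V}_{U|X}}(1-H(V^j|V^{1:j-1}X^{1:N})) + \sum_{j\in\mathcal{H}_U^c} H(V^j|V^{1:j-1}) \leq N\delta_N$ via Pinsker and the chain-rule telescoping, using the definitions of $\mathcal{H}_U$, $\mathcal{V}_{U|X}$ and $|\mathcal{H}_U^c|$-type bounds. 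From this point on, every entropy/mutual-information estimate in blocks is computed under $\widetilde{p}$ but differs from its value under $p$ by at most $O(N\delta_N\cdot\sqrt{N\delta_N}\,\log(\cdot))$ by the continuity bounds $\mathbb{V}\!\to\!|\Delta H|$ (e.g.\ \cite[Lemma 2.7]{Csiszar-Korner}-type), so it suffices to verify everything for the idealized distribution $p$.

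Next I would carry out the cardinality/rate bookkeeping. By the polarization theorems of~\cite{Arikan10} and the analogue of Lemma~\ref{lemnewcard}, $|\mathcal{V}_{U|Z}|/N\to H(U|Z)$, $|\mathcal{H}_{U|Y}|/N\to H(U|Y)$, $|\mathcal{V}_{U|X}|/N\to H(U|X)$, and $|\mathcal{H}_{U|Y}\cap(\cdot)|$-type intersections behave as expected; one then checks that the public message rate is $\lim \tfrac1N(|\mathcal{H}_{U|Y}\backslash\mathcal{V}_{U|X}| + |R_1|/k) \to H(U|Y)-H(U|X) + 0 = I(U;X)-I(U;Y) = R_p$ as $k\to\infty$, that $\mathcal{A}_{UYZ}\subseteq\mathcal{V}_{U|Z}\backslash\mathcal{H}_{U|Y}$ of the required size exists for $N$ large (because $|\mathcal{V}_{U|Z}\backslash\mathcal{H}_{U|Y}| - |(\mathcal{H}_{U|Y}\backslash\mathcal{V}_{U|X})\backslash\mathcal{V}_{U|Z}| \to H(U|Z)-H(U|Y) + \text{(nonneg. corrections)} > 0$ using $I(Y;U)-I(Z;U)>0$), and that the key rate is $\lim\tfrac{|K_{1:k}|}{kN} = |\mathcal{V}_{U|Z}\backslash\mathcal{H}_{U|Y}|/N - o(1) \geq H(U|Z)-H(U|Y) = I(Y;U)-I(Z;U)$. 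Reliability is identical to Section~\ref{Sec_errProb1}: Bob reconstructs $\widetilde{V}_i^{1:N}$ from $\widetilde{V}_i^{1:N}[\mathcal{H}_{U|Y}]$ and $Y_i^{1:N}$ with error $\leq N\delta_N$ by~\cite{Arikan10} (now applied to the source $\widetilde{V},Y$, which is close in distribution to $V,Y$), and the error propagates across blocks through the seed $\widetilde{K}_{i-1}$ exactly as before, giving $\mathbf{P}_e(\mathcal{S}_N)\leq \tfrac{k(k+1)}2 N\delta_N + k\cdot\text{(VQ slack)}$. Uniformity of $[K_i,\widetilde{K}_i]$ and then of $K_{1:k}$ follows the proof of Lemma~\ref{lem_U1} with $\mathcal{V}_{X|Z}$ replaced by $\mathcal{V}_{U|Z}$.

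The main obstacle, and the part requiring genuinely new work, is the strong secrecy analysis, because the shared randomness $R_1=\widetilde{V}_i^{1:N}[\mathcal{V}_{U|X}]$ is \emph{reused across all blocks} and is publicly revealed, so the blocks are no longer conditionally independent given the source and the ``per-block secrecy + telescoping'' argument of Lemma~\ref{lem_sec_block}--Lemma~\ref{lem_secrec} must be redone with $R_1$ carried through every conditioning. Concretely, I would (i) prove a per-block bound $I(K_i\widetilde{K}_i; M_i Z_i^{1:N} R_1)\leq O(N\delta_N)$ by the same manipulation as~\eqref{eqsec1}, noting that $R_1$ sits in the frozen set $\mathcal{V}_{U|X}$ and, because $\mathcal{V}_{U|X}\subseteq\mathcal{V}_{U|Z}$ up to a vanishing fraction (or handled directly), conditioning on it costs only $O(N\delta_N)$ in the bound $|\mathcal{V}_{U|Z}| - H(\widetilde{V}_i^{1:N}[\mathcal{V}_{U|Z}]\mid Z_i^{1:N} R_1)$; the one-time-pad step with $\widetilde K_{i-1}$ goes through as in Lemma~\ref{lem_sec_block} provided $\widetilde K_{i-1}$ is near-uniform \emph{given} $R_1$, which the uniformity lemma furnishes; (ii) re-derive Lemma~\ref{lem_secrec} with $\widetilde L_e^{1:i}\triangleq I(K_{1:i}\widetilde K_i; M_{1:i}Z_{1:i}^{1:N} R_1)$, absorbing $R_1$ into the left side once and for all so that the induction on $i$ only pays $I(K_i\widetilde K_i; M_i Z_i^{1:N} R_1) + I(K_{i-1};\widetilde K_{i-1})$ per step (this uses that, conditioned on $R_1$, the per-block sources $(\widetilde V_i, X_i, Y_i, Z_i)$ are still independent across $i$, since $R_i=R_1$ deterministically and the source is memoryless); and (iii) sum to get $\mathbf{L}(\mathcal{S}_N)\leq 2kN\delta_N + (k-1)\delta_N^\ast + o(1)$, then, as in the seed-rate paragraph, choose $k=2^{N^\alpha}$ with $\alpha<\beta$ so that $kN\delta_N\to 0$, giving simultaneously vanishing error, leakage, non-uniformity, and seed rate $|(\mathcal{H}_{U|Y}\backslash\mathcal{V}_{U|X})\backslash\mathcal{V}_{U|Z}|/(kN)\to 0$ in $o(2^{-N^\alpha})$. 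The remaining subtlety is verifying that the public transmission of $R_1$ itself does not leak key bits --- this is automatic since $R_1$ is generated independently of nothing in particular but is included in the conditioning on the leakage side from the outset, so it contributes $0$ to $I(K_{1:k}; \cdot\mid R_1)$ by construction of which coordinates are designated as key versus randomness. Full details appear in Section~\ref{SecproofTh2}.
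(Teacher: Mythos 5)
Your plan follows essentially the same route as the paper's proof in Section~\ref{SecproofTh2}: the vector-quantization approximation lemma you state is exactly Lemma~\ref{lemDivprob}, the rate/existence bookkeeping, coupling-based reliability, and the strategy of carrying $R_1$ through a per-block secrecy bound plus telescoping (Lemmas~\ref{lem_sec_block2} and~\ref{lem_secrec2}) all match, including the crucial observation that the blocks are conditionally independent given $R_1$. The only discrepancies are quantitative and harmless: the per-block quantities after the VQ step decay like $\sqrt{N\delta_N}$ times polynomial factors (the paper's $\delta_N^{(1)},\delta_N^{(2)},\delta_N^{(3)}$) rather than $O(N\delta_N)$, and the paper's telescoping pays an additional $\sum_{j=1}^{i-1}I(K_j;R_1)$ per step, yielding a $k^2\delta_N^{(2)}$ term in the total leakage — both still vanish for $k=2^{N^\alpha}$ with $\alpha<\beta$, so the conclusion is unaffected.
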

\begin{proof} 
See Section~\ref{SecproofTh2}.
\end{proof}
The following corollary states sufficient conditions to avoid block encoding.
\begin{cor} \label{Cor_2}
If $X \to Y \to Z$, $X \sim \mathcal{B}(1/2)$, and the test-channels $p_{Y|X}$ and $p_{Z|X}$ are symmetric,\footnote{As in Example \ref{examplecapl} for instance} then the secret-key capacity of Theorem~\ref{Th_model2} is achieved by the polar coding scheme for Block 1 in Algorithm~\ref{alg:encoding_2} with $\mathcal{A}_{UYZ} = \emptyset$, $R_1$ a constant sequence, and a seed rate in $o(N)$. 
\end{cor}
\begin{proof}
See Appendix~\ref{App_cor2}.
\end{proof}
Finally, the following proposition provides sufficient conditions to avoid block encoding and a pre-shared seed. The proof is similar to that of Theorem~\ref{Th2} and Corollary \ref{Cor_2} and is omitted.
\begin{prop}
If the eavesdropper has no access to correlated observations of the source, $X \sim \mathcal{B}(1/2)$, and the test-channel $p_{Y|X}$ is symmetric, then the secret-key capacity of Corollary~\ref{cor_ratelim} is achieved by the polar coding scheme for Block $1$ in Algorithm~\ref{alg:encoding_2} with $\mathcal{A}_{UYZ} = \emptyset$, $Z = \emptyset$, $F_1' = \emptyset$, $K_1 \triangleq \widetilde{V}_1^{1:N}[\mathcal{H}_{U|Y}^c]$, $F_1 \triangleq  \widetilde{V}_1^{1:N}[\mathcal{H}_{U|Y} \backslash \mathcal{V}_{U|X}] $, and $R_1 $ a constant sequence. 
\end{prop}

\subsection{Analysis of polar coding scheme: Proof of Theorem~\ref{Th2}} \label{SecproofTh2}
A functional dependence graph for the coding scheme of Section~\ref{Sec_scheme2} is depicted in Fig.~\ref{figFGD2} for convenience.
\begin{figure}
\centering
  \includegraphics[width=8.5cm]{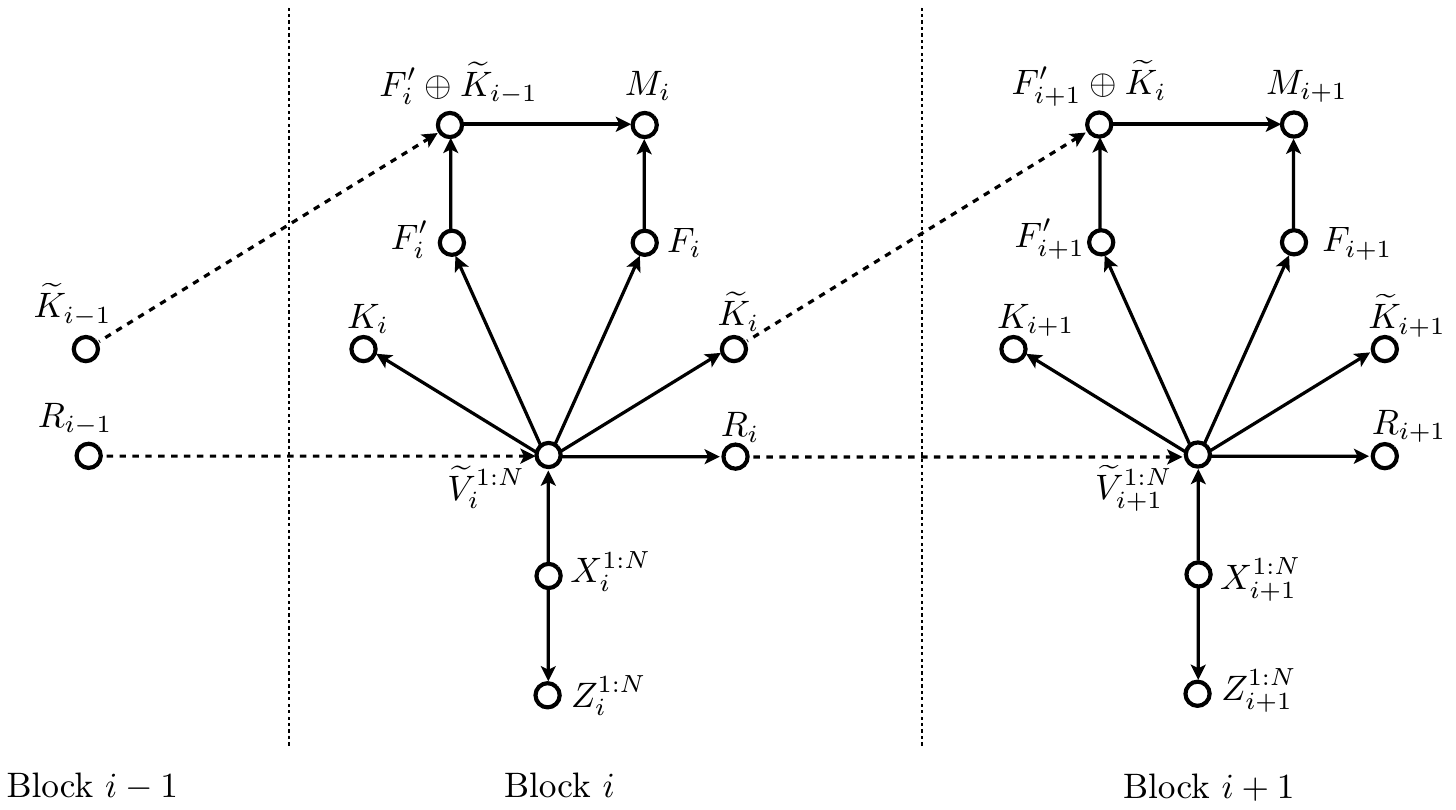}
  \caption{Functional dependence graph of the block encoding scheme}
  \label{figFGD2}
\end{figure}

\subsubsection{Preliminary result}
\begin{lem} \label{lemDivprob}
For every $i\in \llbracket 1,k\rrbracket$, the random variable $\widetilde{V}_i^{1:N}$ resulting from Algorithm~\ref{alg:encoding_2} has a joint distribution $\widetilde{p}_{X_i^{1:N}V_i^{1:N}} \triangleq \widetilde{p}_{V_i^{1:N}|X^{1:N}} p_{X^{1:N}}$ with $X_i^{1:N}$ such that
$$
\mathbb{D}(p_{X^{1:N}V^{1:N}} || \widetilde{p}_{X_i^{1:N}V_i^{1:N}}) \leq N \delta_N,
$$
Hence, by Pinsker's inequality
$$
\mathbb{V}(p_{X^{1:N}V^{1:N}}, \widetilde{p}_{X_i^{1:N}V_i^{1:N}}) \leq \sqrt{2 \log2} \sqrt{N \delta_N}.
$$
\end{lem}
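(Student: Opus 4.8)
The plan is to bound the KL divergence by a ``chain-rule-and-polarization'' argument, exactly of the type used for successive cancellation encoding in lossy source coding~\cite{Korada10,Honda13}. First I would observe that under the true distribution $p_{X^{1:N}V^{1:N}} = p_{X^{1:N}}\prod_{j=1}^N p_{V^j|V^{1:j-1}X^{1:N}}$, while the synthesized distribution factors as $\widetilde p_{X^{1:N}V^{1:N}} = p_{X^{1:N}}\prod_{j=1}^N \widetilde p_{V^j|V^{1:j-1}X^{1:N}}$, where the conditional kernels agree with $p_{V^j|V^{1:j-1}X^{1:N}}$ on $j \in \mathcal{H}_U\backslash\mathcal{V}_{U|X}$, equal $p_{V^j|V^{1:j-1}}$ on $j\in\mathcal{H}_U^c$, and are uniform $\mathcal{B}(1/2)$ on $j\in\mathcal{V}_{U|X}$ (the indices fixed by $R_i$). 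Applying the chain rule for relative entropy,
\begin{align*}
\mathbb{D}(p_{X^{1:N}V^{1:N}}\|\widetilde p_{X^{1:N}V^{1:N}})
= \sum_{j=1}^N \mathbb{E}\big[\mathbb{D}(p_{V^j|V^{1:j-1}X^{1:N}}\|\widetilde p_{V^j|V^{1:j-1}X^{1:N}})\big],
\end{align*}
and the terms with $j\in\mathcal{H}_U\backslash\mathcal{V}_{U|X}$ vanish identically. It remains to control the terms for $j\in\mathcal{V}_{U|X}$ and for $j\in\mathcal{H}_U^c$.

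For $j\in\mathcal{V}_{U|X}$, the target kernel is uniform, so each term equals $1 - H(V^j|V^{1:j-1}X^{1:N}) \le \delta_N$ by the definition of $\mathcal{V}_{U|X}$ (here I would note $\mathcal{V}_{U|X}\subseteq$ the high-conditional-entropy set, which is why the bound $H(V^j|V^{1:j-1}X^{1:N})\ge 1-\delta_N$ applies). For $j\in\mathcal{H}_U^c$, i.e. $H(V^j|V^{1:j-1}Z^{1:N}) < \delta_N$, the target kernel is $p_{V^j|V^{1:j-1}}$ and one must argue this is close to $p_{V^j|V^{1:j-1}X^{1:N}}$: since $H(V^j|V^{1:j-1}Z^{1:N})<\delta_N$ and conditioning reduces entropy (together with the Markov/ordering structure relating $Z$ and $X$ through $U$), both $H(V^j|V^{1:j-1})$ and $H(V^j|V^{1:j-1}X^{1:N})$ are pinned within $O(\delta_N)$ of each other, so the per-letter divergence is $O(\delta_N)$ — this is the standard ``$\mathcal{H}_U^c$ means the bit is essentially determined'' estimate, possibly invoking a lemma of the form in~\cite{Honda13} that converts a small conditional entropy into a small divergence between the two kernels. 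Summing the at most $N$ nonzero terms, each $O(\delta_N)$, yields $\mathbb{D}\le N\delta_N$ (after absorbing constants into the definition of $\delta_N$, or stating the bound with an explicit constant). Pinsker's inequality $\mathbb{V}(p,q)\le\sqrt{(\log 2)\,\mathbb{D}(p\|q)/\,}$—in the form $\mathbb{V}\le\sqrt{2\log 2}\sqrt{\mathbb{D}}$ with $\mathbb{D}$ in nats versus bits bookkeeping—then gives the stated variational bound.

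The main obstacle I anticipate is the $j\in\mathcal{H}_U^c$ case: one needs the correct chain of inequalities showing that a small value of $H(V^j|V^{1:j-1}Z^{1:N})$ forces $p_{V^j|V^{1:j-1}X^{1:N}}$ to be close (in expected divergence) to the marginal kernel $p_{V^j|V^{1:j-1}}$. This requires carefully exploiting that $Z$ is a degraded/further-processed version in the relevant sense (the ordering $U\to X\to Y\to Z$, or at least that $\mathcal H_U^c$ is defined via $Z$), plus a conversion from conditional-entropy-close to distribution-close. Everything else is bookkeeping: the chain rule, the definition of $\mathcal{V}_{U|X}$, and Pinsker.
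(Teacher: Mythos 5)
Your overall skeleton matches the paper's proof: invertibility of $G_N$, the chain rule for relative entropy (the factor $p_{X^{1:N}}$ cancels), the observation that the terms indexed by $\mathcal{H}_U\backslash\mathcal{V}_{U|X}$ vanish because the kernels coincide there, and the bound $1-H(V^j|V^{1:j-1}X^{1:N})\le\delta_N$ for each $j\in\mathcal{V}_{U|X}$. Up to that point you are reproducing the paper's argument.

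The gap is in your treatment of $j\in\mathcal{H}_U^c$, which you yourself flag as the main obstacle. The argument you sketch — that $H(V^j|V^{1:j-1}Z^{1:N})<\delta_N$ together with ``conditioning reduces entropy'' and the Markov ordering pins $H(V^j|V^{1:j-1})$ and $H(V^j|V^{1:j-1}X^{1:N})$ within $O(\delta_N)$ of each other — does not work: conditioning reduces entropy gives $H(V^j|V^{1:j-1})\ge H(V^j|V^{1:j-1}Z^{1:N})$, i.e., an inequality in the wrong direction, so a tiny $Z$-conditioned entropy says nothing about the unconditioned entropy $H(V^j|V^{1:j-1})$, which is exactly the quantity you need to control (the expected per-index divergence equals $H(V^j|V^{1:j-1})-H(V^j|V^{1:j-1}X^{1:N})=I(V^j;X^{1:N}|V^{1:j-1})$, and this could be order one if only the $Z$-conditioned entropy is small). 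The resolution is that $\mathcal{H}_U$ is intended as the unconditioned high-entropy set $\{i: H(V^i|V^{1:i-1})\ge\delta_N\}$ — the $Z^{1:N}$ appearing in its displayed definition is a typo, as both the paper's proof and the sampling rule~(\ref{eq_VQ_def}) (which draws $V^j$ from $p_{V^j|V^{1:j-1}}$ on $\mathcal{H}_U^c$, sensible only if $V^j$ is essentially determined by $V^{1:j-1}$ alone) confirm. With that reading the step is immediate and needs no Markov structure or auxiliary lemma: for $j\in\mathcal{H}_U^c$, $I(V^j;X^{1:N}|V^{1:j-1})\le H(V^j|V^{1:j-1})<\delta_N$, and summing over the at most $N$ indices in $\mathcal{V}_{U|X}\cup\mathcal{H}_U^c$ gives $N\delta_N$ exactly, with no extra constants to absorb.
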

\begin{proof}
See Appendix~\ref{App_lemdiv}.
\end{proof}

\subsubsection{Existence of $\mathcal{A}_{UYZ}$} \label{existence_AUYZ}
Observe that 
\begin{align*}
&|\mathcal{V}_{U|Z} \backslash \mathcal{H}_{U|Y}| - |(\mathcal{H}_{U|Y} \backslash {\mathcal{V}}_{U|X}) \backslash  \mathcal{V}_{U|Z}| \\
& = |\mathcal{V}_{U|Z}| -| \mathcal{H}_{U|Y}| + | (\mathcal{H}_{U|Y} \cap {\mathcal{V}}_{U|X})  \backslash  \mathcal{V}_{U|Z}| \\
&\geq |\mathcal{V}_{U|Z}| -| \mathcal{H}_{U|Y}|.
\end{align*}
Hence, by Lemma~\ref{lemnewcard} and~\cite{Arikan10}, we have
\begin{multline*}
	\lim_{N \to \infty} (|\mathcal{V}_{U|Z} \backslash \mathcal{H}_{U|Y}| - |(\mathcal{H}_{U|Y} \backslash {\mathcal{V}}_{U|X}) \backslash  \mathcal{V}_{U|Z}|) /N \\ \geq  H(U|Z) - H(U|Y).
\end{multline*}
Since $I(Y;U) - I(Z;U) >0$, we have for $N$ large enough $|\mathcal{V}_{U|Z} \backslash \mathcal{H}_{U|Y}| - |(\mathcal{H}_{U|Y} \backslash {\mathcal{V}}_{U|X}) \backslash  \mathcal{V}_{U|Z}| >0$, and we conclude that $\mathcal{A}_{UYZ}$ exists.

\subsubsection{Communication rate}
The total communication is
\begin{align*}
& |R_1| + \sum_{i=1}^{k} (|F_i| + |F_i'|) \\
& = |R_1| + \sum_{i=1}^{k} |\mathcal{H}_{U|Y} \backslash {\mathcal{V}}_{U|X}| \\
& = |{\mathcal{V}}_{U|X}| +  k |\mathcal{H}_{U|Y} \backslash {\mathcal{V}}_{U|X}|\\
& = |{\mathcal{V}}_{U|X}| + k (|\mathcal{H}_{U|Y}| - |{\mathcal{V}}_{U|X}|) 
\end{align*}
where the last equality holds because $U \to X \to Y$ and thus $ \mathcal{V}_{U|X} \subset \mathcal{V}_{U|Y} \subset \mathcal{H}_{U|Y}$. Hence, the communication rate is by Lemma \ref{lemnewcard} and~\cite{Arikan10},
\begin{multline*}
\lim_{N \to \infty} \frac{ |{\mathcal{V}}_{U|X}| + k (|\mathcal{H}_{U|Y}| - |{\mathcal{V}}_{U|X}|)}{kN} \\= I(X;U) - I(Y;U) + \frac{H(U|X)}{k}.
\end{multline*}

\subsubsection{Key rate}
The length of the key generated is 
\begin{align*}
|K_{1:k}|
& = \sum_{i=1}^{k} |K_i| \\
& =k |(\mathcal{V}_{U|Z} \backslash \mathcal{H}_{U|Y})\backslash \mathcal{A}_{UYZ}|\\
& = k (|\mathcal{V}_{U|Z}| -| \mathcal{H}_{U|Y}| + | (\mathcal{H}_{U|Y} \cap {\mathcal{V}}_{U|X})  \backslash  \mathcal{V}_{U|Z}|) \\
& \geq k (|\mathcal{V}_{U|Z}| - |\mathcal{H}_{U|Y} |) .
\end{align*}
Hence, the key rate is by Lemma \ref{lemnewcard} and \cite{Arikan10},
$$
\lim_{N \to \infty} \frac{|K_{1:k}|}{kN} \geq I(Y;U) - I(Z;U).
$$

\subsubsection{Reliability} \label{Sec_reliability_model2}
For $i \in \llbracket 1,k\rrbracket$, Bob forms $\widehat{V}^{1:N}_i$ from $(F_i,F_i',R_i)= \widetilde{V}_i^{1:N}[\mathcal{H}_{U|Y}]$ and $Y_i^{1:N}$ with the successive cancellation encoder of \cite{Arikan10}. Consider an optimal coupling~\cite{Aldous83,Korada10} between $\widetilde{p}_{V_i^{1:N}}$ and $p_{V_i^{1:N}}$ such that $\mathbb{P} [\mathcal{E}] = \mathbb{V}(\widetilde{p}_{V_i^{1:N}} ,p_{V_i^{1:N}})$, where $\mathcal{E} \triangleq \{ \widetilde{V}_i^{1:N} \neq {V}_i^{1:N} \}$. For $i \in \llbracket 2,k\rrbracket$, note that $F_i'$ is correctly received only when Bob has correctly estimated $\widetilde{K}_{i-1}$, i.e., when $\widetilde{V}_{i-1}^{1:N}$ is correctly reconstructed. We note $\widehat{F}_i'$ the estimate of $F_i'$ formed by Bob from $\widetilde{V}_{i-1}^{1:N}$ and define $\mathcal{E}_{F'_i} \triangleq \{ F_i' \neq  \widehat{F}_i'\}$. We then have
\begin{align*}
&\mathbb{P}[\widehat{V}^{1:N}_i \neq \widetilde{V}^{1:N}_i ]\\
& = \mathbb{P}[\widehat{V}^{1:N}_i \neq \widetilde{V}^{1:N}_i | \mathcal{E} \cup \mathcal{E}_{F'_i} ] \mathbb{P} [\mathcal{E} \cup \mathcal{E}_{F'_i}] \\
& \phantom{--}+ \mathbb{P}[\widehat{V}^{1:N}_i \neq \widetilde{V}^{1:N}_i | \mathcal{E}^c \cap \mathcal{E}_{F'_i}^c] \mathbb{P} [\mathcal{E}^c \cap \mathcal{E}_{F'_i}^c] \\
& \leq \mathbb{P} [\mathcal{E} \cup \mathcal{E}_{F'_i}] + \mathbb{P}[\widehat{V}^{1:N}_i \neq \widetilde{V}^{1:N}_i | \mathcal{E}^c \cap \mathcal{E}_{F'_i}^c] \\
& \leq \mathbb{P} [\mathcal{E}] +  \mathbb{P} [\mathcal{E}_{F'_i}] + \mathbb{P}[\widehat{V}^{1:N}_i \neq \widetilde{V}^{1:N}_i | \mathcal{E}^c \cap \mathcal{E}_{F'_i}^c] \\
 & = \mathbb{V}(\widetilde{p}_{V_i^{1:N}} ,p_{V_i^{1:N}}) +  \mathbb{P} [\mathcal{E}_{F'_i}] + \mathbb{P}[\widehat{V}^{1:N}_i \neq \widetilde{V}^{1:N}_i | \mathcal{E}^c \cap \mathcal{E}_{F'_i}^c] \displaybreak[0] \\
& = \mathbb{V}(\widetilde{p}_{V_i^{1:N}} ,p_{V_i^{1:N}}) +  \mathbb{P} [\mathcal{E}_{F'_i}] + \mathbb{P}[\widehat{V}^{1:N}_i \neq {V}^{1:N}_i | \mathcal{E}^c \cap \mathcal{E}_{F'_i}^c] \displaybreak[0] \\
& \leq \mathbb{V}(\widetilde{p}_{X_i^{1:N}V_i^{1:N}} ,p_{X_i^{1:N}V_i^{1:N}}) \\
& \phantom{--}+   \mathbb{P} [\mathcal{E}_{F'_i}] + \mathbb{P}[\widehat{V}^{1:N}_i \neq {V}^{1:N}_i | \mathcal{E}^c \cap \mathcal{E}_{F'_i}^c] \displaybreak[0] \\
& \stackrel{(a)}{\leq} \sqrt{2 \log2} \sqrt{N \delta_N} +  \mathbb{P} [\mathcal{E}_{F'_i}] + \mathbb{P}[\widehat{V}^{1:N}_i \neq {V}^{1:N}_i | \mathcal{E}^c\cap \mathcal{E}_{F'_i}^c] \\
& \stackrel{(b)}{\leq } \sqrt{2 \log2} \sqrt{N \delta_N} +  \mathbb{P} [\mathcal{E}_{F'_i}] + N \delta_N \\
& \leq  \sqrt{2 \log2} \sqrt{N \delta_N} + N \delta_N +  \mathbb{P}[\widehat{V}^{1:N}_{i-1} \neq \widetilde{V}^{1:N}_{i-1} ]  \\
& \stackrel{(c)}{\leq } (i-1) (\sqrt{2 \log2} \sqrt{N \delta_N}  + N \delta_N ) +  \mathbb{P}[\widehat{V}^{1:N}_{1} \neq \widetilde{V}^{1:N}_{1} ]  \\
& \stackrel{(d)}{\leq } i (\sqrt{2 \log2} \sqrt{N \delta_N}  + N \delta_N ),
\end{align*}
where $(a)$ holds by Lemma~\ref{lemDivprob}, $(b)$ holds because $\mathbb{P}[\widehat{V}^{1:N}_i \neq {V}^{1:N}_i | \mathcal{E}^c \cap \mathcal{E}_{F'_i}^c] \leq N \delta_N$ by \cite{Arikan10}, $(c)$ holds by induction, $(d)$ holds by \cite{Arikan10} and because $\widetilde{K}_0$ is known to Bob.

Hence, $\mathbb{P} [K_i \neq \widehat{K}_i] \leq i(\sqrt{2 \log2} \sqrt{N \delta_N} + N \delta_N)$. Then, similarly to Section~\ref{Sec_errProb1}, we obtain with a union bound
\begin{align} 
\mathbf{P}_e(\mathcal{S}_N) 
 \leq \frac{k(k+1)}{2} (\sqrt{2 \log2} \sqrt{N \delta_N} + N \delta_N) . \label{eq_errorPr2}
\end{align}
\subsubsection{Key uniformity}
We first show that the key is nearly uniform for every block in the following lemma.
\begin{lem} \label{lem_U2}
For every block  $i \in \llbracket 1,k \rrbracket$, the vector $[K_i, \widetilde{K}_i, F_i,R_1]$ is nearly uniform, in the sense that 
\begin{align*}
\mathbb{V} ( p_{K_i, \widetilde{K}_i,F_i, R_1}, q_{\mathcal{U}_{K, \widetilde{K},F, R}} ) \leq 2 \sqrt{2 \log2} \sqrt{N \delta_N},
\end{align*}
where $q_{\mathcal{U}_{K, \widetilde{K},F, R}} $ is the uniform distribution over $\llbracket 1, 2^{|(\mathcal{V}_{U|Z} \backslash \mathcal{H}_{U|Y}) \cup ((\mathcal{H}_{U|Y} \backslash {\mathcal{V}}_{U|X}) \cap \mathcal{V}_{U|Z})  \cup \mathcal{V}_{U|X}|}\rrbracket$.
\end{lem}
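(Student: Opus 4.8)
The plan is a two-step hybrid argument: first establish near-uniformity of the relevant subvector under the \emph{target} polar distribution $p_{V^{1:N}}$, then transport the estimate to the actual scheme distribution $\widetilde p$ via Lemma~\ref{lemDivprob}. I would begin by introducing the index set $\mathcal{S} \triangleq (\mathcal{V}_{U|Z} \backslash \mathcal{H}_{U|Y}) \cup ((\mathcal{H}_{U|Y} \backslash \mathcal{V}_{U|X}) \cap \mathcal{V}_{U|Z}) \cup \mathcal{V}_{U|X}$ and verifying that the three sets in this union are pairwise disjoint: the first two are separated by membership in $\mathcal{H}_{U|Y}$, while the inclusion $\mathcal{V}_{U|X} \subset \mathcal{H}_{U|Y}$ (which holds because $U \to X \to Y$, and was already invoked in the communication-rate computation) makes $\mathcal{V}_{U|X}$ disjoint from the first set, and it is disjoint from the second by construction. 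Consequently $[K_i,\widetilde{K}_i,F_i,R_1]$ equals $\widetilde{V}_i^{1:N}[\mathcal{S}]$ up to a fixed reordering of coordinates, and $|\mathcal{S}|$ is exactly the exponent in the definition of $q_{\mathcal{U}_{K,\widetilde{K},F,R}}$; note also that within a single block $R_1 = \widetilde{V}_i^{1:N}[\mathcal{V}_{U|X}]$, so the reuse of $R_1$ across blocks plays no role in this per-block statement.

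Next, I would work under $p_{V^{1:N}}$. Each index of $\mathcal{S}$ lies either in $\mathcal{V}_{U|Z}$ (the first two constituent sets) or in $\mathcal{V}_{U|X}$, and in both cases $H(V^i\mid V^{1:i-1}) \geq 1-\delta_N$ since conditioning on $Z^{1:N}$, respectively $X^{1:N}$, only decreases entropy. Writing $\mathcal{S}=\{i_1<\dots<i_{|\mathcal{S}|}\}$, the chain rule together with the fact that conditioning reduces entropy gives $H(V^{1:N}[\mathcal{S}]) = \sum_{\ell=1}^{|\mathcal{S}|} H(V^{i_\ell}\mid V^{i_1}\cdots V^{i_{\ell-1}}) \geq \sum_{\ell=1}^{|\mathcal{S}|} H(V^{i_\ell}\mid V^{1:i_\ell-1}) \geq |\mathcal{S}|(1-\delta_N)$, hence $\mathbb{D}(p_{V^{1:N}[\mathcal{S}]}\|q_{\mathcal{U}_{K,\widetilde{K},F,R}}) = |\mathcal{S}| - H(V^{1:N}[\mathcal{S}]) \leq |\mathcal{S}|\delta_N \leq N\delta_N$, and Pinsker's inequality yields $\mathbb{V}(p_{V^{1:N}[\mathcal{S}]}, q_{\mathcal{U}_{K,\widetilde{K},F,R}}) \leq \sqrt{2\log 2}\sqrt{N\delta_N}$.

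Finally, I would bridge the two distributions. Since the variational distance cannot increase under the deterministic coordinate projection onto $\mathcal{S}$ (i.e.\ under marginalization), Lemma~\ref{lemDivprob} gives $\mathbb{V}(\widetilde{p}_{V^{1:N}[\mathcal{S}]}, p_{V^{1:N}[\mathcal{S}]}) \leq \mathbb{V}(\widetilde{p}_{X^{1:N}V^{1:N}}, p_{X^{1:N}V^{1:N}}) \leq \sqrt{2\log 2}\sqrt{N\delta_N}$, and by construction of Algorithm~\ref{alg:encoding_2} the vector $[K_i,\widetilde{K}_i,F_i,R_1]$ has law $\widetilde{p}_{V_i^{1:N}[\mathcal{S}]}$. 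A triangle inequality then combines the two estimates into the claimed bound $\mathbb{V}(p_{K_i,\widetilde{K}_i,F_i,R_1}, q_{\mathcal{U}_{K,\widetilde{K},F,R}}) \leq 2\sqrt{2\log 2}\sqrt{N\delta_N}$. I expect the only delicate point to be the bookkeeping in the first step — confirming disjointness via the Markov chain and checking that every coordinate of $\mathcal{S}$ is a ``near-uniform'' index of $p_{V^{1:N}}$; once that is settled, the remainder is the standard entropy-gap-plus-Pinsker estimate fused with Lemma~\ref{lemDivprob}.
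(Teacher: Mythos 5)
Your proposal is correct and follows essentially the same route as the paper: a triangle inequality through the target distribution $p_{V^{1:N}}$, with Lemma~\ref{lemDivprob} (plus monotonicity of variational distance under marginalization) handling the $\widetilde{p}$-versus-$p$ term and the entropy-gap-plus-Pinsker estimate handling the $p$-versus-uniform term. The only cosmetic difference is that the paper first embeds the index set into the larger set $\mathcal{V}_U$ of unconditionally very-high-entropy positions and bounds the distance there, whereas you bound it directly on $\mathcal{S}$ after verifying that each of its indices lies in $\mathcal{V}_{U|Z}\cup\mathcal{V}_{U|X}$ and hence is very-high-entropy by ``conditioning reduces entropy''; both yield the same constant $2\sqrt{2\log 2}\sqrt{N\delta_N}$.
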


\begin{proof}
We have
\begin{align*}
&\mathbb{V} ( p_{K_i, \widetilde{K}_i ,F_i,R_i}, q_{\mathcal{U}_{K, \widetilde{K},F, R}} )\\
& \stackrel{(a)}{\leq} \mathbb{V} ( \widetilde{p}_{V_i^{1:N} [\mathcal{V}_{U}]}, q_{\mathcal{U}_{\mathcal{V}_{U}}} )\\
& \stackrel{(b)}{\leq} \mathbb{V} ( \widetilde{p}_{V_i^{1:N} [\mathcal{V}_{U}]}, {p}_{V_i^{1:N} [\mathcal{V}_{U}]} ) + \mathbb{V} ({p}_{V_i^{1:N} [\mathcal{V}_{U}]}, q_{\mathcal{U}_{\mathcal{V}_{U}}} ) \\
& \stackrel{(c)}{\leq} \sqrt{2 \log2} \sqrt{N \delta_N} + \mathbb{V} ({p}_{V_i^{1:N} [\mathcal{V}_{U}]}, q_{\mathcal{U}_{\mathcal{V}_{U}}} ) \\
& \stackrel{(d)}{\leq} \sqrt{2 \log2} \sqrt{N \delta_N} + \sqrt{2\log2} \sqrt{\mathbb{D} ({p}_{V_i^{1:N} [\mathcal{V}_{U}]} || q_{\mathcal{U}_{\mathcal{V}_{U}}} ) } \\
& = \sqrt{2 \log2} \sqrt{N \delta_N} + \sqrt{2\log2} \sqrt{ |\mathcal{V}_{U}| - H(V_i^{1:N} [\mathcal{V}_{U}])} \\
& \stackrel{(e)}{\leq} 2 \sqrt{2 \log2} \sqrt{N \delta_N},
\end{align*}
where $(a)$ holds because $\mathcal{V}_{U|Z} \subset \mathcal{V}_{U}$ and $\mathcal{V}_{U|X} \subset \mathcal{V}_{U}$ with $q_{\mathcal{U}_{\mathcal{V}_{U}}}$ the uniform distribution over $\llbracket 1, 2^{|\mathcal{V}_{U}|}\rrbracket$, $(b)$ holds by the triangle inequality, $(c)$ holds by Lemma~\ref{lemDivprob}, $(d)$ holds by Pinsker's inequality, $(e)$ holds because similar to the proof of Lemma~\ref{lem_U1}, we have $|\mathcal{V}_{U}| - H(V_i^{1:N} [\mathcal{V}_{U}]) \leq N \delta_N$.
\end{proof}
From Lemma~\ref{lem_U2}, we derive the following lemmas.

\begin{lem} \label{lem_U2_Div}
For $i \in \llbracket 1,k \rrbracket$, we have for $N$ large enough
\begin{align*}
|K_i| + |\widetilde{K}_i|  - H(K_i \widetilde{K}_i ) \leq \delta_N^{(1)},
\end{align*}
where 
\begin{equation} \label{delta_1_def}
\delta_N^{(1)} \triangleq 2 \sqrt{2 \log2} \sqrt{N \delta_N} ( N - \log_2  (2 \sqrt{2 \log2} \sqrt{N \delta_N})).
\end{equation}
In particular, we also have $|K_i|   - H(K_i) \leq \delta_N^{(1)}$ and $|\widetilde{K}_i|   - H(\widetilde{K}_i) \leq \delta_N^{(1)}$.
\end{lem}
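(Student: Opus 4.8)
The plan is to read off the claimed entropy gap from the variational-distance estimate of Lemma~\ref{lem_U2} by invoking the standard continuity bound for Shannon entropy under the variational distance. First I would note that, by the definitions in Algorithm~\ref{alg:encoding_2}, the pair $[K_i,\widetilde{K}_i]$ is exactly the sub-block $\widetilde{V}_i^{1:N}[\mathcal{V}_{U|Z}\backslash\mathcal{H}_{U|Y}]$ of the vector $[K_i,\widetilde{K}_i,F_i,R_1]$ treated in Lemma~\ref{lem_U2}, and that the marginal of the uniform distribution $q_{\mathcal{U}_{K,\widetilde{K},F,R}}$ on the $(K_i,\widetilde{K}_i)$-coordinates is again uniform, say $q_{\mathcal{U}_{K,\widetilde{K}}}$, over $\{0,1\}^{|K_i|+|\widetilde{K}_i|}$ (the three index sets $\mathcal{V}_{U|Z}\backslash\mathcal{H}_{U|Y}$, $(\mathcal{H}_{U|Y}\backslash\mathcal{V}_{U|X})\cap\mathcal{V}_{U|Z}$ and $\mathcal{V}_{U|X}$ being pairwise disjoint). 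Since the variational distance does not increase under marginalization, Lemma~\ref{lem_U2} gives
$$\mathbb{V}(p_{K_i\widetilde{K}_i},q_{\mathcal{U}_{K,\widetilde{K}}})\leq 2\sqrt{2\log 2}\sqrt{N\delta_N}\triangleq \epsilon_N .$$

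Next I would apply the classical continuity estimate: for distributions $p,q$ on a finite set $\mathcal{W}$ with $\mathbb{V}(p,q)$ small enough, $|H(p)-H(q)|\leq \mathbb{V}(p,q)\log_2(|\mathcal{W}|/\mathbb{V}(p,q))$. Since $\delta_N=2^{-N^{\beta}}\to 0$ we have $\epsilon_N\to 0$, so for $N$ large enough this estimate applies to $p_{K_i\widetilde{K}_i}$ and $q_{\mathcal{U}_{K,\widetilde{K}}}$ on $\mathcal{W}=\{0,1\}^{|K_i|+|\widetilde{K}_i|}$, and since $H(q_{\mathcal{U}_{K,\widetilde{K}}})=|K_i|+|\widetilde{K}_i|$ it reads
$$|K_i|+|\widetilde{K}_i|-H(K_i\widetilde{K}_i)\leq \epsilon_N(|K_i|+|\widetilde{K}_i|-\log_2\epsilon_N).$$
Finally, using $|K_i|+|\widetilde{K}_i|=|\mathcal{V}_{U|Z}\backslash\mathcal{H}_{U|Y}|\leq N$ together with the fact that the right-hand side is nondecreasing in $|K_i|+|\widetilde{K}_i|$, it is bounded by $\epsilon_N(N-\log_2\epsilon_N)$, which is precisely $\delta_N^{(1)}$ as defined in (\ref{delta_1_def}).

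For the two ``in particular'' statements, I would simply combine the chain rule with the trivial bounds $H(\widetilde{K}_i\mid K_i)\leq|\widetilde{K}_i|$ and $H(K_i\mid\widetilde{K}_i)\leq|K_i|$, which give $|K_i|-H(K_i)\leq|K_i|+|\widetilde{K}_i|-H(K_i\widetilde{K}_i)\leq\delta_N^{(1)}$ and, symmetrically, $|\widetilde{K}_i|-H(\widetilde{K}_i)\leq\delta_N^{(1)}$. I do not expect a real obstacle here, since this is a routine variational-to-entropy passage; the only points requiring attention are verifying that $\epsilon_N$ lies below the threshold under which the continuity estimate holds — which is why the statement is restricted to ``$N$ large enough'' — and keeping every logarithm in base $2$ so that the resulting constant matches the expression $\delta_N^{(1)}$ in (\ref{delta_1_def}).
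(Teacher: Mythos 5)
Your proposal is correct and follows essentially the same route as the paper: it marginalizes the variational-distance bound of Lemma~\ref{lem_U2} to the $(K_i,\widetilde{K}_i)$-coordinates, applies the standard entropy continuity bound $|H(p)-H(q)|\leq \mathbb{V}(p,q)\log_2(|\mathcal{W}|/\mathbb{V}(p,q))$ with the monotonicity of $x\mapsto -x\log_2 x$ near zero, and bounds $|K_i|+|\widetilde{K}_i|$ by $N$ to obtain $\delta_N^{(1)}$. You are merely more explicit than the paper about the marginalization step and about deducing the two ``in particular'' inequalities from the joint one via the chain rule, both of which the paper leaves implicit.
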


\begin{proof}
See Appendix~\ref{App_lem_U2_Div}.
\end{proof}

\begin{lem}  \label{lem_KKtilde2}
For $i \in \llbracket 1,k \rrbracket$, we have for $N$ large enough
\begin{align*}
I(K_i;\widetilde{K}_i R_1)  \leq \delta_N^{(2)} \quad\text{and}\quad I(\widetilde{K}_i;R_1)  \leq \delta_N^{(2)},
\end{align*}
where 
\begin{equation} \label{delta_2_def}
 \delta_{N}^{(2)} \triangleq 6\sqrt{2 \log2} \sqrt{N \delta_N} (N - \log_2( 6\sqrt{2 \log2} \sqrt{N \delta_N})).
\end{equation} 
\end{lem}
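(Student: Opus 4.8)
The plan is to deduce near-independence of the sub-blocks $K_i$, $\widetilde{K}_i$, $R_1$ from the near-uniformity of the joint tuple already established in Lemma~\ref{lem_U2}. First I would note that $K_i$, $\widetilde{K}_i$ and $R_1$ are the restrictions of $\widetilde{V}_i^{1:N}$ to the pairwise disjoint index sets $(\mathcal{V}_{U|Z}\backslash\mathcal{H}_{U|Y})\backslash\mathcal{A}_{UYZ}$, $\mathcal{A}_{UYZ}\subset\mathcal{V}_{U|Z}\backslash\mathcal{H}_{U|Y}$ and $\mathcal{V}_{U|X}$, respectively; disjointness of $\mathcal{V}_{U|X}$ from the first two sets follows from $U\to X\to Y$, which gives $\mathcal{V}_{U|X}\subset\mathcal{V}_{U|Y}\subset\mathcal{H}_{U|Y}$, and all three sets lie in $\mathcal{V}_{U}$, so Lemma~\ref{lem_U2} applies to their union. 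Set $\varepsilon\triangleq 2\sqrt{2\log 2}\sqrt{N\delta_N}$. Since variational distance does not increase under marginalization and the marginal of a uniform distribution on a product alphabet is again uniform, Lemma~\ref{lem_U2} implies that $p_{K_i\widetilde{K}_iR_1}$, $p_{K_i}$, $p_{\widetilde{K}_iR_1}$, $p_{\widetilde{K}_i}$ and $p_{R_1}$ are each within $\varepsilon$ of the uniform distribution on the corresponding alphabet.

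Next I would pass to closeness between each joint law and the product of its marginals. A uniform distribution on a product alphabet equals the product of the uniform distributions on the factors, and variational distance is subadditive for products, so $p_{K_i}\otimes p_{\widetilde{K}_iR_1}$ is within $2\varepsilon$ of that uniform distribution; combining with the $\varepsilon$-bound on $p_{K_i\widetilde{K}_iR_1}$ via the triangle inequality gives $\mathbb{V}\big(p_{K_i\widetilde{K}_iR_1},\,p_{K_i}\otimes p_{\widetilde{K}_iR_1}\big)\le 3\varepsilon$, and identically $\mathbb{V}\big(p_{\widetilde{K}_iR_1},\,p_{\widetilde{K}_i}\otimes p_{R_1}\big)\le 3\varepsilon$.

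Finally I would convert these into the mutual-information bounds using the identity $I(A;B)=H(p_A\otimes p_B)-H(p_{AB})=\big|H(p_{AB})-H(p_A\otimes p_B)\big|$ together with the entropy-continuity bound: two distributions on a common finite alphabet $\mathcal{Z}$ at variational distance $\mu\le 1/2$ have entropies differing by at most $\mu\log_2(|\mathcal{Z}|/\mu)$. Applying this with $\mathcal{Z}$ the alphabet of $[K_i,\widetilde{K}_iR_1]$, which has at most $2^N$ elements, and using that $x\mapsto x(N-\log_2 x)$ is increasing for small $x$, yields $I(K_i;\widetilde{K}_iR_1)\le 3\varepsilon\big(N-\log_2(3\varepsilon)\big)=\delta_N^{(2)}$, and the same argument for $[\widetilde{K}_i,R_1]$ gives $I(\widetilde{K}_i;R_1)\le\delta_N^{(2)}$; the requirement $3\varepsilon=6\sqrt{2\log 2}\sqrt{N\delta_N}\le 1/2$ holds for $N$ large, which accounts for the ``for $N$ large enough'' clause. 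The only subtle point, which I do not expect to be a genuine obstacle, is the bookkeeping in the last step: recognizing $I(A;B)$ as the entropy gap between $p_{AB}$ and $p_A\otimes p_B$ so that the entropy-continuity inequality applies verbatim, bounding the alphabet size by $2^N$, and staying in the small-$\mu$ regime where the relevant function is monotone.
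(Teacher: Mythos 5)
Your proposal is correct and follows essentially the same route as the paper: a double application of the triangle inequality (equivalently, your product-subadditivity step) to get $\mathbb{V}\bigl(p_{K_i\widetilde{K}_iR_1},\,p_{K_i}p_{\widetilde{K}_iR_1}\bigr)\leq 3\,\mathbb{V}\bigl(p_{K_i\widetilde{K}_iR_1},\,q_{\mathcal{U}}\bigr)\leq 6\sqrt{2\log 2}\sqrt{N\delta_N}$ from Lemma~\ref{lem_U2}, followed by the Csisz\'ar--K\"orner bound $I(A;B)\leq \mathbb{V}\log_2(|\mathcal{Z}|/\mathbb{V})$ with $|\mathcal{Z}|\leq 2^N$ and monotonicity of $x\mapsto x\log_2(M/x)$ for small $x$. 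Your explicit tracking of the $\mu\leq 1/2$ condition and of the disjointness of the index sets matches what the paper handles via its ``$N$ large enough'' caveat, so there is no substantive difference.
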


\begin{proof}
See Appendix \ref{App_lem_KKtilde2}.
\end{proof}

We now show that the global key $K_{1:k}$ is uniform. Specifically, we have 
\begin{align*}
H(K_{1:k}) 
&= \sum_{i=1}^k H(K_{i} | K_{1:i-1}) \\
& \geq \sum_{i=1}^k H(K_{i} | K_{1:i-1} R_1 ) \\
& \stackrel{(a)}{=} \sum_{i=1}^k H(K_{i} | R_1) \\
& = \sum_{i=1}^k H(K_{i} )  - \sum_{i=1}^k I (K_{i} ; R_1) \\
& \stackrel{(b)}{\geq} \sum_{i=1}^k H(K_{i} ) - k \delta_N^{(2)} \displaybreak[0]\\
& \stackrel{(c)}{\geq} \sum_{i=1}^k ( |K_{i}| - \delta_N^{(1)}) - k \delta_N^{(2)} \displaybreak[0]\\
& = |K_{1:k}| - k (\delta_N^{(1)} + \delta_N^{(2)})
\end{align*}
where $(a)$ holds because $K_i \to R_1 \to  K_{1:i-1}$ for any $i\in \llbracket 1 ,k \rrbracket$, $(b)$ holds by Lemma~\ref{lem_KKtilde2}, $(c)$ holds by Lemma~\ref{lem_U2_Div}. 
Hence,
\begin{align} \label{eq_uniformity2}
\textbf{\textup{U}}(\mathcal{S}_N)= |K_{1:k}|  - H(K_{1:k}) \leq k (\delta_N^{(1)} + \delta_N^{(2)}).
\end{align}

\subsubsection{Strong secrecy} \label{sec_secrecy2}

Because of the successive cancellation encoding, the secrecy analysis is more involved than for Model~1.

\begin{lem} \label{lem_secre_cra}
For $i \in \llbracket 1,k \rrbracket$, we have for $N$ large enough
\begin{equation*}
I(\widetilde{V}_i^{1:N}[\mathcal{V}_{U|Z}] ;Z_i^{1:N}) \leq \delta_N^{(3)},
\end{equation*}
where 
\begin{equation} \label{delta_3_def}
\delta_N^{(3)} \triangleq 3 \sqrt{2 \log2} \sqrt{ N \delta_N} ( N - \log_2 (3 \sqrt{2 \log2} \sqrt{ N \delta_N} ) ).
\end{equation}
\end{lem}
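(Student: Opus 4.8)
The plan is to prove the estimate first under the ideal distribution $p_{X^{1:N}V^{1:N}Z^{1:N}}$ and then transport it to the distribution $\widetilde p$ induced by Algorithm~\ref{alg:encoding_2} by means of Lemma~\ref{lemDivprob}.

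\emph{Step 1 (leakage under the target distribution).} Under the target coupling one has $V^{1:N}=U^{1:N}G_N$ with $U^{1:N}$ i.i.d.\ $\sim p_U$, and $Z^{1:N}$ is obtained from $X^{1:N}$ through the memoryless source channel $p_{Z|X}$, so that $V^{1:N}\to X^{1:N}\to Z^{1:N}$. Exactly as in the chain of inequalities leading to \eqref{eqsec1} (chain rule, conditioning reduces entropy, and the defining inequality of $\mathcal{V}_{U|Z}$),
\begin{align*}
I_p\big(V^{1:N}[\mathcal{V}_{U|Z}];Z^{1:N}\big)
&= |\mathcal{V}_{U|Z}| - H_p\big(V^{1:N}[\mathcal{V}_{U|Z}]\,\big|\,Z^{1:N}\big)\\
&\leq |\mathcal{V}_{U|Z}|-\textstyle\sum_{j\in\mathcal{V}_{U|Z}}H_p\big(V^j\,\big|\,V^{1:j-1}Z^{1:N}\big)
\leq |\mathcal{V}_{U|Z}|\,\delta_N\leq N\delta_N .
\end{align*}

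\emph{Step 2 (transfer to $\widetilde p$).} Next I would observe that $\widetilde p_{X^{1:N}}=p_{X^{1:N}}$, that $Z_i^{1:N}$ is the unaltered eavesdropper observation, hence produced from $X_i^{1:N}$ by the \emph{same} memoryless source channel $p_{Z^{1:N}|X^{1:N}}$ under both $p$ and $\widetilde p$, and that it is independent of the shared randomness $R_1$. Consequently $\widetilde V_i^{1:N}\to X_i^{1:N}\to Z_i^{1:N}$ and $\widetilde p_{X_i^{1:N}\widetilde V_i^{1:N}Z_i^{1:N}}=\widetilde p_{X_i^{1:N}\widetilde V_i^{1:N}}\,p_{Z^{1:N}|X^{1:N}}$, with the analogous factorization for $p$. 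Since appending a common Markov kernel cannot increase $\mathbb{V}$ and since $\mathbb{V}$ does not increase under marginalization, Lemma~\ref{lemDivprob} gives
\begin{align*}
\mathbb{V}\big(p_{V^{1:N}[\mathcal{V}_{U|Z}]Z^{1:N}},\ \widetilde p_{\widetilde V_i^{1:N}[\mathcal{V}_{U|Z}]Z_i^{1:N}}\big)
\leq \mathbb{V}\big(p_{X^{1:N}V^{1:N}},\ \widetilde p_{X^{1:N}V^{1:N}}\big)
\leq \sqrt{2\log 2}\,\sqrt{N\delta_N}.
\end{align*}

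\emph{Step 3 (continuity).} Finally I would write $I_{\widetilde p}(\widetilde V_i^{1:N}[\mathcal{V}_{U|Z}];Z_i^{1:N})=H_{\widetilde p}(\widetilde V_i^{1:N}[\mathcal{V}_{U|Z}])-H_{\widetilde p}(\widetilde V_i^{1:N}[\mathcal{V}_{U|Z}]\mid Z_i^{1:N})$, bound $H_{\widetilde p}(\widetilde V_i^{1:N}[\mathcal{V}_{U|Z}])\leq|\mathcal{V}_{U|Z}|$, and then lower bound the conditional entropy. Applying a standard continuity bound for conditional entropy under a small variational distance — one that, crucially, involves only the alphabet of $\widetilde V_i^{1:N}[\mathcal{V}_{U|Z}]$ (of size at most $2^N$) and not that of $Z^{1:N}$, which is legitimate here precisely because $Z^{1:N}$ has an unchanged marginal — Step~2 yields $H_{\widetilde p}(\widetilde V_i^{1:N}[\mathcal{V}_{U|Z}]\mid Z_i^{1:N})\geq H_p(V^{1:N}[\mathcal{V}_{U|Z}]\mid Z^{1:N})-\theta_N$ with $\theta_N$ of order $\sqrt{N\delta_N}\big(N-\log_2\sqrt{N\delta_N}\big)$. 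Combining with Step~1, absorbing the negligible term $N\delta_N\leq\sqrt{N\delta_N}$ into the continuity slack, and using $\log_2|\mathcal{V}_{U|Z}|\leq N$ gives $I_{\widetilde p}(\widetilde V_i^{1:N}[\mathcal{V}_{U|Z}];Z_i^{1:N})\leq\delta_N^{(3)}$ with $\delta_N^{(3)}$ as in \eqref{delta_3_def} (the factor $3$ being loose bookkeeping).

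The hard part is the continuity estimate of Step~3: a careless comparison of the joint entropies $H_{\widetilde p}(\widetilde V_i^{1:N}[\mathcal{V}_{U|Z}],Z_i^{1:N})$ and $H_p(V^{1:N}[\mathcal{V}_{U|Z}],Z^{1:N})$ would bring in the alphabet $\mathcal{Z}^N$ and only produce a bound scaling like $N\log|\mathcal{Z}|$. The argument must therefore be organized around $H(A)-H(A\mid Z^{1:N})$, exploiting that $Z^{1:N}$ has the same marginal under $p$ and $\widetilde p$, and rely on a conditional-entropy continuity lemma that depends on $|\mathcal{A}|$ only; Steps~1 and~2 are then routine given Lemma~\ref{lemDivprob}.
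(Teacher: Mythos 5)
Your proof is correct and rests on the same two pillars as the paper's: (i) since $Z_i^{1:N}$ is produced from $X_i^{1:N}$ by the same memoryless kernel under $p$ and $\widetilde p$, Lemma~\ref{lemDivprob} yields $\mathbb{V}\bigl(p_{V^{1:N}[\mathcal{V}_{U|Z}]Z^{1:N}},\widetilde p_{V_i^{1:N}[\mathcal{V}_{U|Z}]Z_i^{1:N}}\bigr)\leq\sqrt{2\log 2}\sqrt{N\delta_N}$ (this is exactly (\ref{eq_sec_int1})), and (ii) the computation $|\mathcal{V}_{U|Z}|-H_p\bigl(V^{1:N}[\mathcal{V}_{U|Z}]\mid Z^{1:N}\bigr)\leq N\delta_N$ from the definition of $\mathcal{V}_{U|Z}$. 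Where you diverge is the final continuity step. The paper converts the ideal-leakage bound into a variational distance via Pinsker, $\mathbb{V}(p_{AZ},p_Ap_Z)\leq\sqrt{2\log 2}\sqrt{I_p(A;Z)}\leq\sqrt{2\log 2}\sqrt{N\delta_N}$ with $A=V^{1:N}[\mathcal{V}_{U|Z}]$, chains the three distances by the triangle inequality to obtain $\mathbb{V}(\widetilde p_{AZ},\widetilde p_A p_Z)\leq 3\sqrt{2\log 2}\sqrt{N\delta_N}$ (whence the factor $3$ in (\ref{delta_3_def})), and then applies a single off-the-shelf inequality $I(A;Z)\leq\mathbb{V}\log_2(|\mathcal{A}|/\mathbb{V})$ with $|\mathcal{A}|=2^{|\mathcal{V}_{U|Z}|}\leq 2^N$. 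You instead keep the entropy form and invoke a continuity bound for $H(A\mid Z)$ whose logarithmic factor involves only $|\mathcal{A}|$. The two are equivalent in substance — both ultimately amount to averaging the Csisz\'ar--K\"orner entropy-continuity bound over $z$ using $p_{Z^{1:N}}=\widetilde p_{Z^{1:N}}$ and concavity of $x\mapsto x\log(M/x)$ — but the paper's packaging spares you from stating and justifying a separate conditional-entropy continuity lemma (the one piece you leave as an unproven black box), since everything reduces to the mutual-information-versus-variational-distance bound applied to $\mathbb{V}(\widetilde p_{AZ},\widetilde p_A p_Z)$. Your diagnosis of why a naive comparison of the joint entropies would fail (it would drag in $|\mathcal{Z}|^N$) is exactly right, and it is precisely the reason the paper bounds the distance to the product of marginals rather than comparing $H(A,Z^{1:N})$ across the two distributions.
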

\begin{proof}
See Appendix~\ref{App_lem_secre_cra}.
\end{proof}
The following lemma shows that secrecy holds for each block.
\begin{lem} \label{lem_sec_block2}
For each Block $i \in \llbracket 1, k \rrbracket$, $[K_i, \widetilde{K}_i]$ is a secret key in the sense that
\begin{align*}
I\left(K_{i}\widetilde{K}_{i}; R_1 M_{i} Z_{i}^{1:N}\right) \leq 2\delta_N^{(1)} + \delta_N^{(2)} + \delta_N^{(3)}.
\end{align*}
\end{lem}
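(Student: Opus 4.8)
The plan is to mirror the structure of the single-block secrecy proof for Model~1 (Lemma~\ref{lem_sec_block}), but tracking the extra randomness $R_1$ and using the approximate-distribution bounds of Lemmas~\ref{lem_U2_Div}, \ref{lem_KKtilde2} and~\ref{lem_secre_cra} in place of the exact entropy bounds available in Model~1. First I would expand
\begin{align*}
I(K_i\widetilde{K}_i; R_1 M_i Z_i^{1:N}) = I(K_i\widetilde{K}_i; R_1 F_i Z_i^{1:N}) + I(K_i\widetilde{K}_i; F_i'\oplus\widetilde{K}_{i-1}\mid R_1 F_i Z_i^{1:N}),
\end{align*}
using the definition $M_i=[F_i,F_i'\oplus\widetilde{K}_{i-1}]$. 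The first term I would bound by writing $I(K_i\widetilde{K}_i;R_1F_iZ_i^{1:N})\le |K_i|+|\widetilde{K}_i|+|F_i|+|R_1| - H(K_i\widetilde{K}_iF_iR_1\mid Z_i^{1:N})$, then identifying $K_i\widetilde{K}_iF_iR_1$ with $\widetilde{V}_i^{1:N}[\mathcal{V}_{U|Z}]$ (since $\mathcal{V}_{U|X}\subset\mathcal{V}_{U|Z}$ and the other index sets are disjoint sub-blocks of $\mathcal{V}_{U|Z}$), so the entropy deficit is controlled by $H(\widetilde{V}_i^{1:N}[\mathcal{V}_{U|Z}]\mid Z_i^{1:N})\ge |\mathcal{V}_{U|Z}| - I(\widetilde{V}_i^{1:N}[\mathcal{V}_{U|Z}];Z_i^{1:N}) - (|\mathcal{V}_{U|Z}|-H(\widetilde{V}_i^{1:N}[\mathcal{V}_{U|Z}]))$; the leakage piece is $\le\delta_N^{(3)}$ by Lemma~\ref{lem_secre_cra} and the uniformity-deficit piece is $\le\delta_N^{(1)}$ by (the proof technique of) Lemma~\ref{lem_U2_Div}. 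This gives $I(K_i\widetilde{K}_i;R_1F_iZ_i^{1:N})\le \delta_N^{(1)}+\delta_N^{(3)}$ up to the bookkeeping of which $\delta$-term absorbs which slack.

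**Second,** for the one-time-pad term I would run the same chain of equalities as in Lemma~\ref{lem_sec_block}: insert $F_i'$ into the conditioning to get $I(K_i\widetilde{K}_i;F_i'\oplus\widetilde{K}_{i-1}\mid R_1F_iZ_i^{1:N})\le I(K_i\widetilde{K}_iR_1F_iZ_i^{1:N}F_i';F_i'\oplus\widetilde{K}_{i-1})$, expand as $H(F_i'\oplus\widetilde{K}_{i-1}) - H(\widetilde{K}_{i-1}\mid K_i\widetilde{K}_iR_1F_iZ_i^{1:N}F_i')$, and then — this is the step that needs care — argue that $\widetilde{K}_{i-1}$ is nearly independent of the conditioning variables. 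Here I cannot simply drop the conditioning as in Model~1, because $R_1$ is shared across all blocks and hence correlates Block $i-1$ with Block $i$. The right move is to note $\widetilde{K}_{i-1}\to R_1 \to (K_i\widetilde{K}_iF_iZ_i^{1:N}F_i')$, reduce the conditioning to $H(\widetilde{K}_{i-1}\mid R_1)$, and bound $H(\widetilde{K}_{i-1}) - H(\widetilde{K}_{i-1}\mid R_1) = I(\widetilde{K}_{i-1};R_1)\le\delta_N^{(2)}$ by Lemma~\ref{lem_KKtilde2}; combined with $H(F_i'\oplus\widetilde{K}_{i-1})\le|\widetilde{K}_{i-1}|$ and $|\widetilde{K}_{i-1}|-H(\widetilde{K}_{i-1})\le\delta_N^{(1)}$ from Lemma~\ref{lem_U2_Div}, this term is $\le\delta_N^{(1)}+\delta_N^{(2)}$. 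Adding the two contributions yields $I(K_i\widetilde{K}_i;R_1M_iZ_i^{1:N})\le 2\delta_N^{(1)}+\delta_N^{(2)}+\delta_N^{(3)}$.

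**The main obstacle** I expect is precisely verifying the Markov relation $\widetilde{K}_{i-1}\to R_1\to (K_i\widetilde{K}_iF_iZ_i^{1:N}F_i')$ — i.e., that once $R_1$ is fixed, the data of Block $i$ (including $K_i,\widetilde{K}_i,F_i,F_i'$, all deterministic functions of $\widetilde{V}_i^{1:N}$, itself a stochastic function of $X_i^{1:N}$ and $R_1$) is independent of the data of Block $i-1$. This should follow from the functional dependence graph in Fig.~\ref{figFGD2} together with the fact that $(X_i^{1:N})_{i=1}^k$ are i.i.d.\ and the only cross-block coupling is through $R_1$ (the $\widetilde{K}_{i-1}$ one-time-pad in $M_i$ does not enter this particular conditioning set since we only conditioned on $F_i'$ itself, not on $F_i'\oplus\widetilde{K}_{i-1}$, after the rewriting). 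I would make this rigorous by invoking $d$-separation on the dependence graph. A secondary, purely technical, point is to keep consistent track of the $\delta_N^{(1)},\delta_N^{(2)},\delta_N^{(3)}$ slack terms so that the final bound matches the stated constant $2\delta_N^{(1)}+\delta_N^{(2)}+\delta_N^{(3)}$ exactly; the natural derivation may momentarily produce extra $\delta_N^{(3)}$ or $\delta_N^{(1)}$ factors that then have to be absorbed using $\delta_N^{(3)}\le\delta_N^{(1)}$-type comparisons valid for $N$ large.
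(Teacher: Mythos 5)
Your proposal is correct and follows essentially the same route as the paper's proof: the same split of $M_i$ into $F_i$ and the one-time-padded $F_i'\oplus\widetilde{K}_{i-1}$, the same identification of $(K_i,\widetilde{K}_i,R_1,F_i)$ as a subvector of $\widetilde{V}_i^{1:N}[\mathcal{V}_{U|Z}]$ with the entropy deficit split into a uniformity term ($\le\delta_N^{(1)}$) and a leakage term ($\le\delta_N^{(3)}$ via Lemma~\ref{lem_secre_cra}), and the same reduction of the pad term through the Markov chain $\widetilde{K}_{i-1}\to R_1\to K_i\widetilde{K}_iF_iF_i'Z_i^{1:N}$ giving $\delta_N^{(1)}+\delta_N^{(2)}$. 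The slack terms combine exactly to $2\delta_N^{(1)}+\delta_N^{(2)}+\delta_N^{(3)}$ with no need for the absorption comparisons you anticipated.
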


\begin{proof}
By the proof of Lemma~\ref{lem_U2}, we have
\begin{align*}
\mathbb{V} ( \widetilde{p}_{V_i^{1:N}[\mathcal{V}_{U|Z}]}, q_{\mathcal{U}_{\mathcal{V}_{U|Z}}} ) \leq 2 \sqrt{2 \log2} \sqrt{N \delta_N},
\end{align*}
where $q_{\mathcal{U}_{\mathcal{V}_{U|Z}}}$ is the uniform distribution over $\llbracket 1, 2^{|\mathcal{V}_{U|Z} |}\rrbracket$, and by the proof of Lemma~\ref{lem_U2_Div}, we have 
\begin{equation} \label{eq_Unif_int}
|\mathcal{V}_{U|Z}|  - H(V_i^{1:N}[\mathcal{V}_{U|Z}]) \leq \delta_N^{(1)}.
\end{equation}
Therefore,
\begin{align}
& I(K_{i}\widetilde{K}_{i}; R_1 F_{i} Z_{i}^{1:N}) \nonumber \\ \nonumber
& = H( K_i \widetilde{K}_{i}) - H(K_{i}\widetilde{K}_{i} | R_1 F_{i} Z_{i}^{1:N}) \\ \nonumber
& \leq |K_i| + |\widetilde{K}_{i}| - H(K_{i}\widetilde{K}_{i} R_1 F_{i} Z_{i}^{1:N} ) + H( R_1 F_{i} Z_{i}^{1:N}) \\ \nonumber
& = |K_i| + |\widetilde{K}_{i}| - H(K_{i}\widetilde{K}_{i} R_1 F_{i} |Z_{i}^{1:N} ) + H(F_{i} R_1 | Z_{i}^{1:N}) \\ \nonumber
& \leq |K_i| + |\widetilde{K}_{i}| + |F_i| + |R_1| - H(K_{i}\widetilde{K}_{i} R_1 F_{i} |Z_{i}^{1:N} ) \\ \nonumber
& \stackrel{(a)}{\leq} |\mathcal{V}_{U|Z}| - H(\widetilde{V}_i^{1:N}[\mathcal{V}_{U|Z}] |Z_i^{1:N}) \displaybreak[0] \\\nonumber
& = |\mathcal{V}_{U|Z}| - H(\widetilde{V}_i^{1:N}[\mathcal{V}_{U|Z}]) + I(\widetilde{V}_i^{1:N}[\mathcal{V}_{U|Z}] ;Z_i^{1:N}) \displaybreak[0] \\ \nonumber
& \stackrel{(b)}{\leq} \delta_N^{(1)} + I(\widetilde{V}_i^{1:N}[\mathcal{V}_{U|Z}] ;Z_i^{1:N}) \displaybreak[0] \\ 
& \stackrel{(c)}{\leq} \delta_N^{(1)} + \delta_N^{(3)}, \label{eqsec2__} 
\end{align}
where $(a)$ holds because ($K_i$, $\widetilde{K}_i$, $R_1$, $F_i$) is a subvector of $\widetilde{V}_i^{1:N}[\mathcal{V}_{U|Z}]$ noting that $\mathcal{V}_{U|X} \subset \mathcal{V}_{U|Z}$ since $U \to X \to Z$, $(b)$ holds by~(\ref{eq_Unif_int}), $(c)$ holds by Lemma~\ref{lem_secre_cra}.

Then, we obtain
\begin{align}
& I(K_{i}\widetilde{K}_{i}; R_1 M_{i} Z_{i}^{1:N}) - I(K_{i}\widetilde{K}_{i}; R_1 F_{i}  Z_{i}^{1:N}) \nonumber \\ \nonumber
& \stackrel{(d)}{=}  I(K_{i}\widetilde{K}_{i}; F_{i}' \oplus \widetilde{K}_{i-1}| R_1 F_{i}  Z_{i}^{1:N}) \\ \nonumber
& \stackrel{(e)}{\leq}  I( R_1 K_{i}\widetilde{K}_{i} F_{i}   F_{i}' Z_{i}^{1:N}; F_{i}' \oplus \widetilde{K}_{i-1} ) \\ \nonumber
& =  H(F_{i}' \oplus \widetilde{K}_{i-1}) - H(F_{i}' \oplus \widetilde{K}_{i-1}|R_1K_{i}\widetilde{K}_{i} F_{i}   F_{i}' Z_{i}^{1:N}) \\ \nonumber
& =  H(F_{i}' \oplus \widetilde{K}_{i-1}) - H( \widetilde{K}_{i-1}| R_1 K_{i}\widetilde{K}_{i} F_{i}   F_{i}' Z_{i}^{1:N}) \\ \nonumber
& \stackrel{(f)}{=}  H(F_{i}' \oplus \widetilde{K}_{i-1}) - H( \widetilde{K}_{i-1}|R_1) \\ \nonumber
& \leq  |\widetilde{K}_{i-1}|  - H( \widetilde{K}_{i-1}|R_1) \\ \nonumber
& =  |\widetilde{K}_{i-1}| - H( \widetilde{K}_{i-1}) + I( \widetilde{K}_{i-1}; R_1) \\
&  \stackrel{(g)}{\leq}  \delta_N^{(1)} + \delta_N^{(2)}, \label{eqsec2___} 
\end{align}
where $(d)$ holds by definition of $M_i$, $(e)$ holds by the chain rule and positivity of mutual information, $(f)$ holds because $\widetilde{K}_{i-1} \to  R_1 \to  K_{i}\widetilde{K}_{i} F_{i}   F_{i}' Z_{i}^{1:N}$, $(g)$ holds by Lemma~\ref{lem_U2_Div} and Lemma~\ref{lem_KKtilde2}.
Finally, we conclude by combining (\ref{eqsec2__}) and (\ref{eqsec2___}).
\end{proof}

We now state a lemma that will be used to show that secrecy holds for the global scheme.
\begin{lem} \label{lem_secrec2}
For $i\in \llbracket 2,k \rrbracket$, define
\begin{align*}
\widetilde{L}_e^{1:i} &\triangleq I \left(K_{1:i} \widetilde{K}_i; R_{1} M_{1:i} Z^{1:N}_{1:i} \right).
\end{align*}
We have 
\begin{multline*}
\widetilde{L}_e^{1:i} - \widetilde{L}_e^{1:i-1} \leq  I \left(K_{i} \widetilde{K}_i; R_1 M_{i}  Z^{1:N}_{i}  \right) \\+ \sum_{j=1}^{i-1} I\left(K_{j} ;  R_1 \right) +  I\left(K_{i-1} ; \widetilde{K}_{i-1}  R_1 \right).
\end{multline*}
\end{lem}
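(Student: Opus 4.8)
The plan is to expand the telescoped leakage increment $\widetilde{L}_e^{1:i} - \widetilde{L}_e^{1:i-1}$ by the chain rule, isolating the single-block leakage term $I(K_i\widetilde{K}_i; R_1 M_i Z_i^{1:N})$ that Lemma~\ref{lem_sec_block2} controls, and then bounding the remaining cross-block dependencies. First I would write
\begin{align*}
\widetilde{L}_e^{1:i} &= I(K_{1:i}\widetilde{K}_i; R_1 M_{1:i} Z_{1:i}^{1:N})\\
&= I(K_{1:i-1}; R_1 M_{1:i} Z_{1:i}^{1:N}) + I(K_i\widetilde{K}_i; R_1 M_{1:i} Z_{1:i}^{1:N} \mid K_{1:i-1}),
\end{align*}
and compare the first summand with $\widetilde{L}_e^{1:i-1} = I(K_{1:i-1}\widetilde{K}_{i-1}; R_1 M_{1:i-1} Z_{1:i-1}^{1:N})$. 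The difference of these two quantities must be absorbed into the $\sum_{j=1}^{i-1} I(K_j; R_1)$ and $I(K_{i-1};\widetilde{K}_{i-1} R_1)$ terms; the second summand above must be absorbed into $I(K_i\widetilde{K}_i; R_1 M_i Z_i^{1:N})$.

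For the second summand, the key observation is the Markov structure visible in the functional dependence graph (Fig.~\ref{figFGD2}): conditioned on $R_1$, block $i$'s variables $(K_i,\widetilde{K}_i, M_i, Z_i^{1:N})$ depend on the past blocks only through the seed $\widetilde{K}_{i-1}$ (which enters $M_i$ via the one-time pad $F_i' \oplus \widetilde{K}_{i-1}$). Concretely I would argue $(K_i\widetilde{K}_i M_i Z_i^{1:N}) \to (R_1, \widetilde{K}_{i-1}) \to (K_{1:i-1}, M_{1:i-1}, Z_{1:i-1}^{1:N})$, so that dropping the conditioning on $K_{1:i-1}$ and on the earlier messages/observations only increases the mutual information after accounting for $\widetilde{K}_{i-1}$ and $R_1$; this lets one replace $I(K_i\widetilde{K}_i; R_1 M_{1:i} Z_{1:i}^{1:N}\mid K_{1:i-1})$ by $I(K_i\widetilde{K}_i; R_1 M_i Z_i^{1:N})$ at the cost of terms already present, using that $F_i'$ is (nearly) independent of $(K_i,\widetilde{K}_i)$ given $R_1$ so the one-time pad hides $\widetilde{K}_{i-1}$ — exactly the mechanism exploited in Lemma~\ref{lem_sec_block2}. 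For the first summand I would similarly peel off $\widetilde{K}_{i-1}$: bound $I(K_{1:i-1}; R_1 M_{1:i} Z_{1:i}^{1:N}) - I(K_{1:i-1}\widetilde{K}_{i-1}; R_1 M_{1:i-1} Z_{1:i-1}^{1:N})$ by noting $M_i$ adds at most $I(K_{1:i-1}; M_i \mid R_1 M_{1:i-1} Z_{1:i-1}^{1:N})$, which the one-time pad on $F_i'$ renders small, and that removing $\widetilde{K}_{i-1}$ from the conditioning costs $I(K_{1:i-1};\widetilde{K}_{i-1}\mid R_1 M_{1:i-1} Z_{1:i-1}^{1:N}) \leq I(K_{1:i-1}; \widetilde{K}_{i-1} R_1)$, which in turn is handled through $I(K_{i-1};\widetilde{K}_{i-1} R_1)$ plus the $\sum_{j=1}^{i-1} I(K_j; R_1)$ slack coming from the fact that distinct keys $K_j$ share the common randomness $R_1$.

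The main obstacle I expect is the bookkeeping around $R_1$: because the same vector $R_1$ is reused in every block, the blocks are not conditionally independent given $R_1$ alone in the naive way, and one must be careful that the telescoping does not double-count the $I(K_j;R_1)$ contributions — this is why the statement carries the full sum $\sum_{j=1}^{i-1} I(K_j;R_1)$ rather than a single term. The cleanest route is to first condition everything on $R_1$, establish the genuine Markov chain across blocks given $(R_1,\widetilde{K}_{i-1})$ from the dependence graph, do the telescoping in the conditional world, and only at the end add back the mutual-information penalties $I(K_j;R_1)$ and $I(K_{i-1};\widetilde{K}_{i-1}R_1)$ via the identity $H(\,\cdot\mid R_1) = H(\,\cdot\,) - I(\,\cdot\,;R_1)$ applied term by term. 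With Lemmas~\ref{lem_U2_Div} and~\ref{lem_KKtilde2} already in hand, each such penalty is $O(\delta_N^{(1)})$ or $O(\delta_N^{(2)})$, so the resulting bound is of the advertised form; the analytic content is entirely in the Markov/one-time-pad argument, not in the estimates.
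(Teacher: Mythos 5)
Your high-level instinct --- expand the increment by the chain rule and kill the cross-block terms using the Markov structure of the dependence graph --- is the right one, and you correctly identify both the key conditional-independence statement $\left(K_i\widetilde{K}_iM_iZ_i^{1:N}\right) \to \left(R_1,\widetilde{K}_{i-1}\right) \to \left(K_{1:i-1},M_{1:i-1},Z_{1:i-1}^{1:N}\right)$ and the reason the full sum $\sum_{j=1}^{i-1}I(K_j;R_1)$ appears (the reuse of $R_1$ across blocks, unwound by recurrence). However, there is a genuine gap in how you plan to close the argument: you repeatedly invoke the one-time-pad mechanism and near-uniformity of $\widetilde{K}_{i-1}$ (``$F_i'$ is \emph{nearly} independent \ldots so the one-time pad hides $\widetilde{K}_{i-1}$'', ``each such penalty is $O(\delta_N^{(1)})$ or $O(\delta_N^{(2)})$''). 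The lemma you are asked to prove is an \emph{exact} information inequality with no $\delta_N$ terms on the right-hand side; it is proved in the paper purely by chain rules, positivity of mutual information, and exact Markov chains, with no appeal to uniformity or to the one-time pad. The quantitative estimates (Lemmas~\ref{lem_U2_Div}, \ref{lem_KKtilde2}, \ref{lem_sec_block2}) enter only afterwards, when the terms produced by this lemma are bounded. A proof that absorbs cross terms ``because the one-time pad renders them small'' would at best establish a weaker statement polluted by additive $O(\delta_N^{(1)}+\delta_N^{(2)})$ errors, which is not the lemma. The correct device for eliminating $M_iZ_i^{1:N}$ given the past keys is to first \emph{augment} the mutual information with $\widetilde{K}_{i-1}$ via the chain rule and then use the exact Markov chain $K_{1:i-1}\to K_i\widetilde{K}_{i-1:i}R_1\to M_iZ_i^{1:N}$, which holds deterministically because $M_i=[F_i,F_i'\oplus\widetilde{K}_{i-1}]$ is a function of block-$i$ variables and $\widetilde{K}_{i-1}$; no independence of the pad is needed.

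Two further steps in your sketch do not survive scrutiny. First, your decomposition peels off $K_{1:i-1}$ before the block-$i$ observables, which leaves you with $I\left(K_i\widetilde{K}_i;R_1M_{1:i}Z_{1:i}^{1:N}\mid K_{1:i-1}\right)$; because $M_i$ couples $F_i'$ (correlated with block $i$) to $\widetilde{K}_{i-1}$ (correlated with block $i-1$), this conditional mutual information does not reduce to $I\left(K_i\widetilde{K}_i;R_1M_iZ_i^{1:N}\right)$ by conditional independence alone, and the extra terms are exactly the ones you would need to track but do not. The paper's ordering --- split off $\left(R_1,M_i,Z_i^{1:N}\right)$ first, giving $\widetilde{L}_e^{1:i}=\alpha_i+\beta_i+\gamma_i$ with $\gamma_i\leq\widetilde{L}_e^{1:i-1}$ handled by a single Markov chain --- avoids this. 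Second, the claimed bound $I\left(K_{1:i-1};\widetilde{K}_{i-1}\mid R_1M_{1:i-1}Z_{1:i-1}^{1:N}\right)\leq I\left(K_{1:i-1};\widetilde{K}_{i-1}R_1\right)$ is not a valid inequality in general: adding $M_{1:i-1}Z_{1:i-1}^{1:N}$ to the conditioning can increase the conditional mutual information, and there is no Markov chain $K_{1:i-1}\to\widetilde{K}_{i-1}R_1\to M_{1:i-1}Z_{1:i-1}^{1:N}$ to justify it. The residual term must instead be generated inside $\beta_i$ and unwound with the chains $K_{1:i-2}\to K_{i-1}R_1\to\widetilde{K}_{i-1}$ and $K_{1:i-2}\to R_1\to K_{i-1}$, which is where the telescoped sum $\sum_{j=1}^{i-1}I(K_j;R_1)$ actually comes from.
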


\begin{proof}
See Appendix \ref{App_lem_secrec2}.
\end{proof}

We thus obtain
\begin{align}
&\textbf{L}(\mathcal{S}_N) \nonumber \\\displaybreak[0]
& = {I} (K_{1:k}; M_{1:k} Z^{1:N}_{1:k}) \nonumber \\\displaybreak[0]  \nonumber
& \leq \widetilde{L}_e^{1:k} \\\displaybreak[0] \nonumber
& = \sum_{i=2}^k (\widetilde{L}_e^{1:i} - \widetilde{L}_e^{1:i-1}) + \widetilde{L}_e^{1} \\\displaybreak[0] \nonumber
& \stackrel{(a)}{\leq} \sum_{i=2}^k  \left( I \left(K_{i} \widetilde{K}_i; R_1 M_{i}  Z^{1:N}_{i}  \right) \right. \\  \nonumber
& \phantom{---}\left.+ \sum_{j=1}^{i-1} I\left(K_{j} ;  R_1 \right) +  I\left(K_{i-1} ; \widetilde{K}_{i-1}  R_1 \right) \right) + \widetilde{L}_e^{1} \\\displaybreak[0] \nonumber
& \stackrel{(b)}{\leq} \sum_{i=2}^k  \left( I \left(K_{i} \widetilde{K}_i; R_1 M_{i}  Z^{1:N}_{i}  \right) + i \delta_N^{(2)} \right) + \widetilde{L}_e^{1} \\\displaybreak[0]  \nonumber
& = \frac{(k-1)(k+2)}{2}\delta_N^{(2)} + \widetilde{L}_e^{1} + \sum_{i=2}^k  I \left(K_{i} \widetilde{K}_i; R_1 M_{i}  Z^{1:N}_{i}  \right) \\\displaybreak[0]
& \stackrel{(c)}{\leq} \frac{(k-1)(k+2)}{2}\delta_N^{(2)} + k(2\delta_N^{(1)} + \delta_N^{(2)} + \delta_N^{(3)}) \label{eq_leakage2}
\end{align}
where $(a)$ follows from Lemma~\ref{lem_secrec2}, $(b)$ holds by Lemma~\ref{lem_KKtilde2}, $(c)$ holds by Lemma~\ref{lem_sec_block2}.
\subsubsection{Seed rate}
The seed rate required to initialize the coding scheme is
$$ \lim_{k\to \infty}\lim_{N\to \infty} \frac{|(\mathcal{H}_{U|Y} \backslash {\mathcal{V}}_{U|X}) \backslash  \mathcal{V}_{U|Z} |} {kN}  \leq \lim_{k\to \infty} \frac{H(U|Y)}{k}=0.$$
Note that the seed rate could be chosen to decrease exponentially fast to zero with $N$, since we may choose $k = 2^{N^{\alpha}}$, $\alpha < \beta$, and still have 
$\lim_{N \to \infty}\mathbf{P}_e(\mathcal{S}_N) = 0$ by~(\ref{eq_errorPr2}), $\lim_{N \to \infty}\mathbf{U}_e(\mathcal{S}_N) = 0$ by~(\ref{eq_uniformity2}), and $\lim_{N \to \infty}\mathbf{L}_e(\mathcal{S}_N) = 0$ by~(\ref{eq_leakage2}) along with (\ref{delta_1_def}), (\ref{delta_2_def}), (\ref{delta_3_def}).
\section{Model 3: A Multiterminal Broadcast Model} \label{sec_model3}
In this section, we develop a polar coding scheme for a multiterminal broadcast model. Sections~\ref{Secstatmod3} to~\ref{SecproofTh3} analyze a model with an arbitrary number of terminals but specific source statistics. The extension of the model to general sources is discussed in Section~\ref{SecExt} for the case of three terminals.

\subsection{Secret-key generation model} \label{Secstatmod3}
\begin{figure}
\centering
  \includegraphics[width=8.5cm]{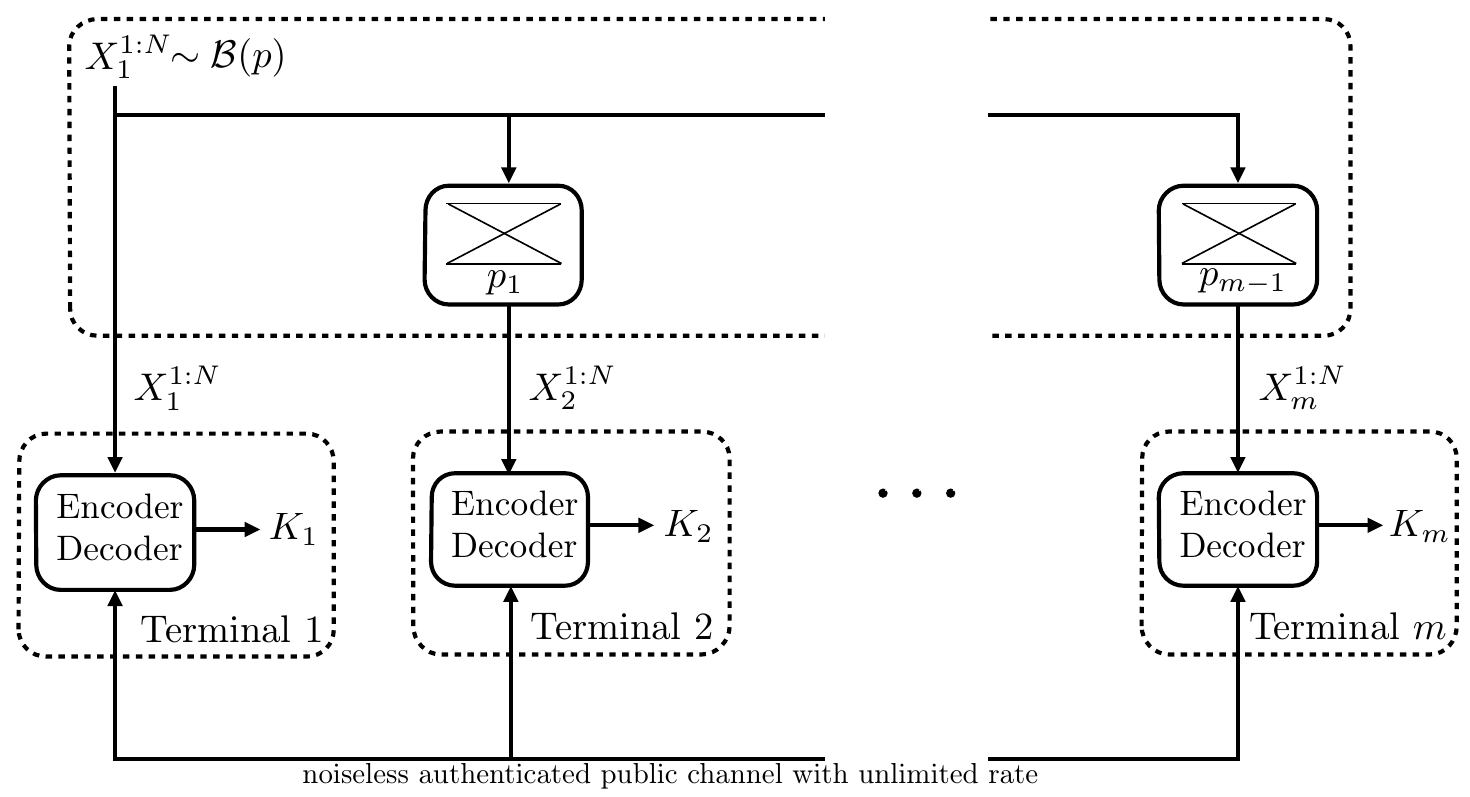}
  \caption{Model 3: Secret-key generation for the $1$-to-$m$ broadcast model}
  \label{figmodel3}
\end{figure}
As illustrated in Fig.~\ref{figmodel3}, we assume that every Terminal $i \in \mathcal{M} \backslash \{ 1\}$ observes a degraded version of the observation of Terminal $1$. For $i \in \mathcal{M}$, we assume that $\mathcal{X}_i = \{ 0,1\}$ and for $i \in \mathcal{M} \backslash \{ 1\}$, we set $X_i = X_1 \oplus B_i$, with $X_1\sim \mathcal{B}(p)$ and $B_i\sim \mathcal{B}(p_{i-1})$, $p_{i-1} \in [0,1]$, independent of $X_1$. Furthermore, we suppose that the eavesdropper does not have access to an observation of the source. We call this setup the $1$-to-$m$ broadcast model, and we recall expression of the secret-key capacity in the next proposition.

\begin{prop}[\!\! \cite{Csiszar00}] \label{prop_broadcast}
Consider the $1$-to-$m$ broadcast model. The secret-key capacity $C_{\text{SK}} (+\infty)$ is given by  
$$C_{\text{SK}} (+\infty)= \min_{i \in \mathcal{M} \backslash \{ 1 \}  } I(X_1;X_i).$$
\end{prop}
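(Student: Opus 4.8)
The plan is to prove the two matching bounds $C_{\text{SK}}(+\infty)\le \min_{i\in\mathcal{M}\setminus\{1\}} I(X_1;X_i)$ (converse) and $C_{\text{SK}}(+\infty)\ge \min_{i\in\mathcal{M}\setminus\{1\}} I(X_1;X_i)$ (achievability) separately, and then to note that the polar coding scheme of Section~\ref{sec_model3} is a low-complexity instantiation of the achievability argument.

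For the converse I would fix an arbitrary $i\in\mathcal{M}\setminus\{1\}$ and reduce to a two-terminal problem by a genie argument: merge all terminals in $\mathcal{B}_i\triangleq\mathcal{M}\setminus\{i\}$ into a single super-terminal that is handed the whole vector $X_{\mathcal{B}_i}^{1:N}\triangleq (X_j^{1:N})_{j\neq i}$. Any $(2^{NR},N,+\infty)$ strategy for the $m$-terminal model can then be run by terminal $i$ together with this super-terminal, the latter simulating all of terminals $1,\dots,i-1,i+1,\dots,m$; the same public transcript $\mathbf{F}$ is produced, and at the end terminal $i$ and the super-terminal (which contains terminal $1$, hence computes $K_1$) both hold the key, with the same error, leakage, and uniformity guarantees. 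By the two-terminal result of Corollary~\ref{corcsiszar} (which holds for arbitrary finite source alphabets, \cite{Maurer93,Ahlswede93}), $R\le I(X_i;X_{\mathcal{B}_i})$. I would then exploit the source structure: conditioned on $X_1$, the variable $X_i=X_1\oplus B_i$ depends only on $B_i$ while $(X_j)_{j\neq 1,i}$ depends only on $(B_j)_{j\neq 1,i}$, and since $B_i$ is independent of $(B_j)_{j\neq 1,i}$ we get the Markov chain $X_i\to X_1\to (X_j)_{j\neq 1,i}$, hence $I(X_i;X_{\mathcal{B}_i})=I(X_i;X_1)+I(X_i;(X_j)_{j\neq 1,i}\mid X_1)=I(X_1;X_i)$. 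Minimizing over $i$ gives the converse.

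For achievability, let $i^\star\in\arg\min_{i\in\mathcal{M}\setminus\{1\}} I(X_1;X_i)$, and observe that $I(X_1;X_j)\ge I(X_1;X_{i^\star})$ for every $j$ is equivalent to $H(X_1\mid X_j)\le H(X_1\mid X_{i^\star})$. Terminal $1$ broadcasts a single random-binning (Slepian–Wolf) description of $X_1^{1:N}$ at rate slightly above $H(X_1\mid X_{i^\star})$; since $H(X_1\mid X_j)\le H(X_1\mid X_{i^\star})$ for all $j\in\mathcal{M}\setminus\{1\}$, every terminal reconstructs $X_1^{1:N}$ with vanishing error probability from its own observation and the description. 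Terminal $1$ then publicly chooses a universal hash function and applies it to $X_1^{1:N}$ (privacy amplification / leftover-hash lemma) to extract $\approx N\big(H(X_1)-H(X_1\mid X_{i^\star})\big)=N\,I(X_1;X_{i^\star})$ bits that are nearly uniform and nearly independent of the broadcast description; all terminals recompute the key from their copy of $X_1^{1:N}$. Strong secrecy holds because the eavesdropper has no source observation and the only public data is exactly what the leftover-hash lemma accounts for, so the achieved rate is $\min_{i\in\mathcal{M}\setminus\{1\}} I(X_1;X_i)$. (Equivalently, this quantity equals $H(X_{\mathcal{M}})-R_{\mathrm{CO}}$ with $R_{\mathrm{CO}}=H(X_1\mid X_{i^\star})$ the minimum rate for omniscience in this broadcast source, matching the general formula of \cite{Csiszar04}.)

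The step I expect to be the main obstacle is making the genie-aided reduction in the converse airtight: one must verify carefully that endowing the merged terminal with all observations of $\mathcal{B}_i$ and letting it emulate several terminals yields a bona fide two-terminal strategy whose error, secrecy, and uniformity are inherited, so that Corollary~\ref{corcsiszar} genuinely applies. Once that is in place the identity $I(X_i;X_{\mathcal{B}_i})=I(X_1;X_i)$ is a one-line consequence of the conditional-independence relation $X_i\to X_1\to(X_j)_{j\neq 1,i}$, and the achievability is the classical reconciliation-then-privacy-amplification recipe, which is essentially routine after the monotonicity $H(X_1\mid X_j)\le H(X_1\mid X_{i^\star})$ has been observed.
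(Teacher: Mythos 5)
Your proof is essentially correct, but it is worth noting that the paper does not prove Proposition~\ref{prop_broadcast} at all: it is quoted from~\cite{Csiszar00}, and the paper's own contribution for this model is only the constructive achievability via the polar coding scheme of Section~\ref{sec_model3} (Theorem~\ref{ThBroadcast}), which \emph{jointly} handles reliability and secrecy rather than using your sequential reconciliation-then-privacy-amplification recipe. Your converse (cut-set/genie reduction to a two-terminal problem between Terminal $i$ and a super-terminal holding $X_{\mathcal{B}_i}$, followed by $I(X_i;X_{\mathcal{B}_i})=I(X_1;X_i)$ from the Markov chain $X_i\to X_1\to (X_j)_{j\neq 1,i}$) is sound; just be explicit that it requires the noises $(B_j)_{j\neq 1}$ to be \emph{mutually} independent given $X_1$, which the model statement only implicitly assumes, and that the two-terminal bound $I(X;Y)$ from~\cite{Maurer93,Ahlswede93} is being invoked for a non-binary alphabet, as you correctly flag. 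Your achievability is the classical route and parallels the structure of Algorithms~\ref{alg:encoding_3}--\ref{alg:decoding_3} (a single description at rate $\approx H(X_1|X_{i^\star})$ decodable by all terminals, then extraction of $\approx NI(X_1;X_{i^\star})$ key bits); the paper's Lemma~\ref{lemincl} plays the role of your monotonicity observation $H(X_1|X_j)\le H(X_1|X_{i^\star})$ at the level of polarized index sets. One minor error: your closing parenthetical identifying $R_{\mathrm{CO}}=H(X_1|X_{i^\star})$ is not right, since communication for omniscience also requires the terminals $j\neq 1$ to describe their innovations $B_j$, so $R_{\mathrm{CO}}\approx H(X_1|X_{i^\star})+\sum_{j\neq 1}H(B_j)$ and only then does $H(X_{\mathcal{M}})-R_{\mathrm{CO}}$ reduce to $I(X_1;X_{i^\star})$; this does not affect your main argument, which never relies on omniscience.
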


\subsection{Polar coding scheme} \label{Sec_scheme3}
Define $i_{\textup{min}} \triangleq \text{argmin}_{i \in  \mathcal{M}\backslash\{1\} } I(X_1;X_i)$ such that $$i_{\textup{min}} - 1 = \argmax_{i \in  \mathcal{M}\backslash\{m\} } p_i.$$ Let $n \in \mathbb{N}$ and $N \triangleq 2^n$. We set $U^{1:N} \triangleq X_{1}^{1:N} G_N$. For $\delta_N \triangleq 2^{-N^{\beta}}$, where $\beta \in ]0,1/2[$, define for $j\in \mathcal{M}\backslash\{1\}$ the sets %
\begin{align*}
\mathcal{H}_{X_1|X_j} & \triangleq \left\{ i\in \llbracket 1,N \rrbracket : H \left( {U}^i | {U}^{1:i-1}X_j^{1:N} \right) \geq \delta_N \right\}.
\end{align*}
We also define the sets
\begin{align*}
\mathcal{V}_{X_1\phantom{'}} & \triangleq \left\{ i\in \llbracket 1,N \rrbracket : H \left( {U}^i | {U}^{1:i-1} \right) \geq 1- \delta_N \right\},\\
\mathcal{H}_{X_1} & \triangleq \left\{ i\in \llbracket 1,N \rrbracket : H \left( {U}^i | {U}^{1:i-1} \right) \geq \delta_N \right\}.
\end{align*}
The encoding and decoding algorithms are given in Algorithm~\ref{alg:encoding_3} and Algorithm~\ref{alg:decoding_3}, respectively. The high-level principle behind the operation of the algorithm is the following. The set $\mathcal{H}_{X_1|X_i}$  contains the indices such that $U^{1:N}[\mathcal{H}_{X_1|X_i}]$ allows Terminal $i \in \mathcal{M}\backslash \{1\}$ to near losslessly reconstruct $U^{1:N}$ from $X_i^{1:N}$ by~\cite{Arikan10}. Using a universality argument formalized in Lemma~\ref{lemincl}, we will show that it is actually sufficient to transmit $U^{1:N}[\mathcal{H}_{X_1|X_{i_{\textup{min}}}}]$ to allow all the terminals to near losslessly reconstruct $U^{1:N}$. The secret key common to all terminals may then be chosen as a subset of $U^{1:N}[\mathcal{V}_{X_1}]$; since $U^{1:N}[\mathcal{H}_{X_1|X_{i_{\textup{min}}}}]$ has been publicly transmitted, the secret-key is chosen as ${U}^{1:N}[{\mathcal{V}}_{X_1} \backslash \mathcal{H}_{X_1|X_{i_{\textup{min}}}}]$. In general, $\mathcal{H}_{X_1|X_{i_{\textup{min}}}} \not\subset {\mathcal{V}}_{X_1}$, and the public communication may leak some information about the key; consequently, as in Model~1 and Model 2, the protocol requires a pre-shared seed to protect the transmission of $U^{1:N}\mathcal[{H}_{X_1|X_{i_{\textup{min}}}} \backslash \mathcal{V}_{X_1} ]$. 
\begin{algorithm}[]
  \caption{Encoding algorithm for Terminal $1$ in Model 3}
  \label{alg:encoding_3}
  \begin{algorithmic} [1]   
    \REQUIRE $\widetilde{K}$, a secret key of size $|\mathcal{H}_{X_1|X_{i_{\textup{min}}}} \backslash \mathcal{V}_{X_1}  |$ shared by all terminals beforehand; the observations $X_1^{1:N}$ from the source.
    \STATE $U^{1:N} \leftarrow X_{1}^{1:N} G_N$
    \STATE $ K \leftarrow {U}^{1:N}[{\mathcal{V}}_{X_1} \backslash \mathcal{H}_{X_1|X_{i_{\textup{min}}}}]$
    \STATE $F \triangleq U^{1:N}[ \mathcal{V}_{X_1} \cap \mathcal{H}_{X_1|X_{i_{\textup{min}}}}]$
    \STATE $F' \triangleq U^{1:N}[  \mathcal{H}_{X_1|X_{i_{\textup{min}}}} \backslash \mathcal{V}_{X_1} ]$
    \STATE Transmit $M\leftarrow[F,F'\oplus \widetilde{K}]$ publicly to Terminals $\{\mathcal{X}_j\}_{j \in \mathcal{M} \backslash{1}}$
    \RETURN $K$
  \end{algorithmic}
\end{algorithm}

\begin{algorithm}[]
  \caption{Decoding algorithm for Terminal $j\in\mathcal{M}\setminus\{1\}$ for Model 3}
  \label{alg:decoding_3}
  \begin{algorithmic}  [1]  
    \REQUIRE $\widetilde{K}$, a secret key of size $|\mathcal{H}_{X_1|X_{i_{\textup{min}}}} \backslash \mathcal{V}_{X_1}  |$ shared by all terminals beforehand; the observations $X_j^{1:N}$ from the source, the message $M$ transmitted by Terminal 1.
    \STATE Form $\widehat{U}^{1:N}$ from $M$ and $\widetilde{K}$ using the successive cancellation decoder of~\cite{Arikan10}.
    \STATE $\widehat{K}\leftarrow \widehat{U}^{1:N}[{\mathcal{V}}_{X_1} \backslash \mathcal{H}_{X_1|X_{i_{\textup{min}}}}]$
    \RETURN $\widehat{K}$
  \end{algorithmic}
\end{algorithm}

As shown in Section~\ref{SecproofTh3}, we have the following result.

\begin{thm} \label{ThBroadcast}
Consider the $1$-to-$m$ broadcast model of Section~\ref{Secstatmod3}. Assume that all terminals share a seed, whose rate can be chosen in $o(N)$. The secret-key capacity $C_{\text{SK}}(+\infty)$ given in Proposition~\ref{prop_broadcast} is achieved by the polar coding scheme in Algorithm~\ref{alg:encoding_3} and Algorithm~\ref{alg:decoding_3}, whose computational complexity is $O(N \log N)$. 
\end{thm}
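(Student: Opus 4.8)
The plan is to follow the single-block specialization of the Model~1 analysis of Section~\ref{SecproofTh1} (with $k=1$ and with the eavesdropper term suppressed, since here Eve has no source observation), the one genuinely new ingredient being the \emph{universality} argument that lets a single public message serve all $m-1$ decoders. I would establish, in order, the key rate, the seed rate, reliability, uniformity, strong secrecy, and complexity.

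\emph{Key rate and seed rate.} Since $U^{1:N}=X_1^{1:N}G_N$ is the source polarization transform of the i.i.d.\ sequence $X_1^{1:N}$, \cite{Arikan10} gives $|\mathcal{V}_{X_1}|/N\to H(X_1)$, $|\mathcal{H}_{X_1}|/N\to H(X_1)$, hence $|\mathcal{H}_{X_1}\setminus\mathcal{V}_{X_1}|/N\to 0$, and $|\mathcal{H}_{X_1|X_{i_{\textup{min}}}}|/N\to H(X_1|X_{i_{\textup{min}}})$. Because conditioning cannot decrease entropy, $\mathcal{H}_{X_1|X_{i_{\textup{min}}}}\subseteq\mathcal{H}_{X_1}$, so $|\mathcal{H}_{X_1|X_{i_{\textup{min}}}}\setminus\mathcal{V}_{X_1}|/N\le|\mathcal{H}_{X_1}\setminus\mathcal{V}_{X_1}|/N\to 0$; this simultaneously shows that the seed $\widetilde K$, of length $|\mathcal{H}_{X_1|X_{i_{\textup{min}}}}\setminus\mathcal{V}_{X_1}|$, has rate $o(1)$ (i.e.\ $o(N)$ bits) and that $|K|/N=|\mathcal{V}_{X_1}\setminus\mathcal{H}_{X_1|X_{i_{\textup{min}}}}|/N\to H(X_1)-H(X_1|X_{i_{\textup{min}}})=I(X_1;X_{i_{\textup{min}}})=\min_{i\in\mathcal{M}\setminus\{1\}}I(X_1;X_i)$, which is the capacity of Proposition~\ref{prop_broadcast}.

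\emph{Reliability.} The key step is Lemma~\ref{lemincl}: because Terminal $i_{\textup{min}}$ has the noisiest observation, the channels $X_1\to X_j$ are stochastically ordered and $\mathcal{H}_{X_1|X_j}\subseteq\mathcal{H}_{X_1|X_{i_{\textup{min}}}}$ for every $j\in\mathcal{M}\setminus\{1\}$. Given $M=[F,F'\oplus\widetilde K]$ and the pre-shared seed, every Terminal $j$ recovers $U^{1:N}[\mathcal{H}_{X_1|X_{i_{\textup{min}}}}]$ exactly (the one-time pad is undone exactly, and since there is a single block there is no error propagation), hence in particular $U^{1:N}[\mathcal{H}_{X_1|X_j}]$; it then runs the successive cancellation decoder of \cite{Arikan10} for the set $\mathcal{H}_{X_1|X_j}$ with its own side information $X_j^{1:N}$ and obtains $\widehat U^{1:N}=U^{1:N}$ with error probability at most $N\delta_N$. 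A union bound over the $m-1$ terminals gives $\mathbf{P}_e(\mathcal{S}_N)\le (m-1)N\delta_N\to 0$.

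\emph{Uniformity, secrecy, complexity.} Write $\mathcal{V}\triangleq\mathcal{V}_{X_1}$ and $\mathcal{H}\triangleq\mathcal{H}_{X_1|X_{i_{\textup{min}}}}$. By the chain rule and the definition of $\mathcal{V}_{X_1}$, $|\mathcal{V}|\ge H(U^{1:N}[\mathcal{V}])\ge\sum_{i\in\mathcal{V}}H(U^i|U^{1:i-1})\ge|\mathcal{V}|(1-\delta_N)$. Since $K=U^{1:N}[\mathcal{V}\setminus\mathcal{H}]$ and $F=U^{1:N}[\mathcal{V}\cap\mathcal{H}]$ together form $U^{1:N}[\mathcal{V}]$, the manipulations of Lemma~\ref{lem_U1} yield $|K|-H(K)\le N\delta_N$, hence $\mathbf{U}(\mathcal{S}_N)\le N\delta_N\to0$. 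For secrecy, Eve sees only $M=[F,F'\oplus\widetilde K]$; the seed $\widetilde K$ is uniform, independent of the source, and of length $|F'|$, so $F'\oplus\widetilde K$ is uniform and independent of $(K,F)$, whence $I(K;M)=I(K;F)=H(K)+H(F)-H(U^{1:N}[\mathcal{V}])\le|K|+|F|-|\mathcal{V}|(1-\delta_N)=|\mathcal{V}|\delta_N\le N\delta_N\to0$, giving $\mathbf{L}(\mathcal{S}_N)\le N\delta_N$. Finally the scheme consists of one polar transform plus $m-1$ successive cancellation decodings, so its complexity is $O(N\log N)$ for fixed $m$. I expect the only delicate point to be the universality argument, i.e.\ proving Lemma~\ref{lemincl} (that the polarization sets respect the degradation order of the $X_1\to X_j$ channels) and verifying that a decoder fed the larger frozen set $\mathcal{H}_{X_1|X_{i_{\textup{min}}}}$ still succeeds; every other estimate is a routine transcription of the Model~1 bounds with the block-chaining and eavesdropper terms removed.
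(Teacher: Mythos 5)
Your proposal is correct and follows essentially the same route as the paper's proof: the same universality lemma (Lemma~\ref{lemincl}) for reliability, the same cardinality estimates for the key and seed rates, and the same entropy computation $|K|+|F|-H(U^{1:N}[\mathcal{V}_{X_1}])\leq N\delta_N$ for secrecy and uniformity (which the paper bounds jointly as $\mathbf{L}+\mathbf{U}$ rather than separately, and where it likewise implicitly uses the one-time-pad argument you make explicit to reduce $I(K;M)$ to $I(K;F)$). The only places you add detail are where the paper defers to "similarly to the proof of Theorem~\ref{Th2}," and your filled-in steps match what that proof supplies.
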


\begin{proof}
See Section \ref{SecproofTh3}.
\end{proof}

The following corollary shows that no seed is required when the source has uniform marginals.
\begin{cor} \label{PropBroadcast}
Consider the $1$-to-$m$ broadcast model. Assume that the source has uniform marginal, that is, $X_1 \sim \mathcal{B}(1/2)$. The secret-key capacity $C_{\text{SK}}(+\infty)$ given in Proposition~\ref{prop_broadcast} is achievable with perfect secrecy with the polar coding scheme of Algorithm~\ref{alg:encoding_3} and Algorithm~\ref{alg:decoding_3} choosing $F' = \emptyset$ and replacing the set $\mathcal{V}_{X_1}$ by $\mathcal{H}_{X_1}$ wherever it appears.
\end{cor}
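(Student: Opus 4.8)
The plan is to exploit the single simplification that $X_1 \sim \mathcal{B}(1/2)$ forces on the construction: since $G_N$ is a bijection on $\{0,1\}^N$ that preserves the uniform distribution, $U^{1:N} = X_1^{1:N} G_N$ has i.i.d.\ uniform components, so that $H(U^i \mid U^{1:i-1}) = 1$ for every $i$, which immediately yields $\mathcal{V}_{X_1} = \mathcal{H}_{X_1} = \llbracket 1, N\rrbracket$. The first step is therefore to record this identity and trace its consequences for Algorithm~\ref{alg:encoding_3}: the ``bad'' index set $\mathcal{H}_{X_1|X_{i_{\textup{min}}}} \backslash \mathcal{V}_{X_1}$ is empty, hence the one-time-padded component $F'$ disappears, the required seed has length $|\mathcal{H}_{X_1|X_{i_{\textup{min}}}} \backslash \mathcal{V}_{X_1}| = 0$, and the public message reduces to $M = F = U^{1:N}[\mathcal{H}_{X_1|X_{i_{\textup{min}}}}]$ while the key is $K = U^{1:N}[\mathcal{H}_{X_1|X_{i_{\textup{min}}}}^{\,c}]$, the complement being taken inside $\llbracket 1,N\rrbracket$.

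Next I would import the reliability and rate analysis from the proof of Theorem~\ref{ThBroadcast} essentially verbatim, noting that once $F'$ is empty the seed plays no role: the universality argument of Lemma~\ref{lemincl} shows each Terminal $j \in \mathcal{M}\backslash\{1\}$ reconstructs $U^{1:N}$ from $M$ and $X_j^{1:N}$ with error probability $O(N\delta_N)\to 0$, because $\mathcal{H}_{X_1|X_j} \subseteq \mathcal{H}_{X_1|X_{i_{\textup{min}}}}$ by the degradedness ordering of the $X_j$. For the key rate, $|K|/N = 1 - |\mathcal{H}_{X_1|X_{i_{\textup{min}}}}|/N \to 1 - H(X_1 \mid X_{i_{\textup{min}}}) = I(X_1;X_{i_{\textup{min}}}) = \min_{j} I(X_1;X_j) = C_{\text{SK}}(+\infty)$ by \cite{Arikan10} and the definition of $i_{\textup{min}}$.

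The remaining two points are where the uniform-marginal hypothesis pays off exactly rather than asymptotically. Since $K$ is a subvector of the i.i.d.\ uniform vector $U^{1:N}$, $H(K) = |K| = \log|\mathcal{K}|$, so $\mathbf{U}(\mathcal{S}_N) = 0$. For secrecy, the eavesdropper observes only $M$ (no source component), so $\mathbf{L}(\mathcal{S}_N) = I(K;M) = I(K;F)$; but $K$ and $F$ are built from the disjoint coordinate sets $\mathcal{H}_{X_1|X_{i_{\textup{min}}}}^{\,c}$ and $\mathcal{H}_{X_1|X_{i_{\textup{min}}}}$ of an i.i.d.\ uniform vector, hence are independent, giving $I(K;F)=0$, i.e.\ perfect secrecy. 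I do not expect a genuine obstacle here; the only care needed is to check that every invocation of $\widetilde{K}$ or $F'$ in the proof of Theorem~\ref{ThBroadcast} becomes vacuous under $\mathcal{H}_{X_1|X_{i_{\textup{min}}}} \subseteq \mathcal{V}_{X_1}$, and that the ``perfect'' guarantees really hold with exact equalities rather than merely vanishing bounds --- which they do, precisely because independence and uniformity of disjoint blocks of $U^{1:N}$ are exact when $X_1 \sim \mathcal{B}(1/2)$.
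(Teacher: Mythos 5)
Your proof is correct and follows essentially the same route the paper intends: the paper omits the proof of Corollary~3, stating only that it is ``similar to the ones of Theorem~\ref{ThBroadcast} and Proposition~\ref{ex3},'' and your argument is precisely that combination --- the reliability and rate analysis of Theorem~\ref{ThBroadcast} (via Lemma~\ref{lemincl}) together with the exact-uniformity/independence-of-disjoint-subvectors argument from the proof of Proposition~\ref{ex3}, triggered by $\mathcal{V}_{X_1}=\mathcal{H}_{X_1}=\llbracket 1,N\rrbracket$ when $X_1\sim\mathcal{B}(1/2)$.
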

We omit the proof of Corollary~\ref{PropBroadcast}, which is similar to the ones of Theorem \ref{ThBroadcast} and Proposition \ref{ex3}. Note that the model studied in Corollary~\ref{PropBroadcast} is a particular case of \cite[Model 3]{Ye12}. However, the construction proposed in \cite[Model 3]{Ye12} relies again on a standard array, whose size grows exponentially with the blocklength. Note also that a polar coding scheme is proposed in Section \ref{sec_model4} for \cite[Model 3]{Ye12}.

\subsection{Analysis of polar coding scheme: Proof of Theorem \ref{ThBroadcast}} \label{SecproofTh3}

\subsubsection{Key rate}
Similarly to the proof of Theorem~\ref{Th2}, we can show that the key rate is
$$\lim_{N \rightarrow + \infty}  \frac{|\mathcal{V}_{X_1} \backslash \mathcal{H}_{X_1|X_{i_{\textup{min}}}}|}{N} =I(X_1;X_{i_{\textup{min}}}).$$

\subsubsection{Seed rate}
Similarly to the proof of Theorem~\ref{Th2}, we can show that the seed rate is
$$\lim_{N \rightarrow + \infty}\frac{ | \mathcal{H}_{X_1|X_{i_{\textup{min}}}} \backslash \mathcal{V}_{X_1}|}{N} = 0.$$

\subsubsection{Reliability}
We make use of the following lemma.
\begin{lem} \label{lemincl}
For $j \in \mathcal{M} \backslash \{ 1,i_{\textup{min}}\}$, we have $\mathcal{H}_{X_1|{X}_j} \subset \mathcal{H}_{X_1|X_{i_{\textup{min}}}}$.
\end{lem}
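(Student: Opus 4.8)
The plan is to reduce the claimed inclusion to the single pointwise inequality
\[
H\!\left(U^i\mid U^{1:i-1}X_j^{1:N}\right)\le H\!\left(U^i\mid U^{1:i-1}X_{i_{\textup{min}}}^{1:N}\right)\qquad\text{for all }i\in\llbracket 1,N\rrbracket,
\]
since then $i\in\mathcal{H}_{X_1|X_j}$, i.e.\ the left-hand side is $\geq\delta_N$, forces the right-hand side to be $\geq\delta_N$, i.e.\ $i\in\mathcal{H}_{X_1|X_{i_{\textup{min}}}}$. So everything comes down to proving this one entropy comparison.

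First I would make precise that $X_{i_{\textup{min}}}$ is a (physically) degraded version of $X_j$. Since $X_\ell=X_1\oplus B_\ell$ with $B_\ell\sim\mathcal{B}(p_{\ell-1})$, and $p_{i_{\textup{min}}-1}=\max_{\ell}p_{\ell-1}$ is the largest crossover probability among $X_2,\dots,X_m$ (all of which are at most $1/2$, as is implicit in the identity $i_{\textup{min}}-1=\text{argmax}_i p_i$), there is $q_j\in[0,1/2]$ with $p_{j-1}\star q_j=p_{i_{\textup{min}}-1}$, so one may write $X_{i_{\textup{min}}}=X_j\oplus\widetilde{B}_j$ with $\widetilde{B}_j\sim\mathcal{B}(q_j)$ independent of $X_j$. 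Because the source is memoryless, the same decomposition holds blockwise: $X_{i_{\textup{min}}}^{1:N}=X_j^{1:N}\oplus\widetilde{B}_j^{1:N}$ with $\widetilde{B}_j^{1:N}$ independent of $X_j^{1:N}$, so $X_1^{1:N}\to X_j^{1:N}\to X_{i_{\textup{min}}}^{1:N}$ is a Markov chain. Next I would run the data-processing argument: as $U^{1:N}=X_1^{1:N}G_N$ is a deterministic function of $X_1^{1:N}$, the chain extends to $U^{1:N}\to X_1^{1:N}\to X_j^{1:N}\to X_{i_{\textup{min}}}^{1:N}$; in particular, conditionally on $(U^{1:i-1},X_j^{1:N})$ the variables $U^i$ and $X_{i_{\textup{min}}}^{1:N}$ are independent. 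Hence $I\!\left(U^i;X_j^{1:N}\mid U^{1:i-1}\right)=I\!\left(U^i;X_j^{1:N}X_{i_{\textup{min}}}^{1:N}\mid U^{1:i-1}\right)\geq I\!\left(U^i;X_{i_{\textup{min}}}^{1:N}\mid U^{1:i-1}\right)$, and subtracting both sides from $H\!\left(U^i\mid U^{1:i-1}\right)$ gives exactly the pointwise entropy inequality above, completing the proof.

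The argument is essentially routine; the only point requiring a little care is the first step, namely formalizing that the terminal with the largest crossover probability observes a stochastically degraded version of every other terminal's observation, so that the Markov chain $X_j^{1:N}\to X_{i_{\textup{min}}}^{1:N}$ is genuinely available. Once that is in place, the inclusion is just the standard fact that the source-polarization ``high-entropy'' index sets are monotone under degradation of the conditioning variable, which is the property underlying the universality of polar codes and is what makes the single public message $U^{1:N}[\mathcal{H}_{X_1|X_{i_{\textup{min}}}}]$ decodable at all terminals.
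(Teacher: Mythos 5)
Your overall strategy (reduce the inclusion to the pointwise inequality $H(U^i\mid U^{1:i-1}X_j^{1:N})\le H(U^i\mid U^{1:i-1}X_{i_{\textup{min}}}^{1:N})$ and obtain it from degradation plus data processing) is exactly the paper's, and the data-processing step itself is sound. The gap is in the step you yourself flag as ``the only point requiring a little care'': the claim that one may write $X_{i_{\textup{min}}}=X_j\oplus\widetilde{B}_j$ with $\widetilde{B}_j$ independent of $X_j$, and hence that $X_1^{1:N}\to X_j^{1:N}\to X_{i_{\textup{min}}}^{1:N}$ is a Markov chain, is false for the actual source variables. In this model $X_j=X_1\oplus B_j$ and $X_{i_{\textup{min}}}=X_1\oplus B_{i_{\textup{min}}}$ with $B_j$ and $B_{i_{\textup{min}}}$ independent of each other and of $X_1$ (a broadcast, not a cascade), so conditionally on $(X_1,X_j)$ the variable $X_{i_{\textup{min}}}$ still depends on $X_1$: the chain that holds is $X_j\to X_1\to X_{i_{\textup{min}}}$, not $X_1\to X_j\to X_{i_{\textup{min}}}$. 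The claim also fails distributionally: for $X_1\sim\mathcal{B}(p)$ with $p\neq 1/2$ a direct computation shows $p_{X_{i_{\textup{min}}}|X_j}(0|1)\neq p_{X_{i_{\textup{min}}}|X_j}(1|0)$, so the induced channel from $X_j$ to $X_{i_{\textup{min}}}$ is not even a BSC. Consequently the conditional independence of $U^i$ and $X_{i_{\textup{min}}}^{1:N}$ given $(U^{1:i-1},X_j^{1:N})$, on which your whole data-processing computation rests, does not hold as stated.

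What is true is only \emph{stochastic} degradation with respect to $X_1$, and the proof must exploit the fact that $\mathcal{H}_{X_1|X_{i_{\textup{min}}}}$ depends on $X_{i_{\textup{min}}}$ only through the pair law $p_{X_1X_{i_{\textup{min}}}}$. This is what the paper does: it introduces an auxiliary variable $\tilde{X}^{(j)}_{i_{\textup{min}}}\triangleq X_j\oplus\Delta_j$ with $\Delta_j$ independent of $(X_1,B_j)$ and chosen so that $B_j\oplus\Delta_j$ has the law of $B_{i_{\textup{min}}}$; then $X_1\to X_j\to\tilde{X}^{(j)}_{i_{\textup{min}}}$ \emph{is} Markov, the data-processing inequality gives $\mathcal{H}_{X_1|X_j}\subset\mathcal{H}_{X_1|\tilde{X}^{(j)}_{i_{\textup{min}}}}$, and the identity $p_{X_1\tilde{X}^{(j)}_{i_{\textup{min}}}}=p_{X_1X_{i_{\textup{min}}}}$ yields $\mathcal{H}_{X_1|\tilde{X}^{(j)}_{i_{\textup{min}}}}=\mathcal{H}_{X_1|X_{i_{\textup{min}}}}$. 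Your argument becomes correct once you insert this auxiliary-variable construction and the invariance-under-equal-pair-law step; without them, the central Markov-chain assertion is simply wrong.
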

\begin{proof}
Let $j \in \mathcal{M} \backslash \{ 1,i_{\textup{min}}\}$. We define $\tilde{B}_{i_{\textup{min}}}^{(j)} \triangleq B_j + \Delta_j$, with $\Delta_j$ independent  of $B_j$ and such that $p_{\tilde{B}_{i_{\textup{min}}}^{(j)}} = p_{B_{i_{\textup{min}}}} $. We set $\tilde{X}_{i_{\textup{min}}}^{(j)} \triangleq X_1 + \tilde{B}_{i_{\textup{min}}}^{(j)} $. Hence, since $B_{i_{\textup{min}}} \sim \mathcal{B} (p_{i_{\textup{min}-1}})$, we have for any $x,y \in \{0 ,1 \}$,
\begin{align*}
&p_{\tilde{X}^{(j)}_{i_{\textup{min}}} | X_1} (x|y) \\
& = (1 -\mathds{1}\{ x= y \}) p_{i_{\textup{min}-1}} + \mathds{1}\{ x= y \} (1- p_{i_{\textup{min}-1}})  \\
& = p_{X_{i_{\textup{min}}} | X_1} (x|y),
\end{align*}
that is, $p_{X_1\tilde{X}^{(j)}_{i_{\textup{min}}}} = p_{X_1X_{i_{\textup{min}}}}$. 
We now define the sets
\begin{align*}
\mathcal{H}_{X_1|\tilde{X}_{i_{\textup{min}}}^{(j)}} & \triangleq \left\{ i\in \llbracket 1,N \rrbracket : H \!\left( \!{U}_i | {U}^{i-1} \left( \tilde{X}_{i_{\textup{min}}}^{(j)} \right)^{1:N} \right) \! \geq \delta_N \! \right\}.
\end{align*}
By the data processing equality, we have  $\mathcal{H}_{X_1|{X}_j} \subset \mathcal{H}_{X_1|\tilde{X}^{(j)}_{i_{\textup{min}}}}$ but we also have $\mathcal{H}_{X_1|\tilde{X}^{(j)}_{i_{\textup{min}}}} = \mathcal{H}_{X_1|X_{i_{\textup{min}}}}$ since $p_{X_1\tilde{X}^{(j)}_{i_{\textup{min}}}} = p_{X_1X_{i_{\textup{min}}}}$, whence $\mathcal{H}_{X_1|{X}_j} \subset \mathcal{H}_{X_1|X_{i_{\textup{min}}}}$.
\end{proof}  

By \cite[Theorem 3]{Arikan10} and by Lemma \ref{lemincl}, for $j \in \mathcal{M} \backslash \{ 1\}$, Terminal $j$ can reconstruct $K$ from $[F,F'] = U^{N}[\mathcal{H}_{X_1|X_{i_{\textup{min}}}}] \supset U^{N}[\mathcal{H}_{X_1|{X}_j}]$ with error probability $\mathbf{P}_e(\mathcal{S}_N) \leq N\delta_N$.

\subsubsection{Strong secrecy and key uniformity}

Secrecy and uniformity hold since, 
\begin{align*}
&\mathbf{L}(\mathcal{S}_N) + \mathbf{U}(\mathcal{S}_N) \\
&=  I\left(K;F\right) +  \log|\mathcal{K}| - H(K) \\
&= |K| - H\left(K|F\right) \\
&= |K| - H\left(K F\right) + H(F) \\
& \leq |F| + |K| - H\left(K F\right)\\
& = |\mathcal{V}_{X_1} \cap \mathcal{H}_{X_1|X_{i_{\textup{min}}}}| + |\mathcal{V}_{X_1} \backslash \mathcal{H}_{X_1|X_{i_{\textup{min}}}}| - H(U^{1:N}[\mathcal{V}_{X_1}]) \\
& = |\mathcal{V}_{X_1}| - H(U^{1:N}[\mathcal{V}_{X_1}]) \\
& \leq N \delta_N,
\end{align*}
where the last inequality can be shown as in the proof of Theorem~\ref{Th2}.

\subsection{An extension to general sources} \label{SecExt}

The multiterminal model described in Section~\ref{Secstatmod3} only considers binary symmetric channels between the components of the source. A natural question is whether a similar coding scheme may be developed for general sources. We answer this by the affirmative for the case of three terminals; however, the coding scheme is significantly more involved than the one in Section \ref{Sec_scheme3}.  In the following, we can assume without loss of generality that 
$$
I(X_1;X_2) = \displaystyle\max_{ j \in \{ 1,2,3 \}} \min_{i \in \{ 1,2,3 \} \backslash \{ j\} } I(X_j;X_i). 
$$
Let $n \in \mathbb{N}$ and $N \triangleq 2^n$. We note $U^{1:N} \triangleq {X_2}^{1:N} G_N$, and for $\delta_N \triangleq 2^{-N^{\beta}}$, where $\beta \in ]0,1/2[$, we define the following sets
\begin{align*}
\mathcal{V}_{X_2}  \triangleq &  \left\{ i\in \llbracket 1,N\rrbracket : {H} \left(U^i|U^{1:i-1} \right) \geq 1- \delta_N \right\} , \\
\mathcal{H}_{X_2|X_1} \triangleq & \left\{ i\in \llbracket 1,N\rrbracket : {H}\left(U^i|U^{1:i-1} X_1^{1:N}\right) \geq \delta_N \right\}, \\
\mathcal{H}_{X_2|X_3} \triangleq & \left\{ i\in \llbracket 1,N\rrbracket : {H}\left(U^i|U^{1:i-1} X_3^{1:N}\right) \geq \delta_N \right\}.
\end{align*}
We also define 
\begin{align*}
\mathcal{K}_{X_{\mathcal{M}}} &\triangleq (\mathcal{V}_{X_2} \backslash \mathcal{H}_{X_2|X_1})\backslash \mathcal{H}_{X_2|X_3}\\
\bar{\mathcal{K}}_{X_{\mathcal{M}}} & \triangleq (\mathcal{V}_{X_2} \backslash \mathcal{H}_{X_2|X_1}) \cap \mathcal{H}_{X_2|X_3},
\end{align*}
which are such that $\mathcal{V}_{X_2} \backslash \mathcal{H}_{X_2|X_1} = \mathcal{K}_{X_{\mathcal{M}}} \cup \bar{\mathcal{K}}_{X_{\mathcal{M}}} $ and $\mathcal{K}_{X_{\mathcal{M}}} \cap \bar{\mathcal{K}}_{X_{\mathcal{M}}} = \emptyset$. Finally, we define
\begin{align*}
\mathcal{F}_{X_2|X_1} & \triangleq \mathcal{H}_{X_2|X_1} \cap \mathcal{V}_{X_2},\\ 
\bar{\mathcal{F}}_{X_2|X_1} & \triangleq \mathcal{H}_{X_2|X_1}  \backslash \mathcal{V}_{X_2},\\
\mathcal{F}_{X_2|X_3} & \triangleq \mathcal{H}_{X_2|X_3} \cap \mathcal{V}_{X_2},\\ 
\bar{\mathcal{F}}_{X_2|X_3} & \triangleq \mathcal{H}_{X_2|X_3}  \backslash \mathcal{V}_{X_2},
\end{align*}
which are such that $\mathcal{H}_{X_2|X_1} = \mathcal{F}_{X_2|X_1} \cup \bar{\mathcal{F}}_{X_2|X_1}$, $\mathcal{F}_{X_2|X_1} \cap \bar{\mathcal{F}}_{X_2|X_1} = \emptyset$, $\mathcal{H}_{X_2|X_3} = \mathcal{F}_{X_2|X_3} \cup \bar{\mathcal{F}}_{X_2|X_3}$, and $\mathcal{F}_{X_2|X_3} \cap \bar{\mathcal{F}}_{X_2|X_3}= \emptyset$.

The encoding and decoding algorithms are provided in Algorithm~\ref{alg:encoding_3bis}, Algorithm~\ref{alg:decoding_3bis1}, and Algorithm~\ref{alg:decoding_3bis3}. The underlying principle is to make Terminals $1$ and $3$ reconstruct $X_2^{1:N}$ and to choose the secret key as a subset of $U^{1:N}$. For the public communication, we perform universal source coding with side information with an idea similar to~\cite{Ye14}. Terminal $2$ thus performs encoding over $k$ blocks of size $N$ to transmit the side information necessary to reconstruct $X_2^{1:{kN}}$ at Terminals $1$ and $3$. Specifically, Terminal $1$ decodes the blocks in order from $1$ to $k$, so that it is able to estimate $U_i^{1:N}[\mathcal{H}_{X_2|X_1}]$ by processing the observations and the public communication in blocks $1$ to $i$. In contrast, Terminal $3$ decodes the blocks in reverse order starting from $k$ down to $1$, so that it is able to estimate $U_i^{1:N}[\mathcal{H}_{X_2|X_3}]$ by processing the observations and the public communication in blocks $k$ down to $i$. One of the challenges is to extract a uniform key from $U_{1:k}^{1:N}$ independent of the public communication messages, which we address by protecting some of the public communication corresponding to Block $i$ with part of the secret-key extracted in Block $i-1$. Moreover, similarly to Algorithms \ref{alg:encoding_1} and \ref{alg:encoding_2}, a small secret seed must be shared by the users to protect the bits in positions $\mathcal{H}_{X_2|X_1}  \backslash \mathcal{V}_{X_2} \cup \mathcal{H}_{X_2|X_3}  \backslash \mathcal{V}_{X_2}$, which must be revealed to allow reconstruction of the secret-key by Terminals $1$ and $3$, but that may also leak information about the secret key. 

The following remarks clarify why Algorithms \ref{alg:encoding_3bis}, \ref{alg:decoding_3bis1}, and  \ref{alg:decoding_3bis3} achieve the desired behavior.

\begin{rem}
  \label{rem:justif_decoding_1}
  In every block $i$, Terminal 1 observes $M_i=[F_i^{(1)}\oplus \bar{K}_{i-1},F_i^{(2)},F'_i\oplus\widetilde{K}_{i}]$. Using its estimate of the key $\bar{K}_i$ from the previous block, Terminal 1 estimates $[F_i^{(1)},F_i^{(2)},F'_i]$, which contains $U_{i}^{1:N}[\mathcal{H}_{X_2|X_1}]$ by construction. Hence, Terminal 1 has ability to run the successive cancellation decoder and reconstruct ${U}_i^{1:N}$.
\end{rem}

\begin{rem}  \label{rem:justif_decoding_3}
  In Block $k$, Terminal 3 has access to $F_k^{(2)}$, $F'_k$, and $\bar{F}_k$ using $M_k$ and $\widetilde{K}_k$. Since $\mathcal{F}_{X_{\mathcal{M}}} \subset {\mathcal{F}}_{X_2|X_1} \backslash {\mathcal{F}}_{X_2|X_3}$, note that
  \begin{align*}
    {\mathcal{F}}_{X_2|X_1} \backslash \mathcal{F}_{X_{\mathcal{M}}} 
    & = {\mathcal{F}}_{X_2|X_1} \cap \mathcal{F}_{X_{\mathcal{M}}}^c \\
    & \supset {\mathcal{F}}_{X_2|X_1} \cap ({\mathcal{F}}_{X_2|X_1} \backslash {\mathcal{F}}_{X_2|X_3})^c \\
    & = {\mathcal{F}}_{X_2|X_1} \cap {\mathcal{F}}_{X_2|X_3}.
  \end{align*}
Hence $U_k^{1:N}[{\mathcal{F}}_{X_2|X_1} \cap {\mathcal{F}}_{X_2|X_3}]\subset F_k^{(2)}$, which combined with $\bar{F}_k$ and $F'_k$ allows Terminal 3 to obtain $U_k^{1:N}[\mathcal{H}_{X_2|X_3}]$. Hence, Terminal 3 has the ability to run the successive cancellation decoder and reconstruct $\widehat{U}_k^{1:N}$.

For Block $i\in \llbracket k-1,1\rrbracket$, observe that if $\widehat{U}_{i+1}^{1:N} [\mathcal{F}_{X_{\mathcal{M}}}] = {U}_{i+1}^{1:N} [\mathcal{F}_{X_{\mathcal{M}}}]$, then we have 
\begin{align*}
&[F_{i+1}^{(1)}\oplus \bar{K}_{i} \oplus \widehat{U}_{i+1}^{1:N} [\mathcal{F}_{X_{\mathcal{M}}}],F_{i}^{(2)},F_{i}'] \\
& = [U_i^{1:N} [\bar{\mathcal{K}}_{X_{\mathcal{M}}}],F_{i}^{(2)},F_{i}']   \\
& = [U_i^{1:N} [{\mathcal{F}}_{X_2|X_3} \backslash {\mathcal{F}}_{X_2|X_1} ],F_{i}^{(2)},F_{i}']   \\
& \supset [U_i^{1:N} [{\mathcal{F}}_{X_2|X_3} \backslash {\mathcal{F}}_{X_2|X_1} ],U_i^{1:N} [{\mathcal{F}}_{X_2|X_1} \cap {\mathcal{F}}_{X_2|X_3}],F_{i}']\\
& \supset U_i^{1:N} [ \mathcal{H}_{X_2|X_3} ].
\end{align*}
Consequently, Terminal $3$ can form an estimate of $U_{i}^{1:N} [ \mathcal{H}_{X_2|X_3} ]$ with 
$$[F_{i+1}^{(1)} \oplus \bar{K}_{i} \oplus  \widehat{U}_{i+1}^{1:N} [\mathcal{F}_{X_{\mathcal{M}}}],F_{i}^{(2)},F_{i}']$$ and apply the successive cancellation decoder to form $\widehat{U}_i^{1:N}$ an estimate of $U_i^{1:N}$. 

%
\end{rem}

\begin{algorithm}[]
  \caption{Encoding algorithm for Terminal $2$ in Model 3}
  \label{alg:encoding_3bis}
  \begin{algorithmic} [1]   
    \REQUIRE $k$ independent secret keys $\{ \widetilde{K}_i \}_{i \in \llbracket 1,k \rrbracket}$ of size $|\bar{\mathcal{F}}_{X_2|X_1} \cup \bar{\mathcal{F}}_{X_2|X_3}|$ shared by all terminals beforehand; for every block $i\in\llbracket 1,k\rrbracket$, the observations $\left(X_2\right)_i^{1:N}$ from the source. 
    $\mathcal{F}_{X_{\mathcal{M}}}$, a subset of ${\mathcal{F}}_{X_2|X_1} \backslash {\mathcal{F}}_{X_2|X_3}$ with size $|\bar{\mathcal{K}}_{X_{\mathcal{M}}}|$.     \FOR{Block $i=1$ to $k$}
    \IF{$i=1$}
    \STATE $U_1^{1:N} \leftarrow (X_{2})_1^{1:N} G_N$
    \STATE $K_1 \leftarrow U_1^{1:N} [\mathcal{K}_{X_{\mathcal{M}}} ]$
    \STATE $\bar{K}_1 \leftarrow U_1^{1:N} [\bar{\mathcal{K}}_{X_{\mathcal{M}}} ]$
    \STATE $F_1 \leftarrow U_1^{1:N} [ {\mathcal{F}}_{X_2|X_1} ]$
    \STATE $F'_1 \leftarrow U_1^{1:N} [ \bar{\mathcal{F}}_{X_2|X_1} \cup \bar{\mathcal{F}}_{X_2|X_3}]$
    \STATE Transmit $M_1 \leftarrow [F_1, F_1' \oplus \widetilde{K}_1]$ publicly to all Terminals
    \ELSIF{$i=k$}
    \STATE $U_k^{1:N} \leftarrow (X_{2})_k^{1:N} G_N$
    \STATE $K_k \leftarrow  U_k^{1:N} [ \mathcal{K}_{X_{\mathcal{M}}} \cup  \mathcal{F}_{X_{\mathcal{M}}} ]$
    \STATE $F_k^{(1)} \leftarrow U_k^{1:N} [ \mathcal{F}_{X_{\mathcal{M}}} ] $
    \STATE $F_k^{(2)} \leftarrow U_k^{1:N} [\mathcal{F}_{X_2|X_1}\backslash \mathcal{F}_{X_{\mathcal{M}}} ]$
    \STATE $F'_k \leftarrow U_k^{1:N} [  \bar{\mathcal{F}}_{X_2|X_1} \cup \bar{\mathcal{F}}_{X_2|X_3} ]$ 
    \STATE $\bar{F}_k \leftarrow U_k^{1:N} [{\mathcal{F}}_{X_2|X_3} \backslash {\mathcal{F}}_{X_2|X_1}]$
    \STATE Transmit $M_k \leftarrow [F_k^{(1)}\oplus \bar{K}_{k-1},F_k^{(2)}, F_k' \oplus \widetilde{K}_{k}, \bar{F}_k]$ publicly to all Terminals
    \ELSE
    \STATE $U_i^{1:N} \leftarrow (X_{2})_i^{1:N} G_N$
    \STATE $K_i \leftarrow  U_i^{1:N} [ \mathcal{K}_{X_{\mathcal{M}}}  \cup \mathcal{F}_{X_{\mathcal{M}}}]$
    \STATE $\bar{K}_i \leftarrow U_i^{1:N} [\bar{\mathcal{K}}_{X_{\mathcal{M}}} ]  $
    \STATE $F_i^{(1)} \leftarrow U_i^{1:N} [ \mathcal{F}_{X_{\mathcal{M}}} ]$
    \STATE  $F_i^{(2)} \leftarrow U_i^{1:N} [\mathcal{F}_{X_2|X_1}\backslash \mathcal{F}_{X_{\mathcal{M}}} ]$
    \STATE $F'_i \leftarrow U_i^{1:N} [  \bar{\mathcal{F}}_{X_2|X_1} \cup \bar{\mathcal{F}}_{X_2|X_3} ]$
    \STATE Transmit $M_i \leftarrow [F_i^{(1)} \oplus \bar{K}_{i-1},F_i^{(2)}, F_i' \oplus \widetilde{K}_{i}]$ publicly to all Terminals
    \ENDIF
    \ENDFOR
    \RETURN $K_{1:k} \leftarrow [ K_{1}, K_2, \ldots, K_k ]$.
  \end{algorithmic}
\end{algorithm}

\begin{algorithm}[]
  \caption{Decoding algorithm for Terminal $1$ in Model 3}
  \label{alg:decoding_3bis1}
  \begin{algorithmic} [1]   
    \REQUIRE Secret keys $\{ \widetilde{K}_i \}_{i \in \llbracket 1,k \rrbracket}$ of size $|\bar{\mathcal{F}}_{X_2|X_1} \cup \bar{\mathcal{F}}_{X_2|X_3}|$ shared with Terminal 2; for every block $i\in\llbracket 1,k\rrbracket$, the observations $\left(X_1\right)_i^{1:N}$ from the source; the set $\mathcal{F}_{X_{\mathcal{M}}}$ defined in Algorithm \ref{alg:encoding_3bis}.
    \FOR{Block $i=1$ to $k$}
    \IF{$i=1$}
    \STATE Form $[F_1,F_1']$ from $M_1$ and $\widetilde{K}_1$ and extract an estimate of $U_1^{1:N}[\mathcal{H}_{X_2|X_1}]$ \COMMENT{See Remark~\ref{rem:justif_decoding_1} for a justification}
    \STATE Form $\widehat{U}_1^{1:N}$ with the successive cancellation decoder of \cite{Arikan10}
    \STATE $\widehat{K}_1\leftarrow \widehat{U}_1^{1:N} [\mathcal{K}_{X_{\mathcal{M}}} ]$
    \ELSE
    \STATE Estimate $[F_i^{(1)},F_i^{(2)},F_i']$ from $M_i$, $\widehat{U}_{i-1}^{1:N}$, and ${\widetilde{K}}_{i}$ and extract an estimate of $U_i^{1:N} [ \mathcal{H}_{X_2|X_1} ]$
    \STATE Form $\widehat{U}_i^{1:N}$ with the successive cancellation decoder of \cite{Arikan10}
    \STATE $\widehat{K}_i\leftarrow \widehat{U}_i^{1:N} [\mathcal{K}_{X_{\mathcal{M}}} ]$
    \ENDIF
    \ENDFOR
    \RETURN $\widehat{K}_{1:k} \leftarrow [ \widehat{K}_{1}, \widehat{K}_2, \ldots, \widehat{K}_k ]$.
  \end{algorithmic}
\end{algorithm}

\begin{algorithm}[]
  \caption{Decoding algorithm for Terminal $3$ in Model 3}
  \label{alg:decoding_3bis3}
  \begin{algorithmic} [1]  
  \REQUIRE Secret keys $\{ \widetilde{K}_i \}_{i \in \llbracket 1,k \rrbracket}$ of size $|\bar{\mathcal{F}}_{X_2|X_1} \cup \bar{\mathcal{F}}_{X_2|X_3}|$ shared with Terminal 2; for every block $i\in\llbracket 1,k\rrbracket$, the observations $\left(X_3\right)_i^{1:N}$ from the source; $\mathcal{F}_{X_{\mathcal{M}}}$ used in Algorithm \ref{alg:encoding_3bis}.
    \FOR{Block $i=k$ to $1$}
    \IF{$i=k$}
    \STATE Form $[F_k^{(2)}, F'_k, \bar{F}_k]$ from $M_k$ and $\widetilde{K}_k$ and extract and estimate $U_k^{1:N} [ \mathcal{H}_{X_2|X_3} ]$ \COMMENT{See Remark~\ref{rem:justif_decoding_3} for a justification}
    \STATE Form $\widehat{U}_k^{1:N}$ with the successive cancellation decoder of \cite{Arikan10}
    \STATE $\widehat{K}_1\leftarrow \widehat{U}_1^{1:N} [\mathcal{K}_{X_{\mathcal{M}}} ]$
    \ELSE
    \STATE Estimate $[\bar{K}_i ,F_i^{(2)},F_i']$ from $M_i$, $\widehat{U}_{i+1}^{1:N}$, and ${\widetilde{K}}_{i}$ and extract an estimate of $U_i^{1:N} [ \mathcal{H}_{X_2|X_3} ]$
    \STATE Form $\widehat{U}_i^{1:N}$ with the successive cancellation decoder of \cite{Arikan10}
    \STATE $\widehat{K}_i\leftarrow \widehat{U}_i^{1:N} [\mathcal{K}_{X_{\mathcal{M}}} ]$
    \ENDIF
    \ENDFOR
    \RETURN $\widehat{K}_{1:k} \leftarrow [ \widehat{K}_{1}, \widehat{K}_2, \ldots, \widehat{K}_k ]$. 
  \end{algorithmic}
\end{algorithm}

\begin{thm} \label{th3ter}
Assume the general setting of Section~\ref{SecStatemetn} with $m=3$, $\mathcal{X}_1 = \mathcal{X}_2 =\mathcal{X}_3 = \{ 0,1\}$, rate-unlimited public communication, i.e., $R_p = + \infty$, and $Z = \emptyset$, i.e., the eavesdropper does not have access to the observation of the source component $Z$. Assume that all terminals share a seed, whose rate can be chosen in $o(N)$. The secret-key rate 
$$
\displaystyle\max_{ j \in \{ 1,2,3 \}} \min_{i \in \{ 1,2,3 \} \backslash \{ j\} } I(X_j;X_i)
$$
is achieved by the polar coding scheme of Algorithm~\ref{alg:encoding_3bis} and Algorithms~\ref{alg:decoding_3bis1}, \ref{alg:decoding_3bis3}, which involves a chaining of $k$ blocks of size $N$, and whose complexity is $O(kN \log N)$. 
\end{thm}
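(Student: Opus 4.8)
The plan is to transcribe the analyses of Theorems~\ref{Th1} and~\ref{Th2}, the only genuinely new feature being that \emph{two} receivers, Terminals~$1$ and~$3$, must reconstruct $X_2^{1:kN}$, with Terminal~$1$ decoding the blocks in increasing order and Terminal~$3$ in decreasing order. Throughout I use the stated normalization $I(X_1;X_2)=\max_{j}\min_{i\neq j}I(X_j;X_i)$, which in particular forces $I(X_1;X_2)\leq I(X_2;X_3)$.

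\emph{Cardinalities, existence of $\mathcal{F}_{X_{\mathcal{M}}}$, and the rates.} From~\cite{Arikan10} one has $|\mathcal{V}_{X_2}|/N\to H(X_2)$ and $|\mathcal{H}_{X_2|X_j}|/N\to H(X_2|X_j)$, and arguing as for Lemma~\ref{lemnewcard} the set $\{i:H(U^i|U^{1:i-1}X_j^{1:N})\geq1-\delta_N\}$ has normalized size $\to H(X_2|X_j)$ and is contained in $\mathcal{V}_{X_2}\cap\mathcal{H}_{X_2|X_j}=\mathcal{F}_{X_2|X_j}$; squeezing between this set and $\mathcal{H}_{X_2|X_j}$ gives $|\mathcal{F}_{X_2|X_j}|/N\to H(X_2|X_j)$, hence $|\mathcal{V}_{X_2}\setminus\mathcal{H}_{X_2|X_j}|/N\to I(X_2;X_j)$ and $|\bar{\mathcal{F}}_{X_2|X_j}|/N\to0$. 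Since $\mathcal{F}_{X_2|X_1}\setminus\mathcal{F}_{X_2|X_3}=\mathcal{V}_{X_2}\cap\mathcal{H}_{X_2|X_1}\cap\mathcal{H}_{X_2|X_3}^{c}$ and $\bar{\mathcal{K}}_{X_{\mathcal{M}}}=\mathcal{V}_{X_2}\cap\mathcal{H}_{X_2|X_1}^{c}\cap\mathcal{H}_{X_2|X_3}$, adding $|\mathcal{V}_{X_2}\cap\mathcal{H}_{X_2|X_1}\cap\mathcal{H}_{X_2|X_3}|$ to both cardinalities yields $|\mathcal{F}_{X_2|X_1}\setminus\mathcal{F}_{X_2|X_3}|-|\bar{\mathcal{K}}_{X_{\mathcal{M}}}|=|\mathcal{F}_{X_2|X_1}|-|\mathcal{F}_{X_2|X_3}|=|\mathcal{V}_{X_2}\setminus\mathcal{H}_{X_2|X_3}|-|\mathcal{V}_{X_2}\setminus\mathcal{H}_{X_2|X_1}|$, which is $\geq0$ for $N$ large by the ordering $I(X_1;X_2)\leq I(X_2;X_3)$; thus $\mathcal{F}_{X_{\mathcal{M}}}$ exists. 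The same bookkeeping gives $|K_i|=|\mathcal{K}_{X_{\mathcal{M}}}|+|\mathcal{F}_{X_{\mathcal{M}}}|=|\mathcal{K}_{X_{\mathcal{M}}}|+|\bar{\mathcal{K}}_{X_{\mathcal{M}}}|=|\mathcal{V}_{X_2}\setminus\mathcal{H}_{X_2|X_1}|$ for middle and last blocks, so $|K_{1:k}|/(kN)\to I(X_1;X_2)$, while the seed used in each block has length $|\bar{\mathcal{F}}_{X_2|X_1}\cup\bar{\mathcal{F}}_{X_2|X_3}|=o(N)$.

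\emph{Reliability.} The combinatorial content is exactly Remarks~\ref{rem:justif_decoding_1} and~\ref{rem:justif_decoding_3}: in every block $[F_i^{(1)},F_i^{(2)},F_i']$ contains $U_i^{1:N}[\mathcal{H}_{X_2|X_1}]$, and $[\bar{F}_k,F_k^{(2)},F_k']$ (for $i=k$) as well as $[F_{i+1}^{(1)}\oplus\bar{K}_i\oplus\widehat{U}_{i+1}^{1:N}[\mathcal{F}_{X_{\mathcal{M}}}],F_i^{(2)},F_i']$ (for $i<k$) contain $U_i^{1:N}[\mathcal{H}_{X_2|X_3}]$; these rest on the inclusions $\mathcal{F}_{X_{\mathcal{M}}}\subset\mathcal{F}_{X_2|X_1}\setminus\mathcal{F}_{X_2|X_3}$ and $\bar{\mathcal{K}}_{X_{\mathcal{M}}}=\mathcal{F}_{X_2|X_3}\setminus\mathcal{F}_{X_2|X_1}$. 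Hence, once Terminal~$1$ holds the correct $\bar{K}_{i-1}$ (which it does as soon as block $i-1$ was decoded correctly) and the shared $\widetilde{K}_i$, and once Terminal~$3$ holds the correct $\widehat{U}_{i+1}^{1:N}[\mathcal{F}_{X_{\mathcal{M}}}]$ and $\widetilde{K}_i$, each can run the successive cancellation decoder of~\cite{Arikan10} with side information $X_1^{1:N}$, respectively $X_3^{1:N}$, and err in block $i$ with probability at most $N\delta_N$. Propagating these one-step bounds forward from block $1$ for Terminal~$1$ and backward from block $k$ for Terminal~$3$, with base cases covered by the shared seeds $\{\widetilde{K}_i\}$, exactly as in Section~\ref{Sec_errProb1}, and taking a union bound over the three terminals and the $k$ blocks gives $\mathbf{P}_e(\mathcal{S}_N)=O(k^2N\delta_N)\to0$.

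\emph{Uniformity, strong secrecy, and the main obstacle.} Because the $k$ blocks use independent source realizations and $K_i$ is a function of $(X_2)_i^{1:N}$ only, $H(K_i\mid K_{1:i-1})=H(K_i)$, so uniformity reduces to the per-block bound $|K_i|+|\bar{K}_i|-H(K_i\bar{K}_i)\leq|\mathcal{V}_{X_2}|\delta_N\leq N\delta_N$, which follows from $\mathcal{K}_{X_{\mathcal{M}}}\cup\mathcal{F}_{X_{\mathcal{M}}}\cup\bar{\mathcal{K}}_{X_{\mathcal{M}}}\subseteq\mathcal{V}_{X_2}$ and the defining inequality of $\mathcal{V}_{X_2}$, as in Lemma~\ref{lem_U1}; summing gives $\mathbf{U}(\mathcal{S}_N)\leq kN\delta_N$. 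For secrecy (here $Z=\emptyset$, so $\mathbf{L}(\mathcal{S}_N)=I(K_{1:k};M_{1:k})$) I would first establish a per-block bound $I(K_i\bar{K}_i;M_i)=O(N\delta_N)$: the component $F_i'\oplus\widetilde{K}_i$ of $M_i$ is a one-time pad with a fresh, independent key of the same length, hence uniform and independent of everything else, and contributes nothing; the clear components $F_i^{(2)}$, and $\bar{F}_k$ in the last block, together with $K_i\bar{K}_i$ are disjoint subvectors of the nearly uniform $U_i^{1:N}[\mathcal{V}_{X_2}]$, contributing at most $N\delta_N$ as in~(\ref{eqsec1}); and the component $F_i^{(1)}\oplus\bar{K}_{i-1}$ is handled as the masked term in Lemma~\ref{lem_sec_block}, steps $(d)$--$(f)$, using that the previous block's $\bar{K}_{i-1}$ is nearly uniform (the $\mathcal{V}_{X_2}$ argument again) and independent of block~$i$. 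A telescoping inequality of the form $I(K_{1:i}\bar{K}_i;M_{1:i})-I(K_{1:i-1}\bar{K}_{i-1};M_{1:i-1})\leq I(K_i\bar{K}_i;M_i)+I(K_{i-1};\bar{K}_{i-1})$, the analogue of Lemma~\ref{lem_secrec} with $\bar{K}_i$ playing the role of the chaining seed $\widetilde{K}_i$ of Model~1, together with $I(K_{i-1};\bar{K}_{i-1})\leq N\delta_N$ (bounded by the deficit of Lemma~\ref{lem_U1}), then yields $\mathbf{L}(\mathcal{S}_N)=O(kN\delta_N)\to0$; the $O(kN\log N)$ complexity is immediate from running the $G_N$ transform and two successive cancellation decoders once per block. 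The routine parts are the cardinality estimates and the per-block uniformity and secrecy lemmas, which copy almost verbatim from Sections~\ref{Sec_unif_model1}--\ref{sec_secrecyz}; the delicate point is the reliability argument with the two opposite decoding orders — one must verify that at every block each receiver already possesses every quantity needed to strip off its one-time pads ($\bar{K}_{i-1}$ for Terminal~$1$, $\widehat{U}_{i+1}^{1:N}[\mathcal{F}_{X_{\mathcal{M}}}]$ for Terminal~$3$, the shared $\{\widetilde{K}_i\}$ for both), and then get the bidirectional error-propagation recursion and its union bound right.
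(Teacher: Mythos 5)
Your proposal follows essentially the same route as the paper's proof in Appendix~C: the same cardinality bookkeeping for the existence of $\mathcal{F}_{X_{\mathcal{M}}}$ and for the key and seed rates, the same bidirectional error-propagation recursion (forward for Terminal~1, backward for Terminal~3, with the pre-shared $\{\widetilde{K}_i\}$ covering the base cases), and the same per-block uniformity and secrecy lemmas chained by a telescoping inequality in which $\bar{K}_i$ plays the role of the chaining variable. The only (harmless) deviation is that you bound $I(K_{i-1};\bar{K}_{i-1})$ directly by the entropy deficit $|K_{i-1}|+|\bar{K}_{i-1}|-H(K_{i-1}\bar{K}_{i-1})\leq N\delta_N$, whereas the paper detours through variational distance and Pinsker's inequality to obtain the weaker bound $\delta_N^*$; your shortcut is valid and in fact tighter.
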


\begin{proof}
See Appendix \ref{App_Th6}.
\end{proof}

As a corollary we obtain the following result for a broadcast model with three terminals.
\begin{cor}
Assume the broadcast setting of Section~\ref{Secstatmod3} with $m=3$, $\mathcal{X}_1 = \mathcal{X}_2 =\mathcal{X}_3 = \{ 0,1\}$, and an arbitrary distribution $p_{X_{\mathcal{M}}}$. Assume that all terminals share a seed, whose rate can be chosen in $o(N)$. The secret-key key capacity $C_s(+\infty) = \min (I(X_1;X_2), I(X_2;X_3))$ is achieved by the polar coding scheme of Algorithm~\ref{alg:encoding_3bis} and Algorithms~\ref{alg:decoding_3bis1}, \ref{alg:decoding_3bis3}, which involves a chaining of $k$ blocks of size $N$, and whose complexity is $O(kN \log N)$. 
\end{cor}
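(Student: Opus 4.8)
The plan is to obtain this corollary as an immediate corollary of Theorem~\ref{th3ter}, supplemented by a standard converse; no new coding-theoretic argument is needed. First I would fix the labeling. In the $1$-to-$m$ broadcast model with $m=3$ and \emph{arbitrary} channels from the center to the two leaves, exactly one terminal plays the role of the center; relabel the three terminals so that this center is Terminal~$2$. Then $X_1$ and $X_3$ are conditionally independent given $X_2$, i.e.\ $X_1 \to X_2 \to X_3$ is a Markov chain. By the data-processing inequality applied along this chain, $I(X_1;X_3)\le I(X_1;X_2)$ and $I(X_1;X_3)\le I(X_2;X_3)$, hence
\[
\max_{j\in\{1,2,3\}}\ \min_{i\in\{1,2,3\}\setminus\{j\}} I(X_j;X_i)\ =\ \min\big(I(X_1;X_2),\,I(X_2;X_3)\big),
\]
the outer maximum being attained at $j=2$. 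In particular the ``without loss of generality'' labeling of Section~\ref{SecExt}, in which $I(X_1;X_2)$ is one of the two terms realizing the max--min, is exactly this relabeling, so the hypotheses of Theorem~\ref{th3ter} are met verbatim.

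Given this, the achievability part is nothing but a specialization of Theorem~\ref{th3ter}: Algorithm~\ref{alg:encoding_3bis} together with Algorithms~\ref{alg:decoding_3bis1} and~\ref{alg:decoding_3bis3}, run on the chained $k$-block construction, achieves the secret-key rate $\min(I(X_1;X_2),I(X_2;X_3))$ with complexity $O(kN\log N)$ and the same vanishing seed rate claimed there. For the converse, I would invoke the standard cut-set bound for secret-key generation with no eavesdropper observation: for any partition of the user set into $S$ and $S^c$, letting the terminals inside each block cooperate can only increase the secret-key capacity, which reduces to a two-terminal problem and is therefore bounded by $I(X_S;X_{S^c})$ by Corollary~\ref{corcsiszar} (equivalently, this is the argument underlying Proposition~\ref{prop_broadcast} and \cite{Csiszar00,Csiszar04}). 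Taking $S=\{1\}$ gives $C_s(+\infty)\le I(X_1;X_2X_3)=I(X_1;X_2)$ by the Markov chain, and $S=\{3\}$ gives $C_s(+\infty)\le I(X_3;X_1X_2)=I(X_2;X_3)$; together these match the achievable rate, so $C_s(+\infty)=\min(I(X_1;X_2),I(X_2;X_3))$.

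The main (and essentially the only) point requiring care is the first step: confirming that the broadcast/Markov structure of the source places the model inside the hypotheses of Theorem~\ref{th3ter}, so that its rate expression is simultaneously an achievable rate and the capacity. Once the relabeling and the data-processing chain of inequalities above are in place, the statement follows with no further work, which is why the proof can legitimately be omitted or reduced to a few lines referencing Theorem~\ref{th3ter} and the converse.
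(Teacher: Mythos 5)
Your proposal is correct and is exactly the derivation the paper intends (the corollary is stated without proof immediately after Theorem~\ref{th3ter}): specialize Theorem~\ref{th3ter} for achievability, observe via data processing along $X_1\to X_2\to X_3$ that the max--min collapses to $\min(I(X_1;X_2),I(X_2;X_3))$, and invoke the standard Csisz\'ar--Narayan cut-set converse already underlying Proposition~\ref{prop_broadcast} and Proposition~\ref{prop_line}. The only point worth making explicit is that the labeling must be that of Section~\ref{SecExt} (the common terminal is Terminal~$2$, and the leaf with the smaller mutual information with the center is Terminal~$1$, as needed for the existence of $\mathcal{F}_{X_{\mathcal{M}}}$ in the proof of Theorem~\ref{th3ter}), which your relabeling step handles.
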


%


\section{Model 4: Multiterminal Markov Tree Model with Uniform Marginals} 
\label{sec_model4}

\subsection{Secret-key generation model} \label{Secstatmod4}
The model for which we now develop a polar coding scheme was first introduced in~\cite[Model 3]{Ye12}. We assume that all the observation alphabets are $\mathcal{X}_i = \{ 0,1\}$ for $i \in \mathcal{M}$. As illustrated in Fig.~\ref{figmodel4}, consider a tree $\mathcal{T}$ with vertex set $\mathcal{V}(\mathcal{T}) \triangleq \mathcal{M}$ and edge set $\mathcal{E}(\mathcal{T})$. The joint probability distribution $p_{X_{\mathcal{M}}}$ is characterized as follows.
$\forall (i,j) \in \mathcal{E}(\mathcal{T}), \forall x_i, x_j \in \{0,1\},$ 
\begin{multline*} p_{X_iX_j} (x_i,x_j) \\ \triangleq \frac{1}{2}(1-p_{i,j}) \mathds{1} \{ x_i = x_j \} + \frac{1}{2}p_{i,j}(1-\mathds{1} \{ x_i = x_j \}),
\end{multline*}
which means that $p_{X_i} = p_{X_j}$ is uniform and the test channel between $X_i$ and $X_j$ is a binary symmetric channel with parameter $p_{i,j}$.

Furthermore, we suppose that the eavesdropper does not have access to the observation of the source component $Z$. This setup is called the Markov tree model with uniform marginals. The expression of the secret-key capacity is recalled in the following proposition.
\begin{prop}[\!\! \cite{Csiszar00}] \label{prop_line}
Consider the Markov tree model with uniform marginal. The secret-key capacity $C_{\text{SK}} (+\infty)$ is given by  
$$C_{\text{SK}} (+\infty)=I(X_{n_0};X_{n_1}),$$
where $(n_0 , n_1) \triangleq \text{argmin}_{ (i,j) \in \mathcal{E}(\mathcal{T}) } I(X_i;X_{j})$.
\end{prop}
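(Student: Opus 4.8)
The plan is to prove the two inequalities separately, both organized around one structural fact: deleting from $\mathcal{T}$ the minimum-weight edge $(n_0,n_1)$ disconnects it into two subtrees $\mathcal{A}\ni n_0$ and $\mathcal{B}\ni n_1$, and the Markov-tree property forces $X_{\mathcal{A}}\to X_{n_0}\to X_{n_1}\to X_{\mathcal{B}}$. Applying the data-processing inequality both ways ($n_0\in\mathcal{A}$ and $n_1\in\mathcal{B}$ give $I(X_{\mathcal A};X_{\mathcal B})\ge I(X_{n_0};X_{n_1})$; the Markov chain gives the reverse) yields $I(X_{\mathcal{A}};X_{\mathcal{B}})=I(X_{n_0};X_{n_1})$.

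\emph{Converse.} I would merge all terminals of $\mathcal{A}$ into one terminal observing $X_{\mathcal{A}}^{1:N}$, and likewise $\mathcal{B}$; cooperation within a group cannot decrease the secret-key capacity. Any $(2^{NR},N,+\infty)$ strategy for the $m$-terminal model is realizable in the resulting two-terminal model --- each super-terminal internally simulates its constituents, relays the public traffic, and outputs the key of $n_0$, resp.\ $n_1$ --- with unchanged leakage $\mathbf{L}=I(K;\mathbf{F})$ (here $Z=\emptyset$) and uniformity defect, and with $\mathbf{P}_e$ only decreasing. Hence, by the two-terminal secret-key result~\cite{Maurer93,Ahlswede93} (cf.\ Corollary~\ref{corcsiszar}, valid for general finite alphabets), $C_{\text{SK}}(+\infty)\le I(X_{\mathcal A};X_{\mathcal B})=I(X_{n_0};X_{n_1})$.

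\emph{Achievability.} Since the eavesdropper observes nothing, I would invoke the multiterminal characterization $C_{\text{SK}}(+\infty)=H(X_{\mathcal M})-R_{\mathrm{CO}}$~\cite{Csiszar04}, where $R_{\mathrm{CO}}=\min\sum_i R_i$ over the omniscience polytope $\{\sum_{i\in\mathcal B}R_i\ge H(X_{\mathcal B}\mid X_{\mathcal B^c})\ \forall\,\emptyset\neq\mathcal B\subsetneq\mathcal M\}$, together with the fact that omniscience followed by privacy amplification yields a strongly secret, nearly uniform key of that rate. It then remains to show $R_{\mathrm{CO}}=H(X_{\mathcal M})-I(X_{n_0};X_{n_1})$. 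Summing the omniscience constraints for $\mathcal A$ and $\mathcal A^c$ gives $R_{\mathrm{CO}}\ge H(X_{\mathcal M})-I(X_{\mathcal A};X_{\mathcal B})=H(X_{\mathcal M})-I(X_{n_0};X_{n_1})$. For the matching upper bound I would exhibit an explicit feasible point: root $\mathcal T$ at $n_0$, put $R_{n_0}=H(X_{n_0}\mid X_{n_1})$ and $R_v=H(X_v\mid X_{\pi(v)})$ for $v\neq n_0$ (with $\pi(v)$ the parent); the tree chain rule gives total rate $H(X_{n_0}\mid X_{n_1})+\big(H(X_{\mathcal M})-H(X_{n_0})\big)=H(X_{\mathcal M})-I(X_{n_0};X_{n_1})$, and one checks, by processing each side of an arbitrary cut in tree order, that all omniscience constraints hold. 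Combining the bounds gives $C_{\text{SK}}(+\infty)=I(X_{n_0};X_{n_1})$.

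\emph{Main obstacle.} The genuinely technical step is the evaluation of $R_{\mathrm{CO}}$ for tree sources --- verifying that the Slepian--Wolf rate vector above lies in the omniscience polytope for \emph{every} cut, not only the one across $(n_0,n_1)$ --- together with carrying \emph{strong} secrecy and uniformity (rather than only weak secrecy) through the privacy-amplification step. A tempting ``direct'' alternative --- establish one pairwise key across the minimum edge and one-time-pad it outward along the tree --- fails as stated because pairwise keys sharing a common terminal are correlated; repairing this with independent source blocks costs a factor $\Theta(m)$ in rate, so some form of the global, omniscience-based argument seems unavoidable.
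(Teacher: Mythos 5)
Your proposal is a faithful reconstruction of the standard information-theoretic argument from Csisz\'ar--Narayan, which is precisely what the paper does \emph{not} reprove: Proposition~\ref{prop_line} is stated as a known result imported from~\cite{Csiszar00}, with no proof given. So the comparison is really between your omniscience-based argument and the paper's \emph{constructive} treatment of the same rate in Theorem~\ref{prop_part} and Section~\ref{SecproofTh4}: there, achievability is obtained by having every terminal reconstruct $X_{n_0}^{1:N}$ through a chain of polar Slepian--Wolf decodings propagated along the tree (the universality/degradation argument of Lemma~\ref{lemincl2}), extracting the key as $U_{n_0}^{1:N}[\mathcal{H}^c_{X_{n_0}|X_{n_1}}]$, and getting perfect secrecy and exact uniformity from the uniform marginals; the converse is simply the cited capacity formula. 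Your route buys generality --- correctly, you never use the uniform-marginals hypothesis, which is indeed only needed for the paper's seed-free polar construction --- while the paper's route buys an explicit low-complexity code. Your converse (merge the two sides of the minimum-weight edge into super-terminals, use $I(X_{\mathcal A};X_{\mathcal B})=I(X_{n_0};X_{n_1})$ and the two-terminal result) is sound.

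One step is materially harder than your sketch suggests: verifying that the rate vector $R_{n_0}=H(X_{n_0}\mid X_{n_1})$, $R_v=H(X_v\mid X_{\pi(v)})$ lies in the omniscience polytope. The natural ``process in tree order'' argument assigns to each vertex a single in-neighbor, but this orientation has a two-cycle on the edge $(n_0,n_1)$, so the chain-rule ordering fails exactly for subsets $\mathcal B$ containing both $n_0$ and $n_1$; and handling the pair jointly requires $I(X_{n_0}X_{n_1};X_{\mathcal B^c})\geq I(X_{n_0};X_{n_1})$, which does \emph{not} follow from a single data-processing step when $\mathcal B^c$ sits far from the minimum edge (a long chain of high-information edges can have end-to-end mutual information below $\min_e I$). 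The constraint is nonetheless true: rewriting it as $H(X_{\mathcal B^c})-\sum_{v\in\mathcal B^c}R_v\geq I(X_{n_0};X_{n_1})$ and peeling off the connected components of $\mathcal B^c$ one at a time --- always removing a component whose root is deepest, so that it is separated from the remaining components by its parent --- gives each component a surplus of at least one boundary-edge mutual information, hence at least $\min_e I$. With that inductive argument supplied, your proof is complete; as written, the feasibility claim is asserted rather than proved, and it is the only nontrivial part of the achievability direction.
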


\subsection{Polar coding scheme} \label{Sec_scheme4}
We first introduce some notation for the coding scheme. For any $i \in \mathcal{M}$, we note $\mathcal{N}^{j}(i)$ the set of vertices in $\mathcal{V}(\mathcal{T})$ that are at distance $j$ from vertex $i$. We note $(n_0 , n_1) \triangleq \text{argmin}_{ (i,j) \in \mathcal{E}(\mathcal{T}) } I(X_i;X_{j})$. We also consider for the encoding process the tree $\mathcal{T}$ as a rooted tree with root $X_{n_0}$. An example is depicted in Figure~\ref{figmodel4}.

\begin{figure}
\centering
  \includegraphics[width=8.5cm]{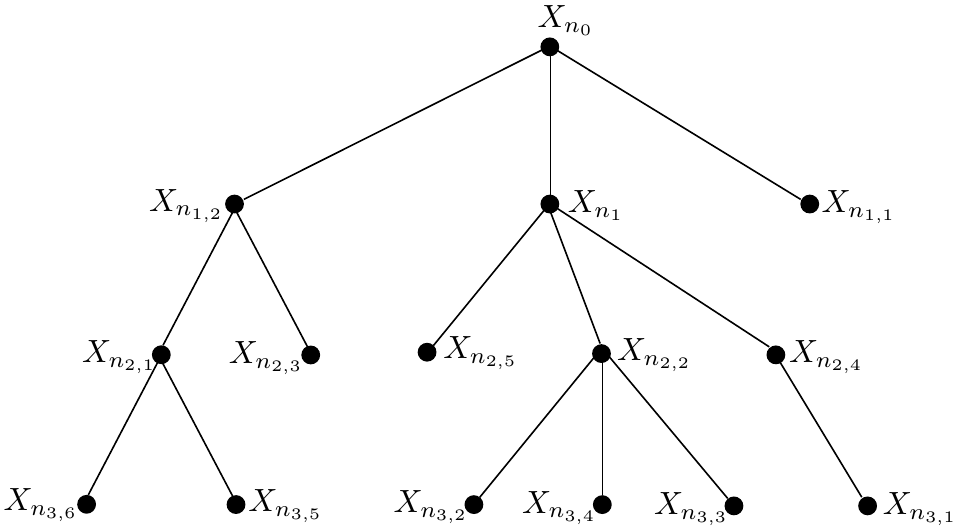}
  \caption{Example of Markov tree model with uniform marginal for $m=15$. Each vertex represent the random variable observed by a given terminal, and each edge can be seen as a binary symmetric test channel. We have noted $(n_0 , n_1) \triangleq \text{argmin}_{ (i,j) \in \mathcal{E}(\mathcal{T}) } I(X_i;X_{j})$, $\mathcal{N}^1(n_0) \triangleq \{ n_1 ,n_{1,1},n_{1,2} \}$, $\mathcal{N}^2(n_0) \triangleq \{ n_{2,i} \}_{i\in \llbracket 1,5 \rrbracket}$, $\mathcal{N}^3(n_0) \triangleq \{ n_{3,i} \}_{i\in \llbracket 1,6 \rrbracket}$ }
  \label{figmodel4}
\end{figure}

Let $n \in \mathbb{N}$ and $N \triangleq 2^n$. For $j \in \mathcal{M}$, we set $U_{j}^{1:N} \triangleq X_{j}^{1:N} G_N$. For $j_1 \in \mathcal{M}$, $j_2 \in \mathcal{M} \backslash \{ j_1\}$, and $\delta_N \triangleq 2^{-N^{\beta}}$, $\beta \in ]0,1/2[$, we define the sets
\begin{align*}
\mathcal{H}_{X_{j_1}|X_{j_2}} & \triangleq \left\{ i\in \llbracket 1,N \rrbracket : H \left( {U}_{j_1}^i | U_{j_1}^{1:i-1}X_{j_2}^{1:N} \right) \geq \delta_N \right\}.
\end{align*}
The exact encoding and decoding algorithms are given in Algorithm \ref{alg:encoding_4} and Algorithm \ref{alg:decoding_4}. The principle of their operation is to have all terminal reconstruct $U_{n_0}^{1:N}$ and choose the key as a subvector of $U_{n_0}^{1:N}$. The idea behind the inter-terminal communication, which is illustrated in Figure~\ref{figmodel4ex}, is to take advantage of the tree structure to make all Terminals reconstruct $X_{n_0}^{1:N}$; the source uniformity plays a crucial role to develop a universal result in Lemma~\ref{lemincl2}, similar to the one obtained for the broadcast model in Lemma~\ref{lemincl}. Although the assumption of uniform marginal is required in our proof, a side benefit is that no pre-shared seed is needed to ensure strong secrecy. 

\begin{algorithm}[]
  \caption{Encoding algorithm for Model 4}
  \label{alg:encoding_4}
  \begin{algorithmic} [1]   
    \STATE $F_{n_0} \leftarrow U_{n_0}^{1:N}  \left[ \mathcal{H}_{X_{n_0}|X_{n_1}} \right]$.  
    \STATE Terminal $n_0$ transmits $F_{n_0}$ publicly.
    \STATE Define $d$ as the maximal distance between the vertex $n_0$ and the vertices in $\mathcal{V}(\mathcal{T})$.
    \FOR{$i=1$ to $d-1$}
    \FOR{$j \in \mathcal{N}^i(n_0)$}
    \IF{$\mathcal{N}^1(j) \cap \mathcal{N}^{i+1}(n_0) \neq \emptyset$}
    \STATE Define $j^* \triangleq \displaystyle\argmax_{ \tilde{j} \in \mathcal{N}^1(j) \cap \mathcal{N}^{i+1}(n_0)} p_{\tilde{j},j}.$
    \STATE $F_{i,j} \leftarrow U_{j}^{1:N} \left[ \mathcal{H}_{X_{j}| X_{j^*}} \right],$
    \STATE Terminal $j$ transmits $F_{i,j}$ publicly
    \ENDIF
    \ENDFOR
    \ENDFOR
    \RETURN $K\leftarrow  U_{n_0}^{1:N}  \left[ \mathcal{H}^c_{X_{n_0}|X_{n_1}} \right]$
  \end{algorithmic}
\end{algorithm}

\begin{algorithm}[]
  \caption{Decoding algorithm for Model 4}
  \label{alg:decoding_4}
  \begin{algorithmic}  [1]  
    \REQUIRE Observations from the source, and public messages $\mathbf{F}$.
    \STATE With $F_{n_0}=U_{n_0}^{1:N}  \left[ \mathcal{H}_{X_{n_0}|X_{n_1}} \right]$, the terminals in $\mathcal{N}^1(n_0)$ estimate $X_{n_0}^{1:N}$ with the successive cancellation decoder of \cite{Arikan10}, and then form $\widehat{K}$ an estimate of $K$.
    \STATE Let $k \in \llbracket 1 ,d-1 \rrbracket$, $j \in \mathcal{N}^{k+1}(n_0)$ and define the singleton $\{i_k\} \triangleq \mathcal{N}^{k}(n_0) \cap \mathcal{N}^1(j)$. 
     With $F_{k,i_k}$ Terminal $j$ estimates $X_{i_k}^{1:N}$ (at distance $k$ from the root) with the successive cancellation decoder of \cite{Arikan10}.\\ By repeating this process, Terminal $j$ is successively able to form the estimate of sources closer to the root, $X_{i_{k-1}}^{1:N}$, $X_{i_{k-2}}^{1:N}$, \ldots, $X_{i_{1}}^{1:N}$, for some $i_1 \in \mathcal{N}^1(n_0)$, $i_2 \in \mathcal{N}^2(n_0)$, \ldots, $i_{k-1} \in \mathcal{N}^{k-1}(n_0)$.
Finally, from its estimate of $X_{i_{1}}^{1:N}$, Terminal $j$ estimates $X_{{n_0}}^{1:N}$ and forms $\widehat{K}$ an estimate of $K$.
    \RETURN $\widehat{K}$
  \end{algorithmic}
\end{algorithm}

We note $\mathcal{F}$ the set of indices $(i,j)$ for which $F_{i,j}$ is defined. We note the collective inter-terminals communication as $\mathbf{F} \triangleq \{ F_{i,j} \}_{ (i,j) \in  \mathcal{F}}$.

The analysis of the scheme in Section~\ref{SecproofTh4} leads to the following result.

\begin{thm} \label{prop_part}
Consider the Markov tree model with uniform marginals. The secret-key capacity $C_{\text{SK}} (+\infty)$ given in Proposition \ref{prop_line} is achievable with perfect secrecy with the polar coding scheme of Section~\ref{Sec_scheme4}, whose computational complexity is $O(N \log N)$. No pre-shared seed is required.
\end{thm}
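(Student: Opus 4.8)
The plan is to follow the template already used for Theorems~\ref{Th1}, \ref{Th2} and \ref{ThBroadcast}, checking in turn (i) that the key rate attains $C_{\textup{SK}}(+\infty)$, (ii) reliability, (iii) uniformity of the key, and (iv) strong---in fact \emph{perfect}---secrecy. Two structural facts carry the argument. First, all marginals are uniform and $G_N$ is invertible over $\mathbb{F}_2$, so for every $j\in\mathcal{M}$ the polarized vector $U_j^{1:N}=X_j^{1:N}G_N$ is \emph{exactly} uniform on $\{0,1\}^N$; in particular any subvector $U_j^{1:N}[\mathcal{A}]$ is exactly uniform and disjoint subvectors are independent. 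Second, since $p_{X_jX_{j'}}$ is a doubly symmetric binary source for each edge $(j,j')$, the set $\mathcal{H}_{X_j|X_{j'}}$ depends only on the crossover probability $p_{j,j'}$, and these sets form a nondecreasing family of subsets of $\llbracket1,N\rrbracket$ as $p_{j,j'}$ ranges over $[0,1/2]$ (the channel with the larger parameter is stochastically degraded, which one realizes by adding an independent ``$\Delta$-noise'' exactly as in the proof of Lemma~\ref{lemincl}); this is the universality Lemma~\ref{lemincl2}.

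For the key rate, \cite{Arikan10} gives $|\mathcal{H}_{X_{n_0}|X_{n_1}}|/N\to H(X_{n_0}|X_{n_1})$, hence $|K|/N=|\mathcal{H}^c_{X_{n_0}|X_{n_1}}|/N\to 1-H(X_{n_0}|X_{n_1})=I(X_{n_0};X_{n_1})=C_{\textup{SK}}(+\infty)$ by Proposition~\ref{prop_line}. For reliability, the point is that the edge $(n_0,n_1)$ carries the \emph{largest} crossover probability in $\mathcal{E}(\mathcal{T})$, since it minimizes $I(X_i;X_j)=1-H_b(p_{i,j})$; hence at each up-step of Algorithm~\ref{alg:decoding_4} a vertex decoding its parent $i$ from a child's observation together with the public block $U_i^{1:N}[\mathcal{H}_{X_i|X_{i^*}}]$ has, by Lemma~\ref{lemincl2}, the inclusion $\mathcal{H}_{X_i|X_{\textup{child}}}\subseteq\mathcal{H}_{X_i|X_{i^*}}$, so \cite[Theorem~3]{Arikan10} applies and the conditional error of that successive-cancellation pass is $O(N\delta_N)$. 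A union bound over the at most $m\,d$ passes then gives $\mathbf{P}_e(\mathcal{S}_N)=O(m\,d\,N\delta_N)\to 0$.

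Uniformity is immediate, since $H(K)=H\big(U_{n_0}^{1:N}[\mathcal{H}^c_{X_{n_0}|X_{n_1}}]\big)=|K|$, so $\mathbf{U}(\mathcal{S}_N)=0$. The core of the proof is perfect secrecy, $\mathbf{L}(\mathcal{S}_N)=I(K;\mathbf{F})=0$ (the eavesdropper has no source observation). Write $\mathcal{H}\triangleq\mathcal{H}_{X_{n_0}|X_{n_1}}$, so $K=U_{n_0}^{1:N}[\mathcal{H}^c]$ and $F_{n_0}=U_{n_0}^{1:N}[\mathcal{H}]$; these are complementary subvectors of the uniform vector $U_{n_0}^{1:N}$, so $I(K;F_{n_0})=0$. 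Each relay message has the form $F_{i,j}=U_j^{1:N}[\mathcal{H}_{X_j|X_{j^*}}]$ with $\mathcal{H}_{X_j|X_{j^*}}\subseteq\mathcal{H}$, again because $(n_0,n_1)$ has the largest crossover probability and by Lemma~\ref{lemincl2}. Writing $X_j^{1:N}=X_{n_0}^{1:N}\oplus E_j^{1:N}$, where $E_j^{1:N}$ is the modulo-$2$ sum of the i.i.d.\ edge noises on the $n_0$--$j$ path, yields $U_j^{1:N}=U_{n_0}^{1:N}\oplus E_j^{1:N}G_N$, so $F_{i,j}$ equals the corresponding subvector of $F_{n_0}$ XORed with $(E_j^{1:N}G_N)[\mathcal{H}_{X_j|X_{j^*}}]$. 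The family $\{E_j^{1:N}\}_{j}$ is a deterministic function of the edge noises and is therefore jointly independent of $X_{n_0}^{1:N}$, hence of $(K,F_{n_0})$. Consequently, conditioned on $F_{n_0}$, the whole public communication $\mathbf{F}=(F_{n_0},\{F_{i,j}\})$ is a deterministic function of $F_{n_0}$ and of $\{E_j^{1:N}\}_{j}$, the latter being independent of $K$ given $F_{n_0}$; thus $I(K;\mathbf{F}\mid F_{n_0})=0$ and $I(K;\mathbf{F})=I(K;F_{n_0})+I(K;\mathbf{F}\mid F_{n_0})=0$.

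The complexity claim follows since the encoder performs $O(m)$ polar transforms and the decoders run $O(m\,d)$ successive-cancellation passes, each of cost $O(N\log N)$, which is $O(N\log N)$ for the fixed tree $\mathcal{T}$. I expect the genuinely delicate step to be the secrecy argument, and more precisely verifying that \emph{every} published sub-block---$F_{n_0}$ as well as all relay messages $F_{i,j}$---lives on an index set contained in $\mathcal{H}_{X_{n_0}|X_{n_1}}$; this is exactly where the uniform-marginal assumption is indispensable, through the reduction of each $\mathcal{H}_{X_j|X_{j^*}}$ to a monotone family of sets indexed by the edge crossover probabilities.
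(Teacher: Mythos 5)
Your proposal is correct and follows essentially the same route as the paper: the same four-part analysis, the same use of Lemma~\ref{lemincl2} for reliability, and for perfect secrecy the same two ingredients, namely the additive-noise realization of the tree source (the paper's ``equivalent model'' $\bar{X}_{\mathcal{M}}$) and the inclusion $\mathcal{H}_{X_j|X_{j^*}}\subseteq\mathcal{H}_{X_{n_0}|X_{n_1}}$, which the paper isolates as Lemma~\ref{lemhelp1} and which you correctly derive from the observation that, under uniform marginals, each $\mathcal{H}_{X_j|X_{j'}}$ depends only on the edge crossover probability and is monotone in it via degradation. The only cosmetic difference is that you condition on $F_{n_0}$ and invoke independence of the edge noises, whereas the paper bounds $I(K;\mathbf{F})$ directly by a chain-rule decomposition; both yield $\mathbf{L}(\mathcal{S}_N)=0$.
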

\begin{proof}
See Section \ref{SecproofTh4}.
\end{proof}

\subsection{Analysis of polar coding scheme: Proof of Theorem \ref{prop_part}} \label{SecproofTh4}

\subsubsection{Key Rate} From~\cite{Arikan10}, we obtain the key rate
\begin{align*} 
	\lim_{N \to \infty } \frac{|\mathcal{H}_{X_{n_0}|X_{n_1}}^c |}{N} 
	& = 1 -  \lim_{N \to \infty } \frac{|\mathcal{H}_{X_{n_0}|X_{n_1}} |}{N} \\
	& = 1- H(X_{n_0}|X_{n_1}) \\
	&= I(X_{n_0};X_{n_1}).
\end{align*}

\subsubsection{Reliability} \label{SecrelModel4}
For $k \in \llbracket 1 , d \rrbracket$, we define the singleton $\{j_0\} \triangleq \mathcal{N}^1(j) \cap \mathcal{N}^{k-1}(n_0)$, and we show that Terminal $j \in \mathcal{N}^k(n_0)$ can reconstruct $X_{ j_0}$ from $F_{k-1,j_0}$. 
Specifically, we establish the following.

\begin{lem} \label{lemincl2}
Let $k \in \llbracket 1 , d \rrbracket$, $j \in \mathcal{N}^k(n_0)$,  and define the singleton $\{j_0\} \triangleq \mathcal{N}^1(j) \cap \mathcal{N}^{k-1}(n_0)$. Define $\mathcal{D}_{k,j_0} \triangleq \mathcal{N}^1(j_0) \cap \mathcal{N}^{k}(n_0)$, and $i^*\triangleq \displaystyle\argmax_{ \tilde{i} \in \mathcal{D}_{k,j_0}} p_{\tilde{i},j_0} $. We have
$$
\forall i \in \mathcal{D}_{k,j_0}, \text{ } \mathcal{H}_{X_{j_0}|X_i} \subset \mathcal{H}_{X_{j_0}|X_{i^*}}.
 $$
\end{lem}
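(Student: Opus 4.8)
The plan is to mimic the proof of Lemma~\ref{lemincl} almost verbatim, transplanting the broadcast configuration onto the relevant sub-configuration of the tree. First I would note that since $j_0 \in \mathcal{N}^{k-1}(n_0)$ while $i$ and $i^*$ both lie in $\mathcal{N}^1(j_0)$, the pairs $(j_0,i)$ and $(j_0,i^*)$ are edges of $\mathcal{T}$, so by definition of the Markov tree model $p_{X_{j_0}X_i}$ (resp.\ $p_{X_{j_0}X_{i^*}}$) describes a binary symmetric test channel with crossover probability $p_{i,j_0}$ (resp.\ $p_{i^*,j_0}$), and both marginals are uniform. After reducing without loss of generality to crossover probabilities in $[0,1/2]$ — using that a BSC with parameter $p$ and one with parameter $1-p$ induce the same conditional entropies, and that with this normalization the $\arg\max$ appearing in the statement and in Algorithm~\ref{alg:encoding_4} genuinely selects the least informative channel, i.e.\ the minimizer of $I(X_{j_0};X_{\tilde i})=1-H_b(p_{\tilde i,j_0})$ — the choice of $i^*$ gives $p_{i,j_0}\le p_{i^*,j_0}\le 1/2$.

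Next, following the idea of Lemma~\ref{lemincl}, I would manufacture a synthetic degradation of $X_i$. Since $p_{i,j_0}\le p_{i^*,j_0}\le 1/2$, pick $\Delta\in[0,1/2]$ with $p_{i,j_0}\star\Delta=p_{i^*,j_0}$, where $\star$ is the binary convolution of Example~\ref{examplecapl}, and set $\widetilde{X}^{(i)}_{i^*}\triangleq X_i\oplus D$ with $D\sim\mathcal{B}(\Delta)$ drawn independently of everything else and i.i.d.\ over the $N$ coordinates. By construction $X_{j_0}\to X_i\to\widetilde{X}^{(i)}_{i^*}$ is a Markov chain, and because concatenating a BSC($p_{i,j_0}$) with a BSC($\Delta$) yields a BSC($p_{i,j_0}\star\Delta$) while $X_{j_0}$ remains uniform, we have $p_{X_{j_0}\widetilde{X}^{(i)}_{i^*}}=p_{X_{j_0}X_{i^*}}$ (as $N$-fold product laws as well). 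I would then introduce the set $\mathcal{H}_{X_{j_0}|\widetilde{X}^{(i)}_{i^*}}\triangleq\{\ell\in\llbracket 1,N\rrbracket: H(U_{j_0}^\ell\mid U_{j_0}^{1:\ell-1}(\widetilde{X}^{(i)}_{i^*})^{1:N})\ge\delta_N\}$, with $U_{j_0}^{1:N}=X_{j_0}^{1:N}G_N$, apply the data-processing inequality coordinate by coordinate (conditioning on the noisier observation only increases each $H(U_{j_0}^\ell\mid U_{j_0}^{1:\ell-1}\cdot)$, since $U_{j_0}^{1:N}\to X_i^{1:N}\to(\widetilde{X}^{(i)}_{i^*})^{1:N}$) to obtain $\mathcal{H}_{X_{j_0}|X_i}\subset\mathcal{H}_{X_{j_0}|\widetilde{X}^{(i)}_{i^*}}$, and finally observe that $\mathcal{H}_{X_{j_0}|\widetilde{X}^{(i)}_{i^*}}=\mathcal{H}_{X_{j_0}|X_{i^*}}$ because both sets depend only on the joint law $p_{X_{j_0}\widetilde{X}^{(i)}_{i^*}}=p_{X_{j_0}X_{i^*}}$. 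Chaining the two inclusions gives the claim, and since $i\in\mathcal{N}^1(j_0)\cap\mathcal{N}^k(n_0)$ was arbitrary, the lemma follows.

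There is no serious obstacle; the argument is short, with computational content confined to the one-line BSC concatenation identity. The only point needing care is conceptual rather than technical: the Markov relation one is tempted to invoke — between $X_i$ and $X_{i^*}$ under the true source law $p_{X_{\mathcal{M}}}$ — runs the wrong way, namely $X_i\to X_{j_0}\to X_{i^*}$, because $i$ and $i^*$ are distinct children of $j_0$ in the rooted tree; hence one must instead fabricate the degradation $X_{j_0}\to X_i\to\widetilde{X}^{(i)}_{i^*}$ as above. This is exactly where the uniform-marginals, symmetric-channel structure of the model is used, and it explains why the argument does not extend to general sources. A secondary nuisance, worth stating explicitly, is the normalization of all crossover probabilities to $[0,1/2]$, without which the various $\arg\max$ over crossover probabilities would not coincide with the $\arg\min$ over mutual informations.
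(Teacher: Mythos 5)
Your proof is correct and follows essentially the same route as the paper's: exploit the uniform marginals and BSC structure to realize $(X_{j_0},X_{i^*})$ in law as a degradation, then apply data processing together with the fact that the sets $\mathcal{H}_{X_{j_0}|\cdot}$ depend only on the joint law. The only cosmetic difference is that the paper first replaces each $X_i$ by a synthetic broadcast child $\bar{X}_i=X_{j_0}\oplus B_i$ and then cites Lemma~\ref{lemincl} as a black box, whereas you inline the degradation $\widetilde{X}^{(i)}_{i^*}=X_i\oplus D$ directly; your explicit remark that crossover probabilities must be normalized to $[0,1/2]$ for the $\arg\max$ to select the most degraded channel is a point the paper leaves implicit.
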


\begin{proof}
For $i \in \mathcal{D}$, define $\bar{X}_i \triangleq X_{j_0} + B_i$, with $B_i \sim \mathcal{B}(p_{i,j_0})$. By Lemma~\ref{lemincl}, we now that for any $i \in \mathcal{D}$, $\mathcal{H}_{X_{j_0}|\bar{X}_{i}} \subset \mathcal{H}_{X_{j_0}|\bar{X}_{i^*}} $. Then, observe that for any $i \in \mathcal{D}$, for any $x,y \in \{ 0,1\}$, 
\begin{align*}
& p_{\bar{X}_i X_{j_0}} (x,y) \\
& = p_{X_{j_0}} (y)p_{\bar{X}_i | X_{j_0}} (x|y)\\
& = \frac{1}{2} ( \mathds{1}\{ x = y \} (1- p_{i,j_0})  + p_{i,j_0} (1- \mathds{1}\{ x = y \})) \\
& = p_{{X}_i X_{j_0}} (x,y),
\end{align*}
Hence, $ \mathcal{H}_{X_{j_0}|X_{i}} = \mathcal{H}_{X_{j_0}|\bar{X}_{i}} \subset \mathcal{H}_{X_{j_0}|\bar{X}_{i^*}} = \mathcal{H}_{X_{j_0}|X_{i^*}} $

\end{proof}

Lemma \ref{lemincl2} is similar to Lemma~\ref{lemincl}; however, unlike Lemma~\ref{lemincl}, the proof of Lemma \ref{lemincl2} requires uniform marginals.

Now, observe that with $F_{n_0}=U_{n_0}^{1:N}  \left[ \mathcal{H}_{X_{n_0}|X_{n_1}} \right]$, all terminals in $\mathcal{N}^1(n_0)$ can reconstruct $X_{n_0}^{1:N}$ with error probability $O(N\delta_N)$ by Lemma~\ref{lemincl2} and \cite{Arikan10}. We then show by induction that all terminals can reconstruct $X_{n_0}^{1:N}$ with error probability $O(N\delta_N)$. Assume that for $k \in \llbracket 1 ,d-1 \rrbracket$, $X_{n_0}^{1:N}$ can be reconstructed with error probability $O(N\delta_N)$ from any $X_j^{1:N}$, where $j \in \mathcal{N}^k(n_0)$. Let $j \in \mathcal{N}^{k+1}(n_0)$ and define the singleton $\{i\} = \mathcal{N}^{k}(n_0) \cap \mathcal{N}^1(j)$. With $F_{k,i}$ Terminal $j$ can reconstruct $X_i^{1:N}$ with error probability $O(N\delta_N)$ by Lemma~\ref{lemincl2} and \cite{Arikan10}. Then, since $X_i^{1:N} \in \mathcal{N}^k(n_0)$, Terminal $j$ can also reconstruct $X_{n_0}^{1:N}$ with error probability $O(N\delta_N)$ by induction hypothesis.

We conclude that all terminals can reconstruct $X_{n_0}^{1:N}$ and therefore $K = U_{n_0}^{1:N}  \left[ \mathcal{H}^c_{X_{n_0}|X_{n_1}} \right] $ with error probability $\mathbf{P}_e(\mathcal{S}_N) =O(N\delta_N)$. The global reconstruction process is illustrated in Figure~\ref{figmodel4ex}.

\begin{figure}
\centering
  \includegraphics[width=8.5cm]{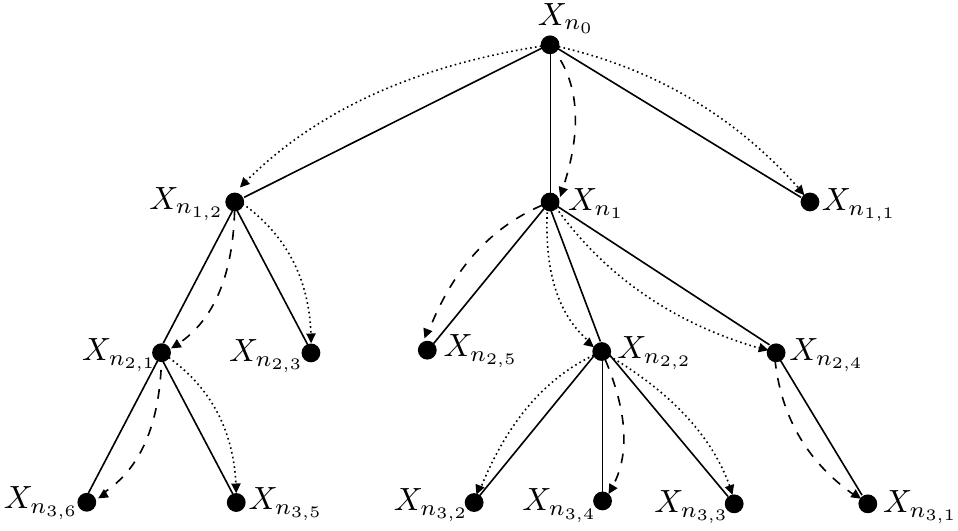}
  \caption{Example for the reconstruction process. A dashed-line from Terminal $i$ to Terminal $j$ represents a public transmission from Terminal $i$ of the information necessary for Terminal $j$ to reconstruct $X_i$. A dotted-line from Terminal $i$ to Terminal $j$ represents a ``virtual communication" and means that Terminal $j$ is able to reconstruct $X_i$ from the information corresponding to the dashed-line leaving Terminal $i$ -- this illustrates Lemma~\ref{lemincl2}. For this example we have assumed $I(X_{n_{1,2}}; X_{n_{2,1}}) \leq  I(X_{n_{1,2}}; X_{n_{2,3}})$, $I(X_{n_{1}}; X_{n_{2,5}}) \leq \min \{ I(X_{n_{1}}; X_{n_{2,i}}) \}_{ i \in \{ 2,4\} } $, $I(X_{n_{2,1}}; X_{n_{3,6}}) \leq  I(X_{n_{2,1}}; X_{n_{3,5}})$, $I(X_{n_{2,2}}; X_{n_{3,4}}) \leq \min \{ I(X_{n_{2,2}}; X_{n_{3,i}}) \}_{ i \in \{ 2,3\} } $. All in all, all the terminals can reconstruct $X_{n_0}$ }
  \label{figmodel4ex}
\end{figure}

\subsubsection{Key Uniformity} By definition of the model, $X_{n_0}$ is uniform, hence, $U_{n_0}^{1:N}$ and $K \triangleq U_{n_0}^{1:N} \left[\mathcal{H}_{X_{n_0}|n_1}^c \right]$ are also uniform.

\subsubsection{Perfect Secrecy} \label{SecsecrModel4}
We first introduce an equivalent model as follows.
We start by defining for $i \in \mathcal{N}^{1}(n_0)$, $\bar{X}_i \triangleq X_{n_0} + B_i$, with $B_i \sim \mathcal{B}(p_{i,n_0})$. Then, for $k \in \llbracket 2 , d \rrbracket$, for $i \in \mathcal{N}^{k}(n_0)$, define the singleton $ \{i_0\} \triangleq \mathcal{N}^{k-1}(n_0) \cap \mathcal{N}^{1}(i)$, and  $\bar{X}_i \triangleq \bar{X}_{ i_0 } + B_i$, with $B_i \sim \mathcal{B}(p_{i,i_0})$. 
Consequently, similarly to the proof of Lemma~\ref{lemincl2}, we have 
\begin{equation} \label{eq_jintdisteq}
p_{\bar{X}_{\mathcal{M}}} = p_{X_{\mathcal{M}}}. 
\end{equation}
Moreover, for $j \in \mathcal{M} \backslash \{n_0\}$. We have 
$$ \bar{U}_{j}^{1:N}= U_{n_0}^{1:N} \displaystyle\bigoplus_{i \in \mathcal{P}_{n_0,j}} \widetilde{B}_i^{1:N},$$
where $\mathcal{P}_{n_0,j}$ denotes the set of vertices that form a path between $X_{n_0}$ and $X_j$ including $j$ and excluding $n_0$, $\widetilde{B}_i^N \triangleq B_{i}^N G_N$, and $ \bar{U}_{j}^{1:N} \triangleq \bar{X}_{j}^{1:N} G_N$, $i \in \mathcal{M} \backslash \{ n_0 \}$. Recall that for $(i,j) \in \mathcal{F}$,
\begin{equation*}
F_{i,j}= U_j^{1:N} \left[ \mathcal{H}_{X_j|X_{j^*}}  \right].
\end{equation*}

We define
\begin{align}
\bar{F}_{i,j} 
& \triangleq \bar{U}_j^{1:N} \left[ \mathcal{H}_{X_j|X_{j^*}}  \right] \nonumber \\
& =   U_{n_0}^{1:N} \left[ \mathcal{H}_{X_j|X_{j^*}}  \right] \displaystyle\bigoplus_{i \in \mathcal{P}_{n_0,j}} \widetilde{B}_i^{1:N}\left[ \mathcal{H}_{X_j|X_{j^*}}  \right], \label{eq_f_bar_i}
\end{align}
and
\begin{equation} \label{eq_f_bar}
\bar{\mathbf{F}} \triangleq \{ \bar{F}_{i,j} \}_{ (i,j) \in  \mathcal{F}}.
\end{equation}

\begin{lem} \label{lemhelp1}
Let $j \in \mathcal{M} \backslash \{n_0\}$. There exists a unique $i \in \llbracket 1, d-1 \rrbracket $ such that $j \in \mathcal{N}^i(n_0)$. As in Algorithm \ref{alg:encoding_4}, define $j^* \triangleq \displaystyle\argmax_{ \tilde{j} \in \mathcal{N}^1(j) \cap \mathcal{N}^{i+1}(n_0)} p_{\tilde{j},j}$. We have $\mathcal{H}_{X_j|X_{j^*}} \subset \mathcal{H}_{X_{n_0}|X_{n_1}}$.
\end{lem}
\begin{proof}
Let $j \in \mathcal{M} \backslash \{n_0\}$. Let $r_j$ be such that $p_{n_0,n_1} = p_{j,j^*}\star r_j$ (such $r_j$ exists by definition of $(n_0,n_1)$), where $\star$ is defined as in Example \ref{examplecapl}. We define $\Delta_j^{(1)} \sim \mathcal{B}(p_{j,j^*})$ and $\Delta_j^{(2)} \sim \mathcal{B}(r_j)$ such that $B_{n_1} = \Delta_j^{(1)}  + \Delta_j^{(2)}$. We define the dummy random variables $\bar{\bar{X}}_{j^*} \triangleq X_{n_0} + \Delta_j^{(1)} $ and $\bar{\bar{X}}_{n_1} \triangleq X_{n_0} + \Delta_j^{(1)} + \Delta_j^{(1)}$. Then, for any $x,y \in \{ 0,1\}$, and by uniformity of the marginals of $p_{X_{\mathcal{M}}}$,
\begin{align*}
&p_{\bar{\bar{X}}_{j^*}X_{n_0}} (x,y)\\
& = p_{X_{n_0}} (y)  p_{\bar{\bar{X}}_{j^*}|X_{n_0}} (x|y) \\
& = \frac{1}{2}  p_{\bar{\bar{X}}_{j^*}|X_{n_0}} (x|y) \\
& = \frac{1}{2} \left[ (1 -\mathds{1}\{ x= y \}) p_{j,j^*} + (1- p_{j,j^*}) \mathds{1}\{ x= y \} \right] \\
& = \frac{1}{2} p_{X_{j^*}|X_{j}} (x|y) \\
& = p_{X_{j^*}X_{j}} (x,y),
\end{align*}
so that $\mathcal{H}_{X_j|X_{j^*}} = \mathcal{H}_{{X}_{n_0}|\bar{\bar{X}}_{j^*}}$. Similarly, we have $p_{{{X}}_{n_1}X_{n_0}} = p_{\bar{\bar{X}}_{n_1}X_{n_0}}$ so that  $\mathcal{H}_{X_{n_0}|X_{n_1}} = \mathcal{H}_{X_{n_0}|\bar{\bar{X}}_{n_1}}$. Hence, by the data processing inequality, we obtain $\mathcal{H}_{X_j|X_{j^*}} = \mathcal{H}_{{X}_{n_0}|\bar{\bar{X}}_{j^*}}  \subset \mathcal{H}_{X_{n_0}|\bar{\bar{X}}_{n_1}} = \mathcal{H}_{X_{n_0}|X_{n_1}}$. 
\end{proof}
We can now show that perfect secrecy holds as follows.
\begin{align*}
&\mathbf{L}(\mathcal{S}_N) \\
& = I(K;\mathbf{F}) \\
& = I \left( U_{n_0}^{1:N} \left[\mathcal{H}_{X_{n_0}|X_{n_1}}^c \right] ; \mathbf{F} \right)\\
& \stackrel{(a)}{=} I \left( \bar{U}_{n_0}^{1:N} \left[\mathcal{H}_{X_{n_0}|X_{n_1}}^c \right] ; \bar{\mathbf{F}} \right)\\
& \stackrel{(b)}{\leq} I \left(  \bar{U}_{n_0}^{1:N} \left[\mathcal{H}_{X_{n_0}|X_{n_1}}^c \right] \right. \\
&\phantom{------} ; \left.  \bar{U}_{n_0}^{1:N} \left[\mathcal{H}_{X_{n_0}|X_{n_1}} \right] , \widetilde{B}^{1:N}_{\llbracket 1, m\rrbracket \backslash \{ n_0 \}}\left[ \mathcal{H}_{X_{n_0}|X_{n_1}}  \right] \right) \displaybreak[0]\\
& = I \left(  \bar{U}_{n_0}^{1:N} \left[\mathcal{H}_{X_{n_0}|X_{n_1}}^c \right]  ;  \bar{U}_{n_0}^{1:N} \left[\mathcal{H}_{X_{n_0}|X_{n_1}} \right]  \right)   \\
& \phantom{--} + I \left(  \bar{U}_{n_0}^{1:N} \left[\mathcal{H}_{X_{n_0}|X_{n_1}}^c \right]  \right. \\
&\phantom{------} ; \left. \left. \widetilde{B}^{1:N}_{\llbracket 1, m\rrbracket \backslash \{ n_0 \}}\left[ \mathcal{H}_{X_{n_0}|X_{n_1}} \right] \right\rvert \bar{U}_{n_0}^{1:N} \left[\mathcal{H}_{X_{n_0}|X_{n_1}}   \right] \right)  \displaybreak[0] \\
& \stackrel{(c)}{=}    I \left(  \bar{U}_{n_0}^{1:N} \left[\mathcal{H}_{X_{n_0}|X_{n_1}}^c \right]  \right. \\
&\phantom{------} ; \left. \left. \widetilde{B}^{1:N}_{\llbracket 1, m\rrbracket \backslash \{ n_0 \}}\left[ \mathcal{H}_{X_{n_0}|X_{n_1}} \right] \right\rvert \bar{U}_{n_0}^{1:N} \left[\mathcal{H}_{X_{n_0}|X_{n_1}}   \right] \right)  \\
& \leq  I \left(  \bar{U}_{n_0}^{1:N}  ; \widetilde{B}^{1:N}_{\llbracket 1, m\rrbracket \backslash \{ n_0 \}}\left[ \mathcal{H}_{X_{n_0}|X_{n_1}} \right] \right) \\
& \stackrel{(d)}{=} 0,
\end{align*}
where $(a)$ follows by (\ref{eq_jintdisteq}), (\ref{eq_f_bar_i}), and (\ref{eq_f_bar}), $(b)$ follows from Lemma~\ref{lemhelp1} and Equation (\ref{eq_f_bar_i}), $(c)$ follows by uniformity of $\bar{U}_{n_0}^{1:N}$, $(d)$ holds by independence of $\bar{U}_{n_0}^{1:N}$ and $\widetilde{B}^{1:N}_{\llbracket 1, m\rrbracket \backslash \{ n_0 \}}$. 
We have thus shown perfect secrecy. 

\section{Application to Secrecy and Privacy for Biometric Systems} \label{Sec_bio}
In this final section, we show how the results obtained for Model 2 may be applied to the related problems of secrecy and privacy for biometric systems~\cite{Dodis08,Ignatenko09,Lai08,Rane13}. As noted in \cite{Ignatenko09}, the main difficulty in constructing practical codes for such problems is the need for vector quantization; we show here that polar codes offer a low-complexity solution and provably optimal solutions for the models studied in~\cite{Ignatenko09}.
\subsection{Biometric system models} \label{Sec_stat_bio}
Consider two biometric sequences $X^{1:N}$ and $Y^{1:N}$ distributed according to the memoryless source $(\mathcal{X}\mathcal{Y},p_{XY})$. Assume that $X^{1:N}$ is an enrollment sequence and $Y^{1:N}$ an authentication sequence observed by an encoder and a decoder, respectively. In~\cite{Ignatenko09}, four different models are considered. We only deal with the ``generated-secret systems" and the ``generated-secret systems with zero leakage," as codes for the latter models can be used for the ``chosen-secret systems'' and the ``chosen-secret systems with zero leakage'' using a masking technique~\cite{Ignatenko09}.

\subsubsection{Generated-secret systems} 

\begin{figure}
\centering
  \includegraphics[width=8.5cm]{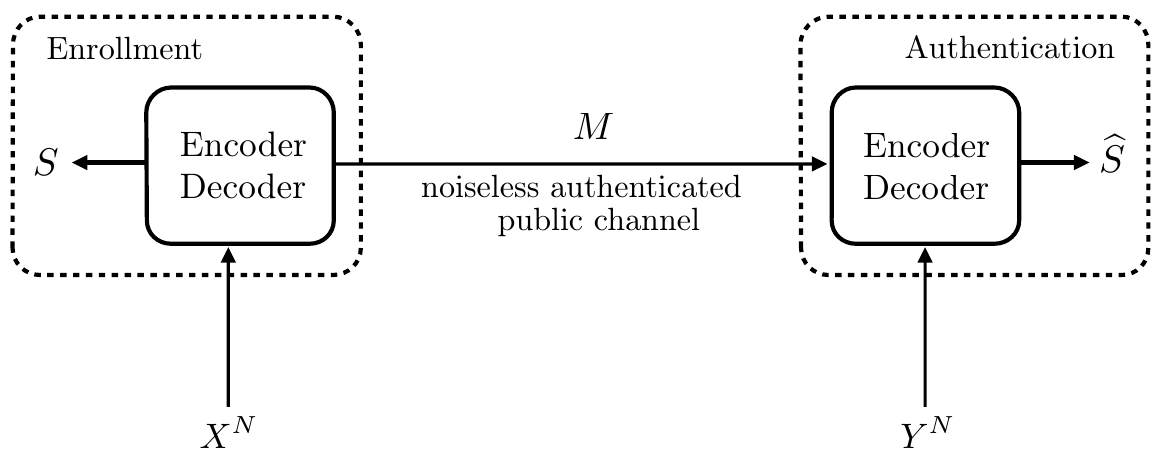}
  \caption{Model for biometric secret generation}
  \label{figbio}
\end{figure}

A biometric secret generation strategy $\mathcal{S}^{\textup{bio}}_N$ is illustrated in Fig.~\ref{figbio} and is formally defined as follows.
\begin{defn}
Let $R \in \mathbb{R}^+ $. Let  $\mathcal{S}$ be an alphabet of size $2^{NR}$. The protocol defined by the following steps is called a $(2^{NR},N,R)$ biometric secret generation strategy.
\begin{itemize}
\item The encoder observes the enrollment sequence $X^{1:N}$;
\item The encoder generates a secret $S \in \mathcal{S}$ from $X^{1:N}$;
\item The encoder transmits publicly to the decoder helper data ${M}$;
\item The decoder observes the authentication sequence $Y^{1:N}$, and computes $\widehat{S} \in \mathcal{S}$.
\end{itemize}
\end{defn}

The performance of a biometric secret generation strategy is measured in terms of 
\begin{itemize}
\item the average probability of error between the biometric secrets with $\textbf{P}_e(\mathcal{S}_N^{\textup{bio}}) \triangleq \mathbb{P} [ S \neq \widehat{S}],$
\item the information leakage of $M$ on $S$ with $\textbf{L}(\mathcal{S}_N^{\textup{bio}}) \triangleq {I} (M;S),$ 
\item the privacy leakage of $M$ on $X^{1:N}$ with $\textbf{P}_{\textup{c}}(\mathcal{S}_N^{\textup{bio}}) \triangleq I(M;X^{1:N}|S)$ (conditional case), or $\textbf{P}_{\textup{u}}(\mathcal{S}_N^{\textup{bio}}) \triangleq I(M;X^{1:N})$ (unconditional case), 
\item the uniformity of the biometric secret $\textbf{U}(\mathcal{S}_N^{\textup{bio}}) \triangleq \log \lceil 2^{NR} \rceil - {H}(S)$.
\end{itemize}
\begin{defn} \label{def2}
For a fixed privacy leakage threshold $L$, a biometric secret rate $R$ and information  is achievable if there exists a sequence of $(2^{NR},N,R)$ secret-key generation strategies $\left\{ \mathcal{S}_N^{\textup{bio}} \right\}_{N \geq 1}$ such that
\begin{align*}
 \displaystyle\lim_{N \to \infty } \textbf{\textup{P}}_e(\mathcal{S}_N^{\textup{bio}})  = & 0, \text{ (reliability) } \\
 \displaystyle\lim_{N \to \infty } \textbf{\textup{L}}(\mathcal{S}_N^{\textup{bio}}) = & 0, \text{ (strong secrecy)}  \\
  \displaystyle\lim_{N \to \infty } \textbf{\textup{P}}_{\textup{c}}(\mathcal{S}_N^{\textup{bio}})/N \leq & L, \text{ (privacy leakage)}  \\
 \displaystyle\lim_{N \to \infty } \textbf{\textup{U}}(\mathcal{S}_N^{\textup{bio}})= & 0. \text { (uniformity)}
\end{align*}
Moreover, the supremum of achievable rates is called the biometric secret capacity and is denoted $C_{\text{Bio}}^{\textup{c}}(L)$.
For the unconditional case, $\textbf{\textup{P}}_{\textup{c}}(\mathcal{S}_N^{\textup{bio}})$ is replaced with $\textbf{\textup{P}}_{\textup{u}}(\mathcal{S}_N^{\textup{bio}})$, and the biometric secret capacity and is denoted by $C_{\text{Bio}}^{\textup{u}}(L)$.
 \end{defn}
Note that we require a stronger security metric than in \cite{Ignatenko09}. The biometric secret capacities are known and recalled below.

\begin{thm} [\!\! \cite{Ignatenko09}]
Let $(\mathcal{X}\mathcal{Y},p_{XY})$ be a BMS and $L \in \mathbb{R}_+$ be a privacy leakage threshold. The conditional and unconditional biometric secret capacities are equal $C_{\text{Bio}}^{\textup{c}}(L) =C_{\text{Bio}}^{\textup{u}}(L)$, moreover,
\begin{align*}
C_{\text{Bio}}^{\textup{c}}(L) 
 = \displaystyle\max_{U} {I}(Y;U) 
\end{align*}
\vspace*{-1.11em}
\text{ subject to }
\vspace*{-1em}
\begin{align*} 
& L = {I}(U;X) - I(U;Y), \\ \nonumber
&  U \to X \to Y, \\ \nonumber
&|\mathcal{U}| \leq |\mathcal{X}|.
\end{align*}
\end{thm}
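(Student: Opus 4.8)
The capacity formula above is the one established in \cite{Ignatenko09}, so I would only give a sketch, organized into a converse part (which I would take essentially from \cite{Ignatenko09}) and an achievability part, the latter being where the polar machinery already developed for Model~2 comes in.

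\emph{Converse.} Given any sequence of $(2^{NR},N,R)$ strategies meeting the four conditions of Definition~\ref{def2}, I would apply Fano's inequality to the reliability constraint to bound $H(S\mid Y^{1:N}M)$, use uniformity and strong secrecy to replace $H(S)$ by $NR+o(N)$ and to decouple $S$ from $M$, and then identify the single-letter auxiliary $U_t\triangleq(M,Y^{1:t-1})$ at coordinate $t$. Because $M$ is produced from $X^{1:N}$ only and the source is memoryless, this yields the Markov chain $U_t\to X_t\to Y_t$; single-letterizing gives $R\leq\frac1N\sum_t I(Y_t;U_t)$ together with $\frac1N\sum_t\big(I(U_t;X_t)-I(U_t;Y_t)\big)\leq L+o(1)$ from the privacy-leakage constraint, and a standard time-sharing and Carath\'eodory argument collapses this to the stated region with $|\mathcal U|\leq|\mathcal X|$. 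Finally, since $I(M;X^{1:N})\geq I(M;X^{1:N}\mid S)-I(M;S)$ and $I(M;S)\to0$, the same outer bound holds for both privacy metrics, so $C_{\text{Bio}}^{\textup c}(L)=C_{\text{Bio}}^{\textup u}(L)$.

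\emph{Achievability via Model~2.} The target region is \emph{identical} to that of Corollary~\ref{cor_ratelim} with the public-communication rate $R_p$ replaced by the privacy-leakage budget $L$. I would therefore reuse the polar coding scheme of Section~\ref{Sec_scheme2} (Algorithms~\ref{alg:encoding_2} and~\ref{alg:decoding_2}) with the test channel $p_{XU}$ chosen so that $I(U;X)-I(U;Y)=L$ and $Z$ taken to be constant: the enrollment sequence plays the role of $X^{1:N}$, the authentication sequence that of $Y^{1:N}$, the helper data is $M\triangleq(R_1,M_{1:k})$, and the biometric secret is $S\triangleq K_{1:k}$. Reliability, uniformity, and strong secrecy $\mathbf L(\mathcal S_N^{\textup{bio}})=I(S;M)\leq\widetilde L_e^{1:k}$ are then immediate from the analysis of Theorem~\ref{Th2}, and the key-rate computation of Section~\ref{SecproofTh2} gives $|K_{1:k}|/(kN)\to I(Y;U)$, which can be pushed to $\max_U I(Y;U)$ by optimizing $p_{XU}$ (the cardinality bound $|\mathcal U|\leq|\mathcal X|$ is inherited from the converse).

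\emph{Privacy leakage and the main obstacle.} The only genuinely new estimate is the privacy-leakage bound. Since $M$ is a deterministic function of $(\widetilde V_{1:k}^{1:N},R_1)$, one has $\mathbf P_{\textup u}(\mathcal S_N^{\textup{bio}})=I(M;X_{1:k}^{1:N})\leq H(M)$, and the communication-rate computation already carried out in Section~\ref{SecproofTh2} bounds $H(M)/(kN)\leq I(X;U)-I(Y;U)+H(U\mid X)/k=L+H(U\mid X)/k$; letting $k\to\infty$ (harmless, since the seed rate and all error terms still vanish) yields $\lim_N\mathbf P_{\textup u}/(kN)\leq L$, and $\mathbf P_{\textup c}=I(M;X_{1:k}^{1:N}\mid S)\leq H(M\mid S)\leq H(M)$ obeys the same asymptotic bound, so with the converse both privacy metrics are met with equality. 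The point I expect to require care is precisely this amortization: over a single block the crude estimate $I(M;X^{1:N})\leq H(M)$ overshoots $NL$ by roughly $NH(U\mid X)$, so the chaining-over-$k$-blocks construction of Model~2 (in which $R_1$ is reused across blocks) is essential to drive the overhead to zero, exactly as in the communication-rate analysis there; once that is in place the remaining steps are routine invocations of the lemmas already proved for Theorem~\ref{Th2}.
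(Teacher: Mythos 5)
This theorem is not actually proved in the paper: it is recalled verbatim from \cite{Ignatenko09} (with the equality form of the leakage constraint and the cardinality bound $|\mathcal{U}|\leq|\mathcal{X}|$ attributed to \cite{Chou12b} in the accompanying remark), so there is no in-paper argument to compare yours against. That said, your proposal is a sound reconstruction, and its achievability half is exactly the route the paper takes later for Theorem~\ref{Thbio}: run the Model~2 scheme of Section~\ref{Sec_scheme2} with $Z=\emptyset$, identify the helper data with $(R_1,M_{1:k})$ and the secret with $K_{1:k}$, inherit reliability, uniformity and $I(S;M)\to 0$ from the analysis of Theorem~\ref{Th2}, and control both privacy metrics through $\max(\mathbf{P}_{\textup{c}},\mathbf{P}_{\textup{u}})\leq H(M)$ combined with the communication-rate computation of Section~\ref{SecproofTh2}, which gives $H(M)/(kN)\to I(U;X)-I(U;Y)+H(U|X)/k$. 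Your observation that the chaining over $k$ blocks (with $R_1$ reused) is what amortizes the $H(U|X)$ overhead of the stochastic quantizer is precisely the point the paper relies on.

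One caution on the converse sketch: with $U_t\triangleq(M,Y^{1:t-1})$ the chain $H(S)\lesssim I(S;Y^{1:N}\mid M)\leq\sum_t I(U_t;Y_t)$ does not go through; that choice of auxiliary yields the bound $\sum_t I(X_t;Y_t\mid U_t)$ instead, which is the wrong single-letter quantity. The standard Ahlswede--Csisz\'ar/Ignatenko--Willems identification absorbs the secret into the auxiliary, $U_t\triangleq(S,M,Y^{1:t-1})$, after which the Markov chain $U_t\to X_t\to Y_t$, the leakage lower bound, and the Carath\'eodory reduction proceed as you describe. Since you explicitly import the converse from \cite{Ignatenko09}, this is a presentational slip rather than a substantive gap, but it is worth fixing if the sketch is to stand on its own.
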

\begin{rem}
The equality $ L = {I}(U;X) - I(U;Y)$ and the range constraint $|\mathcal{U}| \leq |\mathcal{X}|$ are obtained~from~\cite{Chou12b}.
\end{rem}

\subsubsection{Generated-secret systems with zero leakage} 
\begin{figure}
\centering
  \includegraphics[width=8.5cm]{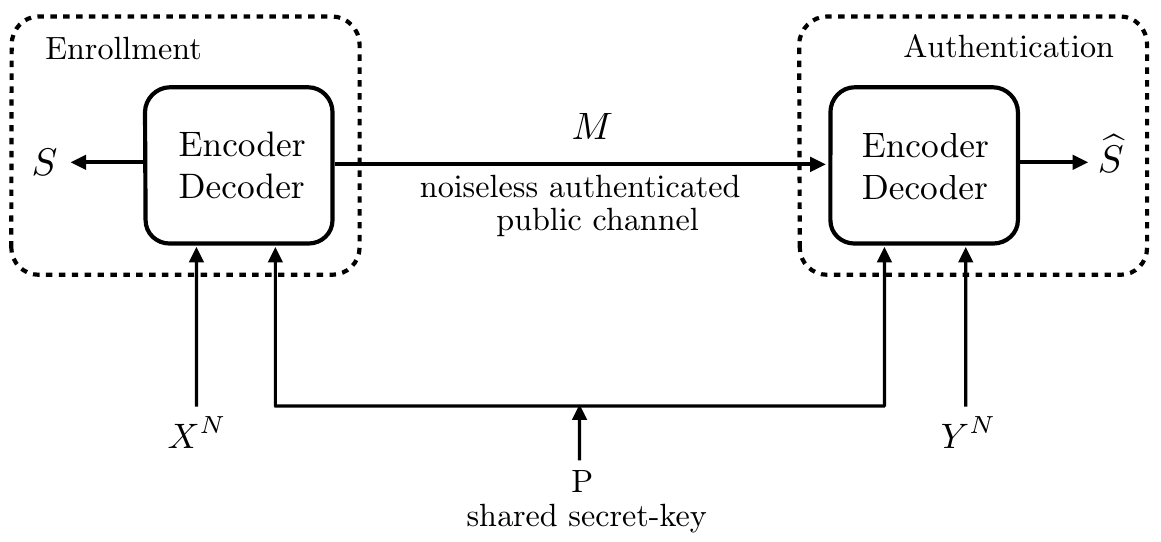}
  \caption{Model for biometric secret generation with zero leakage}
  \label{figbioZ}
\end{figure}

A biometric secret generation strategy with zero leakage $\mathcal{S}^{\textup{bioZ}}_N$ is describes in Figure \ref{figbioZ} and is formally defined as follows.
\begin{defn}
Let $R \in \mathbb{R}^+ $. Let  $\mathcal{S}$ be an alphabet of size $2^{NR}$. Assume that the encoder and decoder share a uniformly distributed secret-key $P$ beforehand. The protocol defined by the following steps is called a $(2^{NR},N,R)$ biometric secret generation strategy with zero leakage.
\begin{itemize}
\item The encoder observes the enrollment sequence $X^{1:N}$;
\item The encoder generates a secret $S \in \mathcal{S}$ from $X^{1:N}$ and~$P$;
\item The encoder transmits publicly to the decoder helper data ${M}$ which is a function of $X^{1:N}$ and $P$;
\item The decoder observes the authentication sequence $Y^{1:N}$, and computes $\widehat{S} \in \mathcal{S}$ from $Y^{1:N}$ and~$P$.
\end{itemize}
\end{defn}

The performance of a biometric secret generation strategy with zero leakage is measured in terms of 
\begin{itemize}
\item the average probability of error between the biometric secrets with $\textbf{P}_e(\mathcal{S}_N^{\textup{bio}}) \triangleq \mathbb{P} [ S \neq \widehat{S}],$
\item the information leakage of $M$ on $S$ and $X^{1:N}$ with $\textbf{L}_{\textup{c}}(\mathcal{S}_N^{\textup{bio}}) \triangleq I(SX^{1:N};M)$ (conditional case), or $\textbf{L}_{\textup{u}}(\mathcal{S}_N^{\textup{bio}}) \triangleq I(S;M) + I(X^{1:N};M)$ (unconditional case), 
\item the length of the secret-key $P$ with $\textbf{H}(\mathcal{S}_N^{\textup{bioZ}}) \triangleq |P| - {H}(P)$ , 
\item the uniformity of the biometric secret $\textbf{U}(\mathcal{S}_N^{\textup{bio}}) \triangleq \log \lceil 2^{NR} \rceil - {H}(S)$.
\end{itemize}
\begin{defn} \label{def3}
For a fixed secret-key length $K$, a biometric secret rate $R$ is achievable with zero leakage if there exists a sequence of $(2^{NR},N,R)$ biometric secret generation strategies with zero leakage $\left\{ \mathcal{S}_N^{\textup{bioZ}} \right\}_{N \geq 1}$ such that
\begin{align*}
 \displaystyle\lim_{N \to \infty } \textbf{\textup{P}}_e(\mathcal{S}_N^{\textup{bioZ}})  = & 0, \text{ (reliability) } \\
 \displaystyle\lim_{N \to \infty } \textbf{\textup{L}}_{\textup{c}}(\mathcal{S}_N^{\textup{bioZ}}) = & 0, \text{ (strong secrecy)}  \\
  \displaystyle\lim_{N \to \infty } \textbf{\textup{H}}(\mathcal{S}_N^{\textup{bioZ}})/N \leq & K, \text{ (secret-key length)}  \\
 \displaystyle\lim_{N \to \infty } \textbf{\textup{U}}(\mathcal{S}_N^{\textup{bioZ}})= & 0. \text { (uniformity)}
\end{align*}
Moreover, the supremum of achievable rates is called the zero-leakage biometric secret capacity and is denoted $C_{\text{BioZ}}^{\textup{c}}(L)$.
For the unconditional case $\textbf{\textup{P}}_{\textup{c}}(\mathcal{S}_N^{\textup{bioZ}})$ is replaced with $\textbf{\textup{P}}_{\textup{u}}(\mathcal{S}_N^{\textup{bioZ}})$, and the zero-leakage biometric secret capacity and is denoted $C_{\text{BioZ}}^{\textup{u}}(L)$.
 \end{defn}
Note that we require a stronger security metric than in \cite{Ignatenko09}. The zero-leakage biometric secret capacities are known and recalled below.

\begin{thm} [\!\! \cite{Ignatenko09}]
Let $(\mathcal{X}\mathcal{Y},p_{XY})$ be a BMS and $K \in \mathbb{R}_+$ be a fixed length. The conditional and unconditional zero-leakage biometric secret capacities are equal $C_{\text{BioZ}}^{\textup{c}}(K) =C_{\text{BioZ}}^{\textup{u}}(K))$, moreover,
\begin{align*}
C_{\text{BioZ}}^{\textup{c}}(L) 
 = \displaystyle\max_{U} {I}(Y;U) +K 
\end{align*}
\vspace*{-1.11em}
\text{ subject to }
\vspace*{-1em}
\begin{align*} 
& K = {I}(U;X) - I(U;Y), \\ \nonumber
&  U \to X \to Y, \\ \nonumber
&|\mathcal{U}| \leq |\mathcal{X}|.
\end{align*}
\end{thm}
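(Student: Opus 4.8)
The plan is to establish the two inequalities separately. For the upper bound, I would invoke the single‑letter converse of \cite{Ignatenko09}: given any admissible zero‑leakage strategy, identify the auxiliary variable at coordinate $t$ with $(M,P,Y^{1:t-1},X^{t+1:N})$, combine Fano's inequality with the secrecy, uniformity and private‑key‑length constraints, and single‑letterize; the Markov chain $U\to X\to Y$, the rate equality $K=I(U;X)-I(U;Y)$ and the cardinality bound $|\mathcal{U}|\leq|\mathcal{X}|$ then come out exactly as in \cite{Ignatenko09,Chou12b}. This bound holds for the conditional leakage criterion, and since I will exhibit a scheme that meets it with $\textbf{L}_{\textup{c}}=0$ (hence a fortiori $\textbf{L}_{\textup{u}}=0$), the same value is achieved under the a priori incomparable unconditional criterion as well, which together with its own converse forces $C^{\textup{c}}_{\textup{BioZ}}(K)=C^{\textup{u}}_{\textup{BioZ}}(K)=\max_U I(Y;U)+K$.

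For the achievability I would reuse the polar vector‑quantization machinery of Section~\ref{SecproofTh2} essentially verbatim, with $Z=\emptyset$ (so that $\mathcal{V}_{U|Z}$ becomes the unconditional set $\mathcal{V}_{U}\triangleq\{i:H(V^i|V^{1:i-1})\geq1-\delta_N\}$). Fix a test channel $U$ attaining the maximum, so $I(U;X)-I(U;Y)=K$ and $U\to X\to Y$, and run Algorithm~\ref{alg:encoding_2} over a chaining of $k$ blocks of size $N$: Alice produces $\widetilde{V}_i^{1:N}$ with $\mathbb{D}(p_{X^{1:N}V^{1:N}}\|\widetilde{p}_{X^{1:N}V^{1:N}})\leq N\delta_N$ (Lemma~\ref{lemDivprob}). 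The key idea is that, once a fresh private key is available, one no longer needs the internal one‑time‑pad seed $\widetilde{K}_{i-1}$: let the pre‑shared key $P$ be a uniform string whose length equals that of the public communication $\mathbf{F}_{\textup{pub}}\triangleq\bigl(R_1,\{\widetilde{V}_i^{1:N}[\mathcal{H}_{U|Y}\backslash\mathcal{V}_{U|X}]\}_{i=1}^{k}\bigr)$, transmit the helper data $M\triangleq\mathbf{F}_{\textup{pub}}\oplus P$, and declare the biometric secret to be $S\triangleq\{\widetilde{V}_i^{1:N}[\mathcal{V}_{U}\backslash\mathcal{V}_{U|X}]\}_{i=1}^{k}$, i.e.\ both the silent key coordinates $\mathcal{V}_{U}\backslash\mathcal{H}_{U|Y}$ and the coordinates $\mathcal{V}_{U}\cap\mathcal{H}_{U|Y}\backslash\mathcal{V}_{U|X}$ that are publicly revealed but now masked by $P$. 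Since $|\mathcal{V}_{U}|/N\to H(U)$ and $|\mathcal{V}_{U|X}|/N\to H(U|X)$ by Lemma~\ref{lemnewcard}, the secret rate is $\lim|S|/(kN)=I(U;X)=I(Y;U)+K$, while $\lim|\mathbf{F}_{\textup{pub}}|/(kN)=|\mathcal{V}_{U|X}|/(kN)+|\mathcal{H}_{U|Y}\backslash\mathcal{V}_{U|X}|/N\to K$, so $P$ has the prescribed rate $K$ and is (exactly) uniform.

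The three criteria are then inherited from Section~\ref{SecproofTh2} with only cosmetic changes. Reliability: knowing $P$, the decoder recovers $\mathbf{F}_{\textup{pub}}=M\oplus P$ exactly, hence $\widetilde{V}_i^{1:N}[\mathcal{H}_{U|Y}]$, runs the successive cancellation decoder of \cite{Arikan10} and extracts $\widehat{S}$; bounding the coupling error by Lemma~\ref{lemDivprob} and the decoding error by $N\delta_N$ as in the derivation of (\ref{eq_errorPr2}) gives $\mathbf{P}_e(\mathcal{S}_N^{\textup{bioZ}})\leq k(\sqrt{2\log2}\sqrt{N\delta_N}+N\delta_N)\to0$ (there is no $\frac{k(k+1)}{2}$ blow‑up, as removing the seed chaining removes error propagation). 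Uniformity: $\{\widetilde{V}_i^{1:N}[\mathcal{V}_{U}]\}_i$ is jointly nearly uniform by the argument of Lemma~\ref{lem_U2} and nearly independent of $R_1$ by Lemmas~\ref{lem_U2_Div}--\ref{lem_KKtilde2}, so $\mathbf{U}(\mathcal{S}_N^{\textup{bioZ}})=|S|-H(S)\leq k(\delta_N^{(1)}+\delta_N^{(2)})\to0$ exactly as in (\ref{eq_uniformity2}). Zero leakage: because $P$ is uniform, of the same length as $\mathbf{F}_{\textup{pub}}$, and independent of the source and of the encoder's randomness (hence of $(S,\mathbf{F}_{\textup{pub}},X^{1:N})$), the pad $M=\mathbf{F}_{\textup{pub}}\oplus P$ is uniform and remains uniform conditioned on $(S,X^{1:N})$, whence $\mathbf{L}_{\textup{c}}(\mathcal{S}_N^{\textup{bioZ}})=I(SX^{1:N};M)=H(M)-H(M|SX^{1:N})=0$; thus perfect secrecy holds and $\mathbf{L}_{\textup{u}}\leq2\mathbf{L}_{\textup{c}}=0$. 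Combining this with the converse gives the claimed capacity, and the chosen‑secret‑with‑zero‑leakage version follows by the masking reduction of \cite{Ignatenko09}.

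The main obstacle is not the one‑time‑pad accounting — which is immediate once one decides to place the revealed coordinates $\mathcal{V}_{U}\cap\mathcal{H}_{U|Y}\backslash\mathcal{V}_{U|X}$ inside the secret — but rather checking, in the presence of the \emph{stochastic} successive‑cancellation encoder, that the joint law of $\{\widetilde{V}_i^{1:N}[\mathcal{V}_{U}]\}_i$ together with $R_1$ is, up to a total variation $O(k\sqrt{N\delta_N})$, a deterministic image of the source and of $R_1$, so that the near‑uniformity and near‑independence estimates of Lemmas~\ref{lem_U2}--\ref{lem_KKtilde2} transfer to the enlarged secret; and confirming that the $\mathcal{V}_{U|X}$‑randomness $R_1$, which costs a non‑vanishing per‑block rate $H(U|X)$, can be amortized by reusing it across the $k\to\infty$ blocks while keeping the total public, and hence private‑key, rate equal to $K$.
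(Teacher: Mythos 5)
The paper does not actually prove this statement---it recalls it from \cite{Ignatenko09} (with the equality form of the constraint and the cardinality bound imported from \cite{Chou12b})---and your reconstruction follows the same route the paper takes overall: converse by citation, achievability via the polar vector-quantization scheme of Section~\ref{SchemebioZ}, whose per-block secret $S_i=[\widetilde{V}_i^{1:N}[\mathcal{V}_{U} \backslash \mathcal{H}_{U|Y}],F_i]$ coincides exactly with your $\widetilde{V}_i^{1:N}[\mathcal{V}_{U}\backslash\mathcal{V}_{U|X}]$, and whose reliability/uniformity/leakage analysis the paper likewise defers to that of Theorem~\ref{Th2}. The only substantive deviation is that you also one-time-pad $R_1$ (the paper transmits it in the clear), which upgrades the paper's asymptotically vanishing leakage to exactly zero at no asymptotic cost in private-key rate; your argument is correct.
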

\begin{rem}
The equality $ K = {I}(U;X) - I(U;Y)$ and the range constraint $|\mathcal{U}| \leq |\mathcal{X}|$ are obtained~from~\cite{Chou12b}.
\end{rem}

\subsection{Polar coding scheme for generated-secret systems} \label{Schemebio}
Let $n \in \mathbb{N}$ and $N \triangleq 2^n$. Fix a joint probability distribution $p_{XU}$. We note $V^{1:N} \triangleq U^{1:N} G_N$. For $\delta_N \triangleq 2^{-N^{\beta}}$, where $\beta \in ]0,1/2[$, define the following sets
\begin{align*}
\mathcal{H}_{U}  \triangleq &  \left\{ i\in \llbracket 1,N\rrbracket : {H} \left(V^i|V^{1:i-1}  \right) \geq  \delta_N \right\} , \\
\mathcal{V}_{U}  \triangleq &  \left\{ i\in \llbracket 1,N\rrbracket : {H} \left(V^i|V^{1:i-1} \right) \geq 1- \delta_N \right\} , \\
\mathcal{V}_{U|X}  \triangleq &  \left\{ i\in \llbracket 1,N\rrbracket : {H} \left(V^i|V^{1:i-1} X^{1:N} \right) \geq 1- \delta_N \right\} , \\
\mathcal{H}_{U|Y} \triangleq & \left\{ i\in \llbracket 1,N\rrbracket : {H}\left(V^i|V^{1:i-1} Y^{1:N}\right) \geq \delta_N \right\}, \\
\mathcal{H}_{U|X} \triangleq & \left\{ i\in \llbracket 1,N\rrbracket : {H}\left(V^i|V^{1:i-1} X^{1:N}\right) \geq \delta_N \right\}.
\end{align*}
The scheme proposed is a special case (it corresponds to the case $Z = \emptyset$) of the scheme in Section~\ref{Sec_scheme2}. However, for completeness and clarity, we provide its detailed description in Algorithm~\ref{alg:encoding_gen_sec} and Algorithm~\ref{alg:decoding_gen_sec} with the notation of the biometric secret generation problem. 

\begin{algorithm}[]
  \caption{Encoding algorithm for generated secret systems}
  \label{alg:encoding_gen_sec}
  \begin{algorithmic} [1]   
    \REQUIRE $\widetilde{S}_0$, a secret key of size $|(\mathcal{H}_{U|Y} \backslash {\mathcal{V}}_{U|X}) \backslash  \mathcal{V}_{U} |$; $\mathcal{A}_{UXY}$ be any subset of $\mathcal{V}_{U} \backslash \mathcal{H}_{U|Y}$ with size $|(\mathcal{H}_{U|Y} \backslash {\mathcal{V}}_{U|X}) \backslash  \mathcal{V}_{U}|$; Observations $X_i^{1:N}$ in every block $i\in \llbracket 1,k\rrbracket$; a vector $R_1$ of uniformly distributed bits with size $|\mathcal{V}_{U|X}|$.
    \STATE Transmit $R_1$ publicly.
    \FOR{Block $i=1$ to $k$}
    \STATE $\widetilde{V}_i^{1:N}[\mathcal{V}_{U|X}]\leftarrow R_1$
    \STATE Given $X_i^{1:N}$, successively draw the remaining bits of $\widetilde{V}_i^{1:N}$  according to $\widetilde{p}_{V_i^{1:N}X_i^{1:N}} \triangleq \prod_{j=1}^N \widetilde{p}_{V_i^{j}|V_i^{j-1} X^{1:N}} p_{X^{1:N}}$ with 
        \begin{align} 
      & \widetilde{p}_{V_i^j|V_i^{1:j-1}X^{1:N}} (v^j|\widetilde{V}_i^{1:j-1}X_i^{1:N}) \nonumber \\ 
      &\triangleq \begin{cases}
        {p}_{V^j|V^{1:j-1}X^{1:N}} (v^j| \widetilde{V}_i^{1:j-1}X_i^{1:N}) & \!\!\! \text{if } i \in \mathcal{H}_U \backslash {\mathcal{V}}_{U|X}\\
        {p}_{V^j|V^{1:j-1}} (v^j|\widetilde{V}_i^{1:j-1}) &  \!\!\! \text{if } i \in \mathcal{H}_U^c
      \end{cases} \label{eq_VQ_def_bio}
    \end{align}
    \STATE $\widetilde{S}_i \leftarrow \widetilde{V}_i^{1:N} [\mathcal{A}_{UXY}]$
    \STATE $S_i \leftarrow \widetilde{V}_i^{1:N} [(\mathcal{V}_{U} \backslash \mathcal{H}_{U|Y}) \backslash \mathcal{A}_{UXY}]$
    \STATE $F_i \leftarrow \widetilde{V}_i^{1:N} [(\mathcal{H}_{U|Y} \backslash {\mathcal{V}}_{U|X}) \cap \mathcal{V}_{U}]$
    \STATE $F_i' \leftarrow \widetilde{V}_i^{1:N} [(\mathcal{H}_{U|Y} \backslash {\mathcal{V}}_{U|X}) \backslash  \mathcal{V}_{U} ]$
    \STATE  Transmit $M_i \leftarrow [F_i, F_i' \oplus \widetilde{S}_{i-1}]$ publicly
    \ENDFOR
    \RETURN ${S}_{1:k} \leftarrow [ {S}_{1}, {S}_2, \ldots, {S}_k ]$
  \end{algorithmic}
\end{algorithm}

\begin{algorithm}[]
  \caption{Decoding algorithm for generated secret systems}
  \label{alg:decoding_gen_sec}
  \begin{algorithmic} [1]   
    \REQUIRE The secret-key $\widetilde{S}_0$, and the set $\mathcal{A}_{UXY}$ defined in Algorithm \ref{alg:encoding_gen_sec}; Observations $Y_i^{1:N}$ and message $M_i$ transmitted by other party in every block $i\in \llbracket 1,k\rrbracket$, vector~$R_1$.
    \FOR{Block $i=1$ to $k$}
    \STATE Form $\widetilde{V}_i^{1:N}[\mathcal{H}_{U|Y}]$ from $(F_i,F_i')= \widetilde{V}_i^{1:N} [\mathcal{H}_{U|Y} \backslash {\mathcal{V}}_{U|X}]$ and $R_1 = R_i = \widetilde{V}_i^{1:N} [ {\mathcal{V}}_{U|X}]$.
    \STATE Create estimate $\widehat{V}_i^{1:N}$ of $\widetilde{V}_i^{1:N}$ with the successive cancellation decoder of \cite{Arikan10}
    \STATE $\widehat{S}_i\leftarrow \widehat{V}_i^{1:N}[(\mathcal{V}_{U} \backslash \mathcal{H}_{U|Y}) \backslash \mathcal{A}_{UXY}]$ 
    \STATE ${\widetilde{S}}_i\leftarrow \widehat{V}_i^{1:N}[\mathcal{A}_{UXY}]$ 
    \ENDFOR .
    \RETURN $\widehat{S}_{1:k} \triangleq [ \widehat{S}_{1}, \widehat{S}_2, \ldots, \widehat{S}_k ]$.
  \end{algorithmic}
\end{algorithm}
\begin{rem} \label{rm:efficient_use_key3}
One may actually use $S_k^{1:N} [\mathcal{V}_{U} \backslash \mathcal{H}_{U|Y}]$ as the $S_k$ and slightly increase the biometric secret rate in Algorithm~\ref{alg:encoding_gen_sec}. However, one does not distinguish the last block from the others for convenience -- see Remark \ref{rm:efficient_use_key}.
%
%
\end{rem}

Based on the results established for Model 2 in Section~\ref{sec_model2}, we obtain the following.
\begin{thm} \label{Thbio}
Consider a BMS $(\mathcal{X}\mathcal{Y},p_{XY})$. Assume that the encoder and the decoder share a secret seed. For any $L \in \mathbb{R}$, the biometric secret capacities $C_{\text{Bio}}^{\textup{c}}(L)$, and $C_{\text{Bio}}^{\textup{u}}(L)$, are achieved by the polar coding scheme of Algorithm~\ref{alg:encoding_gen_sec} and Algorithms~\ref{alg:decoding_gen_sec}, which involves a chaining of $k$ blocks of size $N$, and whose complexity is $O(kN \log N)$. Moreover, the seed rate is in $o\left( 2^{-N^{\alpha} }\right)$, $\alpha < 1/2$.
\end{thm}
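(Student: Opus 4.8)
The plan is to observe that the scheme of Algorithm~\ref{alg:encoding_gen_sec}--Algorithm~\ref{alg:decoding_gen_sec} is precisely the scheme of Section~\ref{Sec_scheme2} specialized to $Z=\emptyset$, so that the entire analysis of Theorem~\ref{Th2} in Section~\ref{SecproofTh2} transfers once one supplies the one metric not treated there, namely the privacy leakage. Concretely, I would fix the dictionary in which $(S_i,\widetilde{S}_i,F_i,F_i',M_i,R_1)$ play the roles of $(K_i,\widetilde{K}_i,F_i,F_i',M_i,R_1)$, the set $\mathcal{V}_U$ replaces $\mathcal{V}_{U|Z}$, and every conditioning on $Z^{1:N}$ disappears. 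With this substitution, Section~\ref{Sec_reliability_model2} gives $\mathbf{P}_e(\mathcal{S}_N^{\textup{bio}})\le \frac{k(k+1)}{2}(\sqrt{2\log 2}\sqrt{N\delta_N}+N\delta_N)$; the key-uniformity argument gives $\mathbf{U}(\mathcal{S}_N^{\textup{bio}})\le k(\delta_N^{(1)}+\delta_N^{(2)})$; Lemma~\ref{lem_secre_cra} through Lemma~\ref{lem_secrec2} specialize (with $Z_i^{1:N}$ deleted) to $\mathbf{L}(\mathcal{S}_N^{\textup{bio}})=I(S_{1:k};M_{1:k})\to 0$; and the key-rate computation gives $\lim_{N\to\infty}|S_{1:k}|/(kN)\ge I(Y;U)$. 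Choosing $p_{XU}$ to be a maximizer in the capacity formula and noting that the residual term $H(U|X)/k$ and the seed rate both vanish, the rate $\max_U I(Y;U)$ subject to $L=I(U;X)-I(U;Y)$, $U\to X\to Y$, $|\mathcal{U}|\le|\mathcal{X}|$ is attained, with complexity $O(kN\log N)$ and seed rate $o(2^{-N^{\alpha}})$ (take $k=2^{N^{\alpha}}$, $\alpha<\beta$), exactly as in Theorem~\ref{Th2}.

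The genuinely new step is the privacy-leakage bound, which I would handle for the conditional and unconditional cases together with the crude estimate that the leakage is at most the total number of publicly transmitted bits:
\begin{align*}
\mathbf{P}_{\textup{u}}(\mathcal{S}_N^{\textup{bio}}) &= I(M_{1:k};X_{1:k}^{1:N}) \le H(M_{1:k}) \le |M_{1:k}|, \\
\mathbf{P}_{\textup{c}}(\mathcal{S}_N^{\textup{bio}}) &= I(M_{1:k};X_{1:k}^{1:N}|S_{1:k}) \le H(M_{1:k}|S_{1:k}) \le |M_{1:k}|,
\end{align*}
where $|M_{1:k}| = |R_1| + \sum_{i=1}^{k}(|F_i|+|F_i'|) = |\mathcal{V}_{U|X}| + k\,|\mathcal{H}_{U|Y}\setminus\mathcal{V}_{U|X}|$ as in the communication-rate computation of Section~\ref{SecproofTh2}. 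Since $U\to X\to Y$ forces $\mathcal{V}_{U|X}\subset\mathcal{H}_{U|Y}$, one has $|\mathcal{H}_{U|Y}\setminus\mathcal{V}_{U|X}| = |\mathcal{H}_{U|Y}| - |\mathcal{V}_{U|X}|$, and by Lemma~\ref{lemnewcard} and~\cite{Arikan10} this quantity equals $N(H(U|Y)-H(U|X)+o(1)) = N(I(U;X)-I(U;Y)+o(1)) = N(L+o(1))$. Hence $\mathbf{P}_{\textup{c}}(\mathcal{S}_N^{\textup{bio}})/(kN)$ and $\mathbf{P}_{\textup{u}}(\mathcal{S}_N^{\textup{bio}})/(kN)$ are asymptotically at most $H(U|X)/k + L$, and with the choice $k=k(N)=2^{N^{\alpha}}$ the per-(effective-)symbol privacy leakage is at most $L$ in the limit while $\mathbf{P}_e$, $\mathbf{U}$, $\mathbf{L}$ still vanish.

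I expect the main obstacle to be purely the bookkeeping of the nested limits: the privacy threshold must be met under normalization by the effective blocklength $kN$, and one must verify that a single scaling $k=2^{N^{\alpha}}$ simultaneously kills the residual $H(U|X)/k$, preserves reliability and strong secrecy through the polynomial-in-$k$ bounds of Section~\ref{SecproofTh2}, and makes the seed rate exponentially small --- the same balancing already performed for Model~2, so no new difficulty should arise. A secondary observation worth recording is that conditioning on $S_{1:k}$ cannot inflate the leakage (immediate from $H(M_{1:k}|S_{1:k})\le H(M_{1:k})$), which is precisely what makes the conditional and unconditional biometric secret capacities coincide for this construction.
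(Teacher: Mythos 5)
Your proposal is correct and follows essentially the same route as the paper: the paper likewise observes that the scheme is the $Z=\emptyset$ specialization of Theorem~\ref{Th2} and disposes of the privacy leakage via $\max\left(\textbf{\textup{P}}_{\textup{c}}(\mathcal{S}_N^{\textup{bio}}), \textbf{\textup{P}}_{\textup{u}}(\mathcal{S}_N^{\textup{bio}})\right) \leq H(M)$, which is then controlled by the communication-rate computation $I(U;X)-I(U;Y)+H(U|X)/k$ exactly as you do. Your write-up simply makes explicit the bookkeeping (the dictionary of substitutions, the set-cardinality limits, and the choice $k=2^{N^{\alpha}}$) that the paper leaves implicit.
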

Theorem~\ref{Thbio} is a direct consequence of Theorem~\ref{Th2} for the particular case $Z = \emptyset$, since 
\begin{align*}
& \frac{1}{kN}\max (\textbf{\textup{P}}_{\textup{c}}(\mathcal{S}_N^{\textup{bio}}), \textbf{\textup{P}}_{\textup{u}}(\mathcal{S}_N^{\textup{bio}}))\\
  & \leq \frac{1}{kN} H(M_{1:k}) \\
  & \leq  \frac{1}{kN}\sum_{i=1}^k H(M_i) \\
  & \leq  \frac{1}{kN}\sum_{i=1}^k \log |M_i| \\
  & =   \frac{1}{N}|\mathcal{H}_{U|Y} \backslash {\mathcal{V}}_{U|X}| \\
  & =   \frac{1}{N}|\mathcal{H}_{U|Y}| - \frac{1}{N}|\mathcal{V}_{U|X}| \\
  & \xrightarrow{N \to \infty} I(U;X) - I(U;Y),
 \end{align*}
 where we have used $U \to X \to Y$, \cite{Arikan10}, and Lemma \ref{lemnewcard}.
 
  Note also that for $i \in \llbracket 0,k-1\rrbracket$, $\widetilde{S}_i = o(N)$.

\subsection{Polar coding scheme for generated-secret systems with zero leakage} \label{SchemebioZ}
The encoding and decoding algorithms are given in Algorithm~\ref{alg:encoding_gen_sec_zero} and Algorithm~\ref{alg:decoding_gen_sec_zero}. The difference with the scheme of Section~\ref{Schemebio} is that the public communication is protected with a secret-key shared by the encoder and the decoder.

\begin{algorithm}[]
  \caption{Encoding algorithm for generated secret systems with zero leakage}
  \label{alg:encoding_gen_sec_zero}
  \begin{algorithmic} [1]   
    \REQUIRE $k$ secret keys $\{{P}_i\}_{i \in \llbracket 1, k\rrbracket}$ of size $|\mathcal{H}_{U|Y} \backslash {\mathcal{V}}_{U|X} |$; observations $X_i^{1:N}$ in every block $i\in \llbracket 1,k\rrbracket$; a vector $R_1$ of uniformly distributed bits with size $|\mathcal{V}_{U|X}|$.
    \STATE Transmit $R_1$ publicly.
    \FOR{Block $i=1$ to $k$}
    \STATE $\widetilde{V}_i^{1:N}[\mathcal{V}_{U|X}]\leftarrow R_1$
    \STATE Given observations $X_i^{1:N}$, successively draw the remaining bits of $\widetilde{V}_i^{1:N}$ according to $\widetilde{p}_{V_iX_i}$ defined by~(\ref{eq_VQ_def_bio}).
    \STATE $F_i \leftarrow \widetilde{V}_i^{1:N} [(\mathcal{H}_{U|Y} \backslash {\mathcal{V}}_{U|X}) \cap \mathcal{V}_{U}]$
    \STATE $F_i' \leftarrow \widetilde{V}_i^{1:N} [(\mathcal{H}_{U|Y} \backslash {\mathcal{V}}_{U|X}) \backslash  \mathcal{V}_{U} ]$
    \STATE $S_i \leftarrow [\widetilde{V}_i^{1:N} [\mathcal{V}_{U} \backslash \mathcal{H}_{U|Y}], F_i]$
    \STATE  Transmit $M_i \leftarrow [F_i, F_i']\oplus P_i$ publicly
    \ENDFOR
    \RETURN ${S}_{1:k} \leftarrow [ {S}_{1}, {S}_2, \ldots, {S}_k ]$
  \end{algorithmic}
\end{algorithm}

\begin{algorithm}[]
  \caption{Decoding algorithm for generated secret systems with zero leakage}
  \label{alg:decoding_gen_sec_zero}
  \begin{algorithmic} [1]   
    \REQUIRE The secret key ${P}_i$, the message $M_i$ transmitted by other party, observations $Y_i^{1:N}$ in every block $i\in \llbracket 1,k\rrbracket$, and vector $R_1$.
    \FOR{Block $i=1$ to $k$}
    \STATE Form $\widetilde{V}_i^{1:N}[\mathcal{H}_{U|Y}]$ from $(F_i,F_i')= \widetilde{V}_i^{1:N} [\mathcal{H}_{U|Y} \backslash {\mathcal{V}}_{U|X}]$ and $R_1 = \widetilde{V}_i^{1:N} [ {\mathcal{V}}_{U|X}]$.
    \STATE Create estimate $\widehat{V}_i^{1:N}$ of $\widetilde{V}_i^{1:N}$ with the successive cancellation decoder of \cite{Arikan10}
    \STATE $\widehat{S}_i\leftarrow [\widehat{V}_i^{1:N}[\mathcal{V}_{U} \backslash \mathcal{H}_{U|Y}], F_i]$ 
    \ENDFOR .
    \RETURN $\widehat{S}_{1:k} \triangleq [ \widehat{S}_{1}, \widehat{S}_2, \ldots, \widehat{S}_k ]$.
  \end{algorithmic}
\end{algorithm}

The performance of the algorithms is ensured by the following result.

\begin{thm} \label{ThbioZ}
Consider a BMS $(\mathcal{X}\mathcal{Y},p_{XY})$. For any $P \in \mathbb{R}$, the zero-leakage biometric secret capacities $C_{\text{BioZ}}^{\textup{c}}(K)$, and $C_{\text{BioZ}}^{\textup{u}}(K)$, are achieved by the polar coding scheme of Algorithm~\ref{alg:encoding_gen_sec_zero} and Algorithms~\ref{alg:decoding_gen_sec_zero}, which involves a chaining of $k$ blocks of size $N$, and whose complexity is $O(kN \log N)$.
\end{thm}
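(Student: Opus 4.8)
The plan is to reuse, essentially verbatim, the analysis of Theorem~\ref{Th2} specialised to a degenerate eavesdropper $Z=\emptyset$: the scheme of Algorithm~\ref{alg:encoding_gen_sec_zero}--\ref{alg:decoding_gen_sec_zero} is the scheme of Section~\ref{Sec_scheme2} with $Z=\emptyset$, the only structural change being that the chained seed which masked only $F'_i$ in Algorithm~\ref{alg:encoding_2} is replaced by a \emph{fresh} pre-shared one-time pad $P_i$ of length $|\mathcal{H}_{U|Y}\setminus\mathcal{V}_{U|X}|$ that masks the \emph{whole} public codeword $[F_i,F'_i]$; for a clean secrecy argument I would additionally mask the one-shot header $R_1$ by a single pre-shared pad $Q$ of length $|\mathcal{V}_{U|X}|$ reused across all $k$ blocks. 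Throughout, $p_{XU}$ is fixed with $I(U;X)-I(U;Y)=K$; that $|\mathcal{U}|\le|\mathcal{X}|$ and the equality form of the rate constraint are without loss of generality is the content of the remark following the theorem statement (from~\cite{Chou12b}).

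I would first dispatch the four ``easy'' requirements. Since $U\to X\to Y$ forces $\mathcal{V}_{U|X}\subseteq\mathcal{V}_{U|Y}\subseteq\mathcal{H}_{U|Y}$ and $\mathcal{V}_{U|X}\subseteq\mathcal{V}_U$, the secret of Block $i$ collapses to $S_i=\widetilde{V}_i^{1:N}[\mathcal{V}_U\setminus\mathcal{V}_{U|X}]$, so by Lemma~\ref{lemnewcard} and~\cite{Arikan10} the key rate tends to $H(U)-H(U|X)=I(U;X)=I(Y;U)+K=C_{\text{BioZ}}^{\textup{c}}(K)=C_{\text{BioZ}}^{\textup{u}}(K)$. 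The shared material amounts to $k|\mathcal{H}_{U|Y}\setminus\mathcal{V}_{U|X}|+|\mathcal{V}_{U|X}|$ uniform bits, hence a pre-shared rate tending to $H(U|Y)-H(U|X)=K$ (and $\textbf{H}(\mathcal{S}_N^{\textup{bioZ}})=0$ since it is uniform). For reliability I would copy Section~\ref{Sec_reliability_model2}: the decoder peels $Q$ and the $P_i$'s to recover $\widetilde{V}_i^{1:N}[\mathcal{H}_{U|Y}]=[R_1,F_i,F'_i]$ and runs the successive-cancellation decoder of~\cite{Arikan10}; combining the quantization-coupling bound of Lemma~\ref{lemDivprob} with the decoding-error bound of~\cite{Arikan10}, and using that each $M_i$ is now self-decodable so that no error propagates between blocks, a union bound gives $\textbf{P}_e(\mathcal{S}_N^{\textup{bioZ}})\le k\bigl(\sqrt{2\log 2}\,\sqrt{N\delta_N}+N\delta_N\bigr)\to0$. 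Uniformity is exactly the argument of Section~\ref{sec_secrecy2}: $S_i$ is near-uniform by Lemmas~\ref{lem_U2}--\ref{lem_U2_Div}, and the chaining $S_i\to R_1\to S_{1:i-1}$ together with $I(S_i;R_1)\le\delta_N^{(2)}$ (Lemma~\ref{lem_KKtilde2}) gives $\textbf{U}(\mathcal{S}_N^{\textup{bioZ}})=|S_{1:k}|-H(S_{1:k})\le k(\delta_N^{(1)}+\delta_N^{(2)})\to0$.

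For strong secrecy the zero-leakage scheme behaves very differently from Model~2 and the analysis becomes trivial once $R_1$ is masked. All public communication is $M=(R_1\oplus Q,\{[F_i,F'_i]\oplus P_i\}_{i=1}^{k})$; because $Q,P_1,\dots,P_k$ are mutually independent uniform strings generated independently of the source and of the quantization randomness --- hence of $R_1$, of every $[F_i,F'_i]$, and of the pair $(S,X^{1:N}_{1:k})$ --- peeling the pads off one at a time shows $M$ is uniform on its range and independent of $(S,X^{1:N}_{1:k})$. Hence $\textbf{L}_{\textup{c}}(\mathcal{S}_N^{\textup{bioZ}})=I(SX^{1:N};M)=0$ and $\textbf{L}_{\textup{u}}(\mathcal{S}_N^{\textup{bioZ}})=I(S;M)+I(X^{1:N};M)=0$, so the scheme attains zero leakage \emph{exactly}, not just asymptotically, for both the conditional and the unconditional metric.

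The hard part --- and the main obstacle if one insists on sending $R_1$ in the clear, as literally written in Algorithm~\ref{alg:encoding_gen_sec_zero} --- is the reuse of the single header $R_1$ across all $k$ blocks. Masking only $[F_i,F'_i]$, the one-time-pad reduction leaves $\textbf{L}_{\textup{c}}(\mathcal{S}_N^{\textup{bioZ}})=I(SX^{1:N};R_1)=I(S;R_1\mid X^{1:N})$ (using $R_1\perp X^{1:N}$), and one must show that the frozen-randomness coordinates $\widetilde{V}_i^{1:N}[\mathcal{V}_{U|X}]$ remain $\delta$-close to uniform after conditioning on $(X_i^{1:N},\widetilde{V}_i^{1:N}[\mathcal{V}_U\setminus\mathcal{V}_{U|X}])$, and then accumulate the per-block bounds over the conditionally independent blocks via a chaining step like Lemma~\ref{lem_secrec2}. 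This conditional near-independence is \emph{not} implied by the unconditional near-uniformity of $\widetilde{V}_i^{1:N}[\mathcal{V}_U]$ exploited for Model~2, and proving it --- through Lemma~\ref{lemDivprob}, Pinsker's inequality, and a chain-rule exchange on the true distribution $p_{V^{1:N}X^{1:N}}$ --- would be the bulk of the remaining work. Masking $R_1$ as proposed, at an amortised pre-shared cost $|\mathcal{V}_{U|X}|/(kN)\to0$ that leaves the total pre-shared rate at $K$, removes this difficulty altogether and is why the clean exact zero-leakage statement holds.
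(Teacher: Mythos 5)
Your handling of the rate, reliability, and uniformity requirements matches what the paper intends: the paper omits the proof of Theorem~\ref{ThbioZ}, saying only that it is ``similar to the proof of Theorem~\ref{Th2}'' and that Lemma~\ref{lem_U2} yields per-block uniformity of $S_i$. Your observation that $S_i$ collapses to $\widetilde{V}_i^{1:N}[\mathcal{V}_U\backslash\mathcal{V}_{U|X}]$ (so the rate is $I(U;X)=I(Y;U)+K$), your simplified union bound for reliability (no error propagation, since the pads $P_i$ are pre-shared rather than chained through the previous block's key), and your reuse of Lemmas~\ref{lem_U2}--\ref{lem_KKtilde2} to control $\textbf{U}(\mathcal{S}_N^{\textup{bioZ}})$ are all correct instantiations of that route, and your bookkeeping of the pre-shared rate against the budget $K$ is right.

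The one substantive divergence is your treatment of $R_1$, and your concern there is legitimate. Algorithm~\ref{alg:encoding_gen_sec_zero} transmits $R_1$ in the clear, so after the one-time pads $P_i$ remove the contribution of $[F_i,F_i']$, the conditional leakage reduces to $I(S_{1:k}X_{1:k}^{1:N};R_1)=I(S_{1:k};R_1\mid X_{1:k}^{1:N})$, since $R_1$ is drawn independently of the source. The unconditional case is unproblematic ($I(X_{1:k}^{1:N};M)=0$ and $I(S_{1:k};R_1)\le k\delta_N^{(2)}$ by the argument of Lemma~\ref{lem_KKtilde2}), but the paper explicitly routes through the conditional metric, and the lemmas of Section~\ref{SecproofTh2} only establish \emph{unconditional} near-independence of disjoint subvectors of $\widetilde{V}_i^{1:N}[\mathcal{V}_U]$ via its near-uniformity; the underlying inequality $H(\widetilde{V}_i^{1:N}[\mathcal{A}])\ge\sum_{j\in\mathcal{A}}H(\widetilde{V}_i^j\mid\widetilde{V}_i^{1:j-1})$ does not survive the additional conditioning on $X_i^{1:N}$ and on a subvector $S_i$ whose indices interleave with and follow those of $R_1$, so the required lower bound on $H(R_1\mid S_iX_i^{1:N})$ does not follow from anything proved for Theorem~\ref{Th2}. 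You are therefore right that either an additional conditional near-independence argument is needed for the scheme as written, or $R_1$ must also be masked; your fix costs an amortized pre-shared rate $|\mathcal{V}_{U|X}|/(kN)\to0$, keeps the total pre-shared rate at $K$, and upgrades the conclusion to exact zero leakage. The only caveat is that it proves the theorem for a mildly amended scheme rather than for Algorithms~\ref{alg:encoding_gen_sec_zero}--\ref{alg:decoding_gen_sec_zero} verbatim; you should either supply the conditional argument you outline or state the amendment explicitly.
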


Remark that one only needs to prove that $C_{\text{BioZ}}^{\textup{c}}(K)$ is achieved in Theorem \ref{ThbioZ}, since a code that achieves $C_{\text{BioZ}}^{\textup{c}}(K)$ also achieves $C_{\text{BioZ}}^{\textup{u}}(K)$ by \cite{Ignatenko09}. The proof of Theorem~\ref{ThbioZ} for $C_{\text{BioZ}}^{\textup{c}}(K)$ is similar to the proof of Theorem~\ref{Th2} and is thus omitted. To show that $S_i=[\widetilde{V}_i^{1:N} [\mathcal{V}_{U} \backslash \mathcal{H}_{U|Y}], F_i]$, $i\in \llbracket 1,k \rrbracket$, is uniform one can use Lemma~\ref{lem_U2}, then, similarly to Theorem~\ref{Th2}, one can show that $S_{1:k}$ is also uniform and that strong secrecy holds. Note also that for $i \in \llbracket 0,k-1\rrbracket$, $F_i' = o(N)$.

\section{Conclusion}
We have proposed low-complexity secret-key capacity-achieving schemes based on polar coding for several classes of sources. Our schemes jointly handle secrecy and reliability, which contrasts with sequential methods that successively perform reconciliation and privacy amplification. Although sequential methods apply to more general classes of sources, our polar coding schemes may be easier to design and may operate with lesser complexity. Nevertheless, the price to be paid for low complexity is that our schemes often require a pre-shared seed, whose rate is negligible compared to the blocklength. When the eavesdropper has no access to correlated observations of the source, and when the source has uniform marginals, we have identified several configurations, including multiterminal models, for which no pre-shared seed is required. Finally, we have applied our polar coding schemes to privacy and secrecy for some biometric systems.

Our polar coding schemes are particularly convenient to handle rate-limited public communication and vector quantization, which are often the major hurdle in designing optimal secret-key generation schemes.

\appendices

\section{Proofs for Model 1 in Section~\ref{sec_model1}} 
\subsection{Proof of Corollary~\ref{Cor_1}} \label{App_cor1}
We perform the same encoding as in Algorithm~\ref{alg:encoding_1} for Block~1 with $\mathcal{A}_{XYZ} = \emptyset$.  Define the set
\begin{align*}
\mathcal{H}_{X|Z}  \triangleq  &\left\{ i\in \llbracket 1,N\rrbracket : {H}\left(U^i|U^{1:i-1} Z^{1:N}\right) \geq \delta_N \right\}.
\end{align*}
We have
\begin{align*}
|F'_1| 
& = |\mathcal{H}_{X|Y} \backslash  \mathcal{V}_{X|Z}| \\
& \stackrel{(a)}{\leq} |\mathcal{H}_{X|Z} \backslash  \mathcal{V}_{X|Z}| \\
& \stackrel{(b)}{=} |\mathcal{H}_{X|Z} | -|  \mathcal{V}_{X|Z}|,
\end{align*}
where $(a)$ holds because $\mathcal{H}_{X|Y} \subset \mathcal{H}_{X|Z}$ since we have assumed $X \to Y \to Z$, $(b)$ holds because ${\mathcal{V}}_{X|Z} \subset \mathcal{H}_{X|Z}$. 

We conclude by Lemma~\ref{lemnewcard} and \cite{Arikan10} that $|F'_1| = o(N)$.

\subsection{Proof of Proposition \ref{ex1}} \label{App_Ex1}
\subsubsection{Polar Coding Scheme}
 Let $n \in \mathbb{N}$ and $N \triangleq 2^n$. 
%
We set $U^{1:N} \triangleq X^{1:N} G_N$. We define for $\delta_N \triangleq 2^{-N^{\beta}}$, $\beta \in ]0, 1/2[$, the following set
\begin{align*}
{\mathcal{V}}_{X|Z} & \triangleq \left\{ i\in \llbracket 1,N \rrbracket : H \left(U^{i} | U^{1:i-1}Z^{1:N}\right) \geq 1- \delta_N \right\}.
\end{align*}

Alice and Bob define the key as $K \triangleq U^{1:N}[{\mathcal{V}}_{X|Z}]$.

\subsubsection{Scheme analysis}
By Lemma \ref{lemnewcard}, we have a key rate that satisfies
\begin{equation*}
\lim_{N \rightarrow + \infty} |{\mathcal{V}}_{X|Z}| / N = {H}(X|Z).
\end{equation*}
Moreover, we also have secrecy and key uniformity
\begin{align*}
\mathbf{L}(\mathcal{S}_N) + \mathbf{U}(\mathcal{S}_N) 
&=  I\left(K;Z^{1:N}\right) +  |K| - H(K) \\
&= |K| - H\left(K|Z^{1:N}\right) \\
&= |{\mathcal{V}}_{X|Z}| - H\left(U^{1:N}[{\mathcal{V}}_{X|Z}]|Z^{1:N}\right) \\
& \stackrel{(a)}{\leq} |{\mathcal{V}}_{X|Z}| - \sum_{i\in {\mathcal{V}}_{X|Z}} H(U^i|U^{1:i-1}Z^{1:N}) \\
&  \stackrel{(b)}{\leq} |{\mathcal{V}}_{X|Z}|\delta_N\\
& \leq N \delta_N,
\end{align*}
where $(a)$ holds because conditioning reduces entropy, $(b)$ holds by definition of ${\mathcal{V}}_{X|Z}$.

Finally, since $X=Y$, we have $\mathbf{P}_e(\mathcal{S}_N) = 0$.

\subsection{Proof of Proposition \ref{ex3}} \label{App_Ex2}

\subsubsection{Polar Coding Scheme}
 Let $n \in \mathbb{N}$ and $N \triangleq 2^n$. 
%
We set $U^{1:N} \triangleq X^{1:N} G_N$. We define for $\delta_N \triangleq 2^{-N^{\beta}}$, $\beta \in ]0, 1/2[$, the following sets
\begin{align*}
\mathcal{H}_{X|Y} & \triangleq \left\{ i\in \llbracket 1,N \rrbracket : H \left(U^{i} | U^{1:i-1}Y^{1:N}\right) \geq \delta_N \right\}, \\
\mathcal{H}_{X} & \triangleq \left\{  i\in \llbracket 1,N \rrbracket : H\left(U^i | U^{1:i-1}\right) \geq \delta_N \right\}.
\end{align*}

We define a secret-key generation strategy $\mathcal{S}_N$ as follows. Define the key as $K \triangleq U^{1:N} [\mathcal{H}_{X} \backslash \mathcal{H}_{X|Y}]$, and the public message as $F \triangleq U^{1:N} [\mathcal{H}_{X|Y}]$.

\subsubsection{Scheme analysis}
Observe that $\mathcal{H}_{X|Y} \subset \mathcal{H}_{X}$, because conditioning reduces entropy. We thus have by \cite{Arikan10}, a key rate equal to
\begin{align*}
\lim_{N \rightarrow +\infty} \frac{|\mathcal{H}_{X} \backslash \mathcal{H}_{X|Y}| }{N} 
& = \lim_{N \rightarrow +\infty} \frac{|\mathcal{H}_{X}| - | \mathcal{H}_{X|Y}| }{N} \\
& = H(X)-H(X|Y)\\
& = I(X;Y).	
\end{align*}
 
Note that the key $K$ is uniform because $X^{1:N}$ is uniform, that is $$\mathbf{U}_e(\mathcal{S}_N) = 0.$$
Then, by~\cite[Theorem 3]{Arikan10}, Bob can reconstruct $K$ from $F$ with an error probability satisfying 
$$\mathbf{P}_e(\mathcal{S}_N) \leq N \delta_N .$$
Finally, by the key uniformity and because $(\mathcal{H}_{X} \backslash \mathcal{H}_{X|Y}) \cap \mathcal{H}_{X|Y} = \emptyset$ , we have
\begin{align*}
H(K|F) 
&= H\left(U^{1:N} [\mathcal{H}_{X} \backslash \mathcal{H}_{X|Y}]| U^{1:N} [\mathcal{H}_{X|Y}]\right) \\
&= H\left(U^{1:N} [\mathcal{H}_{X} \backslash \mathcal{H}_{X|Y}]\right) \\
& = H(K),
\end{align*}
which means that we obtain perfect secrecy, that is 
$$
\mathbf{L}(\mathcal{S}_N) = I(K;F) = H(K) - H(K|F) = 0.
$$

\subsection{Proof of Lemma~\ref{lemnewcard}} \label{App_bat}
As in \cite{Arikan10}, for a pair of random variables $(X,Y)$ distributed according to $p_{XY}$ over $\mathcal{X} \times \mathcal{Y}$, we define the Bhattacharyya parameter as
$$
Z(X|Y) = 2 \sum_{y} p_{Y}(y) \sqrt{p_{X|Y}(0|y)p_{X|Y}(1|y)}.
$$
We will need the following counterpart of \cite[Proposition 1]{Arikan10} that is proved using the same technique as~\cite[Lemma 20]{Korada10}. 
\begin{lem} \label{lem_counterpart}
If $\left(X_{1},Y_{1} \right)$ and $\left(X_{2},Y_{2} \right)$ are two independent drawings of $(X,Y)$, then
\begin{eqnarray*}
Z\left(X_{1} \oplus X_{2} |Y^{2}_1\right) \geq \sqrt{2Z(X|Y)^2 - Z(X|Y)^4}. 
\end{eqnarray*} 
\end{lem}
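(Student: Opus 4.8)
# Proof Plan for Lemma~\ref{lem_counterpart}

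This is a standard "polarization" recursion bound, dual to Arıkan's Proposition 1 but for the Bhattacharyya parameter of the "minus" transform $X_1 \oplus X_2$ conditioned on $(Y_1, Y_2)$. The plan is to follow the same computation used in \cite[Lemma 20]{Korada10}, working directly from the definition of $Z(\cdot|\cdot)$.

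First I would write $Z(X_1 \oplus X_2 | Y_1 Y_2)$ explicitly using the definition: since $(X_1,Y_1)$ and $(X_2,Y_2)$ are independent copies, the conditional law of $W \triangleq X_1 \oplus X_2$ given $(y_1,y_2)$ is a convolution, namely $p_{W|Y_1Y_2}(w|y_1,y_2) = \sum_{x} p_{X|Y}(x|y_1) p_{X|Y}(w\oplus x|y_2)$. Plugging into $Z(W|Y_1Y_2) = 2\sum_{y_1,y_2} p_{Y}(y_1)p_{Y}(y_2)\sqrt{p_{W|Y_1Y_2}(0|y_1,y_2)\,p_{W|Y_1Y_2}(1|y_1,y_2)}$ and abbreviating $a_i \triangleq p_{X|Y}(0|y_i)$, $b_i \triangleq p_{X|Y}(1|y_i)$ (so $a_i + b_i = 1$), the summand becomes $\sqrt{(a_1a_2 + b_1b_2)(a_1b_2 + b_1a_2)}$. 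The next step is the pointwise inequality $(a_1a_2+b_1b_2)(a_1b_2+b_1a_2) \ge (a_1b_1 + a_2b_2)\cdot\big(\text{something}\big)$ — more precisely, one expands and shows $(a_1a_2+b_1b_2)(a_1b_2+b_1a_2) = a_1b_1(a_2^2+b_2^2) + a_2b_2(a_1^2+b_1^2) + \ldots$; using $a^2 + b^2 = 1 - 2ab$ this can be massaged into a form bounded below by $2a_1b_1 + 2a_2b_2 - (2a_1b_1 + 2a_2b_2)^2/2$ type expressions after applying Jensen/Cauchy–Schwarz to pull the square root outside the double sum. Concretely, writing $t_i \triangleq 2\sqrt{a_ib_i}$ (so that $\sum_i p_Y(y_i) t_i$ is related to $Z(X|Y)$), the goal reduces to the scalar inequality leading to $\sqrt{2 Z^2 - Z^4}$ with $Z = Z(X|Y)$.

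The cleanest route is: (i) show the pointwise bound $\sqrt{(a_1a_2+b_1b_2)(a_1b_2+b_1a_2)} \ge \sqrt{2 a_1 b_1 + 2a_2 b_2 - \tfrac{1}{2}(2\sqrt{a_1b_1}\,{+}\,2\sqrt{a_2b_2})^2\cdots}$ — here I would instead follow Korada–Urbanke's trick of establishing, for nonnegative reals, $(a_1a_2+b_1b_2)(a_1b_2+b_1a_2) \ge \big(\sqrt{a_1b_1}+\sqrt{a_2b_2}\big)^2 - \big(\sqrt{a_1b_1}+\sqrt{a_2b_2}\big)^4 \cdot(\text{const})$, and then (ii) apply the Cauchy–Schwarz / power-mean inequality to the outer expectation over $(Y_1,Y_2)\sim p_Y\otimes p_Y$ to convert $\mathbb{E}\big[\sqrt{f(Y_1,Y_2)}\big] \ge \sqrt{\mathbb{E}[g(Y_1)]^2 \cdot 2 - \mathbb{E}[g(Y_1)]^4}$ where $\mathbb{E}[g(Y)] = \tfrac12 Z(X|Y)$. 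The convexity/concavity direction must be checked carefully so the inequality goes the right way — that is the main obstacle.

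The main obstacle I anticipate is getting the pointwise algebraic inequality in exactly the right form so that, after averaging, the square-root function's concavity works \emph{for} us rather than against us; since we want a \emph{lower} bound on $Z(X_1\oplus X_2|Y^2)$ and Jensen would naively give an \emph{upper} bound when pulling $\sqrt{\cdot}$ out, one must first reduce to a bound of the shape $Z(X_1\oplus X_2|Y^2)^2 \ge 2 Z(X|Y)^2 - Z(X|Y)^4$ at the level of the squared quantities (where the relevant function $h(z) = 2z^2 - z^4$ can be handled), and only then take square roots. I would set this up so that everything is phrased in terms of $Z^2$ and the known identity $Z(X_1\oplus X_2|Y^2) \le Z(X|Y_1) + Z(X|Y_2) - Z(X|Y_1)Z(X|Y_2)$ is mirrored by the matching lower bound. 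Once the scalar inequality $2z^2 - z^4 \le (\text{the averaged expression})$ is in hand, the conclusion $Z(X_1\oplus X_2|Y^2) \ge \sqrt{2Z(X|Y)^2 - Z(X|Y)^4}$ follows immediately by monotonicity of the square root on $[0,1]$.
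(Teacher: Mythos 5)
You have the right starting point (the explicit convolution formula for $Z(X_1\oplus X_2|Y^2)$ and the reduction, with $t_i \triangleq 2\sqrt{p_{X|Y}(0|y_i)p_{X|Y}(1|y_i)}$, to lower-bounding $\mathbb{E}\bigl[\sqrt{t_1^2+t_2^2-t_1^2t_2^2}\bigr]$ by $\sqrt{2Z^2-Z^4}$ with $Z=\mathbb{E}[t]$), and you correctly diagnose the obstacle: the integrand is concave in the $t_i$'s (along the diagonal it is $t\sqrt{2-t^2}$, whose second derivative is negative on $]0,1]$), so Jensen applied to the original measure gives an upper bound. But your proposed escape --- passing to ``the level of the squared quantities'' $Z(X_1\oplus X_2|Y^2)^2\geq 2Z^2-Z^4$ --- does not circumvent anything: that inequality is literally equivalent to the statement since the square root is monotone, and squaring the left-hand side still leaves you with $\bigl(\mathbb{E}[\sqrt{f}]\bigr)^2$, which Cauchy--Schwarz bounds from \emph{above} by $\mathbb{E}[f]$, again the wrong direction. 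No mechanism is actually supplied for lower-bounding the expectation of a square root, and the claimed ``matching lower bound'' mirroring $Z(X_1\oplus X_2|Y^2)\le Z_1+Z_2-Z_1Z_2$ is essentially the assertion to be proved.

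The missing idea, which is what the paper (following Korada--Urbanke) uses, is a change of measure. One factors the summand as $\tfrac{1}{2}Z(X_1|Y_1)Z(X_2|Y_2)\,P_1(y_1)P_2(y_2)\sqrt{A_1(y_1)^2+A_2(y_2)^2-4}$, where $P_i(y_i)\triangleq 2\sqrt{p_{XY}(0,y_i)p_{XY}(1,y_i)}/Z(X_i|Y_i)$ is a tilted probability distribution and $A_i(y_i)=\sqrt{p_{XY}(0,y_i)/p_{XY}(1,y_i)}+\sqrt{p_{XY}(1,y_i)/p_{XY}(0,y_i)}=2/t_i\geq 2$. Under the tilted measure the integrand $x\mapsto\sqrt{x^2+a}$ (with $a\geq 0$) is \emph{convex}, so Jensen applied twice now goes the right way and yields a lower bound; substituting $\mathbb{E}_{P_i}[A_i]=2/Z(X_i|Y_i)$ gives exactly $\sqrt{2Z^2-Z^4}$. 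Without this reweighting step (or an equivalent device) your plan stalls at precisely the point you yourself flagged.
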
 
\begin{proof}
We have for any $v_1,$ $v_2 \in \mathcal{X}$, $y_1,$ $y_2 \in \mathcal{Y}$, 
\begin{multline*}p_{X_1\oplus X_2, X_2,Y_1,Y_2} (v_1,v_2,y_1,y_2) \\= p_{XY}(v_1+v_2,y_1)p_{XY}(v_2,y_2).
	\end{multline*}
 Hence,
\begin{align*}
& Z \left(X_1 \oplus X_2 |Y^2_1 \right) \\
&= 2 \sum_{y_1,y_2} \left( \sum_{v_2} p_{XY}\left(v_2,y_1\right)p_{XY}\left(v_2,y_2\right) \right.\\
& \phantom{--} \left. \cdot \sum_{v_2'} p_{XY}\left(1+v_2',y_1\right)p_{XY}\left(v_2',y_2\right) \right)^{1/2},
\end{align*}
which can be rewritten as
\begin{align*}
& Z\left(X_1 \oplus X_2 |Y^2_1 \right) \\
&= \frac{1}{2} Z \left(X_1|Y_1\right) Z\left(X_2|Y_2\right) \\
& \phantom{--}\times  \sum_{y_1,y_2} P_1 \left(y_1\right) P_2\left(y_2\right) \sqrt{A\left(y_1\right)^2 + A\left(y_2\right)^2-4},
\end{align*}
where, for $i \in \llbracket 1,2\rrbracket$, $$P_i\left(y_{i}\right) \triangleq \frac{2 \sqrt{p_{XY}\left(0,y_{i}\right)p_{XY}\left(1,y_{i}\right)}}{Z\left(X_{i}|Y_{i}\right)}$$ and $$A\left(y_{i}\right) \triangleq \sqrt{\frac{p_{XY}\left(0,y_{i}\right)}{p_{XY}\left(1,y_{i}\right)}} + \sqrt{\frac{p_{XY}\left(1,y_{i}\right)}{p_{XY}\left(0,y_{i}\right)}}.$$
As observed in \cite[Lemma 20]{Korada10}, for $i \in \llbracket 1,2\rrbracket$, $A\left(y_{i}\right)^2 \geq 4$, by the arithmetic-geometric inequality, and $x \mapsto \sqrt{x^2 +a}$ is convex for $a>0$.
Hence, since for $i \in \llbracket 1,2\rrbracket$, $P_i$ defines a probability distribution over $\mathcal{Y}$, by Jensen's inequality applied twice
\begin{align*}
&Z\left(X_1 \oplus X_2 |Y^2_1 \right) \\
& \geq \frac{1}{2} Z\left(X_1|Y_1\right) Z\left(X_2|Y_2\right) \\
& \phantom{--} \times \sqrt{ \left(\mathbb{E}_{P_1} \left[A\left(y_1\right)\right]\right)^2 + \left(\mathbb{E}_{P_2} \left[A \left(y_2\right)\right]\right)^2-4 }.
\end{align*}
We conclude by substituting $\mathbb{E}_{P_i} \left[A\left(y_{i}\right)\right] = \frac{2}{Z \left(X_{i}|Y_{i}\right)}$, for $i \in \llbracket 1,2\rrbracket$.
\end{proof}

Let $\alpha \in ] \beta , 1/2[$. Define the sets
\begin{align*}
\mathcal{F}_{X|Z}  & \triangleq \left\{ i\in \llbracket 1,N\rrbracket : {Z}\left(U^i|U^{1:i-1} Z^{1:N}\right) \geq 1- 2^{-N^{\alpha}} \right\},
\end{align*}
\begin{align*}
\mathcal{H}_{X|Z}  \triangleq  &\left\{ i\in \llbracket 1,N\rrbracket : {H}\left(U^i|U^{1:i-1} Z^{1:N}\right) \geq \delta_N \right\}.
\end{align*}
Similar to \cite[Theorem 19]{Korada10}, which relies on the result in \cite{Arikanrate}, we can show with Lemma~\ref{lem_counterpart} 
$$   \lim_{N \rightarrow + \infty}   | \mathcal{F}_{X|Z}| / N  = H(X|Z)  .$$ But, by~\cite[Proposition 2]{Arikan10}, for $N$ large enough, $ | \mathcal{F}_{X|Z}| \leq |\mathcal{V}_{X|Z}|$, hence, $  \lim_{N \rightarrow + \infty}  |\mathcal{V}_{X|Z}| /N \geq H(X|Z)$. Since we also have $  \lim_{N \rightarrow + \infty}  |\mathcal{H}_{X|Z} | /N = H(X|Z)$, by~\cite{Arikan10}, and $\mathcal{V}_{X|Z} \subset \mathcal{H}_{X|Z}$, we conclude $$\lim_{N \rightarrow + \infty}  |\mathcal{V}_{X|Z}| /N = H(X|Z).$$

\subsection{Proof of Lemma~\ref{lem_KKtilde}} \label{App_lemKKtilde}

Let $i \in \llbracket 1,k \rrbracket$, we note $q_{\mathcal{U}_{K,\widetilde{K}}}$ the uniform distribution over $\llbracket 1 , 2^{|K_i|+|\widetilde{K}_i|} \rrbracket$. We have, 
\begin{align}
&\mathbb{V} \left( p_{K_i\widetilde{K}_i}, p_{K_i}p_{\widetilde{K}_i} \right) \nonumber \\ \nonumber
& \stackrel{(a)}{\leq} \mathbb{V} \left( p_{K_i\widetilde{K}_i}, q_{\mathcal{U}_{K,\widetilde{K}}} \right) + \mathbb{V}\left( q_{\mathcal{U}_{K,\widetilde{K}}}, q_{\mathcal{U}_{K}} p_{\widetilde{K}_i} \right) \\ \nonumber
& \phantom{--}+ \mathbb{V} \left( q_{\mathcal{U}_{K}} p_{\widetilde{K}_i}, p_{K_i}p_{\widetilde{K}_i} \right) \\ \nonumber
& = \mathbb{V} \left( p_{K_i\widetilde{K}_i}, q_{\mathcal{U}_{K,\widetilde{K}}} \right) + \mathbb{V}\left( q_{\mathcal{U}_{\widetilde{K}}}, p_{\widetilde{K}_i} \right) + \mathbb{V} \left( q_{\mathcal{U}_{K}}, p_{K_i} \right) \\ \nonumber
& \leq 3 \mathbb{V} \left( p_{K_i\widetilde{K}_i}, q_{\mathcal{U}_{K,\widetilde{K}}} \right) \\ 
& \stackrel{(b)}{\leq} 3 \sqrt{2N \delta_N \log 2} , \label{eqM_beta}
\end{align}
where $(a)$ holds by the triangle inequality, $(b)$ holds by Pinsker's inequality and Lemma~\ref{lem_U1}.

Then, for $N$ large enough ($|\tilde{K}|>4$), we have
\begin{align*}
I(K_i;\widetilde{K}_i) 
&\leq \mathbb{V}(p_{K_i\widetilde{K}_i}, p_{K_i}p_{\widetilde{K}_i}) \log_2 \frac{|\tilde{K}| }{\mathbb{V}(p_{K_i\widetilde{K}_i}, p_{K_i}p_{\widetilde{K}_i})} \\
& \leq \mathbb{V}(p_{K_i\widetilde{K}_i}, p_{K_i}p_{\widetilde{K}_i}) \log_2  |\tilde{K}| \\ \nonumber
& \phantom{--} - \mathbb{V}(p_{K_i\widetilde{K}_i}, p_{K_i}p_{\widetilde{K}_i}) \log_2 \mathbb{V}(p_{K_i\widetilde{K}_i}, p_{K_i}p_{\widetilde{K}_i}) \\
& \leq \delta_{N}^*,
\end{align*}
where $ \delta_{N}^* \triangleq 3 \sqrt{2N \delta_N \log 2} \left( N- \log_2 \left( 3 \sqrt{2N \delta_N \log 2}  \right)\right)$ by~(\ref{eqM_beta}) and because $x \mapsto x \log x$ is decreasing for $x>0$ small enough.

\subsection{Proof of Lemma~\ref{lem_secrec}} \label{App_lemsecrec}

Let $i\in \llbracket 2,k \rrbracket$. By applying the chain rule of mutual information repeatedly, we obtain
\begin{align} \label{eqleak}
 \widetilde{L}_e^{1:i} = \alpha_i + \beta_i + \gamma_i,
\end{align}
where 
\begin{align*}
\alpha_i &\triangleq  I \left(K_{i} \widetilde{K}_i; M_{i} Z^{1:N}_{i}  \right), \\
\beta_i & \triangleq I \left(K_{1:i-1} ; Z^{1:N}_{i} M_{i} | K_{i} \widetilde{K}_i \right),\\ 
\gamma_i & \triangleq I \left(K_{1:i} \widetilde{K}_i; Z^{1:N}_{1:i-1} M_{1:i-1} |Z_i^{1:N} M_i \right).
\end{align*}
Then, note that
\begin{align} 
\gamma_i  \nonumber
& \leq  I\left( K_{1:i} \widetilde{K}_{i-1:i} Z _i^{1:N} M_i; Z^{1:N}_{1:i-1} M_{1:i-1} \right) \\  \nonumber
& = I\left( K_{1:i-1} \widetilde{K}_{i-1} ; Z^{1:N}_{1:i-1} M_{1:i-1} \right)\\ \nonumber
& \phantom{--}  + I\left( K_{i} \widetilde{K}_{i} Z _i^{1:N} M_i; Z^{1:N}_{1:i-1} M_{1:i-1} | K_{1:i-1} \widetilde{K}_{i-1} \right) \\
& =  \widetilde{L}_e^{1:i-1} \label{eqleak1}, 
\end{align}
where the last equality follows from $K_{i} \widetilde{K}_{i} Z _i^{1:N} M_i \rightarrow K_{1:i-1} \widetilde{K}_{i-1} \rightarrow Z^{1:N}_{1:i-1} M_{1:i-1}$. 

We also have,
\begin{align} 
\beta_i \nonumber
& \leq I \left(K_{1:i-1} ; Z^{1:N}_{i} M_{i} \widetilde{K}_{i-1}| K_{i} \widetilde{K}_i \right) \\ \nonumber
& =  I\left(K_{1:i-1} ; \widetilde{K}_{i-1}| K_{i} \widetilde{K}_i \right) \\ \nonumber
& \phantom{--}+ I\left(K_{1:i-1} ; Z^{1:N}_{i} M_{i} | K_{i} \widetilde{K}_{i-1:i} \right) \\ \nonumber
& \stackrel{(a)}{=}  I\left(K_{1:i-1} ; \widetilde{K}_{i-1}| K_{i} \widetilde{K}_i \right)  \\ \nonumber
& \stackrel{(b)}{\leq}  I\left(K_{1:i-1} ; \widetilde{K}_{i-1} \right) \\  \nonumber
& = I\left(K_{i-1} ; \widetilde{K}_{i-1} \right) + I\left(K_{1:i-2} ; \widetilde{K}_{i-1} |K_{i-1} \right) \\ \nonumber
& \leq I\left(K_{i-1} ; \widetilde{K}_{i-1} \right) + I\left(K_{1:i-2} ; \widetilde{K}_{i-1} K_{i-1} \right) \\ \nonumber 
& \leq I\left(K_{i-1} ; \widetilde{K}_{i-1} \right) + I\left(X^{1:N}_{1:i-2} ; X^{1:N}_{i-1} \right) \\  
& \stackrel{(c)}{=} I\left(K_{i-1} ; \widetilde{K}_{i-1} \right) 
\label{eqleak2} 
\end{align}
where $(a)$ holds by $K_{1:i-1} \rightarrow K_{i} \widetilde{K}_{i-1:i} \rightarrow Z^{1:N}_{i} M_{i} $, $(b)$ holds by $K_{1:i-1} \rightarrow \widetilde{K}_{i-1} \rightarrow K_{i} \widetilde{K}_i$, $(c)$ holds by independence between $X^{1:N}_{1:i-2}$ and $X^{1:N}_{i-1}$. 

Finally, we conclude combining (\ref{eqleak}), (\ref{eqleak1}), and (\ref{eqleak2}).
\section{Proofs for Model 2 in Section~\ref{sec_model2}} 
\subsection{Proof of Corollary~\ref{Cor_2}} \label{App_cor2}
We perform the same encoding as in Algorithm~\ref{alg:encoding_2} for Block~1 with $\mathcal{A}_{UYZ} = \emptyset$. Note that $C_{\textup{WSK}}(R_p)$ is obtained when $U$ is uniformly distributed by \cite[Prop. 5.3]{Chou12b} since $X$ is uniform and the tests-channel $p_{Y|X}$ and $p_{Z|X}$ are symmetric. Hence, the rate $R_1$ of randomness to perform successive cancellation encoding can be set equal to zero by~\cite{Korada10}.
We also have
\begin{align*}
|F'_1| 
& = |(\mathcal{H}_{U|Y} \backslash {\mathcal{V}}_{U|X}) \backslash  \mathcal{V}_{U|Z} | \\
& \stackrel{(a)}{\leq}|\mathcal{H}_{U|Z}  \backslash  \mathcal{V}_{U|Z} | \\
& \stackrel{(b)}{=} |\mathcal{H}_{U|Z} | -|  \mathcal{V}_{U|Z}|,
\end{align*}
where $(a)$ holds because $\mathcal{H}_{U|Y} \subset \mathcal{H}_{U|Z}$ since we have assumed $X \to Y \to Z$, $(b)$ holds because ${\mathcal{V}}_{U|Z} \subset \mathcal{H}_{U|Z}$. 

We conclude by Lemma~\ref{lemnewcard} and \cite{Arikan10} that $|F'_1| = o(N)$.

\subsection{Proof of Lemma \ref{lemDivprob}} \label{App_lemdiv}
Using the notation of \cite{Cover91} for conditional relative entropy, we have for $i\in \llbracket 1, k \rrbracket$, $p_{U_i^{1:N}X_i^{1:N}} = p_{U^{1:N}X^{1:N}} $ and
\begin{align*}
&\mathbb{D}(p_{X^{1:N}U^{1:N}} || \widetilde{p}_{X_i^{1:N}U_i^{1:N}}) \\
& \stackrel{(a)}{=} \mathbb{D}(p_{X^{1:N}V^{1:N}} || \widetilde{p}_{X_i^{1:N}V_i^{1:N}}) \\
& \stackrel{(b)}{=} \mathbb{D}(p_{V^{1:N}|X^{1:N}} || \widetilde{p}_{V_i^{1:N}|X^{1:N}}) \\
& \stackrel{(c)}{=} \sum_{j=1}^{N} \mathbb{D}(p_{V^{j}|V^{1:j-1}X^{1:N}} || \widetilde{p}_{V^j_i|V_i^{1:j-1}X^{1:N}})\\
& \stackrel{(d)}{=} \sum_{j\in \mathcal{V}_{U|X}} \sum_{j\in \mathcal{H}_{U}^c}    \mathbb{D}(p_{V^{j}|V^{1:j-1}X^{1:N}} || \widetilde{p}_{V^j_i|V^{1:j-1}_iX^{1:N}})\\
& \stackrel{(e)}{=} \sum_{j\in \mathcal{V}_{U|X}} ( 1 -H(V^{j}|V^{1:j-1}X^{1:N}) ) \\ \nonumber
& \phantom{--}+  \sum_{j\in \mathcal{H}_{U}^c}    (H(V^{j}|V^{1:j-1}) - H(V^j|V^{1:j-1}X^{1:N}) )\\
& \leq |\mathcal{V}_{U|X}| \delta_N +  \sum_{j\in \mathcal{H}_{U}^c}    H(V^{j}|V^{1:j-1}) \\
& \leq |\mathcal{V}_{U|X}| \delta_N +  | \mathcal{H}_{U}^c|   \delta_N \\
&\leq N \delta_N,
\end{align*}
where $(a)$ holds by invertibility of $G_n$, $(b)$ and $(c)$ hold by the chain rule for divergence, $(d)$ and $(e)$ hold by~(\ref{eq_VQ_def}) and by uniformity of the components of $\widetilde{V}_i^{1:N}$ in $\mathcal{V}_{U|X}$.

\subsection{Proof of Lemma \ref{lem_U2_Div}} \label{App_lem_U2_Div}

We have by \cite[Lemma 2.7]{bookCsizar}
\begin{align*}
&|K_i| + |\widetilde{K}_i|  - H(K_i \widetilde{K}_i ) \\
& \leq \mathbb{V} (p_{K_i \widetilde{K}_i}, q_{\mathcal{U}_{K,\widetilde{K}}}) \log_2 \frac{ |K_i|+|\widetilde{K}_i|} {\mathbb{V} (p_{K_i\widetilde{K}_i}, q_{\mathcal{U}_{K,\widetilde{K}}})} \\
& \leq N \mathbb{V} (p_{K_i \widetilde{K}_i}, q_{\mathcal{U}_{K,\widetilde{K}}}) \\ \nonumber
& \phantom{--} -  \mathbb{V} (p_{K_i \widetilde{K}_i}, q_{\mathcal{U}_{K,\widetilde{K}}}) \log_2  {\mathbb{V} (p_{K_i\widetilde{K}_i}, q_{\mathcal{U}_{K,\widetilde{K}}})} \\
& \leq 2 \sqrt{2 \log2} \sqrt{N \delta_N} ( N - \log_2  (2 \sqrt{2 \log2} \sqrt{N \delta_N})),
\end{align*}
where the last inequality holds for $N$ large enough by Lemma~\ref{lem_U2} and because $x \mapsto x \log x$ is decreasing for $x>0$ small enough.

\subsection{Proof of Lemma \ref{lem_KKtilde2}} \label{App_lem_KKtilde2}
We only prove the first inequality, the other is obtained similarly.
Let $i \in \llbracket 1,k \rrbracket$. We have, 
\begin{align}
&\mathbb{V} \left( p_{K_i\widetilde{K}_iR_1}, p_{K_i}p_{\widetilde{K}_iR_1} \right) \nonumber\\ \nonumber
& \stackrel{(a)}{\leq} \mathbb{V} \left( p_{K_i\widetilde{K}_iR_1}, q_{\mathcal{U}_{K, \widetilde{K}, R}} \right) + \mathbb{V}\left( q_{\mathcal{U}_{K, \widetilde{K}, R}}, q_{\mathcal{U}_{ {K}}}p_{\widetilde{K}_iR_1} \right) \\ \nonumber
& \phantom{--}+ \mathbb{V} \left( q_{\mathcal{U}_{K}}p_{\widetilde{K}_iR_1}, p_{K_i}p_{\widetilde{K}_iR_1} \right) \\ \nonumber
& = \mathbb{V} \left( p_{K_i\widetilde{K}_iR_1}, q_{\mathcal{U}_{K, \widetilde{K}, R}} \right) + \mathbb{V}\left( q_{\mathcal{U}_{\widetilde{K}, R}}, p_{\widetilde{K}_iR_1} \right)\\ \nonumber
& \phantom{--} + \mathbb{V} \left( q_{\mathcal{U}_{K}}, p_{K_i} \right) \displaybreak[0]\\ \nonumber
& \leq 3 \mathbb{V} \left( p_{K_i\widetilde{K}_iR_1}, q_{\mathcal{U}_{K, \widetilde{K}, R}} \right) \\ 
& \stackrel{(b)}{\leq} 6\sqrt{2 \log2} \sqrt{N \delta_N}, \label{eqM_beta2}
\end{align}
where $(a)$ holds by the triangle inequality, $(b)$ holds by Pinsker's inequality and Lemma~\ref{lem_U2}.

Then, for $N$ large enough ($|{K}|>4$), we have by \cite{Csiszar96}
\begin{align*}
&I(K_i;\widetilde{K}_iR_1) \\
&\leq \mathbb{V}(p_{K_i\widetilde{K}_iR_1}, p_{K_i}p_{\widetilde{K}_iR_1}) \log_2 \frac{|{K}| }{\mathbb{V}(p_{K_i\widetilde{K}_iR_1}, p_{K_i}p_{\widetilde{K}_iR_1})} \\
& \leq N \mathbb{V}(p_{K_i\widetilde{K}_iR_1}, p_{K_i}p_{\widetilde{K}_iR_1})  \\ \nonumber
& \phantom{--}- \mathbb{V}(p_{K_i\widetilde{K}_iR_1}, p_{K_i}p_{\widetilde{K}_iR_1}) \log_2 \mathbb{V}(p_{K_i\widetilde{K}_iR_1}, p_{K_i}p_{\widetilde{K}_iR_1}) \\
& \leq \delta_{N}^{(2)},
\end{align*}
where $ \delta_{N}^{(2)} \triangleq 6\sqrt{2 \log2} \sqrt{N \delta_N} (N - \log_2( 6\sqrt{2 \log2} \sqrt{N \delta_N})) $ by (\ref{eqM_beta2}) and because $x \mapsto x \log x$ is decreasing for $x>0$ small enough.

\subsection{Proof of Lemma \ref{lem_secre_cra}} \label{App_lem_secre_cra}

We have for $i \in \llbracket 1 , k \rrbracket$, $p_{V_i^{1:N}X_i^{1:N}Z_i^{1:N}} = p_{V^{1:N}X^{1:N}Z^{1:N}} $ and
\begin{align}
& \mathbb{V}(\widetilde{p}_{V_i^{1:N}[\mathcal{V}_{U|Z}] Z_i^{1:N}},{p}_{V^{1:N}[\mathcal{V}_{U|Z}] Z^{1:N}}) \nonumber \\  \nonumber
& \leq \mathbb{V}(\widetilde{p}_{V_i^{1:N}X_i^{1:N} Z_i^{1:N}},{p}_{V^{1:N} X^{1:N} Z^{1:N}}) \\ \nonumber
& = \mathbb{V}(\widetilde{p}_{ Z_i^{1:N} | V_i^{1:N} X_i^{1:N}} \widetilde{p}_{  V_i^{1:N} X_i^{1:N}},{p}_{Z^{1:N} | V^{1:N} X^{1:N} } {p}_{  V^{1:N} X^{1:N}}) \\ \nonumber
& = \mathbb{V}(\widetilde{p}_{ Z_i^{1:N} | X_i^{1:N}} \widetilde{p}_{ V_i^{1:N} X_i^{1:N}},{p}_{Z^{1:N} |  X^{1:N} } {p}_{  V^{1:N} X^{1:N}}) \\ \nonumber
& = \mathbb{V}( \widetilde{p}_{ V_i^{1:N} X_i^{1:N}}, {p}_{  V^{1:N} X^{1:N}}) \\
& \leq \sqrt{2 \log2} \sqrt{N \delta_N}, \label{eq_sec_int1}
\end{align}
where the last inequality follows by Lemma \ref{lemDivprob}, and 
\begin{align}
&  \mathbb{V}({p}_{V^{1:N}[\mathcal{V}_{U|Z}] Z^{1:N}},\widetilde{p}_{V_i^{1:N}[\mathcal{V}_{U|Z}]} p_{ Z^{1:N}})   \nonumber\\  \nonumber
& \leq \mathbb{V}({p}_{V^{1:N}[\mathcal{V}_{U|Z}] Z^{1:N}},{p}_{V^{1:N}[\mathcal{V}_{U|Z}]} p_{ Z^{1:N}}) \\ \nonumber
& \phantom{--}+ \mathbb{V}({p}_{V^{1:N}[\mathcal{V}_{U|Z}]} p_{ Z^{1:N}},\widetilde{p}_{V_i^{1:N}[\mathcal{V}_{U|Z}]} p_{ Z^{1:N}}) \\  \nonumber
& \stackrel{(a)}{\leq} \mathbb{V}({p}_{V^{1:N}[\mathcal{V}_{U|Z}] Z^{1:N}},{p}_{V^{1:N}[\mathcal{V}_{U|Z}]} p_{ Z^{1:N}})  + \sqrt{2 \log2} \sqrt{N \delta_N} \\  \nonumber
& \stackrel{(b)}{\leq} \sqrt{2 \log2} \sqrt{ \mathbb{D}({p}_{V^{1:N}[\mathcal{V}_{U|Z}] Z^{1:N}}||{p}_{V^{1:N}[\mathcal{V}_{U|Z}]} p_{ Z^{1:N}})} \\ \nonumber
& \phantom{--} + \sqrt{2 \log2} \sqrt{N \delta_N} \\  \nonumber
& = \sqrt{2 \log2} \sqrt{ I( V^{1:N}[\mathcal{V}_{U|Z}] ;Z^{1:N} )}  + \sqrt{2 \log2} \sqrt{N \delta_N} \\  \nonumber
& \leq \sqrt{2 \log2} \sqrt{ |\mathcal{V}_{U|Z}|- H( V^{1:N}[\mathcal{V}_{U|Z}] | Z^{1:N} )} \\ \nonumber
& \phantom{--} + \sqrt{2 \log2} \sqrt{N \delta_N} \\
& \stackrel{(c)}{\leq} 2 \sqrt{2 \log2} \sqrt{ N \delta_N}, \label{eq_sec_int2}
\end{align}
where $(a)$ holds by (\ref{eq_sec_int1}), $(b)$ holds by Pinsker's inequality, $(c)$ holds because similar to the proof of Lemma~\ref{lem_sec_block} $|\mathcal{V}_{U|Z}|- H( V^{1:N}[\mathcal{V}_{U|Z}] | Z^{1:N} ) \leq N \delta_N$.

Hence, by (\ref{eq_sec_int1}) and (\ref{eq_sec_int2})
\begin{align}
&\mathbb{V}(\widetilde{p}_{V_i^{1:N}[\mathcal{V}_{U|Z}] Z_i^{1:N}},\widetilde{p}_{V_i^{1:N}[\mathcal{V}_{U|Z}]} p_{ Z^{1:N}}) \nonumber \\ \nonumber
& \leq \mathbb{V}(\widetilde{p}_{V_i^{1:N}[\mathcal{V}_{U|Z}] Z_i^{1:N}},{p}_{V^{1:N}[\mathcal{V}_{U|Z}] Z^{1:N}})
\\ \nonumber
& \phantom{--}+  \mathbb{V}({p}_{V^{1:N}[\mathcal{V}_{U|Z}] Z^{1:N}},\widetilde{p}_{V_i^{1:N}[\mathcal{V}_{U|Z}]} p_{ Z^{1:N}}) \\
& \leq 3 \sqrt{2 \log2} \sqrt{ N \delta_N}, \label{eq_sec_int3}
\end{align}
and for $N$ large enough by \cite{Csiszar96}
\begin{align*}
& I(\widetilde{V}_i^{1:N}[\mathcal{V}_{U|Z}] ;Z_i^{1:N}) \\
& \leq   \mathbb{V}(\widetilde{p}_{V_i^{1:N}[\mathcal{V}_{U|Z}] Z_i^{1:N}},\widetilde{p}_{V_i^{1:N}[\mathcal{V}_{U|Z}]} p_{ Z_i^{1:N}}) \\
& \phantom{--} \times \log_2 \frac{|\mathcal{V}_{U|Z}|}{\mathbb{V}(\widetilde{p}_{V_i^{1:N}[\mathcal{V}_{U|Z}] Z_i^{1:N}},\widetilde{p}_{V_i^{1:N}[\mathcal{V}_{U|Z}]} p_{ Z_i^{1:N}})} \\
& \leq  3 \sqrt{2 \log2} \sqrt{ N \delta_N} ( N - \log_2 (3 \sqrt{2 \log2} \sqrt{ N \delta_N} ) ).
\end{align*}

\subsection{Proof of Lemma \ref{lem_secrec2}} \label{App_lem_secrec2}

Let $i\in \llbracket 2,k \rrbracket$. By applying the chain rule of mutual information repeatedly, we obtain
\begin{align} \label{eqleak_b}
 \widetilde{L}_e^{1:i} = \alpha_i + \beta_i + \gamma_i,
\end{align}
where 
\begin{align*}
\alpha_i &\triangleq  I \left(K_{i} \widetilde{K}_i; R_1 M_{i}  Z^{1:N}_{i}  \right), \\
\beta_i & \triangleq I \left(K_{1:i-1} ; R_1 M_{i}  Z^{1:N}_{i}  | K_{i} \widetilde{K}_i \right),\\ 
\gamma_i & \triangleq I \left(K_{1:i} \widetilde{K}_i;  M_{1:i-1}  Z^{1:N}_{1:i-1} | R_1  M_i Z_i^{1:N} \right).
\end{align*}
Then, note that
\begin{align} 
\gamma_i  \nonumber
& \stackrel{(a)}{\leq}  I\left( K_{1:i} \widetilde{K}_{i-1:i} M_i Z _i^{1:N} ;   M_{1:i-1} Z^{1:N}_{1:i-1}|R_1 \right) \\  \nonumber
& = I\left( K_{1:i-1} \widetilde{K}_{i-1} ;  M_{1:i-1} Z^{1:N}_{1:i-1} |R_1\right)\\ \nonumber
& \phantom{--}  + I\left( K_{i} \widetilde{K}_{i} Z _i^{1:N} M_i;   M_{1:i-1} Z^{1:N}_{1:i-1} | R_1 K_{1:i-1} \widetilde{K}_{i-1} \right) \\ \nonumber
& \stackrel{(b)}{=} I\left( K_{1:i-1} \widetilde{K}_{i-1} ;  M_{1:i-1} Z^{1:N}_{1:i-1} |R_1\right)   \\ \nonumber
& \stackrel{(c)}{\leq} I\left( K_{1:i-1} \widetilde{K}_{i-1} ;R_1 M_{1:i-1} Z^{1:N}_{1:i-1}  \right) \\ 
& = \widetilde{L}_e^{1:i-1} \label{eqleak_bb}, 
\end{align}
where $(a)$ and $(c)$ hold by the chain rule and positivity of mutual information, $(b)$ holds because $K_{i} \widetilde{K}_{i} Z _i^{1:N} M_i \to  R_1 K_{1:i-1} \widetilde{K}_{i-1} \to M_{1:i-1} Z^{1:N}_{1:i-1}$. 

We also have,
\begin{align} 
\beta_k \nonumber
& \stackrel{(d)}{\leq} I \left(K_{1:i-1} ;  R_1 M_{i}  Z^{1:N}_{i}  \widetilde{K}_{i-1} | K_{i} \widetilde{K}_i \right) \\ \nonumber
& =  I\left(K_{1:i-1} ; \widetilde{K}_{i-1} R_1| K_{i} \widetilde{K}_i \right)\\ \nonumber
& \phantom{--} + I\left(K_{1:i-1} ; M_{i} Z^{1:N}_{i} | K_{i} \widetilde{K}_{i-1:i} R_1 \right) \\ \nonumber
& \stackrel{(e)}{=}  I\left(K_{1:i-1} ; \widetilde{K}_{i-1} R_1| K_{i} \widetilde{K}_i \right)  \\ \nonumber
& \stackrel{(f)}{\leq}  I\left(K_{1:i-1} ; \widetilde{K}_{i-1} R_1 \right) \\  \nonumber
& = I\left(K_{1:i-1} ;  R_1 \right) +  I\left(K_{1:i-1} ;  \widetilde{K}_{i-1} |R_1 \right) \\ \nonumber
& = I\left(K_{1:i-1} ;  R_1 \right) +  I\left(K_{i-1} ; \widetilde{K}_{i-1} | R_1 \right) \\ \nonumber
& \phantom{--}+ I\left(K_{1:i-2} ;  \widetilde{K}_{i-1} |K_{i-1} R_1 \right) \\ \nonumber
& \stackrel{(g)}{=} I\left(K_{1:i-1} ;  R_1 \right) +  I\left(K_{i-1} ; \widetilde{K}_{i-1} | R_1 \right) \\ \nonumber
& \leq  I\left(K_{1:i-1} ;  R_1 \right) +  I\left(K_{i-1} ; \widetilde{K}_{i-1} R_1 \right) \\ \nonumber
& = I\left(K_{1:i-2} ;  R_1 | K_{i-1}\right) + I\left(K_{i-1} ;  R_1 \right) \\ \nonumber
& \phantom{--}+  I\left(K_{i-1} ; \widetilde{K}_{i-1}  R_1 \right) \displaybreak[0]\\ \nonumber
&  \stackrel{(h)}{\leq} I\left(K_{1:i-2} ;  R_1 \right) + I\left(K_{i-1} ;  R_1 \right) +  I\left(K_{i-1} ; \widetilde{K}_{i-1}  R_1 \right) \\ 
&  \stackrel{(i)}{\leq} \sum_{j=1}^{i-1} I\left(K_{j} ;  R_1 \right) +  I\left(K_{i-1} ; \widetilde{K}_{i-1}  R_1 \right) \label{eqleak2_b} 
\end{align}
where $(d)$ holds by the chain rule and positivity of mutual information, $(e)$ holds because $K_{1:i-1} \to  K_{i} \widetilde{K}_{i-1:i} R_1 \to M_{i} Z^{1:N}_{i}$, $(f)$ holds because $K_{1:i-1} \to \widetilde{K}_{i-1} R_1 \to K_{i} \widetilde{K}_i $, $(g)$ holds because $K_{1:i-2} \to  K_{i-1} R_1 \to  \widetilde{K}_{i-1} $, $(h)$ holds because $K_{1:i-2} \to  R_1 \to  K_{i-1}$, $(i)$ holds by induction.

Finally, we conclude combining (\ref{eqleak_b}), (\ref{eqleak_bb}), and (\ref{eqleak2_b}).

\section{Proof of Theorem \ref{th3ter}} \label{App_Th6}

\noindent{} \textit{1) Existence of $\mathcal{F}_{X_{\mathcal{M}}}$:} 
The set $\mathcal{F}_{X_{\mathcal{M}}}$ exists because we have assumed $I(X_2;X_1) \leq I(X_2;X_3)$, i.e., $H(X_2|X_1) \geq H(X_2|X_3)$. Indeed, 
\begin{align*}
&|{\mathcal{F}}_{X_2|X_1} \backslash {\mathcal{F}}_{X_2|X_3}| -|\bar{\mathcal{K}}_{X_{\mathcal{M}}}| \\
& = |{\mathcal{F}}_{X_2|X_1} \backslash {\mathcal{F}}_{X_2|X_3}| -|{\mathcal{F}}_{X_2|X_3} \backslash {\mathcal{F}}_{X_2|X_1}|\\
& =  |{\mathcal{F}}_{X_2|X_1} | - | {\mathcal{F}}_{X_2|X_3}|,
\end{align*}
and $\lim_{N \to \infty} (|{\mathcal{F}}_{X_2|X_1} | - | {\mathcal{F}}_{X_2|X_3}|)/N = H(X_2|X_1) - H(X_2|X_3)$ by Lemma \ref{lemnewcard} and~\cite{Arikan10}.\\
\textit{2) Key Rate:} The key rate is 

\begin{align*}
& \frac{|\mathcal{K}_{X_{\mathcal{M}}} | + (k-1) |\mathcal{K}_{X_{\mathcal{M}}}  \cup \mathcal{F}_{X_{\mathcal{M}}}| }{kN} \\
& \stackrel{(a)}{=} \frac{|\mathcal{K}_{X_{\mathcal{M}}} | + (k-1) (|\mathcal{K}_{X_{\mathcal{M}}} |  +| \mathcal{F}_{X_{\mathcal{M}}}|) }{kN}\\
& = \frac{ |\mathcal{K}_{X_{\mathcal{M}}} | +| \mathcal{F}_{X_{\mathcal{M}}}| }{N} - \frac{   | \mathcal{F}_{X_{\mathcal{M}}}| }{kN} \displaybreak[0]\\
& = \frac{ |\mathcal{K}_{X_{\mathcal{M}}} |  +| \bar{\mathcal{K}}_{X_{\mathcal{M}}}| }{N} - \frac{ | \bar{\mathcal{K}}_{X_{\mathcal{M}}}| }{kN} \displaybreak[0] \\
& = \frac{ |\mathcal{V}_{X_2} \backslash \mathcal{H}_{X_2|X_1} |  }{N} - \frac{ | \bar{\mathcal{K}}_{X_{\mathcal{M}}}| }{kN} \displaybreak[0]\\
& \geq \frac{ |\mathcal{V}_{X_2} \backslash \mathcal{H}_{X_2|X_1} | }{N} - \frac{  |\mathcal{V}_{X_2} \backslash \mathcal{H}_{X_2|X_1} | }{kN}\\
& \xrightarrow{ N \to \infty} I(X_1;X_2) \left(1 - \frac{1}{k}\right)\\
& \xrightarrow{ k \to \infty} I(X_1;X_2) ,
\end{align*}
where $(a)$ holds because $\mathcal{F}_{X_{\mathcal{M}}} \cap \mathcal{K}_{X_{\mathcal{M}}} = \emptyset$, and where we have used Lemma \ref{lemnewcard} and~\cite{Arikan10} for the first limit.

\textit{3) Reliability: }
We do not detail the reliability analysis for Terminal $1$, since it is similar to the analysis for Terminal $3$.

Let $i \in \llbracket 1,k-1\rrbracket$. Note that Terminal $3$ forms an accurate estimate of $U_{i}^{1:N} [ \mathcal{H}_{X_2|X_3} ]$ only when $U_{i+1}^{1:N}$ is correctly reconstructed (see Remark \ref{rem:justif_decoding_3}). We note $\widehat{U}_{i}^{1:N} [ \mathcal{H}_{X_2|X_3} ]$ the estimate of $U_{i}^{1:N} [ \mathcal{H}_{X_2|X_3} ]$ formed by Terminal $3$ and define $\mathcal{E}_{i} \triangleq \{ \widehat{U}_{i}^{1:N} [ \mathcal{H}_{X_2|X_3} ] \neq  U_{i}^{1:N} [ \mathcal{H}_{X_2|X_3} ] \}$.

Hence, 
\begin{align*} 
&\mathbb{P} [K_i \neq \widehat{K}_i]\\
& \leq  \mathbb{P} [U_i^{1:N} \neq \widehat{U}_i^{1:N}]\\
& = \mathbb{P} [U_i^{1:N} \neq \widehat{U}_i^{1:N}| \mathcal{E}_{i}^c] \mathbb{P} [\mathcal{E}_{i}^c] + \mathbb{P} [U_i^{1:N} \neq \widehat{U}_i^{1:N}| \mathcal{E}_{i}] \mathbb{P} [\mathcal{E}_{i}]\\
& \leq \mathbb{P} [U_i^{1:N} \neq \widehat{U}_i^{1:N} | \mathcal{E}_{i}^c]  +  \mathbb{P} [\mathcal{E}_{i}]\\
& \leq \mathbb{P} [U_i^{1:N} \neq \widehat{U}_i^{1:N} | \mathcal{E}_{i}^c]  +  \mathbb{P} [U_{i+1}^{1:N} \neq \widehat{U}_{i+1}^{1:N}]\\
& \stackrel{(a)}{\leq} N \delta_N  +  \mathbb{P} [U_{i+1}^{1:N} \neq \widehat{U}_{i+1}^{1:N}]\\
& \stackrel{(b)}{\leq} (k -i ) N \delta_N  +  \mathbb{P} [U_{k}^{1:N} \neq \widehat{U}_{k}^{1:N}]\\
& \stackrel{(c)}{\leq} (k-i+1) N \delta_N  ,
\end{align*}
where $(a)$ holds because by \cite{Arikan10}, Terminal $3$ can reconstruct $U^{1:N}_i$ from $U_i^{1:N}[\mathcal{H}_{X_2|X_3}]$ and $(X_3)_i^{1:N}$ with error probability  less than $N\delta_N$, $(b)$ holds by recurrence, $(c)$ holds similarly as previous equations.

Then, by the union bound,
\begin{align*} 
\mathbf{P}_e(\mathcal{S}_N) \nonumber
&= \mathbb{P} [K_{1:k} \neq \widehat{K}_{1:k}] \nonumber \\
& \leq \sum_{i=1}^k \mathbb{P} [K_i \neq \widehat{K}_i] \nonumber\\\nonumber
& \leq \sum_{i=1}^{k} (k-i+1) N \delta_N \\
& = \frac{k(k+1)}{2} N\delta_N. 
\end{align*}

\textit{4) Key Uniformity: }
Similarly to Lemma \ref{lem_U1} we have the key uniformity for each block.%
\begin{lem} \label{lem_Umod3dem}
Uniformity of $[K_i, \bar{K}_i]$ holds for each block, where $i \in \llbracket 1,k-1 \rrbracket$. Specifically,
\begin{align*}
|K_i| + | \bar{K}_i|  - H(K_i  \bar{K}_i ) \leq N \delta_N.
\end{align*}
Hence, we also have
\begin{align*}
|\bar{K}_i| - H( \bar{K}_i) \leq N \delta_N,\\
|K_i| - H( K_i) \leq N \delta_N.
\end{align*}

\end{lem}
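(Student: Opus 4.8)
\noindent The plan is to follow the proof of Lemma~\ref{lem_U1} almost verbatim; the only preliminary work is to identify, for each $i\in\llbracket 1,k-1\rrbracket$, the index set on which the concatenation $[K_i,\bar{K}_i]$ lives, and to check that it is a subset of $\mathcal{V}_{X_2}$ whose cardinality equals $|K_i|+|\bar{K}_i|$.

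First I would record the set relations built into the definitions of $\mathcal{K}_{X_{\mathcal{M}}}$, $\bar{\mathcal{K}}_{X_{\mathcal{M}}}$ and $\mathcal{F}_{X_{\mathcal{M}}}$. From $\mathcal{K}_{X_{\mathcal{M}}}\cap\bar{\mathcal{K}}_{X_{\mathcal{M}}}=\emptyset$, $\mathcal{K}_{X_{\mathcal{M}}}\cup\bar{\mathcal{K}}_{X_{\mathcal{M}}}=\mathcal{V}_{X_2}\backslash\mathcal{H}_{X_2|X_1}$, and $\mathcal{F}_{X_{\mathcal{M}}}\subseteq\mathcal{F}_{X_2|X_1}=\mathcal{H}_{X_2|X_1}\cap\mathcal{V}_{X_2}$, one gets that $\mathcal{F}_{X_{\mathcal{M}}}$ is disjoint from $\mathcal{V}_{X_2}\backslash\mathcal{H}_{X_2|X_1}$, hence from both $\mathcal{K}_{X_{\mathcal{M}}}$ and $\bar{\mathcal{K}}_{X_{\mathcal{M}}}$; moreover all three sets are contained in $\mathcal{V}_{X_2}$. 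Consequently, setting $\mathcal{S}_1\triangleq\mathcal{K}_{X_{\mathcal{M}}}\cup\bar{\mathcal{K}}_{X_{\mathcal{M}}}$ and $\mathcal{S}_i\triangleq\mathcal{K}_{X_{\mathcal{M}}}\cup\mathcal{F}_{X_{\mathcal{M}}}\cup\bar{\mathcal{K}}_{X_{\mathcal{M}}}$ for $i\in\llbracket 2,k-1\rrbracket$, the pair $[K_i,\bar{K}_i]$ equals $U_i^{1:N}[\mathcal{S}_i]$ with $\mathcal{S}_i\subseteq\mathcal{V}_{X_2}$ and $|\mathcal{S}_i|=|K_i|+|\bar{K}_i|$, the latter because the union defining $\mathcal{S}_i$ is disjoint.

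Then I would reproduce the chain of inequalities of Lemma~\ref{lem_U1}:
\begin{align*}
|K_i|+|\bar{K}_i|-H(K_i\bar{K}_i)
&= |\mathcal{S}_i| - H(U_i^{1:N}[\mathcal{S}_i]) \\
&\leq |\mathcal{S}_i| - \sum_{j\in\mathcal{S}_i} H(U_i^{j}|U_i^{1:j-1}) \\
&\leq |\mathcal{S}_i| - \sum_{j\in\mathcal{S}_i}(1-\delta_N) \\
&= |\mathcal{S}_i|\delta_N \leq N\delta_N,
\end{align*}
where the first inequality is the chain rule followed by the fact that conditioning reduces entropy, and the second uses $\mathcal{S}_i\subseteq\mathcal{V}_{X_2}$ together with the definition of $\mathcal{V}_{X_2}$. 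Finally, the two ``in particular'' bounds follow as at the end of the proof of Lemma~\ref{lem_U1}: since $|K_i|-H(K_i|\bar{K}_i)\geq 0$ one has $|\bar{K}_i|-H(\bar{K}_i)\leq |K_i|+|\bar{K}_i|-H(K_i\bar{K}_i)\leq N\delta_N$, and symmetrically $|K_i|-H(K_i)\leq N\delta_N$.

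I do not expect a genuine obstacle here; the only step requiring care is the set bookkeeping of the first paragraph — verifying that $\mathcal{K}_{X_{\mathcal{M}}}$, $\bar{\mathcal{K}}_{X_{\mathcal{M}}}$ and $\mathcal{F}_{X_{\mathcal{M}}}$ are pairwise disjoint and all lie in $\mathcal{V}_{X_2}$ — so that $[K_i,\bar{K}_i]$ really is a subvector of $U_i^{1:N}[\mathcal{V}_{X_2}]$ of length $|K_i|+|\bar{K}_i|$, after which the entropy estimate is identical to the one already carried out for Model~1.
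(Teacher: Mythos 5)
Your proof is correct and is exactly the argument the paper intends: the paper gives no explicit proof of this lemma, stating only that it follows ``similarly to Lemma~\ref{lem_U1},'' and your write-up is the expected adaptation, with the set bookkeeping (pairwise disjointness of $\mathcal{K}_{X_{\mathcal{M}}}$, $\mathcal{F}_{X_{\mathcal{M}}}$, $\bar{\mathcal{K}}_{X_{\mathcal{M}}}$ and their inclusion in $\mathcal{V}_{X_2}$) correctly verified so that $[K_i,\bar K_i]$ is a subvector of $U_i^{1:N}[\mathcal{V}_{X_2}]$ of length $|K_i|+|\bar K_i|$, followed by the same chain-rule/conditioning estimate and the same derivation of the two ``in particular'' bounds.
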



%
%
%
The global key $K_{1:k}$ is asymptotically uniform as, similarly to the proof of Theorem \ref{Th1} in Section~\ref{Sec_unif_model1},
%
we have
\begin{align*} 
\textbf{\textup{U}}(\mathcal{S}_N)= |K_{1:k}|  - H(K_{1:k}) \leq k N \delta_N. 
\end{align*}

\textit{5) Strong Secrecy: }
Similar to Lemma~\ref{lem_sec_block}, we obtain the following result showing that secrecy holds for each block.

\begin{lem} \label{lem_sec_blockmod3}
Let $i \in \llbracket 1, k \rrbracket$. For each Block $i$, secrecy of $[K_i, \bar{K}_i]$ holds. Specifically, we have 
\begin{align*}
I\left(K_{i}\bar{K}_{i}\widetilde{K}_{i}; M_{i} \right) \leq 2N \delta_N.
\end{align*}
\end{lem}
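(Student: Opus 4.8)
The plan is to mirror the proof of Lemma~\ref{lem_sec_block}, exploiting two structural facts about the encoding in Algorithm~\ref{alg:encoding_3bis}. First, the index sets carrying the key together with the \emph{unprotected} part of the public message cover $\mathcal{V}_{X_2}$ exactly: from the definitions collected in Section~\ref{SecExt} one checks that $\mathcal{K}_{X_{\mathcal{M}}} \cup \mathcal{F}_{X_{\mathcal{M}}} \cup \bar{\mathcal{K}}_{X_{\mathcal{M}}} \cup (\mathcal{F}_{X_2|X_1}\backslash\mathcal{F}_{X_{\mathcal{M}}}) = (\mathcal{V}_{X_2}\backslash\mathcal{H}_{X_2|X_1}) \cup \mathcal{F}_{X_2|X_1} = \mathcal{V}_{X_2}$, so that $[K_i,\bar{K}_i,F_i^{(2)}]$ is, up to the coordinates $\mathcal{F}_{X_{\mathcal{M}}}$ shared between $K_i$ and $F_i^{(1)}$, precisely $U_i^{1:N}[\mathcal{V}_{X_2}]$, and in particular $|K_i|+|\bar{K}_i|+|F_i^{(2)}| = |\mathcal{V}_{X_2}|$. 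Second, the remaining coordinates of $M_i$ are one-time padded with material produced in \emph{other} blocks or with a fresh seed: $F_i'\oplus\widetilde{K}_i$ uses the independent uniform seed $\widetilde{K}_i$, while $F_i^{(1)}\oplus\bar{K}_{i-1}$ uses $\bar{K}_{i-1}=U_{i-1}^{1:N}[\bar{\mathcal{K}}_{X_{\mathcal{M}}}]$, a deterministic function of $(X_2)_{i-1}^{1:N}$, hence independent of $(X_2)_i^{1:N}$ since the per-block observations are i.i.d.

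With these facts in hand, for a generic block $i\in\llbracket 2,k-1\rrbracket$ I would argue in three steps. \emph{(Step 1)} Peel off the seed-padded coordinates: since $\widetilde{K}_i$ is uniform and independent of $(U_i^{1:N},\bar{K}_{i-1})$, the vector $F_i'\oplus\widetilde{K}_i$ is uniform conditionally on $(K_i\bar{K}_i,F_i^{(1)}\oplus\bar{K}_{i-1},F_i^{(2)})$ and also conditionally on $(F_i^{(1)}\oplus\bar{K}_{i-1},F_i^{(2)})$ alone, so it contributes nothing and $I(K_i\bar{K}_i;M_i)=I(K_i\bar{K}_i;F_i^{(1)}\oplus\bar{K}_{i-1},F_i^{(2)})$. \emph{(Step 2)} Bound the ``core'' leakage exactly as in~\eqref{eqsec1}: $I(K_i\bar{K}_i;F_i^{(2)})\leq |K_i|+|\bar{K}_i|+|F_i^{(2)}|-H(K_i\bar{K}_i F_i^{(2)}) = |\mathcal{V}_{X_2}|-H(U_i^{1:N}[\mathcal{V}_{X_2}]) \leq |\mathcal{V}_{X_2}|\delta_N\leq N\delta_N$, the last chain being the computation already carried out in the proof of Lemma~\ref{lem_U1} and in the secrecy/uniformity bound of Section~\ref{SecproofTh3} (chain rule, conditioning reduces entropy, definition of $\mathcal{V}_{X_2}$). \emph{(Step 3)} Bound the pad term $I(K_i\bar{K}_i;F_i^{(1)}\oplus\bar{K}_{i-1}\mid F_i^{(2)})$ by adjoining $F_i^{(1)}$ to the conditioning, as in step~$(e)$ of Lemma~\ref{lem_sec_block}: it is at most $I(K_i\bar{K}_i F_i^{(2)} F_i^{(1)};F_i^{(1)}\oplus\bar{K}_{i-1}) = H(F_i^{(1)}\oplus\bar{K}_{i-1})-H(\bar{K}_{i-1}\mid K_i\bar{K}_i F_i^{(2)} F_i^{(1)}) = H(F_i^{(1)}\oplus\bar{K}_{i-1})-H(\bar{K}_{i-1})\leq |\bar{K}_{i-1}|-H(\bar{K}_{i-1})\leq N\delta_N$, where the middle equality uses that $\bar{K}_{i-1}$ is independent of the $U_i^{1:N}$-measurable tuple $(K_i,\bar{K}_i,F_i^{(2)},F_i^{(1)})$ and the last inequality is Lemma~\ref{lem_Umod3dem} applied to block $i-1$. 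Adding Steps 2 and 3 yields $I(K_i\bar{K}_i;M_i)\leq 2N\delta_N$.

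It then remains to treat the boundary blocks. For $i=1$ there is no chaining pad ($M_1=[F_1,F_1'\oplus\widetilde{K}_1]$) and $[K_1,\bar{K}_1,F_1]$ already covers $\mathcal{V}_{X_2}$, so Step 3 is vacuous and the bound sharpens to $N\delta_N$. For $i=k$ the coordinates $\bar{\mathcal{K}}_{X_{\mathcal{M}}}$ are transmitted in the clear as $\bar{F}_k = U_k^{1:N}[\mathcal{F}_{X_2|X_3}\backslash\mathcal{F}_{X_2|X_1}]=U_k^{1:N}[\bar{\mathcal{K}}_{X_{\mathcal{M}}}]$, which is disjoint from the key coordinates $\mathcal{K}_{X_{\mathcal{M}}}\cup\mathcal{F}_{X_{\mathcal{M}}}$; hence $[K_k,\bar{F}_k,F_k^{(2)}]$ again covers $\mathcal{V}_{X_2}$ and the same three-step argument (with $\bar{F}_k$ adjoined to the unprotected public part in Step 2) applies verbatim. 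The only genuine work — and the part most likely to be error-prone rather than deep — is the index-set bookkeeping that establishes the exact covering of $\mathcal{V}_{X_2}$ and the cross-block independence facts; both ultimately rest on the i.i.d.\ structure of the per-block source observations $(X_2)_i^{1:N}$ and on the definitions in Section~\ref{SecExt}.
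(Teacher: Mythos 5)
Your proof is correct and follows essentially the same route the paper intends, namely mirroring the proof of Lemma~\ref{lem_sec_block}: the $N\delta_N$ core bound comes from $[K_i,\bar{K}_i]$ and the unprotected public coordinates jointly forming $U_i^{1:N}[\mathcal{V}_{X_2}]$, and the second $N\delta_N$ comes from bounding the chained pad term by $|\bar{K}_{i-1}|-H(\bar{K}_{i-1})$ via Lemma~\ref{lem_Umod3dem} and cross-block independence. Your additional observations — that the pre-shared seeds $\widetilde{K}_i$ being exactly uniform makes the $F_i'\oplus\widetilde{K}_i$ component contribute zero, and that $\bar{F}_k$ plays the role of $\bar{K}_k$ in the last block — are correct and consistent with the paper's construction.
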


\begin{proof}
We have for $i\in \llbracket 2 ,k-1 \rrbracket$
\begin{align}
& I(K_{i}\bar{K}_{i}\widetilde{K}_{i}; F_{i}^{(2)} ) \nonumber \\ \nonumber
& \stackrel{(a)}{=} I(K_{i}\bar{K}_{i}; F_{i}^{(2)} ) \nonumber \\ \nonumber
& = H( K_i \bar{K}_{i}) - H(K_{i}\bar{K}_{i} | F_{i}^{(2)} ) \\ \nonumber
& \leq |K_i| + |\bar{K}_{i}|  - H(K_{i}\bar{K}_{i} F_{i}^{(2)}  ) + H(F_{i}^{(2)}) \\ \nonumber
& \leq |K_i| + |\bar{K}_{i}|  + |F_{i}^{(2)}| - H(K_{i}\bar{K}_{i} F_{i}^{(2)}  )  \\ \nonumber
%
%
& \stackrel{(b)}{=} |(\mathcal{V}_{X_2} \backslash \mathcal{H}_{X_2|X_1}) \cup (\mathcal{V}_{X_2} \cap \mathcal{H}_{X_2|X_1} \cap \mathcal{F}_{X_{\mathcal{M}}})| \nonumber
 \\  \nonumber
& \phantom{--}- H(U_i^{1:N}[(\mathcal{V}_{X_2} \backslash \mathcal{H}_{X_2|X_1}) \cup (\mathcal{V}_{X_2} \cap \mathcal{H}_{X_2|X_1} \cap \mathcal{F}_{X_{\mathcal{M}}})] ) \\\nonumber
& \stackrel{(c)}{\leq} |\mathcal{V}_{X_2}| - \sum_{j \in \mathcal{V}_{X_2}} H(U_i^{j}|U_i^{1:j-1} ) \\ \nonumber \displaybreak[0]
& \stackrel{(d)}{\leq} |\mathcal{V}_{X_2}| - \sum_{j \in \mathcal{V}_{X_2}} (1-\delta_N) \\  \nonumber
& \leq |\mathcal{V}_{X_2}| \delta_N \displaybreak[0] \\
& \leq N \delta_N, \label{eqsec1dem} 
\end{align}
where $(a)$ holds by independence between $\widetilde{K}_i$ and all the other random variables, $(b)$ holds by definition of $K_i$, $\bar{K}_i$, $\widetilde{K}_i$, and $F_i^{(2)}$, $(c)$ holds because $(\mathcal{V}_{X_2} \backslash \mathcal{H}_{X_2|X_1}) \cup (\mathcal{V}_{X_2} \cap \mathcal{H}_{X_2|X_1} \cap \mathcal{F}_{X_{\mathcal{M}}}) \subset \mathcal{V}_{X_2}$ because conditioning reduces entropy, $(d)$ holds by definition of $\mathcal{V}_{X_2}$.

Then, we obtain for $i\in \llbracket 2 ,k-1 \rrbracket$,
\begin{align*}
& I(K_{i}\bar{K}_{i} \widetilde{K}_{i}; M_{i} ) - N \delta_N\\
& \stackrel{(a)}{=} I(K_{i} \bar{K}_{i} \widetilde{K}_{i}; F_{i}^{(2)}  (F_{i}^{(1)} \oplus \bar{K}_{i-1}) (F_{i}' \oplus \widetilde{K}_{i-1}) )- N \delta_N \\
& = I(K_{i} \bar{K}_{i} \widetilde{K}_{i};   (F_{i}^{(1)} \oplus \bar{K}_{i-1}) (F_{i}' \oplus \widetilde{K}_{i-1}) | F_{i}^{(2)} ) \\ 
& \phantom{--}+ I(K_{i} \bar{K}_{i} \widetilde{K}_{i}; F_{i}^{(2)}  ) - N \delta_N\displaybreak[0]\\
& \stackrel{(b)}{\leq}  I(K_{i} \bar{K}_{i} \widetilde{K}_{i} F_{i}^{(1)}  F_{i}^{(2)} F_{i}'; (F_{i}^{(1)} \oplus \bar{K}_{i-1}) (F_{i}' \oplus \widetilde{K}_{i-1}) ) \\
& =  H((F_{i}^{(1)} \oplus \bar{K}_{i-1}) (F_{i}' \oplus \widetilde{K}_{i-1})) \\
& \phantom{--}- H((F_{i}^{(1)} \oplus \bar{K}_{i-1}) (F_{i}' \oplus \widetilde{K}_{i-1})|K_{i} \bar{K}_{i} \widetilde{K}_{i} F_{i}^{(1)}  F_{i}^{(2)} F_{i}') \\
& =  H((F_{i}^{(1)} \oplus \bar{K}_{i-1}) (F_{i}' \oplus \widetilde{K}_{i-1})) \\
&\phantom{--}- H( \bar{K}_{i-1} \widetilde{K}_{i-1}|K_{i} \bar{K}_{i} \widetilde{K}_{i} F_{i}^{(1)}  F_{i}^{(2)} F_{i}') \\
& \stackrel{(c)}{=}  H((F_{i}^{(1)} \oplus \bar{K}_{i-1}) (F_{i}' \oplus \widetilde{K}_{i-1})) - H(  \bar{K}_{i-1} \widetilde{K}_{i-1} ) \displaybreak[0]\\
& \leq  |\bar{K}_{i-1}| + |\widetilde{K}_{i-1}| - H( \bar{K}_{i-1}) -H( \widetilde{K}_{i-1}) \displaybreak[0]\\
&  \stackrel{(d)}{\leq} N \delta_N,
\end{align*}
where $(a)$ holds by definition of $M_i$, $(b)$ holds by~(\ref{eqsec1dem}) and the chain rule for mutual information, $(c)$ holds by independence between $U_i^{1:N}$ and $U_{i-1}^{1:N}$, $(d)$ holds by Lemma~\ref{lem_Umod3dem}. The cases $i \in \{ 1,k \}$ are treated similarly.
\end{proof}

Similar to Lemmas \ref{lem_KKtilde} and \ref{lem_secrec}  we also have the following lemmas.

\begin{lem} \label{lem_KKtildemod3}
For $i \in \llbracket 1,k \rrbracket$, we have for $N$ large enough
\begin{align*}
I(K_i;\bar{K}_i ) 
& \leq \delta_{N}^*,
\end{align*}
where 
\begin{equation*} 
 \delta_{N}^* \triangleq - 3 N \sqrt{2N \delta_N \log 2}  \log_2 \left( 3 \sqrt{2N\delta_N \log 2}  \right).
 \end{equation*} 
\end{lem}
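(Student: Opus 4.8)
The plan is to observe that Lemma~\ref{lem_KKtildemod3} is the exact structural analog of Lemma~\ref{lem_KKtilde}, so I would reproduce the argument of Appendix~\ref{App_lemKKtilde} verbatim, substituting $\bar{K}_i$ for $\widetilde{K}_i$ and invoking Lemma~\ref{lem_Umod3dem} (which gives the near-uniformity of $[K_i,\bar{K}_i]$ and of its two marginals) in place of Lemma~\ref{lem_U1}. The strategy is first to control the variational distance between the true joint law $p_{K_i\bar{K}_i}$ and the product of marginals $p_{K_i}p_{\bar{K}_i}$, and then to convert this into a mutual-information bound via the standard continuity-of-entropy estimate.

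For the variational step, fix $i\in\llbracket 1,k\rrbracket$ and let $q_{\mathcal{U}_{K,\bar{K}}}$ denote the uniform distribution over $\llbracket 1,2^{|K_i|+|\bar{K}_i|}\rrbracket$. Inserting $q_{\mathcal{U}_{K,\bar{K}}}$ and then the intermediate product $q_{\mathcal{U}_{K}}p_{\bar{K}_i}$, two applications of the triangle inequality give
\begin{align*}
\mathbb{V}\left(p_{K_i\bar{K}_i},p_{K_i}p_{\bar{K}_i}\right)
&\leq \mathbb{V}\left(p_{K_i\bar{K}_i},q_{\mathcal{U}_{K,\bar{K}}}\right)+\mathbb{V}\left(q_{\mathcal{U}_{\bar{K}}},p_{\bar{K}_i}\right)+\mathbb{V}\left(q_{\mathcal{U}_{K}},p_{K_i}\right)\\
&\leq 3\,\mathbb{V}\left(p_{K_i\bar{K}_i},q_{\mathcal{U}_{K,\bar{K}}}\right),
\end{align*}
where the last line uses that each marginal distance is dominated by the joint distance to the uniform law. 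Applying Pinsker's inequality together with the joint near-uniformity $|K_i|+|\bar{K}_i|-H(K_i\bar{K}_i)\leq N\delta_N$ from Lemma~\ref{lem_Umod3dem} then yields $\mathbb{V}(p_{K_i\bar{K}_i},p_{K_i}p_{\bar{K}_i})\leq 3\sqrt{2N\log 2}\times 2^{-N^{\beta}/2}$, recalling $\delta_N=2^{-N^{\beta}}$.

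For the final step, I would invoke the bound $I(K_i;\bar{K}_i)\leq \mathbb{V}\log_2(|\bar{K}|/\mathbb{V})$ relating mutual information to the variational distance between the joint law and the product of marginals, valid once $N$ is large enough that the distance is small and $|\bar{K}|>4$. Splitting the logarithm and using that $x\mapsto x\log_2 x$ is decreasing for small positive $x$ lets me replace $\mathbb{V}$ by its upper bound in both terms, which gives exactly $\delta_N^*$ as defined in the statement. I do not expect a genuine obstacle here: the only content beyond Appendix~\ref{App_lemKKtilde} is verifying that Lemma~\ref{lem_Umod3dem} indeed supplies both the joint and the two marginal near-uniformity estimates needed to collapse the three-term triangle bound, which it does; the remaining manipulations are the same routine Pinsker/entropy-continuity calculation already carried out for Model~1.
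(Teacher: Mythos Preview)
Your proposal is correct and matches the paper's own approach: the paper simply says that Lemma~\ref{lem_KKtildemod3} is obtained ``similarly to Lemma~\ref{lem_KKtilde},'' and you have spelled out precisely that argument, replacing $\widetilde{K}_i$ by $\bar{K}_i$ and invoking Lemma~\ref{lem_Umod3dem} in place of Lemma~\ref{lem_U1}.
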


\begin{lem} \label{lem_secrecmod3}
For $i\in \llbracket 2,k \rrbracket$, define
\begin{align*}
\widetilde{L}_e^{1:i}  \triangleq  I (K_{1:i} \bar{K}_i  ; M_{1:i}).
\end{align*}
We have 
\begin{align*}
\widetilde{L}_e^{1:i} - \widetilde{L}_e^{1:i-1} \leq I \left(K_{i} \bar{K}_i ; M_{i}   \right) + I\left(K_{i-1} ; \bar{K}_{i-1}  \right).
\end{align*}
\end{lem}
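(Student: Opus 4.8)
The plan is to follow almost verbatim the argument used for Lemma~\ref{lem_secrec} in Appendix~\ref{App_lemsecrec}, since Lemma~\ref{lem_secrecmod3} has exactly the same shape, with $Z^{1:N}$ absent (here $Z=\emptyset$) and with the chaining variable being $\bar{K}$ rather than $\widetilde{K}$. Fix $i\in\llbracket 2,k\rrbracket$. I would first split $\widetilde{L}_e^{1:i}=I(K_{1:i}\bar{K}_i;M_{1:i})$ by repeated use of the chain rule as $\widetilde{L}_e^{1:i}=\alpha_i+\beta_i+\gamma_i$, where $\alpha_i\triangleq I(K_i\bar{K}_i;M_i)$, $\beta_i\triangleq I(K_{1:i-1};M_i|K_i\bar{K}_i)$, and $\gamma_i\triangleq I(K_{1:i}\bar{K}_i;M_{1:i-1}|M_i)$.

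Next I would bound $\gamma_i$. Enlarging the first argument with $\bar{K}_{i-1}$ and moving $M_i$ out of the conditioning only increases mutual information, so $\gamma_i\leq I(K_{1:i}\bar{K}_{i-1:i}M_i;M_{1:i-1})=\widetilde{L}_e^{1:i-1}+I(K_i\bar{K}_iM_i;M_{1:i-1}|K_{1:i-1}\bar{K}_{i-1})$, and the last term vanishes because of the Markov chain $K_i\bar{K}_iM_i\to K_{1:i-1}\bar{K}_{i-1}\to M_{1:i-1}$. This chain can be read off the functional dependence graph of Algorithm~\ref{alg:encoding_3bis}: $M_i$ is a deterministic function of the Block $i$ variables (subvectors of $U_i^{1:N}$), of $\bar{K}_{i-1}$, and of the pre-shared seed $\widetilde{K}_i$; the Block $i$ variables together with $\widetilde{K}_i$ are jointly independent of all quantities produced in Blocks $1$ through $i-1$, and $\bar{K}_{i-1}$ is the only variable shared by $M_i$ with the earlier blocks.

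Then I would bound $\beta_i$, again mimicking Appendix~\ref{App_lemsecrec}: inserting $\bar{K}_{i-1}$ gives $\beta_i\leq I(K_{1:i-1};M_i\bar{K}_{i-1}|K_i\bar{K}_i)=I(K_{1:i-1};\bar{K}_{i-1}|K_i\bar{K}_i)+I(K_{1:i-1};M_i|K_i\bar{K}_i\bar{K}_{i-1})$, the second term being zero by the Markov chain $K_{1:i-1}\to K_i\bar{K}_i\bar{K}_{i-1}\to M_i$; next $I(K_{1:i-1};\bar{K}_{i-1}|K_i\bar{K}_i)\leq I(K_{1:i-1};\bar{K}_{i-1})$ because $(K_i,\bar{K}_i)$ is independent of $(K_{1:i-1},\bar{K}_{i-1})$; and finally $I(K_{1:i-1};\bar{K}_{i-1})=I(K_{i-1};\bar{K}_{i-1})+I(K_{1:i-2};\bar{K}_{i-1}|K_{i-1})\leq I(K_{i-1};\bar{K}_{i-1})+I((X_2)_{1:i-2}^{1:N};(X_2)_{i-1}^{1:N})=I(K_{i-1};\bar{K}_{i-1})$, using the data-processing inequality and the fact that the source blocks are i.i.d. Combining these bounds with $\widetilde{L}_e^{1:i}=\alpha_i+\beta_i+\gamma_i$ gives $\widetilde{L}_e^{1:i}-\widetilde{L}_e^{1:i-1}\leq I(K_i\bar{K}_i;M_i)+I(K_{i-1};\bar{K}_{i-1})$, as claimed.

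The only place requiring genuine care — and the main obstacle — is the justification of the Markov chains from the functional dependence graph: here $M_i$ depends not only on Block $i$ quantities but also on $\bar{K}_{i-1}$ (the chaining term) and on the independent pre-shared seed $\widetilde{K}_i$. One must check that conditioning on $\bar{K}_{i-1}$, which indeed appears in every conditioning set above, is exactly what decouples the Block $i$ variables from Blocks $1$ through $i-1$, and that $\widetilde{K}_i$, being independent of all source variables and of all keys, never breaks any of these independences.
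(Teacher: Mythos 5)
Your proposal is correct and is essentially the paper's own argument: the paper proves this lemma simply by pointing to Lemma~\ref{lem_secrec} and its proof in Appendix~\ref{App_lemsecrec}, and you carry out exactly that adaptation (same $\alpha_i+\beta_i+\gamma_i$ decomposition, same Markov chains with $Z=\emptyset$ and $\bar{K}$ in place of $\widetilde{K}$, plus the correct observation that the independent pre-shared seeds $\widetilde{K}_i$ do not disturb any of the conditional independences). Your justification of the step $I(K_{1:i-1};\bar{K}_{i-1}\mid K_i\bar{K}_i)\leq I(K_{1:i-1};\bar{K}_{i-1})$ via independence of $(K_i,\bar{K}_i)$ from $(K_{1:i-1},\bar{K}_{i-1})$ is a slight (valid) simplification of the Markov-chain justification used in the paper's Model~1 analogue.
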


 Similar to the proof of Theorem \ref{Th1}, using Lemmas \ref{lem_sec_blockmod3}, \ref{lem_secrecmod3}, \ref{lem_KKtildemod3},  we obtain
\begin{align*}
\textbf{L}(\mathcal{S}_N)
& \leq 2kN\delta_N + (k-1) \delta_{N}^*. 
\end{align*}

\textit{6) Seed Rate: }
The seed rate is 
\begin{align*}
 \frac{ \sum_{i=1}^k|\widetilde{K}_i| }{kN} 
&  = \frac{k|\bar{\mathcal{F}}_{X_2|X_1} \cup \bar{\mathcal{F}}_{X_2|X_3} | }{kN} \\
&  \leq \frac{|\bar{\mathcal{F}}_{X_2|X_1} | +| \bar{\mathcal{F}}_{X_2|X_3} | }{N} \\
& \xrightarrow{ N \to \infty} 0,
\end{align*}
where we have used Lemma \ref{lemnewcard} and~\cite{Arikan10}. 


\bibliographystyle{IEEEtran}
\bibliography{bib}

\end{document}